\newif{\ifSHORT}
\newif{\iflong}
\newif{\ifLongVersion}
\newif{\ifWithRecords}
\newif{\ifWithProofs}
\newcommand{\procs}{\ensuremath{\mathbb{P}}\xspace}
\newcommand{\fred}{\hookrightarrow}
\newcommand{\restricted}[1]{[#1]}
\newcommand{\myplus}[1]{\ensuremath{#1^{+}}}
\newcommand{\muKoba}{\ensuremath{\mu\mathcal{K}}\xspace}
\newcommand{\Koba}[1]{\ensuremath{\mathcal{K}_{#1}}\xspace}
\newcommand{\fullKoba}{\ensuremath{\mathcal{K}}\xspace}
\newcommand{\lkoba}{\Koba{}}
\newcommand{\suchthat}{\,:\,}
\newcommand{\rev}[1]{#1}
\newcommand{\nurev}[1]{#1}
\newcommand{\lcp}{\ensuremath{\mathcal{L}}\xspace}
\newcommand{\m}[1]{\mathsf{#1}}
\newcommand{\wt}[1]{\widetilde{#1}}
\newcommand{\tl}[1]{\tilde{#1}}
\newcommand{\defeq}{\triangleq}
\newcommand{\emp}{\emptyset}
\newcommand{\fv}[1]{\fn{#1}}
\newcommand{\compsi}{\diamond}
\newcommand{\circc}{\star}
\def\midd{\; \; \mbox{\Large{$\mid$}}\;\;}
\newcommand{\p}{$\pi$-}
\newcommand{\nil}{{\mathbf{0}}}
\newcommand{\selection}[2]{#1\triangleleft {#2}}
\newcommand{\branching}[3]{#1\triangleright\{{#2}_i:#3_i\}_{i\in I}}
\newcommand{\parbranching}[3]{#1\triangleright\{{#2}:#3\}_{i\in I}}
\newcommand{\out}[1]{\langle #1\rangle}
\newcommand{\bout}[2]{\overline{#1}(#2)}
\newcommand{\inp}[1]{(#1)}
\newcommand{\res}[1]{({\boldsymbol \nu} #1)}
\newcommand{\pp}{\ \boldsymbol{|}\ }
\newcommand{\picase}[3]
{\mathbf{case} \, {#1} \, \mathbf{of}\, \{ \mathnormal{l}_i \_ {#2} \triangleright {#3} \}_{i\in I}}
\newcommand{\vv}[2]{\mathnormal{l}_{#1}\_ #2}
\newcommand{\uptok}{\doteqdot}
\newcommand{\catdep}[3]{\mathcal{#1}_{#2}^{#3}}
\newcommand{\core}[1]{#1^{\downarrow}}
\newcommand{\eend}[1]{\ensuremath{\mathsf{un}(#1)}}
\newcommand{\eende}{\ensuremath{\mathsf{un}}}
\newcommand{\variant}[2]{\langle {#1} : #2 \rangle_{i\in I}}
\newcommand{\select}[2]{\oplus\{{#1}_i:#2_i\}_{i\in I}}
\newcommand{\branch}[2]{\&\{{#1}_i:#2_i\}_{i\in I}}
\newcommand{\parselect}[3]{\oplus\{{#1} :#2 \}_{#3}}
\newcommand{\parbranch}[3]{\&\{{#1} :#2 \}_{#3}}
\newcommand{\nilT}{\ensuremath{{\bf end}}}
\newcommand{\unit}{{\mathbf{n}}}
\newcommand{\ct}{\ell}
\newcommand{\su}{\mathsf{u}}
\newcommand{\inpuse}[2]{\wn^{#1}_{#2}}
\newcommand{\outuse}[2]{\oc^{#1}_{#2}}
\newcommand{\ca}{\kappa}
\newcommand{\ob}{\mathsf{o}}
\newcommand{\obss}[2]{\mathsf{ob}_{#1}(#2)}
\newcommand{\caps}[2]{\mathsf{cap}_{#1}(#2)}
\newcommand{\conpar}[2]{\mathsf{con}_{#1}(#2)}
\newcommand{\rel}[1]{\mathsf{rel}( #1 )}
\newcommand{\usa}[1]{\mathit{u}( #1 )}
\newcommand{\con}[1]{\mathsf{con}(#1)}
\newcommand{\proj}[1]{\downarrow_{#1}}
\newcommand{\decomp}[1]{\asymp_{#1}}
\newcommand{\utype}[2]{\mathsf{chan}(#2\,;\,#1)}
\newcommand{\ctype}{\tau}
\newcommand{\Didtext}[1]{{\scriptsize{\textsc{#1}}}}
\newcommand{\Did}[1]{({\Didtext{#1}})}
\newcommand{\lf}{\vdash_{\prec}}
\newcommand{\mlf}{\vdash_{\prec}^\mu}
\newcommand{\mlfpar}[1]{\vdash^{\mu}_{#1}} 
\newcommand{\lfpar}[1]{\vdash_{#1}} 
\newcommand{\s}{\vdash_{\mathtt{ST}}}
\newcommand{\cp}{\vdash_{\mathtt{LL}}}
\newcommand{\zusage}{\mathsf{0}}
\newcommand{\temp}{-}
\newcommand{\fn}[1]{\mathsf{fn}(#1)}   
\newcommand{\ctx}[3]{\mathcal{#1}_{#2}\big[#3\big]}
\newcommand{\semi}{\ {\bf ; }_{\prec}\ }
\newcommand{\mto}{\to^*}
\newcommand{\dom}{\m{dom}}
\newcommand{\chrp}[2]{\ensuremath{\langle\!| #1 |\!\rangle^{#2}}}
\newcommand{\chrpk}[2]{\ensuremath{\langle\!| #1 |\!\rangle^{#2}}}
\newcommand{\chr}[1]{\llbracket #1 \rrbracket_{\ell}}
\newcommand{\encCP}[1]{\llparenthesis #1 \rrparenthesis}
\newcommand{\encCPalt}[1]{\llparenthesis #1 \rrparenthesis^{\mathtt{V}}}
\newcommand{\encoCP}[2]{\llparenthesis #1 \rrparenthesis^{#2}}
\newcommand{\f}[1]{{\mathnormal f}_{#1}}
\newcommand{\enc}[2]{\llbracket #1 \rrbracket^{\mathnormal{f}, \{x,y \mapsto {#2}\}}_{\su}}
\newcommand{\encob}[2]{\llbracket #1 \rrbracket_{#2}}
\newcommand{\encf}[1]{\llbracket #1 \rrbracket^{\mathnormal{f}}_{\su}}
\newcommand{\encfp}[2]{\llbracket #1 \rrbracket^{\mathnormal{#2}}_{\su}}
\newcommand{\encx}[2]{\llbracket #1 \rrbracket^{\mathnormal{f},\{x\mapsto {#2}\}}_{\su}}
\newcommand{\ency}[2]{\llbracket #1 \rrbracket^{\mathnormal{f},\{y\mapsto {#2}\}}_{\su}}
\newcommand{\vdepset}[1]{\llparenthesis #1 \rrparenthesis}
\newcommand{\vdep}[2]{(#1,#2)}
\newcommand{\chrpv}[3]{\ensuremath{\langle\!| #1 |\!\rangle^{#2}_{#3}}}
\newcommand{\mydefref}[1]{Definition~\ref{#1}}
\newcommand{\lolli}{\mathord{\multimap}}
\newcommand{\tensor}{\otimes}
\newcommand{\parl}{\mathbin{\bindnasrepma}}
\newcommand{\one}{{\bf 1}}
\newcommand{\CLL}{\texttt{CLL}\xspace}
\newcommand{\dual}[1]{\overline{#1}}
\newcommand{\zero}{{\bf 0}}
\newcommand{\nub}{{\boldsymbol{\nu}}}
\newcommand{\linkr}[2]{[#1\!\leftrightarrow\!#2]}
\newcommand{\jname}[1]{\mbox{{{(#1)}}}}
\newcommand{\cut}{\mathsf{cut}}
\newcommand{\mix}{\mathsf{mix}}
\newcommand{\cycle}{\mathsf{cycle}}
\newcommand{\tid}{\mathsf{id}}
\newcommand{\ov}[1]{\overline{#1}}
\newcommand{\para}{\mathord{\;\boldsymbol{|}\;}}
\def\substj#1#2{[\raisebox{.5ex}{\small$#1$}\! / \mbox{\small$#2$}]}
\newcommand{\D}{\Delta}
\newcommand{\tra}[1]{\xrightarrow{#1}}
\newcommand{\redd}{\tra{~~~}}
\newdimen\proofrulebreadth \proofrulebreadth=.05em
\newdimen\proofdotseparation \proofdotseparation=1.25ex
\newdimen\proofrulebaseline \proofrulebaseline=2ex
\let\then\relax
\def\hfi{\hskip0pt plus.0001fil}
\mathchardef\squigto="3A3B
\newif\ifinsideprooftree\insideprooftreefalse
\newif\ifonleftofproofrule\onleftofproofrulefalse
\newif\ifproofdots\proofdotsfalse
\newif\ifdoubleproof\doubleprooffalse
\let\wereinproofbit\relax
\newdimen\shortenproofleft
\newdimen\shortenproofright
\newdimen\proofbelowshift
\newbox\proofabove
\newbox\proofbelow
\newbox\proofrulename
\def\shiftproofbelow{\let\next\relax\afterassignment\setshiftproofbelow\dimen0 }
\def\shiftproofbelowneg{\def\next{\multiply\dimen0 by-1 }%
\afterassignment\setshiftproofbelow\dimen0 }
\def\setshiftproofbelow{\next\proofbelowshift=\dimen0 }
\def\setproofrulebreadth{\proofrulebreadth}
\def\prooftree{
%
\ifnum  \lastpenalty=1
\then   \unpenalty
\else   \onleftofproofrulefalse
\fi
%
\ifonleftofproofrule
\else   \ifinsideprooftree
        \then   \hskip.5em plus1fil
        \fi
\fi
%
\bgroup
\setbox\proofbelow=\hbox{}\setbox\proofrulename=\hbox{}%
\let\justifies\proofover\let\leadsto\proofoverdots\let\Justifies\proofoverdbl
\let\using\proofusing\let\[\prooftree
\ifinsideprooftree\let\]\endprooftree\fi
\proofdotsfalse\doubleprooffalse
\let\thickness\setproofrulebreadth
\let\shiftright\shiftproofbelow \let\shift\shiftproofbelow
\let\shiftleft\shiftproofbelowneg
\let\ifwasinsideprooftree\ifinsideprooftree
\insideprooftreetrue
%
\setbox\proofabove=\hbox\bgroup$\displaystyle 
\let\wereinproofbit\prooftree
%
\shortenproofleft=0pt \shortenproofright=0pt \proofbelowshift=0pt
%
\onleftofproofruletrue\penalty1
}
\def\eproofbit{
%
\ifx    \wereinproofbit\prooftree
\then   \ifcase \lastpenalty
        \then   \shortenproofright=0pt  
        \or     \unpenalty\hfil         
        \or     \unpenalty\unskip       
        \else   \shortenproofright=0pt  
        \fi
\fi
%
\global\dimen0=\shortenproofleft
\global\dimen1=\shortenproofright
\global\dimen2=\proofrulebreadth
\global\dimen3=\proofbelowshift
\global\dimen4=\proofdotseparation
\global\count255=\proofdotnumber
%
$\egroup  
%
\shortenproofleft=\dimen0
\shortenproofright=\dimen1
\proofrulebreadth=\dimen2
\proofbelowshift=\dimen3
\proofdotseparation=\dimen4
\proofdotnumber=\count255
}
\def\proofover{
\eproofbit 
\setbox\proofbelow=\hbox\bgroup 
\let\wereinproofbit\proofover
$\displaystyle
}%
\def\proofoverdbl{
\eproofbit 
\doubleprooftrue
\setbox\proofbelow=\hbox\bgroup 
\let\wereinproofbit\proofoverdbl
$\displaystyle
}%
\def\proofoverdots{
\eproofbit 
\proofdotstrue
\setbox\proofbelow=\hbox\bgroup 
\let\wereinproofbit\proofoverdots
$\displaystyle
}%
\def\proofusing{
\eproofbit 
\setbox\proofrulename=\hbox\bgroup 
\let\wereinproofbit\proofusing
\kern0.3em$
}
\def\endprooftree{
\eproofbit 
  \dimen5 =0pt
%
\dimen0=\wd\proofabove \advance\dimen0-\shortenproofleft
\advance\dimen0-\shortenproofright
%
\dimen1=.5\dimen0 \advance\dimen1-.5\wd\proofbelow
\dimen4=\dimen1
\advance\dimen1\proofbelowshift \advance\dimen4-\proofbelowshift
%
\ifdim  \dimen1<0pt
\then   \advance\shortenproofleft\dimen1
        \advance\dimen0-\dimen1
        \dimen1=0pt
        \ifdim  \shortenproofleft<0pt
        \then   \setbox\proofabove=\hbox{%
                        \kern-\shortenproofleft\unhbox\proofabove}%
                \shortenproofleft=0pt
        \fi
\fi
%
\ifdim  \dimen4<0pt
\then   \advance\shortenproofright\dimen4
        \advance\dimen0-\dimen4
        \dimen4=0pt
\fi
%
\ifdim  \shortenproofright<\wd\proofrulename
\then   \shortenproofright=\wd\proofrulename
\fi
%
\dimen2=\shortenproofleft \advance\dimen2 by\dimen1
\dimen3=\shortenproofright\advance\dimen3 by\dimen4
%
\ifproofdots
\then
        \dimen6=\shortenproofleft \advance\dimen6 .5\dimen0
        \setbox1=\vbox to\proofdotseparation{\vss\hbox{$\cdot$}\vss}%
        \setbox0=\hbox{%
                \advance\dimen6-.5\wd1
                \kern\dimen6
                $\vcenter to\proofdotnumber\proofdotseparation
                        {\leaders\box1\vfill}$%
                \unhbox\proofrulename}%
\else   \dimen6=\fontdimen22\the\textfont2 
        \dimen7=\dimen6
        \advance\dimen6by.5\proofrulebreadth
        \advance\dimen7by-.5\proofrulebreadth
        \setbox0=\hbox{%
                \kern\shortenproofleft
                \ifdoubleproof
                \then   \hbox to\dimen0{%
                        $\mathsurround0pt\mathord=\mkern-6mu%
                        \cleaders\hbox{$\mkern-2mu=\mkern-2mu$}\hfill
                        \mkern-6mu\mathord=$}%
                \else   \vrule height\dimen6 depth-\dimen7 width\dimen0
                \fi
                \unhbox\proofrulename}%
        \ht0=\dimen6 \dp0=-\dimen7
\fi
%
\let\doll\relax
\ifwasinsideprooftree
\then   \let\VBOX\vbox
\else   \ifmmode\else$\let\doll=$\fi
        \let\VBOX\vcenter
\fi
\VBOX   {\baselineskip\proofrulebaseline \lineskip.2ex
        \expandafter\lineskiplimit\ifproofdots0ex\else-0.6ex\fi
        \hbox   spread\dimen5   {\hfi\unhbox\proofabove\hfi}%
        \hbox{\box0}%
        \hbox   {\kern\dimen2 \box\proofbelow}}\doll%
%
\global\dimen2=\dimen2
\global\dimen3=\dimen3
\egroup 
\ifonleftofproofrule
\then   \shortenproofleft=\dimen2
\fi
\shortenproofright=\dimen3
%
\onleftofproofrulefalse
\ifinsideprooftree
\then   \hskip.5em plus 1fil \penalty2
\fi
}
\newcommand{\renewtheorem}[1]{%
  \expandafter\let\csname #1\endcsname\relax
  \expandafter\let\csname c@#1\endcsname\relax
  \expandafter\let\csname end#1\endcsname\relax
  \newtheorem{#1}%
}
\theoremstyle{example} 
\theoremstyle{definition}
\newif\ifarxiv
\newtheorem{theorem}{Theorem}[section]
\newtheorem{lemma}{Lemma}[section]
\newtheorem{definition}{Definition}[section]
\newtheorem{notation}{Notation}[section]
\newtheorem{remark}{Remark}
\newtheorem{example}{Example}[section]
\newtheorem{proposition}{Proposition}[section]
\newtheorem{corollary}{Corollary}[section]
\crefname{enumi}{Part}{Parts}
\crefname{section}{\S\!\!\,}{\S\!\!\,}%
\crefname{subsection}{\S\!\!\,}{\S\!\!\,}%
\crefname{subsubsection}{\S\!\!\,}{\S\!\!\,}%
\crefname{appendix}{\S\!\!\,}{\S\!\!\,}
\begin{document}

\title{Comparing Type Systems for Deadlock Freedom}
\author{Ornela Dardha \\ University of Glasgow, UK \and  Jorge A. P\'{e}rez \\ University of Groningen, The Netherlands}

\maketitle

\begin{abstract}
Message-passing
software systems exhibit non-trivial forms of concurrency and distribution; 
they are expected to 
follow intended  protocols among communicating services, but also to never ``get stuck''. 
This intuitive requirement has been expressed by {liveness} properties such as progress or (dead)lock freedom and
various type systems ensure these properties for concurrent processes. 
Unfortunately, very little is known about 
the precise relationship between these type systems and the classes of typed processes they induce. 

This paper
puts forward the first comparative study of different type systems for message-passing  processes
that guarantee deadlock freedom.
We compare two  classes of deadlock-free typed processes, here denoted \lcp and \fullKoba.
The class \lcp 
stands out for its canonicity: it 
results from Curry-Howard interpretations of \rev{classical} linear logic propositions as {session types}.
The class \fullKoba, obtained by encoding session types into Kobayashi's {linear types with usages}, 
includes processes not typable in other type systems. 
We show that \lcp is strictly included in \fullKoba, and identify the precise conditions under which they coincide. 
We also provide two type-preserving translations of processes in \fullKoba into processes in \lcp. 
\end{abstract}


%


%

\section{Introduction}
\label{sec:intro}
In this paper, we formally relate different type systems for concurrent processes specified in the $\pi$-calculus~\cite{DBLP:journals/iandc/MilnerPW92a}.
A fundamental model of computation, the $\pi$-calculus stands out for its {expressiveness}, which enables us 
to represent and reason about message-passing programs in 
functional, object-oriented, and distributed paradigms~\cite{Sangio01}. 
Another distinctive aspect of the $\pi$-calculus is its support for
rich {type systems} that discipline process behavior~\cite{DBLP:conf/concur/Sangiorgi02}.
Following Milner's seminal work on \emph{sorting}~\cite{milner-sort}, various type systems for the $\pi$-calculus have revealed a rich landscape of models for concurrency with disciplined communication; examples include
\emph{graph types}~\cite{DBLP:conf/fsttcs/Yoshida96}, 
\emph{linear types}~\cite{DBLP:conf/popl/KobayashiPT96},
\emph{generic types}~\cite{DBLP:journals/tcs/IgarashiK04},
and
\emph{session types}~\cite{H93,THK94,HVK98}. 
In the last decade, logical foundations for message-passing concurrency, in the style of  Curry-Howard correspondence (``propositions as types'') between linear logic and session types, have been put forward~\cite{CairesP10,DBLP:conf/icfp/Wadler12}.
By disciplining the use of channels, types for message-passing processes strongly influence their expressiveness.
Contrasting  
different type systems through the classes of well-typed processes that they induce is a central theme in this work.

Our interest is in \emph{session-based concurrency}, 
the model of concurrency captured by {session types}.
Session types promote
a type-based approach to communication correctness: dialogues between participants are structured into \emph{sessions},  basic   communication units; descriptions of interaction sequences are then abstracted as  session types  which are checked against process specifications. 
In session-based concurrency,  
types enforce correct communications through different safety and liveness properties. 
Two basic (and intertwined) correctness properties are \emph{communication safety} and \emph{session fidelity}: 
while the former ensures absence of errors (e.g., communication mismatches), the latter 
ensures that well-typed processes respect the  protocols  prescribed  by session types.

A very desirable {liveness} property for safe processes is that they should never ``get stuck''.
This is the well-known \emph{progress} property, which 
asserts that a well-typed term either is a final value or can further reduce~\cite{Pierce02}.
In calculi for concurrency, this 
 property 
 admits several formalizations; two of them are 
\emph{deadlock freedom}
and 
\emph{lock freedom}.
Intuitively, deadlock freedom ensures that communications will eventually succeed unless the whole process diverges \cite{K06};
lock freedom is a stronger property: it
guarantees that communications will eventually succeed, regardless of whether processes diverge, modulo some fairness assumption on the scheduler~\cite{K02}.
Notice that in the absence of divergent behaviors, deadlock freedom and lock freedom coincide.

Another formalization, which we call here \emph{session progress}, has been proposed for session-based concurrency~\cite{CD10,CDM14}: ``a process has session progress if combined with another process providing the corresponding co-actions (a so-called \emph{catalyzer}), then the composition reduces''.
We will say that a process is \emph{composable} if  it can be   composed with an appropriate catalyzer for some of its actions, or \emph{uncomposable}  otherwise.
Carbone et al. \cite{CDM14} proved that session progress and (dead)lock freedom coincide for {uncomposable processes}; for composable processes, session progress states \emph{potential} (dead)lock freedom. 
We will return to this informal distinction between composable and uncomposable processes below
(\Cref{d:openclose} will make it formal).

There is a vast literature on type systems for which typability enforces (dead)lock freedom or session progress---see, e.g.,\cite{K02,K06,DLY07,CD10,CairesP10,DBLP:journals/mscs/CairesPT16,P13,DBLP:conf/coordination/VieiraV13,P14,DBLP:conf/concur/GiachinoKL14,DBLP:journals/iandc/0001L17,DG18,DBLP:conf/esop/BalzerTP19}.
Unfortunately, these sophisticated systems 
rely on different principles
and/or 
consider different variants of the (session) $\pi$-calculus.
Also, different papers sometimes use different terminology and notions.
As a result, very little is known about the relationship between these type systems.
These observations led us to our research questions:
\emph{How do the type systems for liveness properties relate to each other?
What classes of deadlock-free processes do they induce?}
%
%

This paper presents the \emph{first formal comparison} between different type systems for the $\pi$-calculus that enforce liveness properties related to {(dead)lock freedom}.
More concretely, 
we tackle the above open questions by comparing
\lcp and \fullKoba,
two salient classes of deadlock-free (session) typed processes:

\begin{enumerate}[label=$\bullet$]
\item The class $\lcp$ contains  session  processes that are well-typed 
under the Curry-Howard correspondence between 
\rev{(classical)} linear logic propositions and session types~\cite{CairesP10,DBLP:journals/mscs/CairesPT16,DBLP:conf/icfp/Wadler12}.
This suffices, because 
the type system derived from such a correspondence simultaneously ensures 
communication safety, 
session fidelity, and 
deadlock freedom.

\item The class \fullKoba contains session processes that 
enjoy communication safety and session fidelity (as ensured by the type system of Vasconcelos~\cite{V12})
as well as satisfy deadlock freedom.
This class of processes is defined indirectly, 
by combining Kobayashi's type system based on {usages}~\cite{K02,K06,K07} with encodability results by Dardha et al.~\cite{DGS12}.
\end{enumerate}

Let us  clarify the nature of processes in \lcp and \fullKoba.
As  \Cref{d:lang} formally states, 
processes in \lcp and \fullKoba are typed under some typing context, possibly {non} empty. 
As such, these classes contain both {composable} processes (if the typing context is not empty) and {uncomposable} processes (otherwise).
Thus, strictly speaking, processes in \lcp and \fullKoba have session progress (as described above), which is strictly weaker than {deadlock freedom}, because a process satisfying session progress does not necessarily satisfy deadlock freedom. This is trivially true for composable processes, which need a catalyzer \cite{CDM14}. However, since we shall focus on uncomposable  processes, for which session progress and deadlock freedom coincide,  we shall refer to \lcp and \fullKoba as classes of deadlock-free processes.

There are good reasons for investigating   \lcp and \fullKoba.
On one hand, due to its deep logical foundation, the class \lcp appears to us as the {canonical} class of deadlock-free session processes, upon which all other classes should be compared. Indeed, this class arguably offers the most principled yardstick for comparisons.
On the other hand, the class \fullKoba integrates session type checking with the sophisticated usage discipline developed by Kobayashi
for $\pi$-calculus processes.
This indirect approach to deadlock freedom, developed in \cite{CDM14,Dardha16}, 
is general: it can capture sessions with subtyping, polymorphism, and higher-order communication.
Also, as discussed in~\cite{CDM14}, \fullKoba strictly includes classes of session typed processes induced by other type systems for deadlock freedom~\cite{DLY07,CD10,P13}. 

\subsubsection*{Contributions}
This paper contributes technical results that, on the one hand, \emph{separate} the classes \lcp and \fullKoba by precisely characterizing the fundamental differences between them and, on the other hand, \emph{unify} these classes by showing how their differences can be overcome to translate processes in \fullKoba into processes into \lcp. More in details:

\begin{itemize}
	\item To \emph{separate} \lcp from \fullKoba, we  define \muKoba: a sub-class of \fullKoba whose definition internalizes the key aspects of the Curry-Howard interpretations of session types. In particular, \muKoba adopts the principle of ``{composition plus hiding}'', a distinguishing feature of the interpretations in~\cite{CairesP10,DBLP:conf/icfp/Wadler12}, by which concurrent cooperation is restricted to the sharing of \emph{exactly one} session channel.
	
	We show that  \lcp and \muKoba coincide (\Cref{t:cppkoba}). 
	This gives us a separation result: there are deadlock-free session processes that \emph{cannot} be typed by  systems derived from the 
Curry-Howard interpretation of session types~\cite{CairesP10,DBLP:journals/mscs/CairesPT16,DBLP:conf/icfp/Wadler12},
but that are admitted as typable by the (indirect) approach of~\cite{CDM14,Dardha16}. 
	
	\item To \emph{unify} \lcp and \fullKoba, we define
two \emph{translations} of processes in \fullKoba into processes in \lcp  
\ifarxiv
(\Cref{def:typed_enc} and~\ref{def:typed_encr}).
\else 
(\Cref{def:typed_enc}).
\fi 
Intuitively, because the difference between \lcp and \fullKoba lies in the forms of parallel composition they admit (restricted in \lcp, liberal in \fullKoba), it is  natural to transform a process in \fullKoba into another, more parallel process in \lcp. 
In essence, the first translation, denoted $\encCP{\cdot}{}$ (\Cref{def:typed_enc}),  exploits type information to replace sequential prefixes with representative parallel components;
the second translation refines this idea by considering \emph{value dependencies}, i.e., causal dependencies between {independent} sessions not captured by types.
\ifarxiv
Our translations satisfy type-preservation and operational correspondence properties 
(Theorems~\ref{thm:L0-L2}, \ref{thm:oc}, \ref{thm:rwtp}, and \ref{thm:rwoc}).
\else
We detail the first translation, which satisfies type-preservation and operational correspondence properties 
(Theorems~\ref{thm:L0-L2} and \ref{thm:oc}).
\fi 
\end{itemize}

Our separation result is significant as it establishes the precise status of logically motivated  systems with respect to previous disciplines, not based on Curry-Howard principles.
It also provides a new characterization for linear logic-based processes, leveraging a very different type system for deadlock freedom \cite{K06}.
Also, our unifying result is insightful because it shows that the differences between the two classes of processes manifest themselves rather subtly at the level of process syntax, and can be eliminated by appropriately exploiting information at the level of types.

\rev{To further illustrate these salient points, consider the process
$$
P\defeq \res{xy}\res{wz}(
x\inp s.w\out s
 \pp 
\ov y\out \unit.\ov z\inp  u)
$$
where,
following the syntax of the type system of Vasconcelos~\cite{V12}, we use the restriction 
 $\res{xy}$ to indicate that $x$ and $y$ are the two endpoints of the same channel (and similarly for  $\res{wz}$). 
 Also, $\unit$ denotes a terminated channel.
 Process $P$ belongs to the class \fullKoba.
It  consists of two processes in parallel, each implementing \emph{two separate sessions} in sequence; as such, $P$ does not respect the monolithic structure imposed by   ``composition plus hiding''.
Our separation result is that $P$ falls outside \muKoba, the strict sub-class of  \fullKoba that adheres to ``composition plus hiding''.
Now, process $P$ has an operationally equivalent cousin, the process
$$
Q \defeq \res{xy}\res{wz}(
x\inp s.w\out s
 \pp 
\ov y\out \unit \pp \ov z\inp  u)
$$
which is structurally equivalent
to one that is more parallel and respects ``composition plus hiding'':
$$Q \equiv \res{wz}(\res{xy}(
x\inp s.w\out s
 \pp 
\ov y\out \unit) \pp \ov z\inp  u)
$$
Our unifying result is that
all processes in \fullKoba but outside of \muKoba have these kind of more parallel,
operationally equivalent cousins, which  respect the ``composition plus
hiding'' principle that characterizes class \lcp.}

To our knowledge, our work is the first to formally compare fundamentally distinct type systems for deadlock freedom in message-passing processes.
Previous comparisons by Carbone et al.~\cite{CDM14} and Caires et al.~\cite[\S6]{DBLP:journals/mscs/CairesPT16}, are informal: they are based on 
representative ``corner cases'', namely examples of deadlock-free session processes typable in
one system but not in some other.

\begin{figure*}[t]
\begin{mdframed}
\centering
\begin{tikzpicture}
    \begin{scope}[local bounding box=BBu]
        \node (u) [] at (-0.125, 0) {
            $\chr{P}   \cp  \encob{\Gamma}{\ct}$
        };
        \node (thm32) [below=.5pt of u] {{\small DF: \Cref{t:progress}}};
    \end{scope}
    \node [fit=(BBu), thick, draw=RoyalBlue, shape=ellipse, inner sep=0pt, label={above right:$\textcolor{RoyalBlue}{\lcp}$}] (BBuFit) {};

    \node (st) [semithick, draw=black, shape=ellipse, label={15:$\mathbb{P}$}] at (2.25, 1.5) {$\Gamma \s P$ {\small \cite{V12}}};

    \begin{scope}[local bounding box=BBltm]
        \node (lt) [right=95pt of thm32] {
            $\encf{\Gamma}  \mlf \encf{P} $
        };
        \begin{scope}[overlay]
            \node [fit=(lt), semithick, draw=RedOrange, shape=ellipse, inner sep=0pt, label={[xshift=4pt]190:$\textcolor{RedOrange}{\muKoba}$}] (BBltFit) {};
        \end{scope}
        \node [fit=(lt), shape=ellipse, inner sep=0pt] (BBltStretch) {};

        \node (ltm) [below=6pt of lt] {
            $  \encf{\Gamma}    \lf \encf{P}$
        };
        \node (cor32) [below=.5pt of ltm] {{\small DF: \Cref{c:df}}};
    \end{scope}
    \node [fit=(BBltm), thick, draw=RedOrange, shape=ellipse, inner sep=-1pt, label={above left:$\textcolor{RedOrange}{\fullKoba}$}] (BBltmFit) {};

    \draw [stealth reversed-stealth reversed, draw=ForestGreen, thick] (thm32.east)+(2pt,0) -- node [below, at start, xshift=29pt, yshift=2pt] {
        $\begin{array}{c} \text{{\small{Thm.~\ref{t:cppkoba}}}} \end{array}$
    } (lt);
    
    \draw [-to, in=285, out=215, draw=ForestGreen, very thick, densely dashed] (ltm.west)+(-15pt,-10pt) to node [below left] {$\begin{array}{c}\text{\Cref{def:typed_enc}:} \\ \text{Translation $\encCP{\cdot}{}$}\end{array}$} (BBuFit);
    
    \draw [-to, in=90, out=200] (st.west) to node [above left, xshift=-2pt, yshift=-10pt] {
        $\begin{array}{l} \text{{\small \Cref{f:sesstocp}:~$\chr{\cdot}$ (processes)}} \\[0.5mm] \text{{\small \Cref{def:enc_env_sc}:~$\encob{\cdot}{\ct}$ (contexts)}}\end{array}$
    }(BBuFit);
    \draw [-to, out=0, in=105,>=stealth] (st.east) to node [above right, xshift=2pt, yshift=-12pt] {
        $\begin{array}{l} \text{{\small \Cref{d:encdgs}:~$\encf{\cdot}$ (processes)}}
        \\[0.5mm] \text{{\small \Cref{def:enc_env_su}:~$\encf{\cdot}$~(contexts)}} \end{array}$
    }(BBltmFit);
\end{tikzpicture}
		\end{mdframed}
	\caption{Summary of type systems, languages with deadlock freedom (DF), and encodings between them (indicated by solid black arrows). Main results are denoted by \textcolor{ForestGreen}{green} lines: our separation result, based on the coincidence of \lcp and \muKoba is indicated by the solid line with reversed arrowheads; our unifying result is indicated by the dashed arrow.}
	\label{f:overview}
\end{figure*}

\rev{\Cref{f:overview} summarizes the different type systems and ingredients needed to define \fullKoba and \lcp.}

\subsubsection*{Paper Organization}
 In \Cref{s:sessions} we summarize the 
    session $\pi$-calculus and the  type system of~\cite{V12}.
In \Cref{s:deadlf} we present the two typed approaches to deadlock freedom for sessions.
 \Cref{s:hier}  defines the classes  \lcp and \fullKoba, the strict sub-class \muKoba,
and shows that $\lcp  = \muKoba$ (\Cref{t:cppkoba}).
\Cref{s:enco}  presents our first translation  of \lkoba into \lcp and establishes its correctness properties. The optimization with value dependencies is discussed in~\Cref{ss:oprw}.
Enforcing deadlock freedom by typing is already challenging for  processes   without constructs such as recursion or replication. For this reason, here we concentrate on finite processes; in \Cref{s:discuss} we discuss the case of processes 
with unbounded behavior (with constructs such as replication and recursion) \nurev{and connections with other logical interpretations of session types}.
\Cref{s:rw} 
compares with related works and 
\Cref{s:concl}  collects some concluding remarks.
Omitted technical details are included in the 
appendices.

This paper is a revised version of the workshop paper~\cite{DBLP:journals/corr/DardhaP15}, extended with new material: we present full technical details and additional examples not presented in~\cite{DBLP:journals/corr/DardhaP15}. Also, we present updated comparisons with related works.
The separation result based on the coincidence with the sub-class \muKoba, given in  \Cref{s:hier}, is new.
Moreover, 
the first translation, presented in~\cite{DBLP:journals/corr/DardhaP15} and given in \Cref{s:enco}, 
 has been substantially  simplified and its correctness properties have been refined.
 \ifarxiv
The second translation, based on value dependencies and discussed in \Cref{ss:oprw}, is new to this paper.
\else
\fi 
The content of \Cref{s:discuss} is also original to this presentation.


\section{Session \p calculus}\label{s:sessions}
We present the session $\pi$-calculus and its corresponding type system: the linear fragment of the type system by Vasconcelos~\cite{V12}, which ensures communication safety and session fidelity (but not progress nor deadlock freedom). 
Below we follow the definitions and results from~\cite{V12}, pointing out differences where appropriate.  

\subsection{Process Model}
\label{ss:pm}
Let $P, Q, \ldots$ range over processes, $x, y, \ldots$ over channel names (or \emph{session endpoints}), and $v, v', \ldots$ over values; 
for simplicity, the sets of values and channels coincide.
In examples, we use $\unit$ to denote a terminated channel that cannot be further used.

\begin{figure}[t]
\begin{mdframed}
  \begin{displaymath}
   \begin{array}[t]{rllllll}
	P,Q ::= 	& \ov x\out v.P				& \mbox{(output)} &\midd 
	          	& \nil	     					& \mbox{(inaction)} &
	          	\\[1mm] 
           		\midd & x\inp y.P					& \mbox{(input)}     &
           		\midd  
           		& P \pp Q  	    		  		& \mbox{(composition)} &  
           		\\[1mm]
           		\midd & \selection x {l_j}.P 			& \mbox{(selection)}  & 
           		\midd
			& \res {xy}P					& \mbox{(session restriction)}&
			\\[1mm]
         	\midd	& \branching xlP				& \mbox{(branching)} &&& 
         	\\[1.5mm]

	v ::=		& x 							& \mbox{(channel)}&
  \end{array}
  \end{displaymath}
\vspace{1.0em}
  \begin{displaymath}
    \begin{array}{rll}  
	\Did{R-Com} & 
		\res {xy}(\ov x\out v.P\pp y\inp z.Q) 
		\to 
		\res {xy}(P\pp Q\substj{v}{z})
		\\[1mm]
	    \Did{R-Case} &
		\res {xy}(\selection x{l_j}.P \pp \branching ylP) 
		\to 
		\res {xy}(P\pp P_j)
	~~  j\in I \qquad
      \\[1mm]      
		\Did{R-Par}&		{P\to Q}\Longrightarrow
							{P\pp R\to Q\pp R}
	\\[1mm]
	\Did{R-Res}&	{P\to Q} \Longrightarrow
							{\res {xy} P\to \res {xy} Q}
       \\[1mm]
	\Did{R-Str}& 	{P\equiv P',\ P\to Q,\ Q'\equiv Q}
								\Longrightarrow
								{P'\to Q'} & 
    \end{array}
\end{displaymath}
\end{mdframed}
  \caption{The session $\pi$-calculus: syntax (top) and semantics (bottom).}
  \label{fig:sessionpi}
\end{figure}

We briefly comment on the syntax of processes, given in  \Cref{fig:sessionpi} (upper part).
The main difference with respect to the syntax in~\cite{V12} is that we do not consider 
boolean values nor conditional processes (if-then-else).  
Process $\ov x\out v.P$ denotes the output of  $v$ along  $x$, with continuation $P$.
Dually, process $x\inp y.P$ denotes an input along  $x$ with continuation $P$, with 
$y$ denoting a placeholder.
Rather than the non-deterministic choice of the untyped $\pi$-calculus~\cite{DBLP:journals/iandc/MilnerPW92a}, the session $\pi$-calculus includes
operators for (deterministic) internal and external labelled choices, 
denoted $\selection x {l_j}.P$ and $\branching xlP$, respectively.
Process $\selection x {l_j}.P$ uses  $x$ to select $l_j$ from a labelled choice process
$\branching xlP$, so as to trigger $P_j$; labels are indexed by the finite set $I$ and are pairwise distinct.
We also have the 
	inactive process (denoted $\nil$),
	the parallel composition of $P$ and $Q$ (denoted $P \pp Q$), 
	and the \emph{double restriction} operator, noted $\res{xy}P$:
	the intention is that $x$ and $y$ denote \emph{dual} session endpoints with scope $P$.
	We omit $\nil$ whenever possible and write, e.g., $\ov x\out \unit$
	instead of $\ov x\out \unit.\nil$.
	We will write \procs to denote the session $\pi$-calculus processes generated by the grammar in \Cref{fig:sessionpi}.
 
Notions of free and bound names in processes are exactly as in~\cite{V12}. That is, name $y$ is bound in $x\inp y.P$ and names $x$ and $y$ are bound in $\res{xy}P$. 
A name that does not occur bound within a process is said to be free; the set of free names of $P$ is denoted $\fv{P}$. We write $P\substj{v}{z}$ to denote the (capture-avoiding) substitution of free occurrences of $z$ in $P$ with $v$. Finally, we follow   \emph{Barendregt's variable convention}, whereby all names in binding occurrences in any mathematical context are pairwise distinct and also distinct from the free names.

The operational semantics is given in terms of a reduction relation, noted $P\to Q$, and defined by the rules in \Cref{fig:sessionpi} (lower part).
Reduction relies on a standard notion of \emph{structural congruence}, noted $\equiv$, 
defined by the following axioms:
\begin{align*}
& P \pp Q \equiv Q \pp P 
\qquad
(P \pp Q) \pp R \equiv  P \pp (Q \pp R)
\qquad
P \pp \nil \equiv P
\\
& \res{xy}P \pp Q \equiv \res{xy}(P \pp Q)
\qquad
\res{xy}\nil \equiv \nil
\qquad
\res{wx}\res{yz}P \equiv \res{yz}\res{wx}P
\end{align*}
Key rules in \Cref{fig:sessionpi}  are~\Did{R-Com} and \Did{R-Case}, denoting the interaction of output/input prefixes and  selection/branching constructs, respectively. 
Observe that interaction involves prefixes with different channels (endpoints), and always occurs in the context of an outermost double restriction.
Rules~\Did{R-Par}, \Did{R-Res}, and~\Did{R-Str} are standard~\cite{V12}.
We write $\mto$ to denote the reflexive, transitive closure of $\to$. 

\subsection{Type System}\label{ss:typesess}
%
The syntax of {session types}, ranged over $T, S, \ldots$, is given by the following grammar.
$$
T,S ::=  \wn T.S \midd \oc T.S \midd \branch lS \midd \select lS \midd \nilT
$$
The type $\wn T.S$ is assigned to an endpoint that first receives a value of type $T$ and then continues according to the protocol described by~$S$.
Dually, type $\oc T.S$ is assigned to an endpoint that first outputs a value of type $T$ and then continues according to the protocol described by $S$.
Type
$\branch lS$ is used for {external choices}, and generalises input types;
dually, 
type 
$\select lS$ is used for {internal choices}, and generalises   output types.
Finally, $\nilT$ is the type of an endpoint with a terminated protocol.
Notice that session types describe protocols as \emph{sequences} of actions; they 
do not admit parallel operators.

With respect to the syntax of types in~\cite{V12},  we only consider channel endpoint types (no ground types such as \textsf{bool}). 
Also, types in the system in~\cite{V12} can be qualified as either \emph{linear} or \emph{unrestricted}. 
Our session types are  {linear}---the only {unrestricted} session type is $\nilT$. 
Focusing on linear types suffices for our purposes, and leads to simplifications in typing rules and auxiliary notions, such as well-formedness (see below).

A central notion in session-based concurrency is  \emph{duality}, 
which relates session types offering opposite (i.e., complementary) behaviors; it stands at the basis of communication safety and session fidelity.
Given a session type $T$, its dual type $\dual{T}$ is defined as follows:
\begin{displaymath}
  \begin{array}{rll}
	\dual{\nilT}				&\defeq& \nilT\\
	\dual{\oc T.S}			&\defeq& \wn T.\dual{S}\\
	\dual{\wn T.S}			&\defeq& \oc T.\dual{S}\\
	\dual{\select lS}			&\defeq& \branch l{\dual S} \\
	\dual{\branch lS}			&\defeq& \select l{\dual S} 
  \end{array}
\end{displaymath}

%

\begin{figure}[t]
\begin{mdframed}
$$\begin{array}{c}
%
\inferrule[\Did{T-Nil}]
{\eend{\Gamma} }
{\Gamma \s \nil}
\qquad
\inferrule[\Did{T-Par}]
{\Gamma_1\s P\quad
\Gamma_2\s Q}
{\Gamma_1,\Gamma_2\s P\pp Q}
\qquad
\inferrule[{\Did{T-Res}}]
	{
	    \Gamma, x:T, y:\ov{T} \s P
	}
	{ 
	     \Gamma \s \res{xy} P
	}
\\\\
%
\inferrule[{\Did{T-In}}]
	{
 	  \Gamma, x:S, y:T \s P
	}
	{
	  \Gamma, x : \wn T.S \s x\inp{y}.P
	}
\qquad
\inferrule[{\Did{T-Out}}]
	{
	   \Gamma, x:S \s P
	}
	{
	   \Gamma,  x : \oc T.S, y:T \s \ov x\out{y}.P
	}
\\\\
\inferrule[{\Did{T-Brch}}]
	{
		\Gamma, x: S_i \s P_i \quad  \forall i\in I
	}
	{
		\Gamma, x:\branch lS \s \branching {x}lP
	}
\qquad
\inferrule[{\Did{T-Sel}}]
	{
		\Gamma,  x: S_j \s P \quad   \exists j \in I
	}
	{ 	
		\Gamma, x:  \select lS \s \selection {x} {l_j}.P
	}
  \end{array}$$
  \end{mdframed}
\caption{Typing rules for the session \p calculus.}
\label{fig:sess_typing}
\end{figure}

Typing contexts, ranged over by $\Gamma, \Gamma'$, are 
produced by the following grammar:
$$
\Gamma, \Gamma' ::= \emptyset \ \midd \ \Gamma, x:T
$$
where `$\emptyset$' denotes the empty context. 
We standardly require the variables appearing in a context to be pairwise distinct.
It is often convenient to treat typing contexts
as sets of typing assignments $x:T$. This way, e.g., we write $x: T \in \Gamma$ if $\Gamma = \Gamma', x:T$, for some $\Gamma'$.
We write $\eend{\Gamma}$ if and only if $x: T \in \Gamma$ implies $T = \nilT$.
We sometimes write $\Gamma^{\eende}$ to indicate that $\eend{\Gamma}$.
Given a context $\Gamma$ and a process $P$, a session typing judgment is of the form $\Gamma \s P$.
If $\Gamma$ is empty, we write $~\s P$.

Typing rules are given in \Cref{fig:sess_typing}.
Rule~\Did{T-Nil} states that $\nil$ is only well-typed under a fully terminated context.
Rule~\Did{T-Par} types the parallel composition of two processes by composing their corresponding typing contexts.\footnote{In the presence of {unrestricted types}, as given in~\cite{V12}, 
Rule \Did{T-Par} requires 
 a \emph{splitting operator}, noted $\circ$ in~\cite{V12}. However, since  we consider only  {linear session types}, the $\circ$ operator boils down to `$,$'.}
Rule~\Did{T-Res} types a restricted process by requiring that the two endpoints have dual types.
Rules~\Did{T-In} and \Did{T-Out} type the receiving and sending of a value over a channel $x$, respectively.
Finally, Rules~\Did{T-Brch} and \Did{T-Sel} are generalizations of input and output over a labelled set of processes.

The main guarantees of the type system in~\cite{V12} are \emph{communication safety} and \emph{session fidelity}, i.e., typed processes respect their ascribed protocols, as
represented by session types.
We have the following results:

\begin{theorem}[Strengthening -- Lemma 7.3 in~\cite{V12}]\label{thm:strength}
Let $\Gamma\s P$ and 
$x \not\in \fv{P}$.
If $\Gamma = \Gamma', x:T$ then $\Gamma'\s P$.
\end{theorem}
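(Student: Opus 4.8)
The plan is to proceed by induction on the derivation of $\Gamma \s P$ (equivalently, since the typing rules are syntax-directed, on the structure of $P$), using at every step the hypothesis $x \notin \fv{P}$ to locate the assignment $x:T$ inside the context and to confirm that it plays no active role in the derivation. The guiding intuition is that, in our purely linear setting, a variable that is never used free cannot carry a non-terminated type, so removing $x:T$ leaves every rule instance intact.

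For the base case, Rule~\Did{T-Nil}, we have $P = \nil$ and $\eend{\Gamma}$. Since $\Gamma = \Gamma', x:T$ and every assignment in $\Gamma$ is terminated, we obtain $T = \nilT$ and $\eend{\Gamma'}$, so a fresh application of~\Did{T-Nil} gives $\Gamma' \s \nil$; this is also where one sees that an unused assignment is necessarily terminated. For the inductive cases, the common pattern is that $x \notin \fv{P}$ guarantees $x$ is neither the subject of the outermost prefix nor a transmitted value, so $x:T$ must lie in the portion of the context handed to the premise(s). Concretely, in~\Did{T-Out} (and symmetrically in~\Did{T-In}, \Did{T-Sel}, and~\Did{T-Brch}) the subject and any communicated value differ from $x$, so $x:T$ survives into the premise; Barendregt's convention ensures $x$ is distinct from the bound placeholder, so $x \notin \fv{\cdot}$ propagates to the continuation. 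I would then invoke the induction hypothesis and reassemble the derivation with the same rule. Rule~\Did{T-Res} is analogous, the two restricted endpoints being bound and hence distinct from $x$.

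The only case demanding slight care is~\Did{T-Par}, because of the linear splitting $\Gamma = \Gamma_1, \Gamma_2$: here $x$ occurs in exactly one component, say $\Gamma_1 = \Gamma_1'', x:T$, and since $\fv{P} = \fv{P_1} \cup \fv{P_2}$ yields $x \notin \fv{P_1}$, the induction hypothesis gives $\Gamma_1'' \s P_1$; recomposing with $\Gamma_2 \s P_2$ via~\Did{T-Par} produces $\Gamma' \s P$. I expect this bookkeeping of the split---ensuring $x:T$ is discarded from the correct side while the other side is left untouched---to be the main (if mild) obstacle. Everything else is a routine propagation of the freshness of $x$ through the binders, justified throughout by Barendregt's convention.
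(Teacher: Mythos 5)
Your proof is correct and matches the intended argument: the paper does not prove this statement itself but imports it as Lemma~7.3 of~\cite{V12}, where the proof is likewise a routine induction on the typing derivation, threading the unused assignment unchanged through each rule (splitting it to the correct side at \Did{T-Par}) until the \Did{T-Nil} leaves force it to be terminated. Your remark that linearity compels $T = \nilT$ is exactly the specialization to this paper's linear-only fragment of the original lemma's conclusion that the discarded type must be unrestricted, so nothing is missing.
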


\begin{theorem}[Preservation for $\equiv$ -- Lemma 7.4 in~\cite{V12}]\label{thm:subj_cong}
If $\Gamma\s P$ and $P\equiv Q$, then $\Gamma\s Q$.
\end{theorem}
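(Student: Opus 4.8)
The plan is to proceed by induction on the derivation of $P \equiv Q$, viewing $\equiv$ as the least congruence containing the six structural axioms. Because $\equiv$ is symmetric, it is cleanest to strengthen the statement and prove, for every $P \equiv Q$, the biconditional $\Gamma \s P \iff \Gamma \s Q$; the reflexivity and transitivity rules are then immediate, and the congruence (contextual) rules follow routinely: since the typing rules in \Cref{fig:sess_typing} are syntax-directed, one inverts the rule for the outermost constructor of the enclosing context to expose the typing of the immediate subterm, applies the induction hypothesis to it, and reassembles. This reduces everything to verifying the six axioms.

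First I would dispatch the purely structural axioms. For commutativity $P \pp Q \equiv Q \pp P$ and associativity $(P \pp Q) \pp R \equiv P \pp (Q \pp R)$, inverting \Did{T-Par} splits $\Gamma$ into disjoint pieces typing the components; since contexts are treated as sets of assignments, reordering or regrouping them and reapplying \Did{T-Par} reconstructs the same $\Gamma$ on the other side. The axioms $\res{wx}\res{yz}P \equiv \res{yz}\res{wx}P$ and $\res{xy}\nil \equiv \nil$ are analogous: invert \Did{T-Res} (twice) to expose the dual endpoint assignments $w:T_1, x:\dual{T_1}, y:T_2, z:\dual{T_2}$ and recombine in the opposite order; for the latter, \Did{T-Nil} forces the entire context, including the pair $x:\nilT, y:\nilT$, to be ended.

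The first genuinely delicate case is the unit law $P \pp \nil \equiv P$. Inverting \Did{T-Par} gives $\Gamma = \Gamma_1, \Gamma_2$ with $\Gamma_1 \s P$ and $\Gamma_2 \s \nil$, where \Did{T-Nil} guarantees $\eend{\Gamma_2}$; to conclude $\Gamma \s P$ I must re-insert the ended context $\Gamma_2$, i.e.\ I need \emph{weakening by ended bindings}: if $\Gamma_1 \s P$ and $\eend{\Gamma_2}$, then $\Gamma_1, \Gamma_2 \s P$. This is the dual of Strengthening (\Cref{thm:strength}) and is established by a straightforward induction on the typing derivation (the only interesting rule being \Did{T-Nil}, which simply absorbs the extra ended assignments). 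The converse direction is trivial: split $\Gamma$ as $\Gamma, \emptyset$ and type $\nil$ under the empty (hence ended) context.

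The scope-extrusion axiom $\res{xy}P \pp Q \equiv \res{xy}(P \pp Q)$ is where I expect the main obstacle; throughout, Barendregt's convention ensures $x, y \notin \fv{Q}$. The left-to-right direction is smooth: inverting \Did{T-Par} and \Did{T-Res} yields $\Gamma_1, x:T, y:\dual{T} \s P$ and $\Gamma_2 \s Q$ with $\Gamma = \Gamma_1, \Gamma_2$, and since $x, y$ do not occur in $\Gamma_2$ the contexts are disjoint, so \Did{T-Par} followed by \Did{T-Res} reassembles $\Gamma \s \res{xy}(P \pp Q)$. The subtle direction is right-to-left: after inverting \Did{T-Res} and \Did{T-Par} we get a split $\Gamma, x:T, y:\dual{T} = \Delta_1, \Delta_2$ with $\Delta_1 \s P$ and $\Delta_2 \s Q$, and the crux is to show that $x:T$ and $y:\dual{T}$ may be taken to lie in the $P$-component $\Delta_1$. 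When $T \neq \nilT$ this is automatic, since the linear invariant of the system forces every non-ended binding to be consumed, so a linear endpoint cannot sit in $\Delta_2$ while being absent from $\fv{Q}$; when $T = \nilT$ the endpoints are ended and can be relocated from $\Delta_2$ to $\Delta_1$ using Strengthening (\Cref{thm:strength}) to drop them from $Q$'s side and ended-weakening to re-absorb them on $P$'s side. Once the split is normalised this way, \Did{T-Par} and \Did{T-Res} give $\Gamma \s \res{xy}P \pp Q$. Making this relocation precise --- checking that moving the (possibly ended) endpoint assignments across the split preserves both typings --- is the one step demanding real care, as it is exactly where linearity, duality, and terminated channels interact.
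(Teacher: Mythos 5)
Your proposal is correct, and it is worth noting that the paper itself offers no proof of this statement at all: it imports it verbatim as Lemma~7.4 of~\cite{V12}, so the only meaningful comparison is with the standard argument given there, which your case analysis essentially reconstructs. The structure matches: strengthening to a biconditional (needed because $\equiv$ is symmetric), reduction to the six axioms, and the isolation of the two genuinely non-trivial ingredients. Your ``weakening by ended bindings'' is precisely the weakening lemma that~\cite{V12} proves alongside strengthening (the paper quotes only the latter, as \Cref{thm:strength}, so you were right to flag that the dual lemma must be supplied; its proof by induction on the typing derivation is as routine as you claim, since in this linear fragment only \Did{T-Nil} touches the residual context). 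The one place where your argument leans on a fact you state but do not prove is the linear-use property in the scope-extrusion case: if $\Gamma \s P$ and $x{:}T \in \Gamma$ with $T \neq \nilT$, then $x \in \fv{P}$. This does hold here and is established by an easy induction over the typing rules (\Did{T-Nil} excludes unused linear assignments outright, and every other rule threads them into the premises), but in a fully written-out proof it deserves to be stated as a separate lemma rather than invoked as ``the linear invariant of the system,'' since it is exactly what makes the relocation of $x{:}T, y{:}\dual{T}$ into the $P$-component legitimate. With that lemma made explicit, your treatment of the $T = \nilT$ subcase via \Cref{thm:strength} plus ended-weakening is exactly right, and the remaining axioms are handled as in the source.
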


\begin{theorem}[Preservation  -- Theorem 7.2 in~\cite{V12}]\label{thm:subj_red}
If $\Gamma\s P$ and $P\to Q$, then $\Gamma\s Q$.
\end{theorem}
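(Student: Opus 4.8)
The plan is to proceed by induction on the derivation of the reduction $P \to Q$, following the five rules of \Cref{fig:sessionpi}. Before starting, I would isolate the one nontrivial ingredient, a \emph{substitution lemma}: if $\Gamma, z : T \s Q$ and $v$ is a channel fresh for $Q$ (i.e., $v \notin \fv{Q}$), then $\Gamma, v : T \s Q\substj{v}{z}$. Since values here are themselves channels, this lemma is really a typed-renaming statement, and I would prove it by a routine induction on the typing derivation of $Q$; the only mildly delicate point is that linearity forces $v$ to be fresh for $Q$ so that no resource is duplicated, which Barendregt's convention secures precisely.

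The base cases are the two interaction rules. For \Did{R-Com}, I would invert the typing of $\res{xy}(\ov x\out v.P \pp y\inp z.Q)$: rule \Did{T-Res} supplies a type $U$ with $x:U$ and $y:\dual{U}$; rule \Did{T-Par} splits the linear context between the two components; and rules \Did{T-Out} and \Did{T-In} then force $U = \oc T.S$ and, by the duality clause $\dual{\oc T.S} = \wn T.\dual{S}$, also $\dual{U} = \wn T.\dual{S}$. The crucial bookkeeping is that the output consumes the resource $v:T$ from the left component, while the continuations are typed with $x:S$ and $y:\dual{S}$, respectively. Applying the substitution lemma to transport $v:T$ into the residual of $Q$, the reduct $\res{xy}(P \pp Q\substj{v}{z})$ is typed under the same $\Gamma$, because $S$ and $\dual{S}$ remain dual, so \Did{T-Res} reapplies. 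The case \Did{R-Case} is analogous but simpler: no value travels, so no substitution is needed; one only checks that the selected branch $P_j$ and the continuation $P$ are typed with dual residual types $\dual{S_j}$ and $S_j$, which again lets \Did{T-Res} close the case.

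The inductive cases are immediate structural arguments. For \Did{R-Par} and \Did{R-Res}, I would invert \Did{T-Par} (resp. \Did{T-Res}), apply the induction hypothesis to the reducing subterm under its own context, and reassemble with the same typing rule, noting that the untouched component keeps its typing unchanged. For \Did{R-Str}, the congruence closure is handled entirely by \Cref{thm:subj_cong}: given $\Gamma \s P'$ with $P' \equiv P$, I obtain $\Gamma \s P$, apply the induction hypothesis to $P \to Q$ to get $\Gamma \s Q$, and then invoke $Q \equiv Q'$ once more to conclude $\Gamma \s Q'$.

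The main obstacle is concentrated in the \Did{R-Com} case, and specifically in the linear accounting under duality: one must verify that the type $T$ of the transmitted value is exactly the type expected by the receiver (forced by $y:\dual{U}$ with $U = \oc T.S$) and that, after the exchange, the residual endpoints $x:S$ and $y:\dual{S}$ are still dual so that restriction is reinstated. Everything else is routine inversion and reassembly, with the substitution lemma absorbing the only genuinely inductive reasoning about process structure.
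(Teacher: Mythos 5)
Your argument is correct and is essentially the proof the paper relies on: the paper does not prove this statement itself but imports it as Theorem~7.2 of~\cite{V12}, where the argument proceeds exactly as you propose---induction on the derivation of $P\to Q$, a substitution lemma (quoted in this paper's appendix as \Cref{lem:subst_sess}) absorbing the \Did{R-Com} case, routine inversion/reassembly for \Did{R-Par} and \Did{R-Res}, and \Cref{thm:subj_cong} discharging \Did{R-Str}. The only cosmetic difference is the form of the substitution lemma: you state it as a typed renaming with a freshness side condition, while \cite{V12} states it as $\Gamma_1\s v:T$ and $\Gamma_2, x:T\s P$ implying $\Gamma_1,\Gamma_2\s P\substj{v}{x}$; in this purely linear fragment (disjoint context splitting, the only unrestricted type being $\nilT$) the two formulations are interchangeable, and your observation that linearity plus Barendregt's convention guarantees $v\notin\fv{Q}$ in the \Did{R-Com} case is exactly the bookkeeping needed.
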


Following~\cite{V12}, we say that processes of the form
$\ov x\out v.P$, $x\inp z.Q$, 
$\selection x{l_j}.P$, and $\branching xlP$
are \emph{prefixed at $x$}. 
We call \emph{redexes} processes of the form $\ov x\out v.P\pp y\inp z.Q$ 
and 
$\selection x{l_j}.P \pp \branching ylP$.
The following notion of well-formed processes, a specialization of the definition in~\cite{V12}, is key to single out meaningful typed processes:
\begin{definition}[Well-Formedness]\label{def:well-form_sessions.}
A process is {\em well-formed} if for each of its structural congruent processes of the form
$\res{{x_1y_1}}\ldots\res{{x_ny_n}}(P\pp Q \pp R)$ the following condition holds.
\begin{enumerate}[label=$\bullet$]

\item
If $P$ is prefixed at $x_1$ and $Q$ is prefixed at $y_1$
then $P\pp Q$ is a redex.
\end{enumerate}
\end{definition}
%
%
%
%

We have the following result: 
\begin{theorem}[Safety -- Theorem 7.3 in~\cite{V12}]\label{thm:safety}
If $~\s P$ then $P$ is well-formed.
\end{theorem}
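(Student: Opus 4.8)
The plan is to argue by inversion on the typing derivation, exploiting duality at the outermost restriction. Fix a process with $\s P$ and consider an arbitrary structurally congruent form $\res{x_1y_1}\cdots\res{x_ny_n}(P_1 \pp P_2 \pp P_3)$, where I rename the three components of \Cref{def:well-form_sessions.} to $P_1,P_2,P_3$ to avoid clashing with the top-level $P$. Assume $P_1$ is prefixed at $x_1$ and $P_2$ is prefixed at $y_1$; I must show that $P_1 \pp P_2$ is a redex. By \Cref{thm:subj_cong}, this congruent form is still typable under the empty context, so I may reason on its derivation.

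First I would peel off the $n$ outermost restrictions. Each application of \Did{T-Res} introduces a dual pair, so inverting them yields a context $\Gamma = x_1 : T_1, y_1 : \dual{T_1}, \ldots, x_n : T_n, y_n : \dual{T_n}$ with $\Gamma \s P_1 \pp P_2 \pp P_3$; crucially, the endpoints $x_1$ and $y_1$ carry dual types $T_1$ and $\dual{T_1}$. Next I would invert \Did{T-Par} to distribute $\Gamma$ among the three components. Since $P_1$ is prefixed at $x_1$ we have $x_1 \in \fv{P_1}$, and because our session types are linear (so that the splitting of the footnote is just a partition of the context) the assignment $x_1 : T_1$ must land in the part $\Gamma_1$ typing $P_1$; symmetrically $y_1 : \dual{T_1}$ lands in the part $\Gamma_2$ typing $P_2$. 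In particular $T_1 \neq \nilT$, since a prefix at $x_1$ cannot be typed against a terminated type.

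The heart of the argument is then a finite case analysis on the shape of the prefix of $P_1$, coupled with inversion of the matching typing rule. If $P_1 = \ov{x_1}\out{v}.P_1'$ then \Did{T-Out} forces $T_1 = \oc T.S$; if $P_1 = x_1\inp{z}.P_1'$ then \Did{T-In} forces $T_1 = \wn T.S$; and \Did{T-Sel} (resp. \Did{T-Brch}) forces $T_1$ to be an internal choice $\select lS$ (resp. an external choice $\branch lS$). By the definition of duality, $\dual{T_1}$ then carries the complementary outermost constructor. Applying the same inversion to $P_2$ against $y_1 : \dual{T_1}$ pins down the prefix of $P_2$ as exactly the matching co-action: an input facing an output, a branching facing a selection, and vice versa. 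Hence $P_1 \pp P_2$ has the form $\ov{x_1}\out{v}.P_1' \pp y_1\inp{z}.P_2'$ or $\selection{x_1}{l_j}.P_1' \pp \branching{y_1}{l}{P_2'}$ (or their symmetric variants), i.e.\ a redex, which closes the case and hence the proof.

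I expect the main obstacle to be the bookkeeping in the \Did{T-Par} step: one must argue, using linearity, that the linear assignments $x_1 : T_1$ and $y_1 : \dual{T_1}$ cannot be diverted into $\Gamma_3$ nor split away, so that the inversions of the prefix rules really apply to $P_1$ and $P_2$ with precisely the dual types supplied by \Did{T-Res}. Once this alignment is secured, the remainder is a routine case analysis driven by the duality table for session types.
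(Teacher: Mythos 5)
Your proposal is correct and takes essentially the same route as the proof this paper inherits from Vasconcelos~\cite{V12} (Theorem~7.3): preservation of typing under structural congruence (\Cref{thm:subj_cong}), syntax-directed inversion of the $n$ applications of \Did{T-Res} to expose the dual pair $x_1{:}T_1,\, y_1{:}\dual{T_1}$, linearity of context splitting in \Did{T-Par} to force these assignments onto the components actually prefixed at $x_1$ and $y_1$, and a duality-driven case analysis on the prefix rules to conclude that $P_1 \pp P_2$ is a redex. The only cosmetic point is your (correct) reading of redexes up to commutativity of $\pp$, which matches the intended treatment in~\cite{V12}; otherwise the bookkeeping you flag as the main obstacle is handled exactly as you describe.
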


Therefore, if  $~\s P$ and $P$ reduces to $Q$ in zero or more steps, then $Q$ is well-formed; this is Theorem 7.1 in~\cite{V12}.


\rev{We close by introducing some useful terminology}:

\begin{definition}[(Un)Composable Processes]
\label{d:openclose}
Let $\Gamma \s P$.
If $\Gamma$ is empty, we say that $P$ is  \emph{uncomposable}; 
otherwise, if $\Gamma$ is non-empty, we say $P$ is \emph{composable}.
\end{definition}

\rev{We use the adjectives composable and uncomposable to distinguish typable processes depending on their associated typing context. Note that the adjectives \emph{open} and \emph{closed} have their usual meaning, associated to the free names of a process, possibly untyped.  As an example, process $\nil$ is a closed process that can be typed under a non-empty typing context, and so it is composable.}

\subsection{Deadlock Freedom}
As already motivated, a desirable liveness property for session $\pi$-calculus processes is that they should never ``get stuck''.
Unfortunately, the session type system given in~\cite{V12} (and summarized above) does not exclude deadlocked processes.
Intuitively, this is because typed processes may contain cyclic causal dependencies enforced by communication prefixes in processes but not described by their session types. 
Indeed, a particularly insidious class of deadlocks is due to cyclic interleaving of channels in processes, as illustrated by following example.

\begin{example}[A Deadlocked Process]\label{ex:basic}
Process  
%
$
P\defeq \res{xy}\res{wz}(\ov x\out \unit.\ov w\out \unit\pp z\inp t.y\inp s)
$
represents the implementation of two independent sessions, $xy$ and $wz$, which get intertwined (blocked) due to the 
nesting induced by input and output prefixes.
Process $P$ is well-typed in~\cite{V12} under $\unit:\nilT\s P$, even if $P$ is unable to reduce.
\end{example}

Below we  define deadlock freedom 
in the session $\pi$-calculus by
following~\cite[Def.\,5.2]{DBLP:journals/iandc/0001L17}: 

\begin{definition}[Deadlock Freedom]\label{def:lock}
A process $P$ is \emph{deadlock-free} if the following condition holds:
whenever $P \mto P'$ and one of the following holds
\begin{itemize}
\item $P' \equiv \res{\widetilde{xy}}(\ov x\out v.Q_1 \pp Q_2)$
\item $P' \equiv \res{\widetilde{xy}}(x\inp y.Q_1 \pp Q_2)$
\item $P' \equiv \res{\widetilde{xy}}(\selection x {l_j}.Q_1 \pp Q_2)$
\item $P' \equiv \res{\widetilde{xy}}(\branching {x}lP \pp Q)$
\end{itemize} 
then there exists $R$ such that $P' \to R$.
\end{definition}

\begin{remark}[Defining Deadlock Freedom]
\label{r:df}
\Cref{def:lock} is closely related to the definition of deadlock and deadlock freedom in~\cite{K02} (Definition 2.4), which states that  a process $P$ is in deadlock if it reaches one of the first two items stated in \Cref{def:lock} and cannot reduce from there. Then, a process $P$ defined as deadlock-free if it never reduces to a   deadlocked process.
We shall be following the type system in~\cite{K06}, where the notion of deadlock freedom is defined informally: a process is deadlock-free if 
when ``given a request, it will eventually returns a result unless the process diverges''.
\end{remark}

%
%

\begin{example}[A Deadlock-Free Process]
\label{ex:deadlock-free}
It is easy to see that process $P$  from \Cref{ex:basic} is not deadlock-free as per \Cref{def:lock}.
A deadlock-free variant of process $P$  would be
$P'\defeq \res{xy}\res{wz}(\ov x\out \unit.\ov w\out \unit \pp y\inp s.z\inp t)$, which also is typable: $\unit:\nilT\s P'$.
Observe how the difference between $P$ and $P'$ is in the parallel component on the right-hand side: the two input prefixes have been swapped.
\end{example}



\section{Two Approaches to Deadlock Freedom}\label{s:deadlf}
We introduce two approaches to typing deadlock-free  processes.
The first comes from  interpretations of linear logic propositions as 
session types~\cite{CairesP10,DBLP:journals/mscs/CairesPT16,DBLP:conf/icfp/Wadler12} (\Cref{ss:lf}).
The second approach exploits encodings of session processes and types~\cite{DGS12} into the linear types with \emph{usages} for the $\pi$-calculus (\Cref{ss:dgs}). Based on these approaches, in~\Cref{s:hier} we will formally define the classes \lcp and \fullKoba mentioned in the introduction.

\subsection{Linear Logic Foundations of Session Types}\label{ss:lf}
The linear logic interpretation of session types was introduced by Caires and Pfenning~\cite{CairesP10,DBLP:journals/mscs/CairesPT16}, and developed
by Wadler~\cite{DBLP:conf/icfp/Wadler12} and others. Here we consider 
an interpretation 
based on classical linear logic (\CLL) with mix principles, following~\cite{CairesCFest14,CP17}.

The syntax and semantics of processes are as in \Cref{s:sessions} with the following differences.
First, we have a single restriction construct $\res{x}P$ instead of the
double restriction $\res{xy}P$.
Second, we have a \emph{forwarding process}, denoted $\linkr{x}{y}$, which intuitively ``fuses'' or ``links'' channels/names $x$ and $y$. More formally, we have:
\begin{displaymath}
   \begin{array}[t]{rllllll}
	P,Q ::= 	\ov x\out v.P \midd x\inp y.P \midd \selection x {l_j}.P \midd \branching xlP \midd \res x P \midd \linkr{x}{y} \midd P \pp Q \midd
	\nil
  \end{array}
  \end{displaymath}
\noindent
In what follows, the bound output $\res{y}\ov{x}\out y.P$ will be abbreviated as $\bout{x}{y}P$.
Also, we write $\res{\wt{x}} P$ to abbreviate $\res{x_1}\ldots\res{x_n}P$.

Differences in the reduction rules
are summarized in \Cref{fig:redlcp}.
In particular, observe how interaction of input/output prefixes and of selection/branching constructs is no longer covered by an outermost restriction.

\begin{figure}[t!]
\begin{mdframed}
  \begin{displaymath}
    \begin{array}{rll}  
	\Did{R-ChCom} & 
		\ov x\out v.P\pp x\inp z.Q
		\to 
		P\pp Q\substj{v}{z}
		\\[1mm]
	\Did{R-Fwd} & 
		\res {x}(\linkr{x}{y} \pp P)
		\to 
		P\substj{y}{x}
      	\\[1mm]
	\Did{R-ChCase} &
		\selection x{l_j}.P \pp \branching xlP
		\to 
		P\pp P_j
	\quad  j\in I 
       \\ [1mm]
	\Did{R-ChRes}&	{P\to Q} \Longrightarrow
							{\res {x} P\to \res {x} Q}
    \end{array}
\end{displaymath}
\end{mdframed}
  \caption{Reduction rules for processes in  \lcp.}
  \label{fig:redlcp}
\end{figure}

As for the type system, we consider the so-called \emph{linear logic types}
which correspond to linear logic propositions (without exponentials). They
 are given by the following grammar:
$$
A,B ::= \bot \midd \one \midd A\otimes B  \midd A \parl B \midd \branch lA  \midd  \select lA  
$$
Intuitively, $\bot$ and $\one$ are used to type a terminated endpoint.
Type $A\otimes B$ is associated to an endpoint that first outputs an object of type $A$ and then 
behaves according to $B$. Dually, type 
$A \parl B$ is the type of an endpoint that first inputs an object of type $A$ and then continues as $B$.
The interpretation of  $\branch lA$ and $\select lA$ as types for branching and selection behaviors is precisely as in session types (cf. \Cref{ss:typesess}).

A full duality on linear logic types 
corresponds to the negation operator of \CLL\ $(\cdot)^\perp$.
The \emph{dual} of type $A$, denoted $\dual{A}$, is inductively defined
as follows:
\begin{displaymath}
\begin{array}{rll}
\dual{\one} &\defeq& \bot 
\\ \dual{\bot} &\defeq&  \one 
\\ \dual{A\otimes B}  &\defeq&  \dual{A} \parl \dual{B} 
\\ \dual{A\parl B}  &\defeq&   \dual{A} \otimes \dual{B} 
\\ \dual{\branch lA}  &\defeq&  \select l{\dual A}
\\ \dual{\select lA}  &\defeq&   \branch l{\dual A} 
\end{array}
\end{displaymath}
  
\noindent
Recall that $A \lolli B \triangleq \dual{A} \parl B$.
As explained in~\cite{CairesCFest14},
considering mix principles means admitting $\bot \lolli \one$ and
$\one \lolli \bot$, and therefore $\bot = \one$.
We write $\bullet$ to denote either $\bot$ or $\one$,
and therefore $\dual{\bullet} = \bullet$. 
\rev{That is, we consider the \emph{conflation} of dual types $\bot$ and $\one$ as explored by Atkey et al.~\cite{DBLP:conf/birthday/AtkeyLM16}.} 

\begin{figure*}[t]
\begin{mdframed}
    \vspace{-1em}
\centering
{
$$
\begin{array}{c}
\inferrule*[left=\Did{T-$\one$}]{
}{ \nil \cp x{:}\bullet}
\qquad
\inferrule*[left=\Did{T-$\bot$}]
{P \cp  \Delta}
{P \cp x{:}\bullet, \Delta}
\qquad
\inferrule*[left=\Did{T-$\tid$}]{
}
{\linkr{x}{y} \cp x{:}A, y{:}\dual{A}}
\\\\
\inferrule[\Did{T-$\parl$}]
{P \cp \Delta, y{:}A, x{:}B}
{x\inp {y}.P \cp \Delta, x{:} A\parl B}
\qquad
\inferrule[\Did{T-$\otimes$}]
{P \cp  \Delta, y{:}{ A}  \and Q \cp \Delta', x{:}B}
{\bout{x}{y}. (P\para Q) \cp  \Delta, \Delta', x{:}  A\otimes B}
\\\\
\inferrule[\Did{T-$\oplus$}]
{P \cp \Delta, x{:}A_j  \and j\in I}
{ \selection {x} {l_j}.P  \cp \Delta, x{:}\select lA}
\qquad
\inferrule[\Did{T-$\with$}]
{P_i \cp \Delta, x{:}A_i  \and \forall i\in I}
{\branching xlP \cp \Delta, x{:}\branch lA}
\\\\
\inferrule[\Did{T-$\cut$}]
{P \cp \Delta, x{:}\dual{A}  \and Q \cp  \Delta', x{:}A}
{\res {x}(P\para Q) \cp\Delta, \Delta'}
\qquad
\inferrule[\Did{T-$\mix$}]
{P \cp \Delta  \and Q \cp  \Delta'}
{ P\para Q \cp\Delta, \Delta'}
\end{array}
$$
}
\end{mdframed}
\caption{\label{fig:type-system-cll}
Typing rules for the \p calculus with linear logic types.}
\end{figure*}

Typing contexts, ranged over $\Delta, \Delta', \ldots$, are produced by the following grammar:
$$
\Delta, \Delta' ::= \cdot \ \midd \ \Delta, x:A
$$
where  `$\,\cdot\,$' denotes the empty typing context.

Typing judgments are of the form $P \cp \D$.
\Cref{fig:type-system-cll} gives the corresponding typing rules.
One salient point is Rule~$\Did{T-$\cut$}$, which types two processes that have exactly one channel of dual type in common ($x$ in the rule) by composing them in parallel and immediately restricting this common channel.
This implements the  \emph{``composition plus hiding''} principle,  
which monolithically integrates parallel composition and restriction in a single rule.
Indeed, unlike the system in~\cite{V12}, there is no dedicated rule for restriction, which also appears in the bound output induced by Rule~\Did{T-$\otimes$}.
Also, Rule~\Did{T-$\mix$} conveniently enables to type the 
\emph{independent} parallel composition of processes, i.e., the composition of two processes that do not share any sessions.
\rev{(This is referred to as \emph{communication-free concurrency} in \cite{DBLP:conf/birthday/AtkeyLM16}.)}
\nurev{Following \Cref{d:openclose}, we say that process $P \cp \D$ is composable if $\Delta \neq \cdot$ and uncomposable otherwise.}

We now collect main results for this type system; see~\cite{DBLP:journals/mscs/CairesPT16,CairesCFest14,CP17} for details.
We first state type preservation:
\begin{theorem}[Type Preservation]
\label{thm:LL-type-preservation}
If $P \cp \D$ and $P \redd Q$ then $Q \cp \D$.
\end{theorem}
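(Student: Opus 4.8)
The plan is to prove type preservation by induction on the derivation of $P \redd Q$, exploiting the well-known correspondence between process reduction in \lcp and cut-elimination in \CLL: the principal reduction rules correspond to principal cut-reductions, the congruence rule \Did{R-ChRes} is handled by the induction hypothesis, and the structural axioms correspond to structural equivalences on proofs. Before the induction, I would establish two auxiliary results. First, preservation under structural congruence: if $P \cp \Delta$ and $P \equiv Q$, then $Q \cp \Delta$. This is proved by induction on the derivation of $P \equiv Q$, checking that each axiom (commutativity and associativity of $\pp$, scope extrusion $\res{x}P \pp Q \equiv \res{x}(P \pp Q)$, and the garbage rules for $\nil$) corresponds to a sound rearrangement of the cut/mix structure of the typing derivation. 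Second, a renaming lemma: if $P \cp \Delta, x:A$ and $y$ is fresh for $P$, then $P\substj{y}{x} \cp \Delta, y:A$; this is needed for the forwarding case and follows by a routine induction on the typing of $P$.

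For the congruence rule \Did{R-ChRes}, where $\res{x}P \redd \res{x}Q$ follows from $P \redd Q$, I would invert the only rule that can introduce the binder $x$, namely \Did{T-$\cut$}, exposing $P = P_1 \pp P_2$ with $P_1 \cp \Delta_1, x:A$ and $P_2 \cp \Delta_2, x:\dual{A}$ where $\Delta = \Delta_1, \Delta_2$. The inner reduction acts on one of these components; applying the induction hypothesis to the reducing subderivation and reassembling with \Did{T-$\cut$} yields $\res{x}Q \cp \Delta$. It is worth noting that a typable redex such as $\ov x\out v.P \pp x\inp z.Q$ can only occur underneath a restriction on $x$, since linearity forbids the shared endpoint $x$ from appearing free in the disjoint contexts required by \Did{T-$\mix$}; hence the principal cases are always reached \emph{through} \Did{R-ChRes}.

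The principal cases form the technical core. For \Did{R-ChCom}, after inverting \Did{T-$\cut$} on the communicated channel I invert the output and input rules ($\otimes$ on $\ov x\out v.P_1$ and $\parl$ on $x\inp z.P_2$), which forces the transmitted type and the continuation type on $x$ to be dual; I then rebuild $P_1 \pp P_2\substj{v}{z}$ by forming one cut on the continuation channel and another cut (or a renaming, via the lemma above) on the transmitted name, recovering exactly the context $\Delta$. The case \Did{R-ChCase} is analogous, using the duality between $\select lA$ and $\branch lA$ and retaining the selected branch $j$. For \Did{R-Fwd}, inverting \Did{T-$\cut$} and the axiom rule typing $\linkr{x}{y}$ reduces the goal to the renaming lemma (here $y \notin \fv{P}$ by linearity), giving $P\substj{y}{x} \cp \Delta$.

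The main obstacle is the communication case \Did{R-ChCom}: reassembling the reduct so that the resulting cut/mix structure types precisely the same $\Delta$ requires care with the context splitting dictated by linearity and with the substitution $\substj{v}{z}$ on the transmitted name. In general one must first invoke the congruence-preservation lemma to bring the relevant cut into principal position, which is the process-level counterpart of the commuting conversions of cut-elimination. Once the redex is in principal position, the rebuilt derivation mirrors the \CLL\ principal cut-reduction step exactly, so the typing context is unchanged; the bookkeeping around duality and context partitioning is where the proof demands the most attention.
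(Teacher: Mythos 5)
Your proposal is correct and follows essentially the argument the paper relies on: the paper states this theorem without a local proof, deferring to the cited sources \cite{DBLP:journals/mscs/CairesPT16,CairesCFest14,CP17}, and your sketch---induction on the reduction derivation, backed by a subject-congruence lemma and a renaming lemma for the forwarder case, with the principal cases handled as \CLL\ cut-reductions---is precisely the standard proof from those works. You also correctly identify the one point needing care in this presentation: since Rule~\Did{T-$\cut$} fuses composition and hiding, the premise of \Did{R-ChRes} (a parallel composition whose components share the bound name $x$) is not itself typable, so the communication and forwarding cases must be analyzed under the enclosing cut rather than by a blind appeal to the induction hypothesis---which is exactly how you arrange the case analysis, applying the induction hypothesis only when the inner reduction is internal to one component.
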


We now state deadlock freedom.
For any $P$, define $live(P)$ if and only if $P \equiv (\nub \wt{n})(\pi.Q \para R)$,  where $\pi$ is an input, output, selection, or branching prefix. 

\begin{theorem}[Deadlock Freedom]\label{t:progress}
If $ P \cp \cdot $ and $live(P)$ 
then $P \redd Q$,  for some $Q$.
\end{theorem}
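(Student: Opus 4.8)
The plan is to prove deadlock-freedom for $\lcp$ by following the standard Curry-Howard recipe: establish that every \emph{live}, \emph{closed} process has a top-level communication redex, and then invoke type preservation (\Cref{thm:LL-type-preservation}) to close the argument. The key structural fact, which I would isolate as a lemma, is that a typable process $P \cp \cdot$ with $live(P)$ can always be rewritten, using structural congruence, into a canonical form in which a prefix and its \emph{dual} co-prefix sit at the top level underneath a restriction. Concretely, the proof hinges on analysing the outermost $\Did{T-\cut}$ that introduces the channel $x$ on which the leading prefix $\pi$ of the live process acts.

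First I would proceed by induction on the typing derivation of $P \cp \cdot$, reasoning on the last rule applied. The base cases ($\Did{T-\one}$, $\Did{T-\bot}$, axiom/link, and the empty-context cases) are either non-live or reduce immediately via $\Did{R-Fwd}$. The interesting case is $\Did{T-\cut}$, where $P \equiv \res{x}(P_1 \para P_2)$ with $P_1 \cp \Delta_1, x{:}A$ and $P_2 \cp \Delta_2, x{:}\dual{A}$. Since $P$ is closed, each $\Delta_i$ is empty, so $P_1$ and $P_2$ are themselves closed on all channels but $x$. The central claim is that, because $x$ carries dual types in the two components and these are the \emph{only} shared channel, one of $P_1, P_2$ exhibits an action on $x$ at the top level; by duality the other exhibits the matching co-action, and the pair forms a redex firing via $\Did{R-ChCom}$, $\Did{R-ChCase}$, or $\Did{R-Fwd}$. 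To make this precise I would appeal to the acyclicity built into $\Did{T-\cut}$: every channel is shared by exactly two parallel components, so the dependency graph among restricted names is a forest, and a leaf of this forest yields a top-level dual prefix pair. This is the crux, and it is exactly where the linear-logic discipline pays off---the absence of a restriction rule independent of parallel composition rules out the cyclic interleavings of \Cref{ex:basic}.

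The main obstacle I anticipate is handling the commuting conversions needed to bring the matching prefixes to the top. A prefix on $x$ in $P_1$ may be guarded by other prefixes or buried under further $\cut$s and $\mix$es in $P_1$; I must show that structural congruence together with the typing constraints lets me float the action on the ``leaf'' channel outward. I would therefore prove, by a secondary induction on the structure of a typed process, a \emph{canonical-form lemma}: any $Q \cp \Delta$ with $live(Q)$ is structurally congruent to $\res{\wt{n}}(\pi.Q' \para R)$ where $\pi$ acts on a channel that is either free in $\Delta$ or is a leaf of the cut-forest, and hence is ready to synchronise with its dual. Once this lemma is in place, the outermost reduction follows and type preservation guarantees the residual is again well-typed, completing the argument.

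Where the induction returns a live subprocess whose leading action is on a \emph{free} channel of $\Delta$, I would observe that under the empty context hypothesis this situation cannot persist at the outermost level: any such free action must have been introduced and later bound by some enclosing $\cut$, and selecting the \emph{innermost} such binder (equivalently, a leaf of the forest) guarantees that both the action and its dual are unguarded. Thus the choice of a leaf channel, rather than an arbitrary one, is what prevents the argument from stalling on a blocked prefix, and it is the careful bookkeeping of this choice that constitutes the real work of the proof.
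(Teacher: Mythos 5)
First, a point of calibration: the paper does not prove \Cref{t:progress} itself---it recalls the result from the linear logic interpretations~\cite{CairesP10,DBLP:journals/mscs/CairesPT16,CairesCFest14,CP17}---so your proposal must be measured against the standard proof found there. Your skeleton (induction on the typing derivation, with \Did{T-$\cut$} as the crux and acyclicity of ``composition plus hiding'' as the semantic payoff) is the right setting, but your central combinatorial claim is false: a leaf of the cut-forest need \emph{not} yield a pair of matching unguarded prefixes. Consider three sequential components in a path connected by two cuts, e.g.\ $P \equiv \res{x}\big(\bout{x}{a}.(\nil\para\nil) \para \res{y}(\bout{y}{b}.(\nil \para x\inp{u}.(\nil\para\nil)) \para y\inp{c}.(\nil\para\nil))\big)$, typable by cutting $x:\bullet\otimes\bullet$ and $y:\bullet\otimes\bullet$ against their duals. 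The leaf component on $x$ is blocked---its partner's $x$-input is guarded by the output on $y$---and the unique redex sits at the \emph{interior} pair on $y$; with a four-component path both leaves are blocked and the only redex is interior. So neither ``a leaf'', nor ``some leaf'', nor your ``innermost binder'' selection works, and this is exactly the stalling-on-a-blocked-prefix scenario your final paragraph was supposed to exclude. (Your preliminary claim in the cut case---that one of $P_1,P_2$ must act on $x$ at top level---also fails as stated, since a component may instead have an internal redex; the disjunct ``or reduces'' must be carried through the induction.)

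The repair, and the route actually taken in the cited proofs, is to strengthen the induction hypothesis to \emph{open} processes rather than to bolt a canonical-form lemma onto a closed-process induction: prove by induction on typing that if $Q \cp \Delta$ and $live(Q)$, then either $Q \redd Q'$ for some $Q'$, or $Q$ has an unguarded prefix acting on a name \emph{free} in $\Delta$ (an ``offer''). In the \Did{T-$\cut$} case with $P_1 \cp \Delta_1, x{:}A$ and $P_2 \cp \Delta_2, x{:}\dual{A}$: if $P_1$ reduces, close under \Did{R-ChRes}; if it offers on some $z \in \Delta_1$, the whole offers on $z$; if it offers on $x$, interrogate $P_2$ likewise, and in the remaining case both offer on $x$, where duality of $A$ and $\dual{A}$ plus inversion guarantees the two unguarded prefixes match and fire via \Did{R-ChCom}, \Did{R-ChCase}, or \Did{R-Fwd}. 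Instantiating at $\Delta = \cdot$ kills the offer disjunct and yields the theorem, with structural congruence used only to bring the two matching prefixes side by side---no commuting conversions or forest bookkeeping. If you insist on the graph reading, it can be saved: orient each sequential component toward the channel of its unguarded prefix; since every bound channel connects exactly two components and the connection graph is a forest, pointer-chasing must close a two-cycle, i.e.\ two components pointing along the same edge, and that edge is the redex. But note this is a walk to an interior edge, precisely the step your leaf selection fails to perform.
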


\subsection{Deadlock Freedom by Encodability}\label{ss:dgs}
The second approach to deadlock-free session processes is \emph{indirect},
in  that  
establishing deadlock freedom for session processes
appeals to encodings into a dyadic $\pi$-calculus whose type system 
enforces deadlock freedom by 
exploiting \emph{usages}, \emph{obligations}, and \emph{capabilities}~\cite{K02,K06}.

We follow closely the definitions and results in~\cite{K06}.
Next, we introduce 
the syntax of the (dyadic) $\pi$-calculus 
(\Cref{sss:proc}), 
types with usages (\Cref{sss:usage}), 
typing rules (\Cref{sss:usage}), 
and 
finally the technical results leading to deadlock freedom (\Cref{sss:techres}).
The encodings of session processes into dyadic processes and of session types into  types with usages are given in \Cref{sss:enco}.

\subsubsection{The Dyadic $\pi$-calculus}\label{sss:proc}
\paragraph{Syntax}
The syntax of the dyadic \p calculus is as follows:
  \begin{displaymath}
   \begin{array}[t]{rllllll}
	v ::=		&x 							& \mbox{(channel)} & \midd 
			&\vv {j}v						& \mbox{(variant value)}
\\ [2mm]
	P,Q ::= 	& \ov x\out {\wt v}.P			& \mbox{(output)} & 
	          	\midd & \nil	     					& \mbox{(inaction)} &
	          	\\[1mm] 
           		\midd & x\inp {\wt z}.P				& \mbox{(input)}     
           		& 
           		\midd & P \pp Q  	    		  		& \mbox{(composition)} &  \\[1mm]
           	\midd 	& \picase v{x_i}{P_i} 			& \mbox{(case)}  & 
			\midd & \res {x}P					& \mbox{(session restriction)}&
  \end{array}
  \end{displaymath}
We discuss differences with respect to~\Cref{s:sessions} and~\cite{K06}:
\begin{itemize}
	\item 
While the syntax of processes given in \Cref{s:sessions} (and in~\cite{K06}) supports monadic communication, we consider \emph{dyadic} communication: 
an output prefix involves a tuple of values $v_1, v_2$  
and an input prefix involves 
a tuple of variables $z_1,  z_2$.
For the sake of notational uniformity, we write $\wt{v}$ and $\wt{z}$ 
to stand for $v_1, v_2$  and $z_1,  z_2$, respectively. 
Dyadic communication is convenient, as the encoding of session processes into ``standard'' $\pi$-calculus processes \cite{DGS12} requires transmitting tuples whose length is at most two. 

\item While in~\cite{K06} input and output prefixes are annotated with a capability annotation $t \in \mathbb{N} \cup \infty$, we omit such annotations to enhance clarity.
\item Rather than the branching and selection constructs in \Cref{s:sessions}  (and the if-then-else process considered in~\cite{K06}), in the dyadic $\pi$-calculus presented above we have the \emph{case construct} $\picase v{x_i}{P_i}$, which uses the \emph{variant value} $\vv {j}v$~\cite{Sangio01}.
\item We do not consider the let construct and replication processes in~\cite{K06}.
\item In line with~\Cref{ss:lf} and \cite{K06} (but differently from \Cref{s:sessions}) we use $\res x P$ as restriction operator.
\end{itemize}

As before, we write $\res{\wt{x}}P$ to denote the process $\res{x_1}\cdots\res{x_n}P$.
Notions of free and bound names are as usual; we write $\fv{P}$ to denote the set of free names of $P$.

\paragraph{Reduction Semantics}
Following~\cite{K06}, the reduction semantics of dyadic $\pi$-calculus processes relies on the structural relation $\preceq$:

\begin{definition}[Def A2 in~\cite{K06}]
\label{d:kobastruct}
The \emph{structural relation} $\preceq$ is 
the least reflexive and transitive
relation closed under the following rules (where $P \equiv Q$ denotes $(P \preceq Q) \land (Q \preceq P))$:
\begin{align*}
& 
P \pp Q \equiv Q \pp P 
\qquad\qquad
(P \pp Q) \pp R \equiv  P \pp (Q \pp R)
\\
&
P \pp \nil \equiv P
\qquad\qquad
\res{x}P \pp Q \equiv \res{x}(P \pp Q) \quad \text{if $x$ is not free in $Q$}
\\
&
\res{x}\nil \equiv \nil
\qquad\qquad
\res{x}\res{y}P \equiv \res{y}\res{x}P
\\
&
{P\preceq Q}\Longrightarrow
							{P\pp R\preceq Q\pp R}
\qquad\qquad
{P\preceq Q}\Longrightarrow
							{\res{x}P \preceq \res{x}Q}
\end{align*}
\end{definition}
\noindent Thus,  $P \preceq Q$ intuitively means that $P$ can be restructured into $Q$ by using the above rules.
We shall refer to $\equiv$ as structural equivalence.
The reduction rules are then as follows:
  \begin{displaymath}
 \begin{array}{lllll}  
	\Did{R\p Com} & 
		\ov x\out {\wt v}.P\pp x\inp {\wt z}.Q
		\to 
		P\pp Q\substj{\wt v}{\wt z}
	\\[1mm]
	\Did{R\p Case} &
		 \picase {\vv jv}{x_i}{P_i}	
		\to 
		P_j\substj{v}{x_j}
	\quad  j\in I 
	\\[1mm]
			\Did{R\p Par}&		{P\to Q}\Longrightarrow
							{P\pp R\to Q\pp R}
	\\[1mm]
	\Did{R\p Res}&	{P\to Q} \Longrightarrow
							{\res {x} P\to \res {x} Q}
       \\[1mm]
	\Did{R\p Str}& 	{P\preceq P',\ P\to Q,\ Q'\preceq Q}
								\Longrightarrow
								{P'\to Q'} & 
    \end{array}
\end{displaymath}
Rules are self-explanatory. 
\rev{We only discuss Rule~\Did{R\p Case}, which is the main difference with respect to the reduction semantics in~\cite{K06}. 
Note that by using the variant value $\vv jv$, this rule simultaneously selects $P_j$ \emph{and} induces the substitution $\substj{v}{x_j}$. This is different from Rule \Did{R-Case} for selection and branching (cf. \Cref{fig:sessionpi}), which  selects a branch but does not involve a communication.}
As before, we write $\mto$ to denote the reflexive, transitive closure of $\to$.

Since the definition of deadlock freedom in~\cite{K06} is only informal (cf. \Cref{r:df}), we shall adopt the following 
definition, which mirrors \Cref{def:lock}:

\begin{definition}[Deadlock Freedom]\label{def:lockpi}
A process $P$ is \emph{deadlock-free} if the following condition holds:
whenever $P \mto P'$ and one of the following holds
\begin{itemize}
\item $P' \equiv \res{\wt{x}}(\ov x\out {\wt v}.Q_1 \pp Q_2)$
\item $P' \equiv \res{\wt{x}}(x\inp{\wt{y}}.Q_1 \pp Q_2)$
\item $P' \equiv \res{\wt{x}}(\picase {\vv jv}{x_i}{P_i} \pp Q)$
\end{itemize} 
then there exists $R$ such that $P' \to R$.
\end{definition}

%

\subsubsection{Types with Usages}\label{sss:usage}
The type system for deadlock freedom in~\cite{K06} exploits types with usages. 
Usages rely on \emph{obligations} and \emph{capabilities}, which are endowed with  \emph{levels} to describe inter-channel dependencies:
\begin{enumerate}[label=$\bullet$]
\item
An obligation of level $n$ must be fulfilled by using only capabilities of level \emph{less than} $n$.
Said differently, an action of obligation $n$ may be prefixed by actions of capabilities less than $n$.

\item
For an action with capability of level $n$, there must exist a co-action with obligation of level \emph{less than} $n$ or \emph{equal to} $n$.
\end{enumerate}

We shall rely on usages as defined next, a strict subset of those defined in~\cite{K06}.

\begin{definition}[Usages]
\label{d:usages}
The syntax of usages $U, U', \ldots$ is defined  by the following grammar:
  \begin{displaymath}
  \begin{array}[t]{rllllll}
	U ::=
				&  {\zusage}  			&  \mbox{(not usable)} 
				\\
                             \ \midd \      &  \wn^{\ob}_{\ca}.U  		& \mbox{(used in input)}
\\[1mm]
                          \ \midd \   & \oc^{\ob}_{\ca}.U  	&  \mbox{(used in output)}
                          \\[1mm]
                             \ \midd \ 
                             & (U_1 \pp U_2) 			&  \mbox{(used in parallel)}
                             \\[1mm]
                            \ \midd \ &  \uparrow^{\,t} U & \mbox{(lift obligation levels of $U$ up to $t$)}
\end{array}
  \end{displaymath}
  where the  \emph{obligation} $\ob$ and the \emph{capability} $\ca$ range over the set $\mathbb{N} \cup \infty$.
  We shall refer to usages generated with the first three productions above (not usable, input, output) as \emph{sequential usages}.
\end{definition}
  
Usage $\zusage$ describes a channel that cannot be used at all. 
A usage $\wn^{\ob}_{\ca}.U$ (resp. $\oc^{\ob}_{\ca}.U$)
is associated to a channel that can be used once for input (resp. output)   and then according to usage $U$. 
The usage $U_1\pp U_2$ can be associated to a channel that is used according to $U_1$ and $U_2$, possibly in parallel.
The usage $\uparrow^{\,t} U$ acts as an operator that lifts the obligation levels in $U$ up to $t$. 
We let $\alpha$ range over `$\wn$' and `$\oc$'.
We will often omit $\zusage$, and so we will write, e.g., $\alpha^{\ob}_{\ca}$ instead of $\alpha^{\ob}_{\ca}.\zusage$.

\begin{notation}[Co-actions]
\label{not:coact}
We write $\ov{\alpha}$ to denote the \emph{co-action} of $\alpha$, i.e., $\ov{\oc} = \wn$ and $\ov{\wn} = \oc$.
\end{notation}


We rely on a number of auxiliary definitions for usages; they all follow~\cite{K06}:

\begin{definition}[Capabilities and Obligations]
\label{def:obligations}
Let $U$ be a usage.
The input and output \emph{capability levels} (resp. \emph{obligation levels}) of $U$, written
 $\caps{\wn}{U}$ and $\caps{\oc}{U}$
(resp. $\obss{\wn}{U}$ and $\obss{\oc}{U}$),
are defined as:
$$
\begin{array}{rclcrcl}
\caps\alpha{\zusage}		 &\defeq& \infty
& \quad & 
\obss\alpha{\zusage}		 &\defeq&   \infty 
\\
\caps\alpha{\ov{\alpha}^{\ob}_{\ca}.U}		 &\defeq&\infty 
& \quad & 
\obss\alpha{\ov{\alpha}^{\ob}_{\ca}.U}		 &\defeq&   \infty 
\\
\caps\alpha{\alpha^{\ob}_{\ca}.U}		 &\defeq& \ca 
& \quad & 
\obss\alpha{\alpha^{\ob}_{\ca}.U}		 &\defeq&   \ob 
\\
\caps\alpha{U_1\pp U_2}				 &\defeq&  \min(\caps\alpha{U_1},\caps\alpha{U_2})
& ~~& 
\obss\alpha{U_1\pp U_2}				 &\defeq&  \min(\obss\alpha{U_1},\obss\alpha{U_2})
\\
\caps\alpha{\uparrow^{\,t} U} &\defeq&  \caps\alpha U 
& ~~& 
\obss\alpha{\uparrow^{\,t} U} &\defeq&  \max(t, \obss\alpha U) 
\end{array}
$$
We write $\obss{}{U}$ for $\max(\obss{\wn}{U}, \obss{\oc}{U})$.
\end{definition}
%

%
\noindent
The reduction relation on usages, noted $U\to U'$, 
intuitively says that if a channel with usage $U$ is used for communication then
it should be used according to   $U'$ afterwards.
It relies on an auxiliary structural relation on usages.

\begin{definition}[Structural relation on usages \cite{K06}]
Let $\preceq$ be the least reflexive and transitive relation on usages defined by the following rules:
\begin{eqnarray*}
U_1 \pp U_2 \preceq U_2 \pp U_1
\quad
\uparrow^{\,t}(U_1 \pp U_2) \preceq (\uparrow^{\,t} U_1) \pp (\uparrow^{\,t} U_2) 
\quad
(U_1 \pp U_2) \pp U_3 \preceq U_1 \pp (U_2 \pp U_3)
\\
U_1 \preceq U'_1 \land U_2 \preceq U'_2 \Longrightarrow U_1 \pp U'_1 \preceq U_2 \pp U'_2
\quad
\uparrow^{\,t}{\alpha^{\ob}_{\ca}.U} \preceq \alpha^{\max(\ob,t)}_{\ca}.U
\quad 
U \preceq U' \Longrightarrow \uparrow^{\,t}U \preceq \uparrow^{\,t}U'
\end{eqnarray*}
\end{definition}


\begin{definition}
\label{def:usage_red}
The reduction relation $\to$ on usages is the smallest relation closed under the following rules:
\begin{displaymath}
\begin{array}{rll}
	\Did{U-Com} & 
		\wn^{\ob}_{\ca}.U_1 \pp \oc^{\ob'}_{\ca'}.U_2
		\to
		U_1\pp U_2
      \\[1mm]      
	\Did{U-Par} &
		U\to U'
		\Longrightarrow 
		U\pp U'' \to U'\pp U''
       \\[1mm]      
	\Did{U-SubStruct}&	{U\preceq U_1,\ U_1\to U_2,\ U_2\preceq U'}
								\Longrightarrow
								{U\to U'}
      \\[1mm]
\end{array}
\end{displaymath}
The reflexive, transitive closure of $\to$ (written $\mto$) 
is
defined as expected.
\end{definition}

The following key definition ensures that if some action has a capability of level $n$ then the obligation level of its co-actions should be at most $n$.
		
\begin{definition}[Reliability]\label{d:reli}
Let $\alpha$ and $\ov{\alpha}$ be co-actions (cf. \Cref{not:coact}).
We write $\conpar{\alpha}{U}$ when $\obss{\ov\alpha}{U} \leq \caps{\alpha}{U}$.
We write $\con{U}$ when  $\conpar\wn U$ and $\conpar\oc U$ hold.
Usage $U$ is \emph{reliable}, noted $\rel U$, if $\con {U'}$ holds 
for all $U'$ such that $U\mto U'$.
\end{definition}

\begin{example}
 We illustrate reliability and obligation/capability levels.
 Consider the usage $U = \wn^{\ob_1}_{\ca_1}.\zusage \pp !^{\ob_2}_{\ca_2}.\zusage$. 
 We establish the conditions required for $\rel{U}$ to hold. This in turn requires determining a few ingredients:
 \begin{itemize}
     \item $\obss{!}{U}  = \min(\obss{!}{\wn^{\ob_1}_{\ca_1}.\zusage},\obss{!}{!^{\ob_2}_{\ca_2}.\zusage})
          =  \min(\infty, \ob_2)   = \ob_2$      
     \item $\caps{\wn}{U}  = \min(\caps{\wn}{\wn^{\ob_1}_{\ca_1}.\zusage},\caps{\wn}{!^{\ob_2}_{\ca_2}.\zusage})
          =  \min(\ca_1, \infty)   = \ca_1$      
     \item $\obss{\wn}{U}  = \min(\obss{\wn}{\wn^{\ob_1}_{\ca_1}.\zusage},\obss{\wn}{!^{\ob_2}_{\ca_2}.\zusage})
          =  \min(\ob_1, \infty)   = \ob_1$      
     \item $\caps{!}{U}  = \min(\caps{!}{\wn^{\ob_1}_{\ca_1}.\zusage},\caps{!}{!^{\ob_2}_{\ca_2}.\zusage})
          =  \min(\infty, \ca_2 )   = \ca_2$      
 \end{itemize}
This way, $\conpar{\wn}{U}$ denotes $\ob_2 \leq \ca_1$; similarly, 
   $\conpar{\oc}{U}$ denotes $\ob_1 \leq \ca_2$.
 Then we have that $\con{U}$ holds when both $\ob_2 \leq \ca_1$ and $\ob_1 \leq \ca_2$ hold. 
 This is indeed the condition needed for  $\rel{U}$ to hold; notice that $U \to \zusage \pp \zusage$ is the only reduction possible, and that $\con{\zusage \pp \zusage}$ trivially holds.
 Notice that letting $\ob_1 = \ca_1 = \ob_2 = \ca_2 = 0$ suffices for $\rel{U}$ to hold.
\end{example}

\noindent
Having defined usages (and their associated notions), we move to define types.

\begin{definition}[Types with Usages]
\label{d:typesusages}
	The syntax of  types $\ctype, \ctype', \ldots$ builds upon usages as follows:
  \begin{displaymath}
  \begin{array}[t]{rllllll}
    	\ctype ::=
    				& \utype{U}{\widetilde{\ctype}}			& \mbox{(channel types)}
				\\
    			\ \midd \	& \variant {l_i}{\ctype_i}			& \mbox{(variant type)}
\end{array}
  \end{displaymath}
\end{definition}

Above, $\wt \ctype$ indicates a sequence of types of length at most two. 
Type $\utype{U}{\widetilde{\ctype}}$ is associated to a channel that behaves according to usage $U$ to exchange a tuple of values with types $\ctype_1,  \ctype_2$. 
Notice that 
$\widetilde{\ctype}$
can be empty; in that case, we write $\utype{U}{\temp}$: a channel with this type behaves according to $U$ without exchanging any values.

Differences with respect to the syntax of types in~\cite{K06} are: (i)~we do not consider Boolean nor product types, and (ii)~we consider the \emph{variant type} $\variant {l_i}{\ctype_i}$ from~\cite{Sangio01} to denote the disjoint union of labeled types, where labels $l_i$ ($i\in I$) are pairwise distinct. Variant types are essential to encode selection and branching in session types~\cite{DGS12}.

Typing contexts, ranged over $\Gamma, \Gamma', \ldots$, are 
produced by the following grammar:
$$
\Gamma, \Gamma' ::= \emp\ \midd \ \Gamma, x:\ctype
$$
where `$\emp$' denotes the empty context. 
Given a context $\Gamma = x_1:\ctype_1, \cdots, x_n:\ctype_n$, we write $\dom(\Gamma)$ to denote its domain, i.e., the set $\{x_1, \ldots, x_n\}$.

Following~\cite{K06}, we use `$\prec$' to denote a partial order that statically tracks the order in which channels are created. That is, $x \prec y$ means that $x$ was created more recently than $y$.
Typing judgments are indexed by `$\prec$'; they are of the form
$\Gamma \lf \wt{v}$ (for values) 
and 
$\Gamma \lf P$
(for processes).

Before commenting on the typing rules, given in \Cref{f:pityping}, we present some important auxiliary notions, extracted from~\cite{K06}.

\begin{definition}[Auxiliary Operators on Types]
\label{def:auxoper}
We define the following auxiliary operators:
\begin{enumerate}
\item
The unary operation $\uparrow^{\,t}$ on usages extends to types as follows:
$$\uparrow^{\,t} (\utype{U}{\wt{\ctype}}) = \utype{\uparrow^{\,t} U}{\wt{\ctype}}$$

\item 
The composition operation on types, denoted $\pp$, is defined as follows:
\begin{align*}
\utype{U_1}{\wt \ctype} \pp \utype{U_2}{\wt \ctype}
&\defeq \utype{(U_1\pp U_2)}{\wt \ctype}
\\
\variant {l_i}{\ctype_i} \pp \variant {l_i}{\ctype_i}
&\defeq \variant {l_i}{\ctype_i}
\end{align*}
The generalisation of $\pp$ to typing contexts, denoted $(\Gamma_1 \pp \Gamma_2)(x)$, is defined as follows:
$$
(\Gamma_1 \pp \Gamma_2)(x)
=
\begin{cases}
    \Gamma_1(x) \pp \Gamma_2(x) & \text{if $x \in \dom(\Gamma_1) \cap \dom(\Gamma_2)$}
\\
    \Gamma_1(x) & \text{if $x \in \dom(\Gamma_1) \setminus \dom(\Gamma_2)$}
\\
    \Gamma_2(x) & \text{if $x \in \dom(\Gamma_2) \setminus \dom(\Gamma_1)$}
\end{cases}
$$

\item 
The operator ``$\semi$''
combines a type assignment $x:\utype{\alpha^{\ob}_{\ca}}{\wt \ctype}$
and a context $\Gamma$ into a new context.
Precisely, $x:\utype{\alpha^{\ob}_{\ca}}{\wt \ctype} \semi \Gamma$ represents the context
 $\Gamma'$, defined 
as follows:
\begin{align*}
   \dom(\Gamma') &\defeq \{x\} \cup \dom(\Gamma) 
   \\
   \Gamma'(x) &\defeq\left\{ 
    \begin{aligned}
      &\utype{\alpha^{\ob}_{\ca}.U}{\widetilde \ctype}		&& \mbox{if 
$\Gamma(x)=\utype{U}{\widetilde \ctype}$}\\
      &\utype{\alpha^{\ob}_{\ca}}{\widetilde \ctype}		&& \mbox{if $x\notin 
\dom(\Gamma)$}
    \end{aligned}
  \right. 
  \\
   \Gamma'(y) &\defeq 
   \begin{cases}
   \uparrow^{\ca}\Gamma(y) & \text{if $y\neq x \land x \prec y$}
   \\
      \uparrow^{\ca+1}\Gamma(y) & \text{if $y\neq x \land x \not\prec y $}
      \end{cases}
\end{align*}
\end{enumerate}
\end{definition}

%

\begin{figure*}[t]
  \begin{mdframed}[style=newtight]
      \vspace{-1em}
\centering
$$\begin{array}{c}
\inferrule[{\Did{T$\pi$-Var}}]{
}{
 x:\ctype \lf x:\ctype
}
\quad
\inferrule[{\Did{T$\pi$-Tup}}]{
\Gamma_1 \lf v_1:\ctype_1 \and \Gamma_2 \lf v_2:\ctype_2  \and \wt{v} = v_1,   v_2~~\wt{\ctype} = \ctype_1, \ctype_2
}{
\Gamma_1 \pp \Gamma_2 \lf \wt{v}:\wt{\ctype}
}
\quad
\inferrule[{\Did{T$\pi$-LVal}}]{
      \Gamma \lf v: \ctype_j \quad \exists j\in I
}{
      \Gamma \lf   {l_j}\_v: \variant {l_i}{\ctype_i}
}
\\\\
\inferrule[{\Did{T$\pi$-Nil}}]{
}{
 \emp  \lf {\nil}
}
\qquad
\inferrule[{\Did{T$\pi$-Res}}]{
      \Gamma, x: \utype{U}{\widetilde \ctype} \lfpar{\prec \cup \{(x,y) \,|\, y \in \fv{P}\setminus \{x\} \}}{} P\quad \rel{U}
}{
      \Gamma \lf \res{x} P
}
\qquad
%
%
\inferrule[{\Did{T$\pi$-Par}}]{
	\Gamma_1 \lf P\and \Gamma_2 \lf Q
}{ 
      \Gamma_1 \pp \Gamma_2 \lf  P\pp Q
}
\\\\
\inferrule[{\Did{T$\pi$-Out}}]{
     \Gamma_1 \lf P \and   \Gamma_2 \lf \wt{v}: \wt{\ctype}
}{
       x: \utype{\oc^{0}_{\ca}}{\wt{\ctype}} \semi (\Gamma_1\pp \Gamma_2) \lf {\ov x}\out{ \tl v}.P
}

\qquad

\inferrule[{\Did{T$\pi$-In}}]{
      \Gamma, \widetilde y:\widetilde \ctype\lf P
}{
      x:\utype{\wn^{0}_{\ca}}{\wt{\ctype}} \semi \Gamma \lf x\inp{\wt y}.P
}
\\\\
\inferrule[{\Did{T$\pi$-Case}}]{
      \Gamma_1 \lf v:  \variant {l_i}{\ctype_i} \and 
      \Gamma_2, x_i:\ctype_i \lf P_i \quad \forall i\in I
}{
      \Gamma_1 \pp \Gamma_2 \lf  \picase v{x_i}{P_i}
}
  \end{array}$$  	
  \end{mdframed}
  \caption{Typing rules for the \p calculus (\Cref{sss:proc}). 
  Rules for values appear in the first line; the remaining rules are for processes.}\label{f:pityping}
\end{figure*}



%

The typing rules for values and processes are given in \Cref{f:pityping}.
All rules are as in~\cite{K06}, except for the new rules 
\Did{T\p LVal} and \Did{T\p Case}, 
which type a choice: the former types a variant value with a variant type; the latter types a case process using a variant value as its guard.
We discuss the remaining rules.
Rules \Did{T\p Var} and \Did{T\p Tup} are standard.
Rule~\Did{T\p Nil} states that the terminated process is typed under the empty context.
Rule~\Did{T\p Res} states that $\res xP$ is well-typed if the usage for $x$
 is reliable  (cf. \Cref{d:reli}).
 Rule~\Did{T$\pi$-Par} 
states that the parallel composition of processes $P$ 
and $Q$ (typable under $\Gamma_1$ and $\Gamma_2$, respectively)
is well-typed under the composed typing context $\Gamma_1 \pp \Gamma_2$ (\Cref{def:auxoper}(2)).
\rev{Note that, unlike Rule~$\Did{T-$\cut$}$ in \Cref{fig:type-system-cll}, $P$ and $Q$ need not to share any channels to be composed.}
Rules~\Did{T\p Out} and \Did{T\p In} type output and input processes with dyadic communication in a typing context where the `$\semi$' operator (\Cref{def:auxoper}(3)) is used to
increase the obligation level of the channels in continuation $P$.

%

\subsubsection{Properties}
\label{sss:techres}

Type soundness of the type system given in \Cref{f:pityping} implies that well-typed processes are deadlock-free (cf.  \Cref{def:lockpi}).
We now state these technical results from~\cite{K06} and discuss changes in their proofs, which 
require minimal modifications.

First,  typing is preserved by the structural relation $\preceq$ (cf. \Cref{d:kobastruct}):

\begin{lemma}
\label{l:kobastruct}
If $\Gamma \lf P$ and $P \preceq Q$ then $\Gamma \lf Q$.
\end{lemma}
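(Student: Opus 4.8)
The plan is to prove the statement by rule induction on the derivation of $P \preceq Q$, following the inductive presentation of $\preceq$ in \Cref{d:kobastruct}. The reflexive case is immediate, and the transitive case follows by composing the two induction hypotheses. The two congruence-closure rules, $P \preceq Q \Rightarrow P \pp R \preceq Q \pp R$ and $P \preceq Q \Rightarrow \res{x}P \preceq \res{x}Q$, are handled uniformly: I would invert the last typing rule applied to the left-hand process (\Did{T$\pi$-Par$_n$} and \Did{T\p Res}, respectively), apply the induction hypothesis to the structurally related sub-process, and re-apply the same typing rule. The crucial point here is that the induction hypothesis preserves the typing context of the sub-process verbatim; hence both the reliability side-condition of \Did{T\p Res} and the degree-of-sharing side-condition $|\dom(\Gamma_1)\cap\dom(\Gamma_2)| \le n$ of \Did{T$\pi$-Par$_n$} are unaffected, since the relevant domains are literally unchanged.

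It remains to treat each structural axiom, in each case by inverting the typing derivation of the left-hand side and rebuilding a derivation for the right-hand side. For commutativity and associativity of $\pp$ I would rely on the fact that the context-composition operator $\pp$ (\Cref{def:auxoper}) is itself commutative and associative, so the composed context is the same up to reordering. The unit law $P \pp \nil \equiv P$ uses \Did{T\p Nil}, noting that $\nil$ is typed under a context of terminated ($\zusage$) assignments that acts as an identity for $\pp$ and contributes nothing to any domain intersection. The axioms $\res{x}\nil \equiv \nil$ and $\res{x}\res{y}P \equiv \res{y}\res{x}P$ reduce to observing that $\zusage$ is reliable and that the two reliability checks for $x$ and $y$ are mutually independent, so the two \Did{T\p Res} applications commute. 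The scope-extrusion axiom $\res{x}P \pp Q \equiv \res{x}(P \pp Q)$ requires moving the reliability obligation on $x$ across the parallel composition; this is sound precisely because the side condition $x \notin \fv{Q}$ forces $x \notin \dom(\Gamma_Q)$, so $x$ occurs with the same (reliable) usage on both sides and does not interfere with the typing of $Q$.

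The main obstacle is ensuring that the degree-of-sharing side-condition of \Did{T$\pi$-Par$_n$} is preserved by exactly those axioms that re-group or re-scope parallel compositions, namely associativity and scope extrusion. Unlike the congruence cases, here the intermediate split of the context genuinely changes, so the bound $n$ must be re-verified against a new pattern of shared names. Concretely, for associativity I would have to compare $|\dom(\Gamma_P)\cap\dom(\Gamma_Q)|$ and $|\dom(\Gamma_{P \pp Q})\cap\dom(\Gamma_R)|$ on one side against $|\dom(\Gamma_Q)\cap\dom(\Gamma_R)|$ and $|\dom(\Gamma_P)\cap\dom(\Gamma_{Q \pp R})|$ on the other, which calls for careful set-theoretic bookkeeping of the domain intersections (and, via \Cref{d:ds}, relating these local counts to the intended global degree of sharing). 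I expect this accounting---threading reliability through scope extrusion while simultaneously keeping the shared-name cardinalities within the bound $n$---to be the delicate part of the argument, and the place where the modifications relative to the original development in~\cite{K06} actually bite.
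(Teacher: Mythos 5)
Your overall strategy coincides with the paper's: the proof there is likewise by induction on the derivation of $P \preceq Q$, with the congruence cases discharged by the induction hypothesis (which, as you say, leaves the sub-process's context---hence all domains and reliability checks---verbatim), and with the two directions of scope extrusion singled out as the cases of interest. Your handling of those two cases matches the paper's argument almost word for word: for $\res{x}P \pp Q \preceq \res{x}(P \pp Q)$, the side condition $x \notin \fv{Q}$ ensures $x$ is not among the names shared across the parallel composition, so the same $n$ suffices for Rule~\Did{T$\pi$-Par$_n$}; for the converse direction, only the reliability check of Rule~\Did{T\p Res} on $x$ is involved, which is independent of $n$.

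Where you part company with the paper is associativity, which you flag as the ``delicate part'' and leave open---and this is a genuine gap, because the bookkeeping you propose does not close. From the hypotheses available after inversion on $(P \pp Q) \pp R$, namely $|\dom(\Gamma_P)\cap\dom(\Gamma_Q)|\le n$ and $|(\dom(\Gamma_P)\cup\dom(\Gamma_Q))\cap\dom(\Gamma_R)|\le n$, the quantity you need for the outer composition of $P \pp (Q \pp R)$ decomposes as $\dom(\Gamma_P)\cap(\dom(\Gamma_Q)\cup\dom(\Gamma_R)) = (\dom(\Gamma_P)\cap\dom(\Gamma_Q))\cup(\dom(\Gamma_P)\cap\dom(\Gamma_R))$, so the only bound obtainable set-theoretically is $2n$, not $n$; and the bound $2n$ is attained, e.g.\ when $P$ uses channels $a$ and $b$, $Q$ uses only $a$, and $R$ uses only $b$: at $n=1$ both instances of the side condition in $(P \pp Q) \pp R$ hold (the intersections are $\{a\}$ and $\{b\}$), yet the regrouped split requires intersection $\{a,b\}$ of size $2$. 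So, as set up, your associativity case would fail rather than merely be delicate: no careful accounting of the stated cardinalities recovers the same $n$, and you would need some additional idea (an extra invariant on the splits, or a different reading of the side condition). Note that the paper's own proof does not perform the accounting you anticipate: it asserts that tracking the degree of sharing ``does not affect the cases of the proof involving parallel composition'' and confines all new reasoning---relative to the original development in~\cite{K06}---to the two scope-extrusion laws, which is the opposite of where you locate the difficulty.
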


The proof of type preservation in~\cite{K06} relies on other auxiliary results (such as substitution), which we do not recall here. To state the type preservation result, we need the following auxiliary definition:

\begin{definition}[Context Reduction]
\label{d:envredk}
We write $\Gamma\to \Gamma'$
when one of the following hold:
\begin{enumerate}
\item $\Gamma = \Gamma_1, x:\utype{U}{\widetilde{\ctype}}$
and
$\Gamma' = \Gamma_1, x:\utype{U'}{\widetilde{\ctype}}$
with $U \to U'$ (cf.~\Cref{def:usage_red}), for some $\Gamma_1$, $x$, $\widetilde{\ctype}$, $U$ and $U'$.

\item $\Gamma = \Gamma_1, x:\variant {l_i}{\ctype_i}$
and
$\Gamma' = \Gamma_1, x:\ctype_j$, 
with $j \in I$, for some $\Gamma_1$ and $x$.
\end{enumerate}
\end{definition}
Above, Item (1) is as in~\cite{K06}.
Item (2) is required for characterizing reductions of the case construct.
We now state type preservation. 

\begin{theorem}[Type Preservation]
\label{thm: preservation}
If $\Gamma \lf P$ and $P\to Q$, then
$\Gamma'\lf Q$ for some $\Gamma'$ such that 
$\Gamma' = \Gamma$ or
$\Gamma\to \Gamma'$.
\end{theorem}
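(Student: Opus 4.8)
The plan is to prove \Cref{thm: preservation} by induction on the derivation of $P \to Q$, following the structure of the reduction rules in \Cref{sss:proc} and closely mirroring the proof of Theorem 1 in~\cite{K06}. The statement asserts that typing is preserved up to context reduction (\Cref{d:envredk}), so in each case I must exhibit a suitable $\Gamma'$ with either $\Gamma' = \Gamma$ or $\Gamma \to \Gamma'$. First I would treat the communication rule $\Did{R\p Com}$, where $\ov{x}\out{\wt v}.P \pp x\inp{\wt z}.Q \to P \pp Q\substj{\wt v}{\wt z}$. Here I would invert the typing of the two prefixed subprocesses (via Rules~\Did{T\p Out} and~\Did{T\p In}), peel off the leading input/output usages on $x$ via the ``$\semi$'' operator, apply the substitution lemma (assumed from~\cite{K06}) to obtain a typing for $Q\substj{\wt v}{\wt z}$, and recompose using Rule~\Did{T$\pi$-Par$_n$}. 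The witness context $\Gamma'$ arises by reducing the usage of $x$ through $\Did{U-Com}$, matching Item (1) of \Cref{d:envredk}.

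Next I would handle the case reduction $\Did{R\p Case}$, where $\picase{\vv jv}{x_i}{P_i} \to P_j\substj{v}{x_j}$. This is the genuinely new case relative to~\cite{K06}, since the \textbf{case} construct and variant types are our additions. I would invert Rule~\Did{T\p Case}, use the typing of the variant value $\vv{j}v$ (via Rule~\Did{T\p LVal}) to recover a typing of $v$ at component type $\ctype_j$, and then apply substitution to type $P_j\substj{v}{x_j}$. The resulting context reduction corresponds precisely to Item (2) of \Cref{d:envredk}, which is exactly why that clause was added to the definition. The remaining structural cases $\Did{R\p Par}$, $\Did{R\p Res}$, and $\Did{R\p Str}$ proceed by straightforward induction: for $\Did{R\p Par}$ I invoke the induction hypothesis on the reducing component and reassemble with Rule~\Did{T$\pi$-Par$_n$}; for $\Did{R\p Res}$ I use Rule~\Did{T\p Res}, checking that reliability of the restricted usage is preserved under its reduction (which follows from \Cref{d:reli}, since reliability is closed under usage reduction by definition); and for $\Did{R\p Str}$ I appeal to \Cref{l:kobastruct} to transport typing across $\preceq$ before and after the reduction.

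The main obstacle, as the surrounding discussion anticipates, is ensuring that the degree-of-sharing annotation $n$ in Rule~\Did{T$\pi$-Par$_n$} is respected throughout, particularly in the communication and parallel cases where subprocesses are decomposed and recomposed. The key observation is that reduction does not increase the number of shared names between parallel components: in $\Did{R\p Com}$, after the interaction the shared channel $x$ remains shared by $P$ and $Q\substj{\wt v}{\wt z}$ (its usage has merely been reduced), and no new sharing is introduced by the substitution of values $\wt v$ for the placeholders $\wt z$, since those values were already present in the redex. Hence the same bound $n$ that typed the redex suffices to type the reduct, and Rule~\Did{T$\pi$-Par$_n$} applies with the identical annotation. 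For the structural case this is already guaranteed by \Cref{l:kobastruct}, whose proof explicitly verifies that scope extrusion preserves the degree of sharing. I would therefore argue that in every case the annotation $n$ carries over unchanged, so the only modification relative to the proof in~\cite{K06} is the bookkeeping of this bound, which the preceding analysis of \Cref{l:kobastruct} shows to be harmless.
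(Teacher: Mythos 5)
Your overall plan coincides with the paper's proof, which is given only as a sketch: induction on the derivation of $P\to Q$ following the proof of Theorem 1 in~\cite{K06}, with Rule~\Did{R\p Case} as the single genuinely new case, discharged exactly as you propose---inversion through Rules~\Did{T\p Case} and \Did{T\p LVal}, a substitution step, and then Item (2) of \Cref{d:envredk} as the witness for $\Gamma \to \Gamma'$ (while \Did{R\p Com} yields Item (1) via \Did{U-Com}). Your treatment of the structural cases also matches: \Did{R\p Str} is transported by \Cref{l:kobastruct}, and your observation that reliability is closed under usage reduction is immediate from \Cref{d:reli} (any reduct of $U'$ is a reduct of $U$), which is the right justification for \Did{R\p Res}. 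Up to this point you supply, if anything, more detail than the paper does.

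The divergence---and the one genuine weak spot---is your argument that the annotation $n$ carries over. The paper argues at the level of the \emph{whole typing context}: by \Cref{d:envredk}, $\Gamma\to\Gamma'$ implies $\dom(\Gamma)=\dom(\Gamma')$, so the side condition $|\dom(\Gamma_1)\cap\dom(\Gamma_2)|\leq n$ of Rule~\Did{T$\pi$-Par$_n$} is unaffected by context reduction. You instead argue at the level of processes, and the specific claim that ``no new sharing is introduced by the substitution of values $\wt v$ for the placeholders $\wt z$, since those values were already present in the redex'' is not sound as stated: before the reduction the names $\wt v$ occur, and are typed, only in the \emph{sender's} component, whereas after \Did{R\p Com} they also occur in the receiver's continuation $Q\substj{\wt v}{\wt z}$. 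Whenever the sender's continuation retains a usage of some transmitted name (which the usage discipline permits, since payload and continuation usages of the same name are combined with $\pp$), the domain intersection of the two components' contexts can strictly grow. This migration of a payload assignment $v:T$ across the parallel composition is precisely the phenomenon the paper itself isolates later, in \Cref{ex:uptok} and the parallelization relation of \Cref{def:uptok}. You should therefore replace that sentence by the paper's domain-preservation argument on the reduced context $\Gamma'$, or else explicitly account for how the payload's type assignment is redistributed between the two premises of Rule~\Did{T$\pi$-Par$_n$} when re-typing the reduct; as written, your component-wise invariant is stronger than what the reduction actually guarantees.
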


The proof is by induction on the derivation $P\to Q$, following closely the proof in~\cite{K06}. An additional case is needed due to the reduction rule of the case construct (Rule~\Did{R\p Case}):
$$
\picase {\vv jv}{x_i}{P_i}	
		\to 
		P_j\substj{v}{x_j}
	\quad  (j\in I)
$$
As a result of this reduction, we have $\Gamma \to \Gamma'$ because of \Cref{d:envredk}(2).


The following important result extends Theorem 2 in \cite{K06} with 
the case construct: 
\begin{theorem}[Deadlock Freedom]\label{t:dfk}
If $\emptyset \lf P$ and
one of the following holds 
\begin{itemize}
\item $P\preceq \res{\wt{x}}(\ov x\out {\wt v}.Q_1 \pp Q_2)$
\item $P\preceq \res{\wt{x}}(x\inp {\wt z}.Q_1 \pp Q_2)$
\item $P \preceq \res{\wt{x}}(\picase {\vv jv}{x_i}{P_i} \pp Q)$
\end{itemize}
then $P\to R$, for some process $R$.
\end{theorem}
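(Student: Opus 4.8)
The plan is to follow the proof of Theorem~2 in~\cite{K06} closely, isolating the two points where our system differs: the new \textbf{case} construct and the degree-of-sharing annotation~$n$. For each of the three bullets, write $P^{\ast}$ for the exposed form. First I would observe that $P \preceq P^{\ast}$ together with \Cref{l:kobastruct} yields $\emptyset \lf P^{\ast}$, so that it suffices to exhibit a reduction $P^{\ast} \to R^{\ast}$: since every structural law generating $\preceq$ is symmetric (all axioms in \Cref{d:kobastruct} are stated with $\equiv$, and the two closure rules preserve symmetry), we have $P \equiv P^{\ast}$, and Rule~\Did{R\p Str} transfers the reduction back to $P$. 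Any reduction found under the outermost $\res{\wt{x}}(\cdot \pp \cdot)$ is then lifted to $P^{\ast}$ by Rules~\Did{R\p Par} and~\Did{R\p Res}.

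The \textbf{case} bullet I would dispatch immediately; it is the only genuinely new case with respect to~\cite{K06}. Its guard is already the variant value $\vv jv$, so Rule~\Did{R\p Case} fires at once, giving $\picase{\vv jv}{x_i}{P_i} \to P_j\substj{v}{x_j}$, whence $P^{\ast}\to R^{\ast}$ after lifting.

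The output and input bullets form the heart of the argument, and here I would reproduce Kobayashi's reasoning. Because $\emptyset \lf P^{\ast}$ is closed, every channel of the body is bound by some restriction in $\res{\wt{x}}$, so by Rule~\Did{T\p Res} its usage is reliable (\Cref{d:reli}). After restructuring the body into a parallel composition of prefixed components, I would select among all actions of $P^{\ast}$ one, say an action $\alpha$ on a channel $y$ whose obligation level (\Cref{def:obligations}) is minimal, and write $\pi$ for the prefix realizing it. This $\pi$ must sit at top level: were it guarded by an outer prefix $\gamma$, the ``$\semi$'' operator of \Cref{def:auxoper} would force the capability of $\gamma$ to lie strictly below the obligation of $\alpha$, and reliability of $\gamma$'s channel would supply a co-action of $\gamma$ with still smaller obligation, contradicting minimality. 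Reliability of $y$ then guarantees a co-action $\ov{\alpha}$ whose obligation is bounded by the capability of $\alpha$, hence finite, so a prefix realizing $\ov{\alpha}$ does occur in $P^{\ast}$; the same minimality-versus-``$\semi$'' reasoning shows it too is unguarded. The two complementary prefixes thus stand side by side at top level, forming a redex and enabling a step by Rule~\Did{R\p Com}.

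The hard part will be exactly this last step: guaranteeing \emph{simultaneously} that the chosen action and its co-action are unguarded. This is the technical core of~\cite{K06}, resting on a well-founded descent on capability levels---whenever a sought co-action turns out to be guarded, the guarding prefix carries a strictly smaller capability, and such a chain cannot descend through the naturals indefinitely. The remaining difference, the annotation~$n$, I expect to be orthogonal to all of this: the side condition $|\dom(\Gamma_1)\cap\dom(\Gamma_2)|\le n$ of Rule~\Did{T$\pi$-Par$_n$} only constrains \emph{which} processes are typable and is never consulted when locating a redex, so once typability of $P^{\ast}$ is in hand the reliability-based argument---and hence the conclusion $P\to R$---goes through verbatim.
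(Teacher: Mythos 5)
Your proposal is correct and follows essentially the same route as the paper: both defer to the proof of Theorem~2 in~\cite{K06}, treat the \textbf{case} construct as the only genuinely new reduction (the paper phrases this as accommodating \textbf{case} in Kobayashi's normal forms, where you instead re-sketch the minimal-obligation/reliability argument directly), and observe that the degree-of-sharing side condition of Rule~\Did{T$\pi$-Par$_n$} is never consulted when locating a redex. Your explicit remark that $\preceq$ is symmetric in the replication-free fragment is a nice touch that makes precise what the paper only states implicitly when it drops the replicated extension of $\preceq$.
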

In~\cite{K06}, the proof of this theorem relies on a notion of \emph{normal form} in which input and output prefixes act as guards for if-expressions and let-expressions (not present in our syntax); our case construct can be easily accommodated in such normal forms. The proof in~\cite{K06} also uses: (i)~an extension of $\preceq$ for replicated processes, (ii)~mechanisms for uniquely identifying bound names, and (iii)~an ordering on processes. In our case, we do not need (i)  and can re-use (ii) and (iii) as they are. With these elements, the proof argument proceeds as in~\cite{K06}.
We finally have:

\begin{corollary}
If $\emptyset \lf P$ then $P$ is deadlock-free, in the sense of \Cref{def:lockpi}.
\end{corollary}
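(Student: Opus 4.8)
The plan is to derive the corollary directly from the two principal results already established: type preservation (\Cref{thm: preservation}) and the deadlock-freedom theorem (\Cref{t:dfk}). Together these supply essentially all the content, so the corollary amounts to threading their hypotheses correctly. First I would fix $P$ with $\emptyset \lf P$ and consider an arbitrary reduct $P'$ with $P \mto P'$. The first step is to show that typability under the \emph{empty} context is preserved along the whole reduction sequence, i.e.\ that $\emptyset \lf P'$. I would argue this by induction on the length of $P \mto P'$, invoking \Cref{thm: preservation} at each single step.

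The key observation in this first step is that the empty context cannot reduce: inspecting \Cref{d:envredk}, both clauses defining $\Gamma \to \Gamma'$ require $\Gamma$ to have the shape $\Gamma_1, x : \ldots$, so there is no $\Gamma'$ with $\emptyset \to \Gamma'$. Hence the only possibility offered by \Cref{thm: preservation} when $\Gamma = \emptyset$ is $\Gamma' = \Gamma = \emptyset$, and every reduct $P'$ of $P$ again satisfies $\emptyset \lf P'$. With this in hand I would unfold \Cref{def:lockpi}: suppose $P'$ is structurally congruent to one of the three prefixed forms, say $P' \equiv \res{\wt{x}}(\ov x\out {\wt v}.Q_1 \pp Q_2)$, the input and case cases being analogous. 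The only gap to bridge is that \Cref{def:lockpi} phrases its premise with structural congruence $\equiv$, whereas \Cref{t:dfk} is stated in terms of the structural relation $\preceq$. But $\equiv$ is defined precisely as $\preceq$ in both directions (cf.\ \Cref{d:kobastruct}), so $P' \equiv \res{\wt{x}}(\ov x\out {\wt v}.Q_1 \pp Q_2)$ yields in particular $P' \preceq \res{\wt{x}}(\ov x\out {\wt v}.Q_1 \pp Q_2)$, which is exactly the shape required by the first clause of \Cref{t:dfk}. Applying that theorem to $\emptyset \lf P'$ then produces a process $R$ with $P' \to R$, the required conclusion of \Cref{def:lockpi}; the other two cases proceed identically via the second and third clauses.

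I do not expect any genuine obstacle, as the corollary is a direct consequence of the two main theorems. The only point deserving care, and hence the closest thing to a ``hard part'', is the first step: verifying that reduction keeps us in the empty context. This is what licenses reapplying \Cref{t:dfk} at \emph{every} reduct $P'$ rather than only at $P$ itself, and it rests on the specific shape of context reduction in \Cref{d:envredk} rather than on any deeper property of the type system.
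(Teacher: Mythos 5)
Your proof is correct and follows exactly the route the paper intends: the corollary is stated without explicit proof precisely because it is the immediate combination of \Cref{thm: preservation} and \Cref{t:dfk}, and your two refinements---that $\emptyset \to \Gamma'$ is impossible by inspection of \Cref{d:envredk}, so the empty context is invariant along $\mto$, and that $\equiv$ in \Cref{def:lockpi} entails the $\preceq$ hypothesis of \Cref{t:dfk} by \Cref{d:kobastruct}---are exactly the points that make the derivation go through.
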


It is worth noticing how both {\Cref{t:progress} 
and \Cref{t:dfk} 
have similar formulations:   
both properties state that processes 
can always reduce if they 
are well-typed (under the empty typing context) and have an appropriate structure (cf., 
condition 
$live(P)$ in \Cref{t:progress} and 
condition 
$P\preceq \res{\wt{x}}(x\inp {\wt z}.Q \pp R )$ or $P\preceq \res{\wt{x}}(\ov x\out {\wt v}.Q \pp R )$
in \Cref{t:dfk}).}

\subsubsection{On Deadlock Freedom by Encoding}
\label{sss:enco}
\begin{figure}[t]
  \begin{mdframed}
      \vspace{-1em}
\begin{align*}
	\encf{\nil}						&\defeq \nil
\\
	 \encf{\res{xy}P}					&\defeq  \res {c} \encfp{P}{f,\{x,y \mapsto c\}}
\\
	 \encf{P\pp Q}					&\defeq \encf{P} \pp \encf{Q}
\\
	\encf{\ov x\out v.P}				&\defeq \res c \ov{\f x}\out {\f v, {c}}.\encx{P}{c} 
\\
	\encf{x\inp y.P}					&\defeq \f x\inp{y,c}.\encx{P}{c}
\\
 	 \encf{\selection x {l_j}.P}			&\defeq   \res c \ov{\f x}\out {\vv{j}c}.\encx{P}{c}
\\
	 \encf{\branching xlP}				&\defeq  \f x \inp y.\ \picase {y}{c}{\encx {P_i}c}
\end{align*}
  \end{mdframed}
\caption{Encoding session $\pi$-calculus processes into the dyadic $\pi$-calculus, under a renaming function $f$ on names/channels.}
\label{d:encdgs}
\end{figure}

We use encodings to relate classes of (typed) processes induced by the type systems given so far.
To translate a session typed process into a usage typed process, we 
follow the encoding suggested in~\cite{K07} and developed in~\cite{DGS12}, 
which 
mimics the sequential structure of a session by sending its continuation as a payload over a channel.
This continuation-passing style encoding of processes is 
denoted $\encf{\cdot}$, where $f$ is a renaming function from channels to fresh names; see~\Cref{d:encdgs}.
We write $f_x$ to stand for $f(x)$ and $f, \{x \mapsto c\}$ to denote that the entry for $x$ in the renaming $f$ is updated to $c$. We write $f, \{x,y \mapsto c\}$ to mean that both $x$ and $y$ are updated to $c$.

We also need to formally relate  session types to  usage types. 
To this end, we define  $\encob{\cdot}{\su}$ in \Cref{f:enctypes}.
Two points are noteworthy: (i)~it suffices to generate a usage type with obligations and capabilities equal to 0, and (ii)~only sequential usages are needed to encode session types.
We now extend this encoding from types to typing contexts:

\begin{figure}[t]
  \begin{mdframed}
      \vspace{-1em}
\centering
\begin{align*}
\encob{\nilT}{\su} & =  \utype{\zusage}{\temp} \\
\encob{\wn T.S}{\su} & =  \utype{\inpuse{0}{0}}{\encob{T}{\su}, \encob{S}{\su}}\\
\encob{\oc T.S}{\su} & = \utype{\outuse{0}{0}}{\encob{T}{\su},\encob{\dual S}{\su}}\\
\encob{\branch lS}{\su} & =  \utype{\inpuse{0}{0}}{\variant {l_i}{\encob{S_i}{\su}}}\\
\encob{\select lS}{\su} & =   \utype{\outuse{0}{0}}{\variant {l_i}{\encob{\dual{S_i}}{\su}}}
\end{align*}
  \end{mdframed}
\caption{Encoding  session types into usage types.
\label{f:enctypes}}
\end{figure}

\begin{definition}
\label{def:enc_env_su}
Given $\encob{\cdot}{\su}$ as in \Cref{f:enctypes}, and 
with a slight abuse of notation, 
  we write $\encf{\cdot}$ to denote 
  the encoding
  of session type contexts $\Gamma$ into
usage typing contexts  that is
inductively defined as follows:
\[
\encf{\emp} \defeq \emp
\qquad 
\encf{\Gamma,x:T}\defeq \encf{\Gamma} , \f x:\encob{T}{\su}
\]
\end{definition}
%

The next results relate deadlock freedom, typing and the encodings in~\cite{DGS12}, thus formalising the indirect approach to deadlock freedom.
\begin{proposition}
Let $P$ be a deadlock-free session process.
Then $\encf P$ is a deadlock-free \p process.
\end{proposition}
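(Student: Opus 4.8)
The plan is to obtain the result from a tight \emph{operational correspondence} between a session process $P$ and its encoding $\encf{P}$, and then to transfer deadlock-freedom across this correspondence. First I would make precise the two halves of the correspondence for the encoding of \Cref{d:encdgs}, as developed in~\cite{DGS12}. \emph{Completeness}: if $P\to Q$ then $\encf{P}\mto\encf{Q}$, where a session communication step (\Did{R-Com}) is matched by exactly one target step~\Did{R\p Com}, whereas a session choice step (\Did{R-Case}) is matched by \emph{two} target steps, namely one~\Did{R\p Com} delivering the variant value followed by one~\Did{R\p Case} consuming it. \emph{Soundness}: every target step $\encf{P}\to T$ is reflected, so that $T$ is either itself an encoding $\encf{Q}$ with $P\to Q$, or an \emph{intermediate} state, in which the communication of an encoded choice has fired but its associated case redex has not yet been consumed.

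Iterating soundness, every state $P''$ with $\encf{P}\mto P''$ is of exactly one of two kinds: either $P''=\encf{Q}$ for some $Q$ with $P\mto Q$, or $P''$ is an intermediate state exhibiting a top-level redex $\picase{\vv{j}{c}}{x_i}{R_i}$. The intermediate states are discharged at once: such a redex has a concrete variant guard $\vv{j}{c}$, hence it always fires by~\Did{R\p Case}, so $P''\to R$ for some $R$; consequently the condition of \Cref{def:lockpi} is met by $P''$ irrespective of which of its three shapes $P''$ happens to match. A useful observation here is that in an encoding state $\encf{Q}$ every case construct is syntactically guarded by an input (cf.\ the clause for $\branching xlP$ in \Cref{d:encdgs}), so $\encf{Q}$ can match only the output or the input shape of \Cref{def:lockpi}; the case shape is the exclusive signature of intermediate states.

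It remains to treat encoding states $P''=\encf{Q}$. The crux is a \emph{reflection of blocking shapes}: if $\encf{Q}$ matches the output (resp.\ input) shape of \Cref{def:lockpi}, then $Q$ matches the output or selection (resp.\ input or branching) shape of \Cref{def:lock}. This is read off \Cref{d:encdgs}: a head output $\ov{\f x}\out{\wt v}$ in $\encf{Q}$ stems only from a session output $\ov x\out v$ or a selection $\selection x{l_j}$, a head input $\f x\inp{\wt z}$ stems only from a session input or a branching, and the double restriction $\res{xy}$ is translated to the single restriction $\res c$ with $\f x=\f y=c$, so that the restricted-endpoint context surrounding the prefix is preserved (modulo the renaming $f$ and the structural equivalence $\equiv$ of \Cref{d:kobastruct}). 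Given such a $Q$, applying deadlock-freedom of $P$ to $P\mto Q$ yields $Q\to R'$; completeness then lifts this to at least one step of $\encf{Q}$, so $P''=\encf{Q}$ can reduce, as required.

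The main obstacle is the granularity mismatch created by the continuation-passing translation of choice: one source choice step becomes two target steps, so $\encf{\cdot}$ is not strictly reduction-preserving, and the soundness half of the correspondence must be phrased with enough care to classify \emph{every} target reduction and to isolate the intermediate states. Once this classification is secured, the transfer is routine, since intermediate states can never be stuck and encoding states inherit reducibility from their session sources through the reflection of blocking shapes; the remaining bookkeeping---tracking the renaming $f$ and the collapse of the two endpoints $x,y$ onto the single channel $c$---is straightforward but must be respected when matching shapes on the two sides.
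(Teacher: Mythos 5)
Your proposal is correct and takes essentially the same approach as the paper: the paper's own proof is a one-line appeal to the encoding of \Cref{d:encdgs} together with \Cref{def:lock} and \Cref{def:lockpi}, implicitly relying on the operational correspondence for $\encf{\cdot}$ from~\cite{DGS12}, which is precisely what you reconstruct. Your explicit handling of the two-step simulation of choice (one \Did{R\p Com} followed by one \Did{R\p Case}) and of the resulting intermediate states with always-fireable case redexes is a sound elaboration of details the paper leaves implicit.
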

\begin{proof}
The encoding of terms given in \Cref{d:encdgs} preserves the nature of the prefixes in $P$ (outputs are directly translated as outputs, and similarly for inputs) and is defined homomorphically with respect to parallel composition. 
Then, it is easy to see that if $P$ reduces then $\encf P$ can immediately match that reduction; thanks to the definitions of deadlock freedom in each language (\Cref{def:lock} and \Cref{def:lockpi}, respectively), this suffices to conclude the thesis.
\end{proof}

The following result states deadlock freedom by encodability, following~\cite{CDM14}.
\begin{corollary}\label{c:df}
Let $\s P$ be a session process.
If $\lf \encf P$ is deadlock-free then $P$ is deadlock-free.
\end{corollary}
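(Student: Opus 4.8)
The plan is to prove \Cref{c:df} by contraposition together with the operational correspondence implicit in the encoding $\encf{\cdot}$ of \Cref{d:encdgs}. Concretely, I would show that if a session process $P$ (with $\s P$) is \emph{not} deadlock-free in the sense of \Cref{def:lock}, then its encoding $\encf P$ is not deadlock-free in the sense of \Cref{def:lockpi}; contraposing this gives exactly the statement of the corollary once we know $\lf \encf P$ holds. Thus the first step is to make the typing hypothesis precise: from $\s P$ (uncomposable, empty context) I would invoke the type-translation machinery of \Cref{def:enc_env_su} and \Cref{f:enctypes} to obtain that $\encf P$ is well-typed in the usage system, i.e.\ $\emp \lf \encf P$ for some degree of sharing $n$, so that \Cref{t:dfk} and its \Cref{t:dfk}-corollary apply to $\encf P$.

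Next I would establish the operational correspondence needed to transfer the deadlock-freedom property. Suppose $P$ is not deadlock-free: then there is a reduction sequence $P \mto P'$ such that $P'$ matches one of the four prefixed-redex shapes in \Cref{def:lock} yet $P'$ has no reduction. The key lemma I would use (or prove inline, by induction on the encoding in \Cref{d:encdgs}) is a forward simulation: whenever $P \to P''$ then $\encf P \mto \encf{P''}$, and, conversely, the encoding reflects reductions closely enough that the stuck status is preserved. The crucial observation is that $\encf{\cdot}$ is a homomorphism on parallel composition and restriction (first three clauses of \Cref{d:encdgs}), and it translates each session prefix of $P'$ into a corresponding $\pi$-calculus prefix (output, input, or a case guard), so that $\encf{P'}$ has the shape $\res{\wt{x}}(\ov{\f x}\out{\cdots}.Q_1' \pp Q_2')$, $\res{\wt{x}}(\f x\inp{\cdots}.Q_1' \pp Q_2')$, or $\res{\wt{x}}(\picase{\cdots}{x_i}{P_i'} \pp Q')$, i.e.\ precisely one of the guarded forms in \Cref{def:lockpi}.

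The decisive step is then to argue that $\encf{P'}$ is itself stuck, i.e.\ admits no reduction, whenever $P'$ is stuck. Here I would use the fact that the encoding introduces no new opportunities for communication: an interaction in $\encf{P'}$ on an encoded channel $\f x$ (via \Did{R$\pi$Com} or \Did{R$\pi$Case}) can only arise from a matching output/input (resp.\ selection/branching) pair in $P'$ that would itself constitute a session redex under \Did{R-Com} or \Did{R-Case}. Since $P'$ is stuck, no such matching co-action is available, hence neither is it available in $\encf{P'}$; the continuation-passing style only shuttles fresh continuation names $c$ and never creates spurious redexes. Consequently $\encf{P'}$ is a reachable ($\encf P \mto \encf{P'}$) configuration matching a guarded form of \Cref{def:lockpi} with no reduction, so $\encf P$ is not deadlock-free, contradicting the hypothesis $\lf \encf P$ deadlock-free.

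I expect the main obstacle to be the \emph{reflection} half of the operational correspondence, namely showing that a stuck $P'$ yields a stuck $\encf{P'}$. The subtlety is that $\encf{\cdot}$ replaces the single double-restriction $\res{xy}$ by $\res{c}$ and threads continuation channels, so one must verify that the freshly generated names $c$ cannot accidentally enable a $\pi$-calculus communication that had no session-level counterpart, and that the administrative structure (the bound outputs $\res c \ov{\f x}\out{\cdots}$) does not reduce on its own. This amounts to a careful inductive analysis of the redexes of $\encf{P'}$, tracking which prefixes are guarded and confirming that each available $\pi$-redex projects back to an available session redex in $P'$. Once this invariant is in place, the contrapositive argument closes immediately, and the well-typedness of $\encf P$ follows from the type-preserving nature of the encoding established via \Cref{lem:dualenc} and \Cref{def:enc_env_su}.
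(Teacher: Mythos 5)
Your contrapositive argument---forward simulation of session reductions through $\encf{\cdot}$ together with reflection of stuckness from the guarded shapes of \Cref{def:lock} to those of \Cref{def:lockpi}---is correct and is essentially the intended proof: the paper states \Cref{c:df} without an in-paper proof, recalling it from~\cite{CDM14}, whose argument rests on exactly this operational correspondence between a session process and its continuation-passing encoding (the one-line proof of the adjacent proposition is the mirror-image direction). Two small tightenings: deriving $\emptyset \lf \encf{P}$ and invoking \Cref{t:dfk} is redundant, since typability and deadlock-freedom of $\encf{P}$ are already hypotheses of the corollary; and in the decisive reflection step your ``no spurious redexes'' invariant must appeal not only to the freshness of the continuation names $c$ but also to typability and well-formedness of the reachable $P'$ (via \Cref{thm:subj_red} and \Cref{thm:safety}), because $\encf{\cdot}$ maps both co-endpoints $x,y$ to the same channel $\f x = \f y$, so for an ill-typed process two prefixes on the \emph{same} endpoint would merge into a $\pi$-redex in $\encf{P'}$ with no session-level counterpart---e.g., $\res{xy}(\ov x\out v.\nil \pp x\inp z.\nil)$ is stuck while its encoding reduces---and it is precisely duality and linearity of the session typing that rule this out.
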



\section{Separating Classes of Deadlock-Free Typed Processes}\label{s:hier}
Up to here, we have summarized three existing type systems for the $\pi$-calculus: 
\begin{itemize}
    \item Session types, with judgment $\Gamma \s P$ (\Cref{ss:typesess});
    \item Session types based on linear logic, with judgment $P \cp \D$ (\Cref{ss:lf});
    \item Usage types, with judgment $\Gamma \lf P$ (\Cref{sss:usage}).
\end{itemize}
Here we establish formal relationships between these type systems.
As already mentioned, our approach consists in defining \lcp and \fullKoba, the class of deadlock-free session-typed processes induced by the type systems in \Cref{ss:lf} and \Cref{sss:usage}, respectively. 
Our main result is that $\lcp \subset \lkoba$ (Corollary \ref{c:subset}), a strict inclusion that \emph{separates} these two classes of processes.
To obtain this separation result, we define \muKoba, a strict sub-class of \lkoba which we show to coincide with \lcp. 

\subsection{Classes of Deadlock-Free Processes}
\subsubsection{The Classes \lcp and \fullKoba}
\label{ss:classes}
We start by defining classes \lcp and \fullKoba, for which 
we require some auxiliary definitions.
The following encoding addresses 
minor syntactic differences between session typed processes (cf. \Cref{s:sessions}) 
and the processes typable in the linear logic interpretation of session types (cf.~\Cref{ss:lf}).
Such differences concern free output actions and the double restriction operator:

\begin{figure}[t]
  \begin{mdframed}
      \vspace{-1em}
  	\begin{align*}
\chr{\overline x\out y.P} &\defeq \bout{x}{z}.(\linkr{z}{y} \pp \chr{P}) 
\\
\chr{\res{xy}P} &\defeq  \res{w}\chr{P}\substj{w}{x}\substj{w}{y} \quad w \not\in \fn{P}
\end{align*}
  \end{mdframed}

\caption{Encoding session $\pi$-calculus processes into the linear logic processes (cf. \Cref{ss:lf}). It is defined as an homomorphism for the other process constructs.\label{f:sesstocp}}
\end{figure}

\begin{definition}\label{d:trans}
Let $P$ be a session process. The auxiliary encoding $\chr{\cdot}$ 
from the session processes (cf. \Cref{s:sessions}) into the linear logic processes (cf. \Cref{ss:lf})
is defined in \Cref{f:sesstocp}.
\end{definition}

We also need an encoding of session types into linear logic propositions.
The encoding, also denoted $\encob{\cdot}{\ct}$ and given in \Cref{f:enctypesct}, simply follows the essence of the linear logic interpretation.
We extend it to typing contexts as follows:

\begin{definition}
\label{def:enc_env_sc}
The encoding $\encob{\cdot}{\ct}$ 
of session type contexts $\Gamma$
into linear logic typing contexts is defined as:
\[
\encob{\emp}{\ct} \defeq \emp
\qquad 
\encob{\Gamma,x:T}{\ct}	\defeq \encob{\Gamma}{\ct}, x:\encob{T}{\ct}
\]
\end{definition}

It is not difficult to see that the encoding in \Cref{d:trans} is operationally correct.
That is, $\encob{\cdot}{\ct}$ preserves and reflects reductions of session processes. It also preserves the encoding of types in \Cref{f:enctypesct} and \Cref{def:enc_env_sc}.

The next definition allows us to abstract away from differences in the way type systems handle  process $\nil$: in the usages type system, process $\nil$ can only be typed under the empty typing context, whereas in the session type systems the typing context can be non-empty (subject to conditions).

\begin{definition}[Core Context]
Given $\Gamma \s P$, we write $\core{\Gamma}$ to denote the \emph{core context} of $\Gamma$ with respect to $P$. Context  $\core{\Gamma}$ is defined so that ${\Gamma} = \core{\Gamma}, \Gamma'$ holds for some $\Gamma'$, satisfying the following conditions: (i)~$\eend{\Gamma'}$; (ii)~$x:\nilT \in \Gamma'$ implies $x \not\in \fv{P}$;
and 
(ii)~$x:T \in \core{\Gamma}$ implies $x \in \fv{P}$.
\end{definition}
 Notice that, by  \Cref{thm:strength} (strengthening), $\Gamma \s P$ implies $\core{\Gamma} \s P$.

\begin{figure}
  \begin{mdframed}
      \vspace{-1em}
  \begin{align*}
\encob{\nilT}{\ct} & =  \bullet \\
\encob{\wn T.S}{\ct} & =   \encob{{T}}{\ct} \parl \encob{S}{\ct}\\
\encob{\oc T.S}{\ct} & =  \encob{\dual T}{\ct} \otimes \encob{S}{\ct}\\
\encob{\branch lS}{\ct} & =  \&\big\{l_i:\encob{S_i}\ct\big\}_{i\in I} \\
\encob{\select lS}{\ct} & =   \oplus\big\{l_i:\encob{S_i}\ct\big\}_{i\in I} 
\end{align*}
  \end{mdframed}
\caption{Encoding session types into linear logic types.
\label{f:enctypesct}}
\end{figure}

\smallskip

Recall that we write \procs to denote the class of session $\pi$-calculus processes generated by the grammar in
 \Cref{fig:sessionpi}.
Formally,  \lcp and \lkoba are classes of processes in \procs, defined below.

\begin{definition}[\lcp and \fullKoba]\label{d:lang}
The classes \lcp and \fullKoba are defined as follows:
\begin{align*}
\lcp  &\defeq    \Big\{P \in \procs \suchthat \exists \Gamma.\ (\Gamma \s P \,\land\, \chr{P} \cp \encob{\Gamma}{\ct}) \Big\} 
\\
\lkoba &\defeq  \Big\{  P \in \procs \suchthat \exists \Gamma,f.\ (\Gamma \s P \,\land\, \encf{\core{\Gamma}} \lf \encf P) \Big\} 
\end{align*}
\end{definition}

In words, \lcp contains those well-typed session $\pi$-calculus processes whose corresponding translation as a process in \Cref{ss:lf}  (using $\chr{\cdot}$ as in \Cref{d:trans}) is also typable in the linear logic interpretation,  with propositions obtained using the encoding on types $\encob{\cdot}{\ct}$ (\Cref{f:enctypesct}).
Similarly,   \lkoba contains those well-typed session $\pi$-calculus processes   whose corresponding translation as a dyadic $\pi$-calculus process (using $\encf{\cdot}$ as in \Cref{d:encdgs})
is also typable under Kobayashi's type system, with usage types derived using the encoding on types $\encob{\cdot}{\su}$
(\Cref{f:enctypes}).

Notice that \lcp and \lkoba contain both composable and uncomposable processes (cf. \Cref{d:openclose}).
As informally discussed in the Introduction,
processes in \lcp and \lkoba satisfy the \emph{progress} property, defined in~\cite{CD10} and further studied in \cite{CDM14}.
As a consequence:
\begin{itemize}
\item \emph{Uncomposable processes}  in \lcp (typable with $\Delta=\cdot$) are deadlock-free
by following \Cref{t:progress}. Similarly, 
 \emph{uncomposable processes}
in \lkoba (typable with $\Gamma=\emptyset$) are deadlock-free by the indirect approach formalised by \Cref{t:dfk} and \Cref{def:lockpi} and \Cref{c:df}.
\item
 \emph{Composable processes}  in \lcp 
 (typable with $\Delta\neq\cdot$)
 and \lkoba (typable with $\Gamma\neq\emptyset$) may be stuck, because 
  they lack communicating counterparts as described by their (non-empty) typing context. These missing counterparts will be formalized as a \emph{catalyzer} \cite{CD10} that allows a  process to further reduce, thereby ``unstucking it''.
 \end{itemize}
Although we are interested in the (sub)class of processes that satisfy deadlock freedom, we have defined \lcp and \lkoba more generally as processes in \procs satisfying progress; this simplifies the definition and presentation of our technical contributions.


\subsubsection{The Class \muKoba}
\label{ss:mukoba}

\begin{figure*}[t]
    \begin{mdframed}[style=newtight]
        \vspace{-1em}
\centering
$$\begin{array}{c}
\inferrule[{\Did{T$\pi$-Var}}]{
}{
 x:\ctype \mlf x:\ctype
}
\quad
\inferrule[{\Did{T$\pi$-Tup}}]{
\Gamma_1 \mlf v_1:\ctype_1 \and \Gamma_2 \mlf v_2:\ctype_2  
\and \wt{v} = v_1,  v_2~~\wt{\ctype} = \ctype_1, \ctype_2  
}{
\Gamma_1 \pp  \Gamma_2 \mlf \wt{v}: \wt{\ctype}
}
\quad
\inferrule[{\Did{T$\pi$-LVal}}]{
      \Gamma \mlf v: \ctype_j \quad \exists j\in I
}{
      \Gamma \mlf   {l_j}\_v: \variant {l_i}{\ctype_i}
}
\\\\
\inferrule[{\Did{T$\pi$-Nil}}]{
}{
 \emp  \mlf {\nil}
}
\qquad
\inferrule[{\Did{T$\pi$-IndPar}}]{
	\Gamma_1 \mlf P \and \Gamma_2 \mlf Q
    \and  
    \dom(\Gamma_1) \cap \dom(\Gamma_2) =  \emptyset
}{ 
      \Gamma_1, \Gamma_2 \mlf  P\pp Q
}
\\\\
	\inferrule[{\Did{T$\pi$-Par+Res}}]{
      \Gamma_1, x: \utype{U_1}{\ctype} \mlfpar{\prec \cup \prec_1} P_1
      \and 
      \Gamma_2, x: \utype{U_2}{\ctype} \mlfpar{\prec \cup \prec_2} P_2
      \and 
          \dom(\Gamma_1) \cap \dom(\Gamma_2) =  \emptyset
                \\ 
      i \in \{1,2\}
      \and 
      \prec_i =   \{(x,y) \,|\, y \in \fv{P_i}\setminus \{x\} \}
      \and 
      \rel{U_1 \pp U_2}
}{
      \Gamma_1 \pp \Gamma_2 \mlf \res{x} (P_1 \pp P_2)
}
\\\\
%
%
\inferrule[{\Did{T$\pi$-BOut}}]{
     \Gamma_1, y: \utype{U_1}{\ctype} 
     \mlfpar{\prec'} 
     y: \utype{U_1}{\ctype}, y': \utype{U}{\ctype'}       
     \and 
     \Gamma_2, y: \utype{U_2}{\ctype}  \mlfpar{\prec'}  P 
     \\   
     \rel{U_1 \pp U_2}
     \and 
     \prec' = \prec \cup \{(y,z) \,|\, z \in \fv{P}\setminus \{y\} \}
     \and
     \wt{y} = y, y'~~\wt{\ctype} = \ctype, \ctype'
}{
       x: \utype{\oc^{0}_{\ca}}{\wt{\ctype}} \semi (\Gamma_1\pp \Gamma_2) 
       \mlf 
       \res{y}{\ov x}\out{\wt{y}}.P
}

\\\\\

\inferrule[{\Did{T$\pi$-In}}]{
      \Gamma, \wt{z}:\wt{\ctype} \mlf P
}{
      x:\utype{\wn^{0}_{\ca}}{\wt{\ctype}} \semi \Gamma \mlf x\inp{\wt{z}}.P
}
\qquad
\inferrule[{\Did{T$\pi$-Case}}]{
      \Gamma_1 \mlf v:  \variant {l_i}{\ctype_i} \and 
      \Gamma_2, x_i:\ctype_i \mlf P_i \quad \forall i\in I
}{
      \Gamma_1 \pp \Gamma_2 \mlf  \picase v{x_i}{P_i}
}
  \end{array}$$
    \end{mdframed}

  \caption{Typing rules for the \p calculus (\Cref{sss:proc}), which specialize those in \Cref{f:pityping} to define the class \muKoba. 
  Typing rules for values appear in the first line; the remaining typing rules are for processes.}\label{f:mupityping}
\end{figure*}

We now define \muKoba, a strict class of \fullKoba obtained by internalizing key features of linear logic interpretation in~\Cref{ss:lf} into the rules of \Cref{f:pityping}.
Considering the type system is summarized in \Cref{ss:dgs}, \muKoba will arise from a type system modified as follows:
\begin{description}
	
	\item[Types] Because session types are strictly sequential, we restrict \Cref{d:typesusages} to channel types of the form $\utype{U}{\ctype_1, \ctype_2}$ where $U$ is a \emph{sequential usage} (cf.~\Cref{d:usages}). 
	All other notions (obligations and capabilities, the semantics of usages, the notion of reliability) are kept unchanged.
	
	\item[Typing Rules] The type system considers judgments of the form $\Gamma \mlf P$. 
	The typing rules are given in \Cref{f:mupityping}; they are based on those in \Cref{f:pityping} with the following modifications. 
	First, we  consider Rule~$\Did{T$\pi$-BOut}$, which  accounts for bound output as used in the type discipline in \Cref{ss:lf}. 
	Second, to account for the ``composition plus hiding'' principle, we merge Rules $\Did{T$\pi$-Res}$ and $\Did{T$\pi$-Par}$ into Rule~\Did{T$\pi$-Par+Res}. In this modified rule, the parallel usage appears exclusively for the purpose of ensuring reliability. 
	Finally, we include Rule $\Did{T$\pi$-IndPar}$ to account for independent parallel composition as expressed by the mix principle in \Cref{ss:lf}.
	\end{description}

We use this modified type system to define $\muKoba$, following \Cref{d:lang}:	
	\begin{definition}[\muKoba]\label{d:langmu}
	The class of processes \muKoba is defined as follows:
	$$
	\muKoba \defeq  \Big\{ P \in \procs \suchthat \exists \Gamma,f.\ (\Gamma \s P \,\land\, \encf{\core{\Gamma}} \mlf \encf P) \Big\} 
$$
\end{definition}
Considerations about (un)composable processes in \muKoba hold exactly as described for \fullKoba.

\subsection{Main Results} 
We start by separating \fullKoba and \muKoba. First, we have the following:
 
 \begin{lemma}
 \label{l:muKoba}
 Let $P$ be a dyadic $\pi$-calculus process as in \Cref{sss:proc}.
If $\Gamma \mlf P$ then $\Gamma \lf P$.
 \end{lemma}
 
 \begin{proof}
By induction on the type derivation for $P$, with a case analysis on the last applied typing rule. 
We rely on the analogue property for values (if $\Gamma \mlf v$ then $\Gamma \lf v$), which is immediate.
We discuss only the cases featuring key differences between $\mlf$  (\Cref{f:mupityping}) and  $\lf$ (\Cref{f:pityping}) arise:
\begin{itemize}
    \item If the last applied rule is \Did{T$\pi$-BOut}, then $P = \res{y}{\ov x}\out{y,y'}.P'$
    and
    $$
    \inferrule
    {
     \Gamma_1, y: \utype{U_1}{\ctype} 
     \mlfpar{\prec'} 
     y: \utype{U_1}{\ctype}, y': \utype{U}{\ctype'}       
     \and 
     \Gamma_2, y: \utype{U_2}{\ctype}  \mlfpar{\prec'}  P 
     \\   
     \rel{U_1 \pp U_2}
     \and 
     \prec' = \prec \cup \{(y,z) \,|\, z \in \fv{P}\setminus \{y\} \}
}{
       \Gamma 
       \mlf 
       \res{y}{\ov x}\out{y,y'}.P
}
$$
with $\Gamma = x: \utype{\oc^{0}_{\ca}}{\ctype, \ctype'} \semi (\Gamma_1\pp \Gamma_2)$.
We have 
$\Gamma_1, y: \utype{U_1}{\ctype} 
     \lfpar{\prec'}
     y: \utype{U_1}{\ctype}, y': \utype{U}{\ctype'}$.
     Also, by IH:
$\Gamma_2, y: \utype{U_2}{\ctype}  \lfpar{\prec'}  P$.

We show that $P$ can be typed using Rules~$\Did{T$\pi$-Out}$ and $\Did{T$\pi$-Res}$ in sequence. 
First we have:
$$
\inferrule{
\Gamma_1, y: \utype{U_1}{\ctype} 
     \lfpar{\prec'}
     y: \utype{U_1}{\ctype}, y': \utype{U}{\ctype'}
     \and 
    \Gamma_2, y: \utype{U_2}{\ctype}  \lfpar{\prec'}  P 
}{
       x: \utype{\oc^{0}_{\ca}}{\ctype, \ctype'} \semi (\Gamma_1 \pp \Gamma_2 \pp y: \utype{U_1 \pp U_2}{\ctype}) 
       \lf 
       {\ov x}\out{y,y'}.P
}
$$
Now, because by assumption we have $\rel{U_1 \pp U_2}$, we can use Rule~$\Did{T$\pi$-Res}$  to derive 
$$
\inferrule{
       x: \utype{\oc^{0}_{\ca}}{\ctype, \ctype'} \semi (\Gamma_1 \pp \Gamma_2 \pp y: \utype{U_1 \pp U_2}{\ctype}) 
       \lf 
       {\ov x}\out{y,y'}.P
       \and 
       \rel{U_1 \pp U_2}
}
{
       x: \utype{\oc^{0}_{\ca}}{\ctype, \ctype'} \semi (\Gamma_1 \pp \Gamma_2) 
       \mlf 
       \res{y}{\ov x}\out{y,y'}.P
}
$$

    \item If the last applied rule is \Did{T$\pi$-Par+Res}, then $P = \res{x} (P_1 \pp P_2)$ and
   $$\inferrule{
      \Gamma_1, x: \utype{U_1}{\ctype} \mlfpar{\prec \cup \prec_1} P_1
      \and 
      \Gamma_2, x: \utype{U_2}{\ctype} \mlfpar{\prec \cup \prec_2} P_2
      \and 
          \dom(\Gamma_1) \cap \dom(\Gamma_2) =  \emptyset
      \\ 
      i \in \{1,2\}
      \and 
      \prec_i =   \{(x,y) \,|\, y \in \fv{P_i}\setminus \{x\} \}
      \and 
      \rel{U_1 \pp U_2}
}{
     \Gamma \mlf \res{x} (P_1 \pp P_2)
}$$
    where $\Gamma = \Gamma_1 \pp \Gamma_2$. 
    By IH, we have 
    $\Gamma_i, x: \utype{U_i}{\ctype} \lfpar{\prec \cup  \prec_i} P_i$, with $i \in \{1,2\}$.
    
    We show that $P$ can be typed using Rules~$\Did{T$\pi$-Par}$ and $\Did{T$\pi$-Res}$ in sequence.
    First, we have:
    $$
    \inferrule{
    \Gamma_1, x: \utype{U_1}{\ctype} \lfpar{\prec_ \cup \prec_1 \cup \prec_2} P_1
    \and
    \Gamma_2, x: \utype{U_2}{\ctype} \lfpar{\prec_ \cup \prec_1 \cup \prec_2} P_2
    }{
    \Gamma_1 \pp \Gamma_2 \pp x: \utype{U_1 \pp U_2}{\ctype} \lfpar{\prec \cup \prec_1 \cup \prec_2} P_1 \pp P_2
    }
    $$
    where we have weakened, in both cases, the partial orders for $P_1$ and $P_2$. Now, because by assumption we have $\rel{U_1 \pp U_2}$, we can use Rule~$\Did{T$\pi$-Res}$  to derive 
$$
\inferrule{
         \Gamma_1 \pp \Gamma_2 \pp  x: \utype{U_1 \pp U_2}{\ctype} \lfpar{\prec \cup \prec_1 \cup \prec_2} P_1 \pp P_2
         \and 
         \rel{U_1 \pp U_2}
}
{
\Gamma_1 \pp \Gamma_2 \lfpar{\prec} \res{x}{(P_1 \pp P_2)}
}
$$
    
    \item Finally, if the last applied rule is \Did{T$\pi$-IndPar}, then process $P = P_1 \pp P_2$, which can be immediately typed using Rule~$\Did{T$\pi$-Par}$.  
\end{itemize}
\end{proof}
 
  As a result of  \Cref{l:muKoba}, properties derived from typing in \Cref{ss:dgs} (type preservation and deadlock freedom) hold in the modified type system $\mlf$ and apply to processes in \muKoba.
  Thus, $\muKoba \subseteq \fullKoba$.

The modifications defined in \Cref{ss:mukoba} make $\mlf$  a strict subsystem of the system $\lf$ in \Cref{ss:dgs}.
Indeed, because the converse of \Cref{l:muKoba} does not hold, there are processes in $\fullKoba$ but not in $\muKoba$:
\begin{lemma}
\label{lem:K1inK2}
 $\muKoba \subset \fullKoba$.
 \end{lemma}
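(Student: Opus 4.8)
The plan is to prove the two facts underlying $\lpkoba \subset \Koba{2}$ independently: the set-theoretic inclusion $\lpkoba \subseteq \Koba{2}$, and then its strictness by producing a single separating witness.

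For the inclusion, I would argue by monotonicity of the side condition in the typing rule for parallel composition. Suppose $P \in \lpkoba$, so that there are $\Gamma$ and $f$ with $\Gamma \s P$ and $\encf{\Gamma} \lfpar{\prec}{1} \encf{P}$. The session judgment $\Gamma \s P$ carries no information about the degree of sharing, so it can be reused verbatim. In the Kobayashi derivation, every instance of Rule~\Did{T$\pi$-Par$_n$} discharges its side condition $|\dom(\Gamma_1)\cap\dom(\Gamma_2)| \le 1$; since $1 \le 2$, the very same instances satisfy the side condition for $n = 2$. Replacing each occurrence yields a derivation of $\encf{\Gamma} \lfpar{\prec}{2} \encf{P}$ with the same $\Gamma$ and $f$, whence $P \in \Koba{2}$. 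This is the monotonicity of $\lfpar{\prec}{n}$ in its index recorded just after \Cref{f:pityping}.

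For strictness, I would take as witness the deadlock-free process
$$
P' \defeq \res{xy}\res{wz}(\ov x\out \unit.\ov w\out \unit \pp y\inp s.z\inp t),
$$
and show $P' \in \Koba{2} \setminus \lpkoba$. Membership in $\Koba{2}$ is the easier half: $P'$ is session-typable ($\unit:\nilT \s P'$), and its encoding $\encf{P'}$ is a parallel composition of two purely sequential components that interact on the two encoded sessions $\{x,y\}$ and $\{w,z\}$, so their domains share exactly two names. The induced usages can be assigned reliable levels because both components act on session $\{x,y\}$ before session $\{w,z\}$, i.e. the inter-channel dependency is acyclic — this is precisely the deadlock-freedom of $P'$. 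Hence $\encf{\Gamma} \lfpar{\prec}{2} \encf{P'}$, and the side condition $\le 2$ is met.

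The harder half — and the step I expect to be the main obstacle — is proving $P' \notin \lpkoba$, i.e. that no choice of $\Gamma$ and $f$ admits a derivation using only sharing $\le 1$. I would inspect the shape of $\encf{P'}$ up to structural congruence. Both parallel components are sequential prefixes with no internal parallelism, so the only parallel node in any congruent presentation is the top-level one; renaming via $f$ is a bijection on the relevant names and so cannot alter which names are shared. It then suffices to count the free names shared at that unique node: the two encoded session channels are used by both sides, while the continuation channels created by the continuation-passing encoding are bound within the sender and the image of $\unit$ occurs free only on the output side, so neither contributes extra sharing. Thus every typing decomposition is forced through a parallel composition sharing exactly two names, ruling out \Did{T$\pi$-Par$_n$} with $n = 1$. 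The delicate point is confirming that the genuine interleaving of the two sessions inside each sequential component cannot be refactored — via $\equiv$ or a different $f$ — into independent subprocesses of lower sharing; I would discharge this by the structural analysis just described, reducing the claim to the name count at the single parallel node. Combining this with the inclusion gives $\lpkoba \subset \Koba{2}$.
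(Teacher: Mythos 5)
Your proposal is correct and takes essentially the same route as the paper: the inclusion follows from the monotonicity of the side condition in Rule~\Did{T$\pi$-Par$_n$} (viewed as a rule scheme), and strictness is witnessed by a deadlock-free process whose unique parallel node shares exactly two sessions --- your witness $\res{xy}\res{wz}(\ov x\out \unit.\ov w\out \unit \pp y\inp s.z\inp t)$ is the paper's own deadlock-free variant of the stuck process from \Cref{ex:basic}, playing exactly the role of the paper's $P_2$. If anything, your structural analysis of why no renaming $f$ or congruent rearrangement can lower the sharing below two is more explicit than the paper's one-line justification.
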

 \begin{proof}
  Because $\muKoba$ is obtained by restricting the typing rules of $\fullKoba$, it is immediate that $\muKoba \subseteq \fullKoba$. In the following, we show that this inclusion is strict, by showing that $\fullKoba$ contains (deadlock-free) session processes not in $\muKoba$.
 A representative example is: 
\begin{align*}
P & \defeq   \res{a_1b_1}\res{a_2b_2}(a_1\inp x.\;\ov{a_2}\out{x}
        \pp \ov{b_1}\out{\unit}.\;b_2\inp{z})
\\
  \encf{P} & =   \res{c}\res{d}(c\inp{x,w}.\;\res{d'}\ov{d}\out{x,d'}
        \pp \res{c'}\ov{c}\out{\unit,c'}.\;d\inp{z,u})
\end{align*}
%
\noindent
This process is not in $\muKoba$ because $\encf{P}$
involves the composition of two parallel processes which share two sessions. 
Hence, $\encf{P}$ is typable in $\lf$ 
(using Rules~$\Did{T$\pi$-Par}$ and $\Did{T$\pi$-Res}$ in \Cref{f:pityping})
but not in $\mlf$, because of the forms of composition admitted by Rule~$\Did{T$\pi$-Par+Res}$ in \Cref{f:mupityping}.
 \end{proof}
 
%

We now show that $\lcp$ and $\muKoba$ coincide.
We need an important property, given by \Cref{p:relia}, which  connects our encodings of (dual) session types into usage types
with  reliability (\Cref{d:reli}), a central notion to the type systems for deadlock freedom in \Cref{f:pityping} and \Cref{f:mupityping}.
First we have the following result, which connects our encodings of types and the notion of duality.

\begin{lemma}
\label{lem:dualenc}
Let $T,S$ be  session types. Then, the following hold:
(i)~ $\dual T =  S$  if and only if ${\encob{\dual T}{\ct}} = \encob{S}{\ct}$;
(ii)~$\dual T =  S$  if and only if ${\encob{\dual T}{\su}} = \encob{S}{\su}$.
%
%
\end{lemma}
\begin{proof}
By induction on the duality relation of session types.
\end{proof}
%

Given a channel type $\ctype = \utype{U}{\widetilde{\ctype}}$, we write $\usa{\ctype}$ to denote the usage $U$.

\begin{lemma}
\label{p:relia}
Let $T$ be a session type.
Then $\rel{\usa{\encob{T}{\su}} \pp \usa{\encob{\dual{T}}{\su}}}$ holds.
\end{lemma}
\begin{proof}
The proof proceeds by induction on the structure of   $T$,  using 
\Cref{lem:dualenc}.
\begin{enumerate}[label=$\bullet$]
\item
$T = \nilT$. 
By duality, $\dual{T} = \nilT$.
Then
$\encob{T}{\su} = \encob{\dual{T}}{\su} = \utype{\zusage}{\temp}$.
Thus, $\usa{\encob{T}{\su}} \pp \usa{\encob{\dual{T}}{\su}} = \zusage \pp\zusage$. 
Notice that 
$\zusage \pp \zusage \not \to$ and that
$\con{\zusage \pp \zusage}$ holds trivially.
Therefore, by \Cref{d:reli}, $\rel{\zusage \pp\zusage}$.

\item 
$T = \oc T_1.{T_2}$ for some $T_1, T_2$.
By definition of duality, $\dual{T} = \wn T_1.\dual{T_2}$.
Then,
\begin{align*}
\encob{T}{\su}  & =
\utype{\outuse{0}{0}}{\encob{T_1}{\su}, \encob{\dual {T_2}}{\su}}  
\\
\encob{\dual{T}}{\su} & =
\utype{\inpuse{0}{0}}{\encob{T_1}{\su}, \encob{\dual {T_2}}{\su}}
\end{align*}
Thus, 
$\usa{\encob{T}{\su}} \pp \usa{\encob{\dual{T}}{\su}} = {\outuse{0}{0} \pp \inpuse{0}{0} }$.
Notice that $\outuse{0}{0} \pp \inpuse{0}{0}  \to \zusage \pp \zusage \not \to$.
We examine $\con{\cdot}$ for both usages.
First, $\con{\outuse{0}{0} \pp \inpuse{0}{0}}$ holds because $0 \leq 0$; 
then, $\con{\zusage \pp \zusage}$ trivially holds.
Therefore, by \Cref{d:reli}, $\rel{\outuse{0}{0} \pp \inpuse{0}{0}}$ holds.

\item 
$T = \wn T_1.{T_2}$, for some $T_1, T_2$. 
By definition of duality, $\dual{T} = \oc T_1.\dual{T_2}$.
Then,
\begin{align*}
\encob{T}{\su}   & =
\utype{\inpuse{0}{0}}{\encob{T_1}{\su}, \encob{{T_2}}{\su}}
\\
 \encob{\dual{T}}{\su} & =
\utype{\outuse{0}{0}}{\encob{T_1}{\su}, \encob{{T_2}}{\su}} 
\end{align*}
Thus, 
$\usa{\encob{T}{\su}} \pp \usa{\encob{\dual{T}}{\su}} =
{\inpuse{0}{0} \pp \outuse{0}{0}}$
and the thesis holds as in the previous case.

\item
$T = \branch lS$, for some $S_i$.
By definition of duality, $\dual T = \select l{\dual S}$.
Then,
\begin{align*}
\encob{T}{\su}    & = 
 \utype{\inpuse{0}{0}}{\variant {l_i}{\encob{S_i}{\su}}}
 \\
 \encob{\dual{T}}{\su}  & = 
 \utype{\outuse{0}{0}}{\variant{l_i}{\encob{S_i}{\su}}}
\end{align*}
Thus, 
$\usa{\encob{T}{\su}} \pp \usa{\encob{\dual{T}}{\su}} =
{\inpuse{0}{0} \pp \outuse{0}{0}}$
and the thesis holds just as in case $T = \wn T_1.{T_2}$.

\item 
$T = \select lS$, for some $S_i$ and $i\in I$: similar to the previous cases.
\end{enumerate}
\end{proof}

We then have the following result:
\begin{theorem}\label{t:cppkoba}
$\lcp = \muKoba$.
\end{theorem}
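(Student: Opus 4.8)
The plan is to prove the two inclusions $\lcp \subseteq \lpkoba$ and $\lpkoba \subseteq \lcp$ separately, since the claimed equality is set-theoretic. By \Cref{d:lang}, membership in each class is defined via typability (up to the syntactic translation $\chr{\cdot}$ and the type encodings $\enco{\cdot}{\ct}$, $\enco{\cdot}{\su}$), so both inclusions ultimately amount to statements relating the three type systems introduced in \Cref{ss:typesess}, \Cref{ss:lf}, and \Cref{sss:dftypes}. The common starting point for both directions is a process $P$ with a session typing derivation $\Gamma \s P$; the task is then to transfer this derivation between the linear-logic system (judgment $\chr{P} \cp \enco{\Gamma}{\ct}$) and Kobayashi's usage system \emph{with degree of sharing exactly $1$} (judgment $\encf{\Gamma} \lfpar{\prec}{1} \encf{P}$).

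For the inclusion $\lcp \subseteq \lpkoba$, I would proceed by induction on the derivation of $\chr{P} \cp \enco{\Gamma}{\ct}$, showing that each typing rule in \Cref{fig:type-system-cll} can be matched by a corresponding usage-typing derivation in \Cref{f:pityping} whose parallel compositions never share more than one name. The crucial cases are the two composition rules: Rule~$\Did{T-$\cut$}$ (composition plus hiding), which introduces exactly one shared, restricted name and should map to an application of \Did{T$\pi$-Par$_1$} followed by \Did{T\p Res}; and Rule~\Did{T-$\mix$}, independent composition, which shares no names and maps to \Did{T$\pi$-Par$_1$} with an empty domain intersection (degree $0 \leq 1$). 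Here \Cref{p:relia} is the linchpin: when \cut\ restricts a channel $x$ at dual session types $T$ and $\dual{T}$, the encoded usage $\encop{T}{\su}{(i,j)} \pp \encop{\dual{T}}{\su}{(i,j)}$ is reliable, which is precisely the side condition required by \Did{T\p Res}. The prefix rules ($\otimes$, $\parl$, and the choice rules) should correspond via the $\semi$ operator and the continuation-passing structure of $\encf{\cdot}$, with \Cref{lem:dualenc} ensuring duality is preserved under encoding.

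For the reverse inclusion $\lpkoba \subseteq \lcp$, I would take $P$ with $\encf{\Gamma} \lfpar{\prec}{1} \encf{P}$ and reconstruct a linear-logic derivation of $\chr{P} \cp \enco{\Gamma}{\ct}$. The key structural fact to exploit is that the degree-of-sharing bound $n=1$ forces every parallel composition in the usage derivation to share \emph{at most one} name; sharing exactly one name restricted by a reliable usage is exactly the shape that \cut\ captures, while sharing zero names is exactly \mix. Thus the degree-$1$ constraint rules out precisely the processes (like $P_2$ in \Cref{lem:K1inK2}, which shares two sessions) that lie outside \lcp. I expect this direction to require a careful argument that the usage annotations arising in an arbitrary degree-$1$ derivation are always of the reliable, dual form produced by the session-type encoding — i.e., that one can invert the encoding of types and recover genuine dual session types at each restriction.

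\textbf{The main obstacle} I anticipate is the composition case, where the two directions meet the gap between Kobayashi's associative, $n$-ary parallel composition (a single context split via $\pp$ with an unrestricted cardinality bound relaxed only by the annotation) and linear logic's rigid binary ``composition plus hiding'' discipline, in which every restriction is tied to exactly one cut on one channel. Reconciling these requires showing that any degree-$1$ usage derivation can be reassociated so that each restricted name is introduced by a binary split sharing only that name, matching \cut, and that the reliability condition checked by \Did{T\p Res} coincides with the duality checked by Rule~\Did{T-$\cut$}; \Cref{p:relia} and \Cref{lem:dualenc} are exactly the bridge here, but marrying them to the associativity/commutativity of $\pp$ (and to \Cref{l:kobastruct} for restructuring) is where the technical weight lies.
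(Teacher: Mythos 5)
Your proposal is correct and essentially reproduces the paper's proof: the paper splits the equality into the same two inclusions (proved in \Cref{appt:cppkoba} by structural induction on $P$ with inversion on the typing judgments, which is equivalent to your induction on the typing derivations since the rules are syntax-directed), and it uses exactly your key correspondences --- Rule~\Did{T-$\cut$} simulated by \Did{T\p Par$_1$} followed by \Did{T\p Res} with \Cref{p:relia} discharging the reliability side condition, Rule~\Did{T-$\mix$} handled as sharing degree $0 \leq 1$, and \Cref{lem:dualenc} bridging duality across the type encodings. The one place where your plan diverges is your anticipated ``main obstacle'' in the reverse inclusion, which largely dissolves in the paper's treatment: since \Cref{d:lang} builds $\Gamma \s P$ into membership in $\lpkoba$, the paper never inverts usage types to recover session types, but instead reads dual session types off the given session typing derivation at each double restriction $\res{xy}$ (using reliability only to split the usage of the single shared name $w$, and \Cref{lem:subst_cp} to handle the renaming $\substj{w}{x}\substj{w}{y}$ introduced by $\chr{\cdot}$), so no reassociation of parallel compositions is needed --- session parallel composition is already binary and each restriction binds exactly one dual pair of endpoints.
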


\begin{proof}[Proof (Sketch)]
We prove two lemmas:
(i)~If $P \in \lcp$ then $P \in \muKoba$
and
(ii)~If $P \in \muKoba$ then $P \in \lcp$.
Both lemmas are proven by structural induction on $P$; 
see \Cref{appt:cppkoba} (Page~\pageref{appt:cppkoba}) for details.
\end{proof}

%
%
%
%
%
%
%
%
%

Therefore, we have the following corollary, which attests that 
the class of deadlock-free session processes  induced by 
linear logic interpretations of session types (cf. \Cref{ss:lf}) is strictly included in the class
induced by the indirect approach of~\cite{DGS12} (cf. \Cref{ss:dgs}).

\begin{corollary}\label{c:subset}
$\lcp \subset \lkoba$.
\end{corollary}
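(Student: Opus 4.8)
The plan is to derive the corollary directly from the two results already established, namely the hierarchy of \Cref{t:strict} and the coincidence of \Cref{t:cppkoba}. The statement $\lcp \subset \lkoba$ for $n > 1$ is a strict inclusion claim, so I must prove both the inclusion $\lcp \subseteq \lkoba$ and the strictness (properness) of that inclusion. Since \Cref{t:cppkoba} gives us $\lcp = \lpkoba$, the entire argument can be rephrased purely in terms of the constituent classes $\Koba{1}$ and $\Koba{n}$, and the logical-logic class \lcp disappears from the reasoning after a single substitution.

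First I would rewrite the goal using \Cref{t:cppkoba}: since $\lcp = \Koba{1}$, establishing $\lcp \subset \Koba{n}$ for $n > 1$ is equivalent to establishing $\Koba{1} \subset \Koba{n}$ for $n > 1$. For the inclusion $\Koba{1} \subseteq \Koba{n}$, I would invoke the monotonicity remark made just after Rule~\Did{T$\pi$-Par$_n$}, which states that if $\Gamma \lfpar{\prec}{n} P$ then $\Gamma \lfpar{\prec}{k} P$ for any $k \leq n$; equivalently, a process typable with degree of sharing at most $1$ is also typable with degree of sharing at most $n$ whenever $n \geq 1$. Hence $\Koba{1} \subseteq \Koba{n}$ for every $n \geq 1$, which covers the required range $n > 1$.

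For the strictness of the inclusion, I would appeal to \Cref{t:strict}, which gives the chain $\Koba{1} \subset \Koba{2} \subset \cdots \subset \Koba{n}$ with each inclusion strict. By transitivity of strict inclusion, for any $n > 1$ we obtain $\Koba{1} \subset \Koba{n}$ as a \emph{proper} inclusion: indeed, \Cref{t:strict} already exhibits witnessing processes (the families $P_{n+1}$ and $Q_{n+1}$, or concretely $P_2 \in \Koba{2} \setminus \Koba{1}$ from \Cref{lem:K1inK2}) that separate the classes. Combining this with $\lcp = \Koba{1}$ yields $\lcp \subset \Koba{n}$ for $n > 1$, as desired.

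I do not anticipate any real obstacle here, since the corollary is an immediate consequence of two theorems proved earlier; the only point requiring mild care is making sure the monotonicity direction is applied correctly, that is, that $\Koba{1}$ is the \emph{smallest} class in the family and therefore sits at the bottom of the chain in \eqref{eq:hier}. The subtle conceptual content—why these classes differ and what the degree of sharing measures—lives entirely in \Cref{t:strict} and \Cref{t:cppkoba}; the corollary merely records their juxtaposition. Accordingly, I would keep the proof to two or three sentences, explicitly citing \Cref{t:cppkoba} for the identification $\lcp = \Koba{1}$ and \Cref{t:strict} for the strictness of $\Koba{1} \subset \Koba{n}$ when $n > 1$.
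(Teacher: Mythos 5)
Your proposal is correct and takes essentially the same route as the paper, which states the corollary without proof precisely because it is the immediate juxtaposition of $\lcp = \Koba{1}$ (\Cref{t:cppkoba}) with the strict chain of \Cref{t:strict}, transitivity of $\subset$ doing the rest. One minor observation: your separate appeal to the monotonicity remark for the inclusion $\Koba{1} \subseteq \Koba{n}$ is redundant (the chained strict inclusions of \Cref{t:strict} already yield it), which is fortunate, since that remark as printed states the inequality as $k \leq n$, whereas the direction you actually need---and correctly supply in your paraphrase---is the upward one: the side condition $|\dom(\Gamma_1) \cap \dom(\Gamma_2)| \leq 1$ in Rule~\Did{T$\pi$-Par$_1$} trivially implies the condition with bound $n \geq 1$.
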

\noindent 
The fact that (deadlock-free) processes such as $P$ (cf.~\Cref{lem:K1inK2}) are not included in \lcp is informally discussed by Caires et al.~in~\cite[\S6]{DBLP:journals/mscs/CairesPT16}.
\rev{The discussion in~\cite{DBLP:journals/mscs/CairesPT16} highlights the principle of ``composition plus hiding'' (enforced by Rule~$\Did{T-$\cut$}$, cf. \Cref{fig:type-system-cll}) as a feature that distinguishes logically motivated session type systems from other type systems (in particular, the one in~\cite{GH05}), which can type $P$ but also deadlocked variants of it (cf. \Cref{ex:basic}). However, Caires et al.~give no formal comparisons with other classes of deadlock-free processes.}

%

\section{Translating \lkoba into \lcp}\label{s:enco}
Having established and characterized the differences between deadlock-free session $\pi$-calculus processes in \lcp and \lkoba, in this section we explore how fundamental these differences really are. 
To this end, we define a type-preserving translation  from processes in \lkoba into processes in \lcp. 

One leading motivation for looking into a translation is that the separation result established by \Cref{c:subset} can be seen as being \emph{subtle}, in the following sense.
Consider a process $P$ that belongs to $\fullKoba$ but not to $\muKoba$ because one of its subprocesses does not conform to the (restrictive) form of typed parallel composition enforced by Rule~\Did{T$\pi$-Par+Res} in \Cref{f:mupityping}. The fact that $P \in \fullKoba \setminus \muKoba$ implies that $P$ features forms of session cooperation that are intrinsically sequential and admitted by Rules~\Did{T$\pi$-Res} and~\Did{T$\pi$-Par} in \Cref{f:pityping}; this means that subprocesses of $P$ must become more independent (concurrent) to be admitted in \muKoba.
We illustrate this intuition with an example:
\begin{example}[Idea of the Translation]\label{ex:simple}
Recall
process
$P$ in \Cref{lem:K1inK2}:
$$
P \defeq    \res{a_1b_1}\res{a_2b_2}(a_1\inp x.\;\ov{a_2}\out{x}
        \pp \ov{b_1}\out{\unit}.\;b_2\inp{z})
$$
We have that $P \in \fullKoba$ but  $P \not\in \muKoba$:
each sub-process features two independent sessions occurring in sequence.

Consider now the following variant of $P$, 
in which the 
left subprocess has been kept unchanged, but the
right subprocess has been modified to increase concurrency: 
$$
P' \defeq  \res{a_2b_2}(\res{a_1b_1}(a_1\inp x.\;\ov{a_2}\out{x}
        \pp \ov{b_1}\out{\unit}.\nil) \pp b_2\inp{z}.\nil)
$$
Indeed, by replacing 
$\ov{b_1}\out{\unit}.\;b_2\inp{z}$
with 
$\ov{b_1}\out{\unit}.\nil \pp b_2\inp{z}.\nil$, 
we have that $P' \in \muKoba$.
\end{example}

Here we propose a \emph{translation} that converts any typable session process into a process in $\lcp$ (i.e. $\muKoba$).
The translation, given in \Cref{ss:rw}, follows the idea of \Cref{ex:simple}: given a parallel process as input, return as output a process in 
which 
one of the components is kept unchanged, but the other is translated by using  representatives of the sessions implemented in it, composed in parallel.
Such parallel representatives are formally defined as \emph{characteristic processes} and \emph{catalyzers}, which we introduce next.

\subsection{Characteristic Processes and Catalyzers}
We need some preliminary notions.
A \emph{characteristic process} of a session type $T$  represents the smallest process in $\lcp$ inhabiting it.

\begin{definition}[Characteristic Processes of a Session Type]\label{def:charprocess}
Given a name $x$, 
the set of \emph{characteristic processes} of session type $T$, denoted \chrp{T}{x}, is inductively defined as follows:
\begin{align*}
\chrp{\nilT}{x} 		&\defeq 	{\{\nil\}} \\
\chrp{\wn T'.S}{x} 	&\defeq    {\big\{x(y).(P \para Q) \suchthat P \in \chrp{T'}{y} \land Q \in \chrp{S}{x}\big\}}\\
\chrp{\oc T'.S}{x} 		&\defeq   \big\{\bout{x}{y}.(P \para Q) \suchthat P \in \chrp{\dual{T'}}{y} \land Q \in \chrp{S}{x}\big\}\\
\chrp{\branch lS}{x}	&\defeq    \big\{\branching xlP \suchthat \forall i\in I.\ P_i\in \chrp{S_i}x\big\}\\
\chrp{\select lS}x 		&\defeq    \bigcup_{i\in I}\big\{ \selection x{l _i}.{P_i} \suchthat P_i\in \chrp{S_i}x\big\}
\end{align*}
\end{definition}

\noindent
The previous definition extends to typing contexts by 
composing in parallel independent characteristic processes, one for each of the session types declared in the context.
This reflects that sessions in a context declare independent structures of communication.

\begin{definition}[Characteristic Processes of a Session Typing Context]\label{def:charenv}
Given a context $\Gamma = w_1{:}T_1, \ldots, w_n{:}T_n$, 
we shall write $\chrpk{\Gamma}{}$
to stand for the set
$
\left\{
(P_1 \para \cdots \para P_n)
\ \suchthat
P_i \in \chrp{T_i}{w_i}
\right\}
$.
\end{definition}

Characteristic processes are well-typed in the system of \Cref{ss:lf} (cf. \Cref{fig:type-system-cll}):

\begin{lemma}
\label{lem: wt_characteristic}
\label{cor:context_char}
Let $T$ be a session type and $\Gamma$ be a session context.
\begin{enumerate}
\item\label{item1: wt_characteristic} For all $P \in \chrp{T}{x}$, we have  $P \cp x:\encob{T}{\ct}$.
\item\label{item2: wt_characteristic} For all $P\in \chrpk{\Gamma}{}$, we have $P\cp \encob{\Gamma}{\ct}$.
\end{enumerate}
\end{lemma}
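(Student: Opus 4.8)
The plan is to prove part~(1) by structural induction on the session type $T$, and then to derive part~(2) by a routine iteration of Rule~$\Did{T-$\mix$}$ for independent parallel composition (cf.\ \Cref{fig:type-system-cll}). The guiding observation is that the five clauses defining the characteristic processes $\chrp{T}{x}$ in \Cref{def:charprocess} are in exact structural correspondence with the five clauses defining the type encoding $\enco{\cdot}{\ct}$ in \Cref{f:enctypes}; consequently, each inductive case should reduce to applying the induction hypothesis to the immediate sub-types and then firing the matching typing rule of \Cref{fig:type-system-cll}.

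Concretely, I would argue the cases as follows. For the base case $T = \nilT$, we have $\chrp{\nilT}{x} = \{\nil\}$ and $\enco{\nilT}{\ct} = \bullet$; since $\bullet \in \{\bot,\one\}$ and $\bot = \one$ under the mix principles, $\nil \cp x : \bullet$ follows from the rules of \Cref{fig:type-system-cll} for the inactive process. For $T = \wn T'.S$, a characteristic process has the form $x\inp y.(P \para Q)$ with $P \in \chrp{T'}{y}$ and $Q \in \chrp{S}{x}$; the induction hypothesis gives $P \cp y : \enco{T'}{\ct}$ and $Q \cp x : \enco{S}{\ct}$, so Rule~$\Did{T-$\mix$}$ yields $P \para Q \cp y : \enco{T'}{\ct}, x : \enco{S}{\ct}$, and the rule for input (which introduces $\parl$) then derives $x\inp y.(P \para Q) \cp x : \enco{T'}{\ct} \parl \enco{S}{\ct}$, which by definition of $\enco{\cdot}{\ct}$ is $x : \enco{\wn T'.S}{\ct}$. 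For $T = \oc T'.S$, the characteristic process is $\bout{x}{y}.(P \para Q)$ with $P \in \chrp{\dual{T'}}{y}$ and $Q \in \chrp{S}{x}$; here the rule for output already splits its context across the two premises, so from $P \cp y : \enco{\dual{T'}}{\ct}$ and $Q \cp x : \enco{S}{\ct}$ it directly derives $\bout{x}{y}.(P \para Q) \cp x : \enco{\dual{T'}}{\ct} \otimes \enco{S}{\ct}$, which is $x : \enco{\oc T'.S}{\ct}$.

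The two remaining cases $T = \branch lS$ and $T = \select lS$ would proceed analogously, using the branching and selection rules: the induction hypothesis supplies the premises $P_i \cp x : \enco{S_i}{\ct}$ for the relevant $i \in I$, which those rules collect into $x : \&\{l_i : \enco{S_i}{\ct}\}_{i\in I}$ and $x : \oplus\{l_i : \enco{S_i}{\ct}\}_{i\in I}$, respectively. For part~(2), given $\Gamma = w_1{:}T_1, \ldots, w_n{:}T_n$, every $P \in \chrpk{\Gamma}{}$ is of the form $P_1 \para \cdots \para P_n$ with $P_i \in \chrp{T_i}{w_i}$; part~(1) gives $P_i \cp w_i : \enco{T_i}{\ct}$, and since the $w_i$ are pairwise distinct, $n-1$ applications of Rule~$\Did{T-$\mix$}$ produce $P \cp w_1 : \enco{T_1}{\ct}, \ldots, w_n : \enco{T_n}{\ct}$, which is exactly $\enco{\Gamma}{\ct}$ by \Cref{def:enc_env_sc}.

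I expect the only genuinely delicate point to be the output case: the component carried by $\otimes$ is the \emph{dual} type $\enco{\dual{T'}}{\ct}$, matching the fact that the sent name's characteristic continuation is drawn from $\chrp{\dual{T'}}{y}$ rather than $\chrp{T'}{y}$. This dualization is precisely what makes the types line up under the encoding, and it is the one spot where the correspondence between \Cref{def:charprocess} and $\enco{\cdot}{\ct}$ must be verified with care. A secondary, purely bookkeeping concern is to invoke the correct names for the prefix rules of \Cref{fig:type-system-cll}; the argument itself is otherwise a direct structural induction with no circularity or measure subtleties.
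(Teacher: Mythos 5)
Your proposal is correct and follows essentially the same route as the paper's own proof: part~(1) by structural induction on $T$, with Rule~\Did{T-$\mix$} feeding the input rule in the $\wn T'.S$ case, the context-splitting Rule~\Did{T-$\otimes$} applied directly in the $\oc T'.S$ case (with the dualized component from $\chrp{\dual{T'}}{y}$, the point you rightly flag), and the branching/selection rules for the choice types. Part~(2) via iterated applications of Rule~\Did{T-$\mix$} over the pairwise distinct names of $\Gamma$ is likewise exactly the paper's argument, there phrased as an induction on the size of $\Gamma$.
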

\begin{proof}
The proof of \Cref{item1: wt_characteristic} is
by induction on the structure of $T$.
The proof of \Cref{item2: wt_characteristic} is 
by induction on the size of $\Gamma$, 
using \Cref{item1: wt_characteristic}.
See \Cref{applem: wt_characteristic} (Page~\pageref{applem: wt_characteristic}) for details.
\end{proof}

{
Let us use `$[\cdot]$' to denote a \emph{hole} and $C[\cdot], C'[\cdot], \ldots$ to denote process contexts (i.e., a process with a hole).
Building upon characteristic processes, a \emph{catalyzer}
for a typing context is a process context
that implements the  behaviors it declares.}

\begin{definition}[Catalyzers of a Session Typing Context]\label{def:catalyser}
Given a session typing context $\Gamma$, we define its {set of} associated \emph{catalyzers}, noted $\mathcal{C}_\Gamma$, inductively as 
follows:
$$
\mathcal{C}_\Gamma \defeq 
\begin{cases}
\big\{[\cdot]\big\} & \text{if $\Gamma = \emptyset$} \\
\big\{ (\nub x)(C[\cdot] \para P) \suchthat C[\cdot] \in \mathcal{C}_{\Gamma'} \land  P \in \chrp{{T}}{x} \big \} & \text{if $\Gamma = \Gamma', x:T$}
\end{cases}
$$
\end{definition}

\noindent
Given a context $\Gamma = x_1: T_1, \ldots, x_n: T_n$, let us write $\dual{\Gamma}$ to denote the context 
$x_1: \dual{T_1}, \ldots, x_n: \dual{T_n}$, i.e., the context
obtained by ``dualising'' all the types in $\Gamma$.
The following statement formalizes the complementarity, in terms of session behaviors, between a well-typed process in $\lcp$ and its associated catalyzers:

%
%
%
\begin{lemma}[Catalyzers Preserve Typing]
\label{lem:typability_catal}
Let $P\cp \encob{\Gamma}{\ct}, \encob{\Gamma'}{\ct}$ and $C[\cdot] \in \mathcal{C}_{\dual {\Gamma}}$.
Then
$C[P]\cp  \encob{\Gamma'}{\ct}$.
\end{lemma}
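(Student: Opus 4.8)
The plan is to prove the statement by induction on the size of the session typing context $\Gamma$ — equivalently, on the structure of the catalyzer $C[\cdot] \in \mathcal{C}_{\dual\Gamma}$ — since \Cref{def:catalyser} builds catalyzers by peeling off one session assignment at a time and wrapping a fresh composition-plus-restriction around the hole. The two ingredients I would invoke at each step are the first item of \Cref{lem: wt_characteristic}, which types the characteristic process that the catalyzer plugs in, and the fact that the encoding $\enco{\cdot}{\ct}$ commutes with duality, namely $\enco{\dual T}{\ct} = \dual{\enco T{\ct}}$ for every session type $T$. The latter follows by a straightforward induction on $T$ from the clauses in \Cref{f:enctypes} (and underlies \Cref{lem:dualenc}); it is precisely what makes a catalyzer's characteristic process dual-typed with respect to the plugged process, so that Rule~\Did{T-$\cut$} (\Cref{fig:type-system-cll}) applies.

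In the base case $\Gamma = \emptyset$ we have $\dual\Gamma = \emptyset$, so by \Cref{def:catalyser} $C[\cdot] = [\cdot]$ and $\enco{\Gamma}{\ct}$ is the empty context; the hypothesis $P \cp \enco{\Gamma}{\ct}, \enco{\Gamma'}{\ct}$ then reads $P \cp \enco{\Gamma'}{\ct}$, and since $C[P] = P$ the conclusion holds at once. For the inductive step I would write $\Gamma = \Gamma_0, x{:}T$, so that $\dual\Gamma = \dual{\Gamma_0}, x{:}\dual T$ and, by \Cref{def:catalyser}, $C[\cdot] = (\nub x)(C_0[\cdot] \para Q)$ with $C_0[\cdot] \in \mathcal{C}_{\dual{\Gamma_0}}$ and $Q \in \chrp{\dual T}{x}$; hence $C[P] = (\nub x)(C_0[P] \para Q)$. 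Reading the hypothesis as $P \cp \enco{\Gamma_0}{\ct}, \enco{x{:}T, \Gamma'}{\ct}$, I would apply the induction hypothesis with $\Gamma_0$ as the catalyzed context and $x{:}T, \Gamma'$ as the residual one, obtaining $C_0[P] \cp x{:}\enco T{\ct}, \enco{\Gamma'}{\ct}$. By the first item of \Cref{lem: wt_characteristic} the characteristic process satisfies $Q \cp x{:}\enco{\dual T}{\ct}$, which by the duality commutation is $Q \cp x{:}\dual{\enco T{\ct}}$. Composing $C_0[P]$ and $Q$ along $x$ via Rule~\Did{T-$\cut$} then yields $(\nub x)(C_0[P] \para Q) \cp \enco{\Gamma'}{\ct}$, that is, $C[P] \cp \enco{\Gamma'}{\ct}$, as required. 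Freshness of $x$ with respect to $\dom(\Gamma_0)$ and to the names used inside $C_0$ — guaranteed by Barendregt's convention and by the fresh-name choices in the characteristic processes — ensures there is no capture when re-associating contexts and applying the cut.

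Since the argument is largely bookkeeping, I do not expect a deep obstacle; the point that must be handled with care is the duality step, ensuring that the type $\enco{\dual T}{\ct}$ carried by the catalyzer's characteristic process is exactly the linear-logic dual $\dual{\enco T{\ct}}$ that Rule~\Did{T-$\cut$} demands of $P$ at $x$. The other place where precision matters is the context split in the induction hypothesis: one must catalyze only $\Gamma_0$ while keeping $x{:}T$ in the residual context, so that after the outermost cut exactly $\enco{\Gamma'}{\ct}$ remains.
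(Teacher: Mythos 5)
Your proof is correct and follows essentially the same route as the paper, whose own argument is just a two-line sketch appealing to exactly your two ingredients: \Cref{def:catalyser} (catalyzers as nested cut-compositions of characteristic processes) and \Cref{lem: wt_characteristic}(1) (typing of each characteristic process). Your induction on the size of $\Gamma$, with the duality commutation $\enco{\dual T}{\ct} = \dual{\enco{T}{\ct}}$ justifying each application of Rule~\Did{T-$\cut$}, is precisely the expansion the paper leaves implicit.
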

\begin{proof}
Follows from \Cref{def:catalyser}, which defines $C[\cdot]$ as a composition of characteristic processes, and \Cref{lem: wt_characteristic} (\Cref{item1: wt_characteristic}), which ensures the appropriate type for each of them.
\end{proof}



\subsection{Translating \fullKoba into \lcp}\label{ss:rw}
{Our translation, given in \Cref{def:typed_enc}, transforms a session-typed process in \fullKoba into a \emph{set} of \lcp processes. 
Unsurprisingly, a delicate point in this translation is the treatment of parallel composition, which follows the intuition of the transformation of $P$ into $P'$ motivated in \Cref{ex:simple}: keep half of a parallel process unchanged, and increase the parallelism in the other half. \rev{This way, our translation returns a set of processes because we wish to account for both these two alternatives, for the sake of generality.}

Intuitively, given a well-typed session process $P_1 \para P_2$, our translation produces 
two sets of processes: the first one collects processes of the form $Q_1 \para G_2$, where 
$Q_1$ composes the translation of $P_1$ within an appropriate catalyzer
and
$G_2$ is a characteristic process that implements all sessions declared in the typing of $P_2$.
Similarly, the second set collects  processes of the form 
$G_1 \para Q_2$, where 
$Q_2$ composes the translation of $P_2$ within an appropriate catalyzer
and
$G_1$ is a characteristic process that implements all sessions declared in the typing for $P_1$.
This way, by translating one subprocess and replacing the other with parallel representatives ($G_1$ and $G_2$), 
the translated processes are more independent, and  the circular dependencies---the heart of deadlocked processes---are systematically ruled out. 

We require some auxiliary notations.

\begin{notation}[Bound Names] 
\label{n:bn}
Concerning bound names and their session types:
\begin{itemize}
\item We annotate bound names with their session types:
we write $\res {x_1y_1:T_1}\cdots\res {x_ny_n:T_n}P$
and $x\inp {y:T}.P$, for some session types $T,T_1, \ldots, T_n$.
\item We sometimes write 
$\Gamma, \wt{x:T}\s P$
as shorthand notation for 
$\Gamma, x_1:T_1, \ldots, x_n:T_n \s P$.
Similarly, we write 
$\res {\wt{x}\wt{y}:\wt{T}}P$ 
to abbreviate 
$\res {x_1y_1:T_1}\cdots\res {x_ny_n:T_n}P$.
\end{itemize}
\end{notation}

Using these notations for bound names,
we introduce the following notation for well-typed parallel processes, in which ``hidden'' sessions are explicitly denoted by brackets:
}



\begin{notation}[Hidden/Bracketed Sessions]
\label{not:para}
We shall write 
$${\Gamma_1, \restricted{\wt{x:S}} \circc \Gamma_2, \restricted{\wt{y:T}}}\s \res {\wt{x}\wt{y}:\wt{S}}(P_1\para P_2)$$
whenever
${\Gamma_1,\Gamma_2}\s \res {x_1y_1}\cdots \res {x_ny_n}(P_1\para P_2)$
holds
with 
$\Gamma_1, x_1:S_1, \ldots,  x_n:S_n \s P_1$, 
$\Gamma_2, y_1:T_1, \ldots, y_n:T_n \s P_2$,
and 
$ S_i = \dual{T_i}$, for all $i \in \{1, \ldots, n\}$.
\end{notation}
We are now ready to give the first translation from session processes into \lcp:

\begin{definition}[Translation into $\lcp$]\label{def:typed_enc}
Let $P$
be such that $\Gamma\s P$ and $P \in \lkoba$. 
The set of \lcp processes $\encoCP{\Gamma\s P}{}$ is defined in \Cref{figure:first_rewriting}.
\begin{figure}[!t]
    \begin{mdframed}
    \vspace{-1em}
\begin{align*}
\encCP{\Gamma^\eende \s \nil}								& \defeq  \big\{\nil \big\}
\\[1mm]
\encCP{{\Gamma}, x:\oc T.S, v:T  \s\dual{x}\out v.P'}		& \defeq 
\big\{ \dual{x}(z).(\linkr{v}{z}\para Q)  \suchthat Q \in \encCP{{{\Gamma}, x:{S}}\s P'}\big\}
\\[1mm]
{\encCP{{\Gamma_1,\Gamma_2}, x:\oc T.S \s\res{zy}\dual{x}\out y.(P_1\para P_2)}}	&\defeq\\
\big\{ \dual{x}(y).(Q_1\para Q_2)  \suchthat~ & Q_1 \in \encCP{{{\Gamma_1},  z:\ov T\s P_1}}
\wedge  Q_2 \in \encCP{{{\Gamma_2},x:{S}\s P_2}}\big\}
\\[1mm]
\encCP{{\Gamma}, x:\wn T.S	 \s x\inp {y:T}.P'}			& \defeq
\big\{x(y).Q  \suchthat Q \in \encCP{{\Gamma,x:S,y:T}\s P'}	\big\}			
\\[1mm]
\encCP{{\Gamma}, x:\select lS \s\selection x{l_j}.P'}		& \defeq
\big\{\selection x{l_j}.Q  \suchthat Q \in {\encCP{{\Gamma,x:S_j}\s P'}}\big\}								 
\\[1mm]  	
\encCP{{\Gamma}, x:\branch lS \s \branching xlP}		&  \defeq
\big\{\parbranching x{l_i}{Q_i}  \suchthat Q_i \in \encCP{{\Gamma,x:S_i}\s P_i}\big\}
\\[2mm]  
\encCP{\Gamma_1, \restricted{\wt{x:S}}\circc \Gamma_2, \restricted{\wt{y:T}}\s \res {\wt{x}\wt{y}:\wt{S}}(P_1\para P_2)}
&\defeq  \\
\big\{
C_1[Q_1] \para G_2  \suchthat~ &Q_1 \in \encCP{{\Gamma_1, {\wt{x:S}}}\s P_1}, \, C_1 \in \mathcal{C}_{\wt{x:T}},\, G_2 \in  \chrpk{\Gamma_2}{}  \big\}  
\\[1mm]							
\cup
\\
\big\{G_1 \para C_2[Q_2]   \suchthat~ &   Q_2 \in \encCP{\Gamma_2,\wt{y:T}\s P_2}, \, C_2 \in \mathcal{C}_{\wt{y:S}},\,
G_1 \in \chrpk{\Gamma_1}{}
\big\}
\end{align*}
    \end{mdframed}

\caption{Translation $\encCP{\cdot}{}$ (cf. \Cref{def:typed_enc}).\label{figure:first_rewriting}}
\end{figure}
\end{definition}
Our  translation operates on typing judgments: a well-typed session process is translated using the information declared in its typing context. Although the translation could be defined for arbitrary  session processes (even  deadlocked ones), membership in $\lkoba$ plays a role in operational correspondence (see below).
We discuss the different cases in \Cref{figure:first_rewriting}:
\begin{enumerate}[label=$\bullet$]
\item The process   $\nil$  is translated into the singleton set $\{\nil\}$ provided that the 
associated typing context $\Gamma$ contains only completed sessions (recall that   $\Gamma^\eende$ stands for $\eend{\Gamma}$). 
\item The translation of output- and input-prefixed processes is self-explanatory; in the former case, we translate the free output available in $\lkoba$ by exploiting a forwarding process in \lcp (cf. \mydefref{d:trans}).
The translation of selection and branching processes also follows expected lines.
\item The last case of the definition handles processes in parallel, possibly with restricted sessions; we use \Cref{not:para} to make
such sessions explicit.
As hinted at above, the translation of a parallel process $\res{\wt{x}\wt{y}:\wt{S}}(P_1\para P_2)$ 
in \lkoba results into two different sets of \lcp processes:
the first set contains processes of the form $C_1[Q_1] \para G_2$, where, intuitively:
\begin{enumerate}[label=$\bullet$]
\item $Q_1$ belongs to the set that results from translating subprocess $P_1$ with an appropriate typing judgment, which includes $\wt{x:S}$.
\item $C_1$ belongs to the set of catalyzers that implement the 
context $\wt{x:T}$, i.e., 
the dual behaviors of the sessions implemented by $P_1$ (cf. \mydefref{def:catalyser}).
This step thus removes the double restriction operator. 
\item $G_2$ belongs to the set of characteristic processes for $\Gamma_2$, which describes the sessions implemented by $P_2$ (cf. \mydefref{def:charenv}).
\end{enumerate}
The  explanation for the processes of the form $G_1 \para C_2[Q_2]$ in the second set is completely dual.

As we will see, processes $C_1[Q_1] \para G_2$ (and $G_1 \para C_2[Q_2]$) preserve by construction the typing of $\res{\wt{x}\wt{y}:\wt{S}}(P_1\para P_2)$:
process $C_1[Q_1]$ (resp. $C_2[Q_2]$) is typable with context $\Gamma_1$ (resp. $\Gamma_2$); 
process $G_2$ (resp. $G_1$) is typable with context $\Gamma_2$ (resp. $\Gamma_1$)---see Theorem\,\ref{thm:L0-L2} below.
\end{enumerate}


We illustrate the translation by means of an example. 


\begin{example}[The Translation $\encCP{\cdot}$ at Work]
\label{ex::K1inK2}
Consider again the process $P$ used in \Cref{lem:K1inK2}.
Let $T \defeq \oc\nilT.\nilT$ and $S \defeq \wn\nilT.\nilT$. Clearly, $\dual{S} = T$.
We have the following derivation. 
%
\medskip
\begin{displaymath}
\infer[\Did{T-Res}]
{
\unit:\nilT\s  \res{a_1b_1}\res{a_2b_2}(a_1\inp x.\;\ov{a_2}\out{x}.\zero
        \pp \ov{b_1}\out{\unit}.\;b_2\inp{z}.\zero)
}
{
	\infer[\Did{T-Res}]
	{a_1: \wn\nilT.\nilT, b_1:\oc\nilT.\nilT, \unit:\nilT\s
	\res{a_2b_2}(a_1\inp x.\;\ov{a_2}\out{x}.\zero\pp \ov{b_1}\out{\unit}.\;b_2\inp{z}.\zero)}
	{
		\infer[\Did{T-Par}]
		{a_2:T, b_2:S, a_1: S, b_1:T, \unit:\nilT \s a_1\inp x.\;\ov{a_2}\out{x}.\zero \pp \ov{b_1}\out{\unit}.\;b_2\inp{z}.\zero}
		{
			\inferrule*[left=\Did{T-In}]{ }{a_1: S, a_2: T \s a_1\inp x.\;\ov{a_2}\out{x}.\zero}
			\qquad
			\inferrule*[right=\Did{T-Out}]{ }{b_1:T,  b_2:S, \unit:\nilT \s \ov{b_1}\out{\unit}.\;b_2\inp{z}.\zero}
		}
	}
}
\end{displaymath}
\smallskip

Before detailing the set $\encCP{\unit:\nilT\s  P_2}$, we spell out the main ingredients required:
\begin{align*}
\encCP{a_1: S, a_2: T \s a_1\inp x.\;\ov{a_2}\out{x}.\zero} &= 
\{a_1\inp x.\dual{a_2}(z).(\linkr{x}{z}\para \nil)\} 
\\
\encCP{b_1:T,  b_2:S, \unit:\nilT \s \ov{b_1}\out{\unit}.\;b_2\inp{z}.\zero} &=
\{\dual{b_1}(u).(\linkr{\unit}{u}\para b_2\inp{z}.\nil)\}
\\
\chrp{T}{x} & = \{\dual{x}(z).(\nil \para \nil)\}
\\
\chrp{S}{x} & = \{x\inp{z}.(\nil \para \nil)\}
\\
\mathcal{C}_{a_1:T, a_2:S} &= \{ (\nub a_1)(R_1 \para (\nub a_2)(R_2 \para [\cdot])) \suchthat R_1 \in \chrp{S}{a_1}, R_2 \in \chrp{T}{a_2}\}
\\
&=  \{ (\nub a_1)(a_1\inp{z}.(\nil \para \nil) \para (\nub a_2)(\dual{a_2}(z).(\nil \para \nil) \para [\cdot])) \}
\\
\mathcal{C}_{b_1:S, b_2:T} &= \{ (\nub b_1)(Q_1 \para (\nub b_2)(Q_2 \para [\cdot])) \suchthat Q_1 \in \chrp{T}{b_1}, Q_2 \in \chrp{S}{b_2}\}
\\
&=  \{ (\nub b_1)(\dual{b_1}(z).(\nil \para \nil) \para (\nub b_2)(b_2\inp{z}.(\nil \para \nil) \para [\cdot])) \}
\end{align*}
Exploiting 
\Cref{not:para}, 
judgement 
$\unit:\nilT\s  P$
can be written as 
$$\restricted{a_1:S, a_2: T} \circc \unit:\nilT, \restricted{b_1:T, b_2: S} \s  \res{a_1b_1}\res{a_2b_2}(a_1\inp x.\;\ov{a_2}\out{x}.\zero
        \pp \ov{b_1}\out{\unit}.\;b_2\inp{z}.\zero)$$
We may now define the translation of $P$ into $\lcp$:
\begin{align*}
& \encCP{[a_1:S, a_2: T] \circc \unit:\nilT, [b_1:T, b_2: S] \s  \res{a_1b_1}\res{a_2b_2}(a_1\inp x.\;\ov{a_2}\out{x}.\zero
        \pp \ov{b_1}\out{\unit}.\;b_2\inp{z}.\zero)} 
 \\
 =~& \big\{
C_1[Q_1] \pp \chrpk{\unit:\nilT}{\unit}  ~~\suchthat C_1 \in \mathcal{C}_{a_1:T,a_2:S}, Q_1 \in \encCP{a_1: S, a_2: T \s a_1\inp x.\;\ov{a_2}\out{x}.\zero}
\big\} 
\\
& \cup 
\big\{
C_2[Q_2] \suchthat C_2 \in \mathcal{C}_{b_1:S,b_2:T}, Q_2 \in \encCP{b_1:T,  b_2:S, \unit:\nilT \s \ov{b_1}\out{\unit}.\;b_2\inp{z}.\zero} 
\big\} 
\\
=~& \big\{
(\nub a_1)(\dual{a_1}(z).(\nil \para \nil) \para (\nub a_2)(a_2\inp{z}.(\nil \para \nil) \para a_1\inp x.\dual{a_2}(z).(\linkr{x}{z}\para \nil))) \pp \chrpk{\unit:\nilT}{\unit} 
\, , \, 
\\
& \quad
(\nub b_1)(b_1\inp{z}.(\nil \para \nil) \para (\nub b_2)(\dual{b_2}(z).(\nil \para \nil) \para \dual{b_1}(u).(\linkr{\unit}{u}\para b_2\inp{z}.\nil))) 
\big\}
\end{align*}
Above, $P$ has two parallel components and so set $\encCP{\unit:\nilT\s  P}$ has two elements, representing
the two different possibilities for ``dividing'' the sequential structure of $P$ into more parallel processes.

\end{example}
\subsubsection{Properties}
We present two important results about our translation.
First, it is type preserving, up to the  encoding of types given in \Cref{f:enctypesct}:
\begin{theorem}[The Translation $\encCP{\cdot}$ is Type Preserving]
\label{thm:L0-L2}
Let $\Gamma\s P$.
Then, for all $Q\in \encCP{{\Gamma}\s P}$, we have that $Q\cp \encob{\Gamma}{\ct}$.
\end{theorem}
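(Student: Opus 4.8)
The plan is to prove this by structural induction on the derivation of $\Gamma \s P$, following the cases of the rewriting procedure $\encCP{\cdot}$ given in \Cref{figure:first_rewriting}. Each clause of the definition corresponds to a typing rule from \Cref{fig:sess_typing}, so the induction will align the shape of $P$ (and its last applied session-typing rule) with the corresponding case of the rewriting. For every $Q \in \encCP{\Gamma \s P}$ we must exhibit a derivation of $Q \cp \enco{\Gamma}{\ct}$ in the system of \Cref{fig:type-system-cll}, using the encoding of session types into linear logic propositions from \Cref{f:enctypes} (bottom).

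First I would dispatch the base case: if $P = \nil$ under a terminated context $\Gamma^{\eende}$, then $\encCP{\cdot}$ returns $\{\nil\}$, and since every type in $\Gamma$ is $\nilT$, its encoding is a context of $\bullet$'s, so $\nil \cp \enco{\Gamma}{\ct}$ follows from the \lcp rule for inaction (together with the mix/empty-context conventions that allow terminated endpoints). Next I would treat the prefix cases. For input $x\inp{y:T}.P'$ at type $\wn T.S$, the rewriting yields $x(y).Q$ with $Q \in \encCP{\Gamma, x{:}S, y{:}T \s P'}$; by the induction hypothesis $Q \cp \enco{\Gamma}{\ct}, x{:}\enco{S}{\ct}, y{:}\enco{T}{\ct}$, and applying the \lcp input rule (for $\parl$) gives $x(y).Q \cp \enco{\Gamma}{\ct}, x{:}\enco{T}{\ct}\parl\enco{S}{\ct}$, which is exactly $x{:}\enco{\wn T.S}{\ct}$ by definition of the encoding. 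The selection and branching cases are analogous, matching the $\oplus$ and $\&$ rules respectively and using that the encoding commutes with the choice constructors. The two output cases each produce a bound output $\dual{x}(z).(\linkr{v}{z}\para Q)$ or $\dual{x}(y).(Q_1\para Q_2)$; here I would type the forwarder $\linkr{v}{z}$ with \Did{T-Id} and combine it with $Q$ (resp.\ $Q_1,Q_2$) via Rule~\Did{T-$\otimes$} (and \Did{T-$\cut$} where the abbreviation $\bout{x}{z}$ unfolds), checking that the resulting type is $\enco{\oc T.S}{\ct} = \enco{\dual T}{\ct}\otimes\enco{S}{\ct}$, which is why the output clause rewrites into $\dual T$ on the carried name.

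The main obstacle will be the parallel case, $\res{\wt{x}\wt{y}:\wt{S}}(P_1\para P_2)$, for two reasons. The set $\encCP{\cdot}$ is a \emph{union} of two symmetric families, so I would argue one family (processes $C_1[Q_1]\para G_2$) and obtain the other by symmetry. For a member $C_1[Q_1]\para G_2$, the induction hypothesis gives $Q_1 \cp \enco{\Gamma_1}{\ct}, \wt{x{:}\enco{S}{\ct}}$; then \Cref{lem:typability_catal} (catalyzers preserve typing) applied with $C_1 \in \mathcal{C}_{\wt{x:T}} = \mathcal{C}_{\dual{\wt{x:S}}}$ discharges the bracketed sessions $\wt{x}$ and yields $C_1[Q_1] \cp \enco{\Gamma_1}{\ct}$, while \Cref{cor:context_char} (\Cref{item2: wt_characteristic}) gives $G_2 \cp \enco{\Gamma_2}{\ct}$. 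A final application of Rule~\Did{T-$\mix$} composes these into $\enco{\Gamma_1}{\ct},\enco{\Gamma_2}{\ct} = \enco{\Gamma}{\ct}$. The delicate bookkeeping is matching \Cref{not:para} precisely: the outer typing context decomposes as $\Gamma = \Gamma_1,\Gamma_2$ with $S_i = \dual{T_i}$, so the dualisation built into both the catalyzer clause (\mydefref{def:catalyser}) and \Cref{lem:typability_catal} lines up exactly with the duality requirement $\dual{S_i} = T_i$ imposed by \Did{T-Res} on the original double restriction. Once these invariants are made explicit, the parallel case reduces to an application of the two cited lemmas plus \Did{T-$\mix$}, so the two auxiliary results do the heavy lifting and the remaining work is verifying that the encodings of types commute with each constructor.
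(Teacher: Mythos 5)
Your proposal is correct and follows essentially the same route as the paper's own proof (\Cref{appthm:L0-L2}): case analysis aligned with the clauses of \Cref{figure:first_rewriting}, using \Cref{lem: wt_characteristic} and \Cref{lem:typability_catal} to discharge the characteristic processes and catalyzers in the parallel case, followed by Rule~\Did{T-$\mix$}. One minor correction: in the output cases no appeal to \Did{T-$\cut$} is needed, since in \lcp the bound output $\dual{x}(z).(\linkr{v}{z}\para Q)$ is typed directly by \Did{T-$\tid$} and \Did{T-$\otimes$}, exactly as the paper does.
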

\begin{proof}
By induction on the derivation $\Gamma\s P$.
See \Cref{appthm:L0-L2} (Page~\pageref{appthm:L0-L2}) for details.
\end{proof}

\Cref{thm:L0-L2} is meaningful, for it says that the session type ``interface'' of a process (i.e., the set of sessions it implements) is not modified by the translation. 
That is, $\encCP{\cdot}$ modifies the process structure by closely following the 
session typing discipline.

To establish properties with respect to reduction, we start with a useful notation:

\begin{notation}
Let us write $\Gamma \s P_1, P_2$ whenever both $\Gamma \s P_1$ and
$\Gamma \s P_2$ hold. Similarly,
let us write $P_1,P_2 \cp \Gamma$, whenever both $P_1\cp \Gamma$ and $P_2\cp \Gamma$ hold.
\end{notation}

Before showing  how our translation satisfies  an operational correspondence result (cf. \Cref{thm:oc}),
we illustrate the need for a useful auxiliary definition, which will allow us to relate processes typable under the same typing context but featuring a different parallel structure
(cf. \Cref{def:uptok}).

\begin{example}[The Parallel Structure of Typable Processes]
\label{ex:uptok}
Let 
$P \cp \Delta$
where 
$P \defeq \res x ({\ov x\out v.P_1} \pp {x\inp z.P_2})$
and 
$\Delta = \Delta_1,\Delta_2,v:T$.
Consider the following reduction from $P$, obtained 
using the rules in \Cref{fig:redlcp}:
$$
\res x (\,\underbrace{\ov x\out v.P_1}_\text{$\Delta_1,v:T$} \pp \underbrace{x\inp z.P_2}_\text{$\Delta_2$}\,)
\to
\res x(\,\underbrace{P_1}_\text{$\Delta_1$}\pp \underbrace{P_2\substj{v}{z}}_\text{$\Delta_2,v:T$}\,) \defeq Q
$$
Here we are using a free output process, namely ${\ov x\out v.P_1}$, as a shortcut for 
$\dual{x}(z).(\linkr{v}{z}\para P_1)$; recall that this process is typable using Rules \Did{T-$\tid$} and  \Did{T-$\otimes$}  in \Cref{fig:type-system-cll}.
Observe that name $z$ is free with scope $P_2$; for typing to hold, 
it must be the case that $z:T$ in $P_2$.

By \Cref{thm:LL-type-preservation}, $Q\cp \Delta_1,\Delta_2,v:T$. 
Let us consider the typing of   $P$ and $Q$ in relation to their parallel structure.
We can notice that even though the typing context $\Delta$ remains the same under reduction, the ``parallel decomposition'' changes after reduction, as  highlighted by the under brackets. 
In particular, the type assignment $v:T$, at first related to the left-hand side component in $P$, 
``jumps'' after reduction
to the right-hand side   component in $Q$.
This phenomenon is due to 
value passing:
after reduction, a substitution occurs in the continuation process $P_2$, where $v$ might be used in different ways.
For instance, $v$ could occur within a sequential position within $P_2$ or at top-level in one of its parallel subprocesses.

Now, consider process $Q'$, which contains subprocesses $P_1$ and $P_2$ from $Q$ but has a different form of parallelism:
$$
Q'\defeq \res x(\,\underbrace{P_1}_\text{$\Delta_1$}\pp \underbrace{\res z (P_2 \para P_z)}_\text{$\Delta_2$} \pp \underbrace{P_v}_\text{$v:T$}\,)
$$
Here, $P_z$ and $P_v$ are characteristic processes implementing  $T$ along $z$ and $v$, respectively.
Clearly, $Q' \cp\Delta_1,\Delta_2,v:T$, just as $Q$, but its parallel structure is different:
in $Q'$ we have two separate subprocesses in parallel, one typed according to $\Delta_2$, the other according to $v:T$.
In $\res z (P_2 \para P_z)$, process $P_z$ provides one of the counterparts for $P_2$; this interaction is hidden by restriction $\res z$ and so the resulting typing context is $\Delta_2$.
By using $P_v$,  the type interface of $Q'$ is exactly as that of $Q$; 
by using it in parallel position, their parallel structure will be different whenever $z$  occurs in $P_2$ in a sequential position. 
\end{example}

We are interested in capturing the \emph{parallelization relation} between processes such as $Q$ and $Q'$ in \Cref{ex:uptok}, because it serves to explain how translated processes behave under reduction. 
This relation is formalised in \Cref{def:uptok} below:

\begin{definition}[Parallelization Relation]
\label{def:uptok}
Let $P$ and $Q$ be processes such that $P,Q\cp \Gamma$.
We write $P\uptok Q$ if and only if there exist processes
$P_1, P_2, Q_1, Q_2$ and contexts  $\Gamma_1, \Gamma_2$ such that the following hold:
\begin{align*}
P = P_1\pp P_2
\qquad
Q = Q_1\pp Q_2
\qquad
P_1, Q_1\cp \Gamma_1
\qquad
P_2, Q_2\cp \Gamma_2
\qquad
\Gamma = \Gamma_1,\Gamma_2
\end{align*}
\end{definition}

This definition says that two processes $P$ and $Q$, typed under the typing context $\Gamma$, are related by $\uptok$, if they can be decomposed into parallel subprocesses, which are typed under the same decomposition of $\Gamma$. This way, for processes $Q$ and $Q'$ from \Cref{ex:uptok}, relation $Q\uptok Q'$ holds.

\nurev{By definition, the relation $\uptok$ is reflexive.} It will appear in our operational correspondence result, given next. 
Below,  let $\fred$ denote structural congruence (cf \Cref{ss:pm}) extended with a reduction by Rule~\Did{R-Fwd} (cf.~\Cref{ss:lf}). 
We may now state:
\begin{theorem}[Operational Correspondence for $\encCP\cdot$]
\label{thm:oc}
Let $P$ be such that $\Gamma\s P$ for some typing context $\Gamma$. Then, we have:
\begin{enumerate}
\item\label{item1:oc}
\nurev{If $P\to P'$, then for all $Q \in \encCP{{\Gamma}\s P}$ there exist $Q', R$ such that
$Q\to\fred Q'$, $Q' \uptok R$, and $R \in \encCP{{\Gamma}\s P'}$.}

\item\label{item2:oc}
\nurev{If $Q \in \encCP{{\Gamma}\s P}$, such that $P\in \lkoba$, and $Q \to\fred Q'$, then there exist
$P', R$ such that
$P\to P'$,
$Q' \uptok R$, and $R \in \encCP{{\Gamma}\s P'}$.}
\end{enumerate}
\end{theorem}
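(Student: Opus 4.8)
The plan is to prove both statements by relating session reductions to synchronisations in the rewritten processes, treating the communication rules as the base cases and deriving the contextual rules from them. For \Cref{item1:oc} I would induct on the derivation of $P \to P'$; for \Cref{item2:oc} I would perform a case analysis on the shape of $Q \in \encCP{{\Gamma}\s P}$ imposed by \Cref{figure:first_rewriting}, together with the given step $Q \to \fred Q'$. In both directions the rules \Did{R-Par}, \Did{R-Res}, and \Did{R-Str} reduce to the communication cases by appealing to the compositional, homomorphic character of $\encCP{\cdot}$ and to its stability under $\equiv$ (up to $\uptok$). The real content lies in the base cases \Did{R-Com} and \Did{R-Case}, and all well-typedness side conditions are supplied by \Cref{thm:L0-L2} and \Cref{lem:typability_catal}.

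The decisive case is \Did{R-Com}, say $P = \res{\wt{x}\wt{y}:\wt S}(\dual{x_1}\out{v}.P_1 \para y_1\inp{z}.P_2)$ firing the shared session $(x_1,y_1)$. By \Cref{figure:first_rewriting} every $Q \in \encCP{{\Gamma}\s P}$ has the form $C_1[Q_1] \para G_2$ or $G_1 \para C_2[Q_2]$; I treat the former, the latter being symmetric. Here $Q_1$ rewrites the left component and the catalyzer $C_1 \in \mathcal{C}_{\wt{x:T}}$ (with $\wt T = \dual{\wt S}$) supplies exactly the dual behaviour that the right component used to offer. The session step is thus simulated by a synchronisation (Rule \Did{R-ChCom}) between the bound-output prefix of $Q_1$ and the matching input of the characteristic process buried inside $C_1$. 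Since the rewriting of free output introduces a forwarder (cf. \Cref{d:trans} and the output clause of \Cref{figure:first_rewriting}), one residual link $\linkr{v}{z}$ is exposed, which I discharge by Rule \Did{R-Fwd} and structural rearrangement; folding this administrative step into $\fred$ yields precisely $Q \to \fred Q'$. The \Did{R-Case} base case follows the same pattern with label selection in place of value passing and no forwarder to discharge, so there $Q' \in \encCP{{\Gamma}\s P'}$ holds on the nose.

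The subtle point, which I expect to be the main obstacle, is matching $Q'$ against $\encCP{{\Gamma}\s P'}$ in the \Did{R-Com} case. As \Cref{ex:uptok} already illustrates, firing the forwarder performs the substitution $\substj{v}{z}$, which \emph{relocates} the type assignment $v{:}T$ from the component that owned $v$ to the component where the input variable $z$ occurs in the continuation; depending on whether $z$ lands in a sequential or a parallel position, $v$ may end up guarded by a prefix or at top level. Consequently $Q'$ need not be literally an element of $\encCP{{\Gamma}\s P'}$: in general it is only related, through the parallelization relation $\uptok$ of \Cref{def:uptok}, to some $R \in \encCP{{\Gamma}\s P'}$. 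Discharging this case amounts to exhibiting the decompositions $Q' = Q_1'\para Q_2'$ and $R = R_1'\para R_2'$ over a common split $\Gamma = \Gamma_1',\Gamma_2'$ demanded by \Cref{def:uptok}; this bookkeeping is exactly what the two-disjunct conclusion of the theorem is designed to absorb, and it is where the definitions of characteristic processes and catalyzers must be unfolded most carefully.

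For the reflection direction \Cref{item2:oc} the analysis is dual, but here the hypothesis $P \in \lkoba$ is indispensable. Because a catalyzer provides the dual of every hidden session, the rewriting $Q$ can synchronise even when $P$ itself is stuck, so without a constraint on $P$ a step of $Q$ need not reflect to a step of $P$. Restricting to $P \in \lkoba$ rules out precisely these catalyzer-only reductions: deadlock-freedom guarantees that any synchronisation enabled in $Q$ mirrors a genuine redex already enabled in $P$, whence I recover the session step $P \to P'$ and reuse the simulation analysis above (including the same $\uptok$/$\fred$ reasoning) to relate $Q'$ with $\encCP{{\Gamma}\s P'}$. One final observation closes the argument: every forwarder introduced by $\encCP{\cdot}$ is guarded by a communication prefix, so the leading step of $Q \to \fred Q'$ cannot be a stray \Did{R-Fwd} reduction and must be the synchronisation we read back.
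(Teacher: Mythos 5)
Your overall strategy coincides with the paper's proof: induction on the derivation of $P\to P'$ with \Did{R-Com} and \Did{R-Case} as the substantive cases (the contextual rules by induction hypothesis), simulation via \Did{R-ChCom} with the administrative \Did{R-Fwd} step folded into $\fred$, the relation $\uptok$ absorbing the relocation of the assignment $v{:}T$ under substitution, and, in Part~(2), the hypothesis $P\in\lkoba$ used exactly as the paper uses it, namely to guarantee that the two complementary top-level prefixes of $Q$ correspond to an enabled redex of $P$ rather than to a catalyzer unblocking a deadlocked prefix. Your reading of \Did{R-Case} (exact membership, no forwarder) also agrees with the paper.

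There is, however, a genuine flaw in your treatment of \Did{R-Com}: the two shapes $C_1[Q_1]\para G_2$ and $G_1\para C_2[Q_2]$ are \emph{not} symmetric, and you attach the $\uptok$-discrepancy to the wrong one. In the output-side case (the one you actually treat), the catalyzer's characteristic input $x(t).(Q_t\para Q_x)$ receives $v$, the forwarder fires, and the resulting characteristic process $Q_v\in\chrp{T}{v}$ recombines with $G_2$ into an element of $\chrpk{\Gamma_2, v:T}{}$ (\Cref{prop:charp}); so here $Q'$ lands \emph{exactly} in $\encCP{\Gamma\s P'}$ and no $\uptok$ is needed. The $\uptok$-disjunct is forced by the other case, where the rewritten input $y\inp t.P^*_2$ synchronises with the catalyzer's bound output $\bout{y}{k}.(Q_k\para Q_y)$: the continuation receives a \emph{fresh} name $k$ bridged to the characteristic process $Q_k$ under a restriction, while the characteristic process for the actual value $v$ remains stranded at top level inside $G_1\in\chrpk{\Gamma_1, v:T}{}$ --- and in this case there is no forwarder to fire at all. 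Reconciling the result with $\encCP{\Gamma\s P'}$, whose elements substitute $v$ directly into the rewriting of $P_2$, requires decomposing $G_1$ as $G^*_1\pp G_v$ (again via \Cref{prop:charp}) and exhibiting the $\uptok$-split $\Gamma_1\mid\Gamma_2, v:T$ of \Cref{def:uptok}. Had you carried out your plan literally --- dispatching the second shape ``by symmetry'', expecting a forwarder and the same recombination --- the argument would break precisely there; this asymmetric analysis is the real technical content of the theorem, and your sketch as written does not contain it.
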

\begin{proof}
By induction on the length of the derivations $P\to P'$ and $Q \to Q'$.
See \Cref{appthm:oc} for  details.
\end{proof}

{\noindent \Cref{item1:oc} of \Cref{thm:oc} certifies that our translation tightly preserves the behavior of the session process given as input. The parallelization relation $\uptok$ (cf. \Cref{def:uptok}) is crucial when the reduction $P\to P'$ involves value passing: in that case, process $Q'$, obtained from the translated process $Q$, may have a different parallel structure than $R$---just as $Q$ and $Q'$ from \Cref{ex:uptok}.
 \Cref{item2:oc} relates the behavior of a translated process with respect to that of the {session process} given as input, with $\uptok$  playing a role similar as in  \Cref{item1:oc}. Unlike \Cref{item1:oc}, in \Cref{item2:oc} we require $P$ to be in $\lkoba$, i.e., $P$ must be deadlock-free for the correspondence to hold. Indeed, if $P$ is not in $\lkoba$ then  $Q$ (its deadlock-free, translated variant) could have reductions not enabled in $P$ due to deadlocks.}

\subsection{Discussion: Translating \lkoba into \lcp Exploiting Value Dependencies}\label{ss:oprw}


Process prefixes can induce \emph{causality relations} not accounted for by session types.
One kind of causality relations are at the heart of deadlocked processes, in which session interleavings in process prefixes  leads to circular dependencies between independent sessions (cf. \Cref{ex:basic}).
Another kind of causality relations are the \emph{value dependencies} induced by value exchanges: they occur when 
a value received in one session is sent along a different one.

\begin{example}[A Value Dependency]\label{ex:vd1}
Let $P$ be the process 
$$P \defeq \res{a_0b_0}
(
\ov{a_0}\out{\unit}.a_1\inp u.\ov{a_2}\out{u}.\nil \pp 
b_0\inp{v}.(b_2\inp{y}.y\inp{x}.\nil \pp 
\res{wz}(\ov{b_1}\out{w}.\ov{z}\out{\unit}.\nil)))$$
Also, let $U \defeq \wn \nilT.\nilT$. Consider the typing judgment for the 
leftmost parallel process
$\ov{a_0}\out{\unit}.a_1\inp u.\ov{a_2}\out{u}.\nil$:
\begin{align*}
a_0:\oc \nilT.\nilT, a_1:\wn U.\nilT, a_2:\oc U.\nilT & \s 
\ov{a_0}\out{\unit}.a_1\inp u.\;\ov{a_2}\out{u}.\nil
\end{align*}
Although the typing for   $a_1$ and $a_2$ states that they are independent  sessions, 
actually they are causally related: the process forwards along $a_2$ the value of type $U$
received on $a_1$.
By considering the typing for $P$
\begin{align*}
a_1:\wn U.\nilT, a_2:\oc U.\nilT, b_2:\wn U.\nilT, b_1:\oc U.\nilT & \s P
\end{align*}
we see that the forwarded value is $w$, which is delegated by the rightmost parallel process, $\res{wz}(\ov{b_1}\out{w}.\ov{z}\out{\unit}.\nil)$.
\end{example}
In the terminology of Boreale and Sangiorgi~\cite{DBLP:journals/acta/BorealeS98}, value dependencies are both \emph{subject} and 
\emph{object} dependencies. In $\ov{a_0}\out{\unit}.a_1\inp u.\ov{a_2}\out{u}.\nil$
there is a subject dependency: the input on $a_1$ enables the output on $a_2$;
there is also an object dependency: the  name received on $a_1$ is used as object in the output on $a_2$.
Indeed, Boreale and Sangiorgi argue that in most cases an object dependency is also a subject dependency.


While intuitive, the translation $\encCP{\cdot}$   does not preserve {value dependencies}:
in translating a process with parallel components, a subprocess containing a value dependency can be replaced by 
an equally typed process in which such a dependency is no longer present:

\begin{example}[Value Dependencies in $\encCP{\cdot}$]
\label{ex:vd2}
Let process $P$ and type $U$ be as in \Cref{ex:vd1}. 
 Consider the set of processes $\encCP{a_1:\wn U.\nilT, a_2:\oc U.\nilT, b_1:\oc U.\nilT, b_2:\wn U.\nilT  \s P}$ obtained using 
 \Cref{def:typed_enc}.
 One   process in this set is the following:
\begin{align*}
Q = G_{1} \para G_{2} \para \mathcal{C}_{b_0:\wn \nilT.\nilT}\big[b_0\inp{v}.(b_2\inp{y}.y\inp{x}.\nil \pp 
\res{wz}(\ov{b_1}\out{w}.\ov{z}\out{\unit}.\nil))\big]  
\end{align*}
where 
$G_{1} \in \chrp{\wn U.\nilT}{a_1}$ and $G_{2} \in \chrp{\oc U.\nilT}{a_2}$.
Since $G_{1}$ and $G_{2}$ are independently defined, 
the name received along $a_1$ in $G_{1}$
cannot be the same session sent along $a_2$ in $G_{2}$.
Thus, the value dependence between $a_1$ and $a_2$ in $P$, has disappeared in its translation as $Q$.
\end{example}


To address this limitation of $\encCP{\cdot}$, we have defined an optimized translation that preserves value dependencies. 
\ifarxiv
Here we discuss the key elements of this optimization; the full technical development is presented in Appendix~\ref{sec:appendix-6}. 
\else 
Here we discuss the key elements of this optimization; the full technical development is presented in the online appendix~\cite{DP18}. 
\fi 
The optimization is obtained as follows:
\begin{enumerate}
\item 
Detecting value dependencies requires gathering further information on how types are implemented by process prefixes.
To this end, we   extend the type system of \Cref{s:sessions}
with 
\emph{annotated} output and input session types, written $\oc^n S.T$ and $\wn^n S.T$ (with $n \geq 0$), which specify the position of an associated prefix within the process.
This way, e.g., type $\oc^0 S.T$ is associated to an output prefix at top-level.
Also, we have typing judgments 
$$\Gamma \compsi\Psi \s P$$
where the typing 
context $\Gamma$ is as before and 
$\Psi$ is a 
\emph{dependency context}  
that describes all communication prefixes in $P$ and their distance to top-level.
This context contains tuples of the form 
$(a_1, u, 1)$ 
and 
$\langle a_2, u, 2 \rangle$;
the former can be read as 
``there is an input on $a_1$ with placeholder $u$ one prefix away from top-level'', whereas the 
latter can be read as
 ``there is an output on $a_2$ with object $u$ two prefixes away from top-level''.
Using $\Psi$, we formally define value dependencies as pairs:  
we write $\vdep{a^n}{b^m}$ to denote a 
  value dependency of an (output) prefix along session $b$ on an (input) prefix along session $a$.
  
\item We exploit the value dependencies in  $\Psi$ to refine the definitions 
of characteristic processes (\mydefref{def:charprocess}) and  catalyzer contexts (\Cref{def:catalyser}) of a typing context $\Gamma$:
\begin{itemize}
\item The set of characteristic processes of $\Gamma$ that exploits $\Psi$ is denoted $\chrpv{\Gamma}{}{\Psi}$:
it implements value dependencies in $\Psi$ using so-called \emph{bridging sessions} that connect the characteristic processes of the two types involved in the dependency. 
\item The set of catalyzer contexts of $\Gamma$  handles value dependencies in  $\Psi$ by ``isolating'' them using dedicated forwarder processes and characteristic processes. 
\end{itemize}
 
\item Using these refined definitions, we define the optimized translation~$\encCPalt{\cdot}{}$.
Main differences with respect to the translation defined in \Cref{figure:first_rewriting} appear in the translation of 
parallel processes. As before, process $\res {\wt{x}\wt{y}:\wt{S}}(P_1\para P_2)$ is translated into processes of two forms: $C_1[Q_1] \para G_2$ and $ G_1 \para C_2[Q_2]$, where $Q_1$ (resp. $Q_2$) stands for the translation of $P_1$ (resp. $P_2$). The difference is that the catalyzers $C_1, C_2$ and the characteristic processes $G_1, G_2$ are now obtained exploiting value dependencies (as explained in~(2)).
As $\encCP{\cdot}$, 
the optimized translation~$\encCPalt{\cdot}{}$ 
satisfies type preservation and operational correspondence. 
\end{enumerate}

\medskip
\noindent
The following example illustrates how $\encCPalt{\cdot}{}$ improves over $\encCP{\cdot}{}$.

\begin{example}[Revisiting Examples~\ref{ex:vd1} and \ref{ex:vd2}]
Let $P$ be as in \Cref{ex:vd1}:
$$P \defeq \res{a_0b_0}(
\ov{a_0}\out{\unit}.a_1\inp u.\ov{a_2}\out{u}.\nil \pp 
b_0\inp{v}.(b_2\inp{y}.y\inp{x}.\nil \pp 
\res{wz}(\ov{b_1}\out{w}.\ov{z}\out{\unit}.\nil)))
$$
In the extended type system, we have
$\Gamma \compsi\Psi \s P$ 
where
\begin{align*}
\Gamma & \defeq a_1:\wn^{1} U.\nilT, a_2:\oc^{2} U.\nilT, b_2:\wn^{1} U.\nilT, b_1:\oc^{1} U.\nilT 
\\
\Psi & \defeq \langle a_0, \unit, 0 \rangle, (a_1, u, 1), \langle a_2, u, 2 \rangle, 
(b_0, v, 0), (b_2, y, 1), (y, x, 2),  
\langle b_1, w, 1 \rangle, \langle z, \unit, 2 \rangle
\end{align*}
Using $\Psi$, we can detect the value dependence $\vdep{a_{1}^1}{a_{2}^2}$.
 Consider now $Q'$, one particular process included in the set $\encCPalt{\Gamma \compsi\Psi \s P}$:
\begin{align*}
Q' = (\nub c_{a_1a_2})(G'_{a_1} \para G'_{a_2}) \para \mathcal{C}_{b_0:\wn \nilT.\nilT}\big[b_0\inp{v}.(b_2\inp{y}.y\inp{x}.\nil \pp 
\res{wz}(\ov{b_1}\out{w}.\ov{z}\out{\unit}.\nil))\big]  
\end{align*}
where 
$G'_{a_1} \in \chrpv{\wn U.\nilT}{a_1}{\Psi}$ and $G'_{a_2} \in \chrpv{\oc U.\nilT}{a_2}{\Psi}$
are obtained using the refined definition of characteristic processes.
With the refined definition, we have that 
\begin{align*}
G'_{a_1} &= a_1(y).\bout{c_{a_1a_2}}{w}.(\linkr{y}{w} \para \nil)
\\
G'_{a_2} &= c_{a_1a_2}(y).\bout{a_2}{w}.(\linkr{y}{w} \para \nil)
\end{align*}
Thus, the characteristic processes of a type are still independently defined but now implement a bridging session $c_{a_1a_2}$, which ensures that the name received along $a_1$ in $G'_{a_1}$
is the same name outputted along $a_2$ in $G'_{a_2}$. The characteristic process of a typing context guarantees that bridging sessions are properly composed.
This way, the value dependence between $a_1$ and $a_2$, present in $P$, has been preserved in $Q'$ as a result of the refined translation.
\end{example}

Although intuitive, the way in which 
$\encCPalt{\cdot}{}$ improves over $\encCP{\cdot}{}$
is somewhat implicit, because value dependencies are accounted for by the revised definitions of characteristic processes and catalyzers as restricted (hidden) sessions. 
Therefore, the implemented  value dependencies in these definitions does not influence the observable behavior of the translated process. Formalizing the advantages of 
$\encCPalt{\cdot}{}$ over $\encCP{\cdot}{}$ is an interesting question for future work, because it
requires devising new notions of observables for typed processes that ``look inside'' reductions to ensure that values are forwarded appropriately between the two independent sessions.

\section{Discussion}\label{s:discuss}

\paragraph{Processes with Unbounded Behavior}
Our investigation has been motivated by the proliferation of type systems
for ensuring safety and liveness properties of mobile, concurrent processes. 
Different type systems enforce different such properties, which 
include various forms of (dead)lock-freedom, termination, and confluence.
In this work, our criteria have been twofold. On the one hand, we have aimed at obtaining
objective 
formal comparisons between well-established type  systems, sticking to their original formulations as much as possible. 
On the other hand, we have concentrated on the intrinsic challenges of statically enforcing deadlock freedom. 
We have focused on  typed processes without 
constructs for expressing processes with unbounded behavior, such as replication or recursion. 
This focus was useful to be consistent with these criteria; next we discuss some of the issues involved in going beyond this class of finite processes. 

In the Curry-Howard correspondences for session types,  sharing exponentials $!$ and $?$ at the level of 
propositions/types can be interpreted as input-guarded replication $!x\inp z.Q$ at the level of proofs/processes. 
In the classical setting  considered here, 
a channel typed with $!A$ denotes a \emph{server} able to offer an arbitrary number of copies (including zero) of a process with behavior of type $A$ upon request (an input); 
dually, a channel typed with $?A$ denotes a \emph{request} to such a server (an output). 
This requires an additional cut rule for exponential (unrestricted) contexts (denoted $\Theta$) as well as typing rules for realizing the sharing semantics of $!A$ and $?A$. Other rules are adjusted to account for  $\Theta$; for instance, the mix rule becomes:
\begin{align*}
\inferrule
{P \cp \Delta; \Theta  \and Q \cp  \Delta'; \Theta}
{ P\para Q \cp\Delta, \Delta'; \Theta}
\end{align*}
The resulting logically justified  reduction rule for replication 
is as follows:
\begin{align}
\res{x}(\bout{x}{y}.P\pp !x\inp z.Q) 
\to 
\res{x}(\res{y}(P \pp Q) \pp !x\inp z.Q) 
\label{eq:repl}
\end{align}
The process before the reduction features one linear cut along session $x$; 
after the reduction, the process contains two cuts:
the cut on $y$ is linear, whereas the one on $x$ is exponential.

The definition of the typed languages \lcp and \lkoba (\Cref{d:lang}) can be extended to consider typed processes with replication, which is used in \cite{CairesP10,DBLP:journals/mscs/CairesPT16} (in the input-guarded variant described above) and in \cite{K06} (where unguarded replication $!P$ is accounted for). 
Since the type system $\s$ in \cite{V12} admits rather expressive forms of recursion (that go well beyond the server-request interactions enabled by input-guarded replication), the class of ``full \lcp'' processes would be a strict sub-class of session-typed processes.


Now, considering processes with unbounded behavior entails including \emph{termination} properties  into the analysis of (dead)lock-freedom. 
Crucially, the full Curry-Howard interpretation for session types, including exponentials as explained above, is known to be strongly normalizing and confluent~\cite{DBLP:journals/iandc/PerezCPT14,CP17}.
Therefore, unbounded behavior in full \lcp concerns unboundedly many copies of finite, deterministic  interactive behaviors. 
The fact that a single type system simultaneously enforces deadlock freedom, termination, and confluence sharply contrasts to the situation for non-logical type systems: to our knowledge, only the hybrid type system in~\cite{KS10} simultaneously ensures these three properties (see below).
Clearly, it would be unfair to compare processes 
that enjoy different properties, i.e., processes
in $\lcp$ against  
well-typed processes in type systems that enforce some, but not all, of 
deadlock freedom, termination, and confluence. 
By focusing on finite processes,  we have found a fair ground to objectively compare different type systems.

The integration of non-logical type systems for (dead)lock-freedom, termination, and confluence is far from trivial, and requires advanced mechanisms.
Kobayashi and Sangiorgi~\cite{KS10} targeted this goal by defining a parametric hybrid type system based on usages, which enforces the three properties through different methods (not necessarily type systems).
As in~\cite{K02}, the syntax of usage types in~\cite{KS10} includes 
 replicated usages  $\ast U$, i.e., unboundedly many parallel copies of usage $U$. (The authors remark that the recursive usages  $\mu \alpha. U$ from \cite{K06} are also sound.)
To enable fair comparisons against  full \lcp,  Kobayashi and Sangiorgi's type system should be restricted so that it allows only  input-guarded replicated  processes (cf.~\eqref{eq:repl}) and 
simultaneously enforces (dead)lock-freedom, termination, and confluence. 
Identifying the syntactic/semantic conditions that enable this restriction seems challenging. 
Defining such a restricted variant of~\cite{KS10}   
 would most likely mean developing a very different  type system,
therefore departing from the compact abstractions given by usage types.
Hence, comparing such a different type system against the canonical language full $\lcp$ would also be unfair. 

We notice that Curry-Howard interpretations of session types have been extended with forms of (co)-recursive types,  which extend (full) \lcp by admitting as typable certain  forms of (productive) unbounded behavior---see the works by Toninho et al.\cite{DBLP:conf/tgc/ToninhoCP14} and by Lindley and Morris~\cite{DBLP:conf/icfp/LindleyM16}.
Indeed, the framework in~\cite{DBLP:conf/icfp/LindleyM16} can be roughly seen as the extension of the logic-based type system in \Cref{ss:lf} with co-recursion.
Although  these extensions bring closer logically motivated and non-logical type systems (which often do not ensure termination), the forms of co-recursive unbounded behavior enabled by~\cite{DBLP:conf/tgc/ToninhoCP14,DBLP:conf/icfp/LindleyM16} preserve the intrinsic confluent, deterministic behavior inherited from logical foundations. 
This prevents fair comparisons with 
 type systems for deadlock freedom of unbounded/cyclic 
communication structures
which do not guarantee confluence/determinism, such as those by Giachino et al.~\cite{DBLP:conf/concur/GiachinoKL14,DBLP:journals/iandc/0001L17}.
In contrast, the type system for deadlock freedom by Padovani~\cite{P14} ensures a form of \emph{partial confluence}, inherited from~\cite{DBLP:journals/toplas/KobayashiPT99}.
Hence, 
it would seem that there is common ground for 
comparing the linear type system in~\cite{P14} and the logically motivated session type system in~\cite{DBLP:conf/icfp/LindleyM16}.
The actual feasibility of relating these different type systems remains unclear, and should be established in future work.


\paragraph{Other Curry-Howard Interpretations of Session Types}
\nurev{As already discussed, our work concerns the Curry-Howard correspondences between (classical) linear logic propositions and session types developed in~\cite{CairesP10,DBLP:journals/mscs/CairesPT16,DBLP:conf/icfp/Wadler12}. In the following, we briefly discuss our results in the context of other  correspondences between (variants of) linear logic and session types~\cite{DG18,DBLP:journals/pacmpl/KokkeMP19,DBLP:journals/pacmpl/QianKB21}.}

\nurev{Dardha and Gay~\cite{DG18} developed Priority-based Classical Processes (PCP), a session type system that rests upon an extension of classical linear logic with Kobayashi's obligations/capabilities, simplified to priorities by following Padovani~\cite{P14}. Unlike the type system in~\cite{DBLP:conf/icfp/Wadler12}, the type system of PCP admits processes with \emph{safe} cyclic topologies. To this end, PCP cuts ties with the ``composition plus hiding'' principle that is at the heart of \lcp: the Rule~\Did{T-$\cut$}  is replaced by rules  \Did{T-$\mix$} and \Did{T-$\cycle$} that separately treat parallel composition and restriction, respectively. As a result, a \emph{multicut} rule, which allows composing two processes that may share more than one session, is derivable in PCP. This way, the class of typable processes induced by PCP strictly includes $\lcp$.}

\nurev{Kokke et al.~\cite{DBLP:journals/pacmpl/KokkeMP19} developed Hypersequent Classical Processes (HCP), an interpretation of session types based on the same classical linear logic as in~\cite{DBLP:conf/icfp/Wadler12} but with a hypersequent presentation. The design of HCP neatly induces a labeled transition semantics for typed processes; it has been further investigated in, e.g., \cite{DBLP:journals/pacmpl/QianKB21,DBLP:conf/concur/0001KDLM21}.  Also, similarly to PCP, HCP includes separate typing rules for restriction and parallel composition. However, HCP cannot type the kind of safe cyclic process topologies that are typable in PCP: as such, the separate treatment for restriction and parallel composition (enabled by the hypersequent presentation of CLL) does not enhance by itself the topology of processes, and the  typable processes in HCP have a tree topology as in \lcp. We conjecture that our separation and unifying results  also hold, with minor modifications, for a variant of \lcp based on HCP. We leave such technical investigation as future work.} 

\nurev{Qian et al.~\cite{DBLP:journals/pacmpl/QianKB21} recently proposed Client-Server Linear Logic (CSLL) and a corresponding session type system for $\pi$-calculus processes, which uses the hypersequent presentation of~\cite{DBLP:journals/pacmpl/KokkeMP19}. By introducing \emph{coexponentials}, CSLL captures more expressive client-server behaviours than those usually admitted by interpretations of CLL as session types. Their approach consists in abandoning the mix principles as adopted by most systems based on CLL (including the one we define in \Cref{ss:lf}). This way, the mix rules in CSLL only concern disjoint concurrency, as captured by hyperenvironments. As already discussed, in our work the Rule~\Did{T-$\mix$} enables us to type the 
{independent} parallel composition of processes, as needed in several aspects of our development, such as our separation result based on \muKoba. Exploring how to adapt our results to a system without mix principles is an interesting item for future work.}

\paragraph{Correctness Criteria for the Translation}
\nurev{
The translation $\encCP\cdot$ defined in \Cref{s:enco} is supported by an operational correspondence result, \Cref{thm:oc}. Operational correspondence is a well-established correctness criterion, which allows us to certify the quality of a language translation. There exist other correctness criteria, which typically serve different purposes~\cite{Gorla10,DBLP:journals/entcs/Parrow08}. A notable criterion is \emph{full abstraction}, i.e., that a translation from a source to a target language preserves the respective equivalences. While full abstraction is not informative enough to assert the quality of a translation (see \cite{DBLP:journals/mscs/GorlaN16,DBLP:journals/mscs/Parrow16} for a discussion), it is quite appropriate to formalise the transfer of reasoning techniques between different calculi. Studying $\encCP\cdot$ from the perspective of full abstraction would be insightful and requires developing (typed) behavioural equivalences for \lkoba and \lcp. We leave such technical developments as future work.
}

\section{Related Work}\label{s:rw}
The analysis of deadlock freedom in concurrency has a rather long and rich history, as discussed in details in~\cite{DBLP:conf/tacs/AbramskyGN97}.
Focusing on 
type-based approaches to deadlock freedom for communicating processes, early works are by
Kobayashi~\cite{DBLP:conf/lics/Kobayashi97}
and by Abramsky et al.~\cite{DBLP:conf/tacs/AbramskyGN97}.
The work in~\cite{DBLP:conf/tacs/AbramskyGN97} develops a semantic approach to deadlock freedom for asynchronous communicating processes, building upon categorical foundations.
The work in~\cite{DBLP:conf/lics/Kobayashi97} proposes a type system for the $\pi$-calculus that builds upon two key ideas: (i)~the introduction of usage of channels as types (\emph{usage types}), and (ii)~the classification of channels into reliable and unreliable (where reliable channels are ensured to have deadlock-free interactions).
These ideas have proved rather influential: based on them, a number of extensions and enhancements to type systems for deadlock-free, message-passing processes have been introduced. Kobayashi~\cite{K07} offers a unified presentation of these developments. The work of Kobayashi and Laneve~\cite{DBLP:journals/iandc/0001L17} builds upon notions of usage types and reliable channels for the type-based analysis of unbounded process networks.

The introduction of type systems based on usage types coincided in time with the introduction of (binary) session types as a type-based approach to ensure safe structured communications~\cite{H93,THK94,HVK98}. In their original formulation, session types for the $\pi$-calculus ensure communication safety and session fidelity, therefore ruling out some (but not all!) of the causes of deadlocked behaviors in communicating processes.
In session-based concurrency, deadlocks are due to circular dependencies inside a session, but are also found in subtle entanglements between  different protocols/sessions. In particular, session type systems \cite{V12} cannot guarantee deadlock freedom of interleaved sessions.
To our knowledge, Dezani et al.~\cite{DLY07} were the first to address progress/deadlock freedom for session-typed processes. 
 Subsequent works on type-based analyses for deadlock freedom in structured communications include~\cite{BCDDDY08,CV09,CD10,CairesP10,P13,DBLP:conf/coordination/VieiraV13,P14}.

As discussed in detail in this paper, another approach to deadlock freedom is based on linear logic under a Curry-Howard correspondence~\cite{CairesP10,DBLP:conf/icfp/Wadler12}, where circular dependencies of communication in processes are eliminated by design, due to Rule~\Did{T-cut}. \nurev{As already mentioned, the type system PCP~\cite{DG18} allows to type safe cyclic process topologies. Following PCP, Kokke and Dardha~\cite{DBLP:conf/forte/KokkeD21} define Priority GV (PGV) and its implementation in Linear Haskell \cite{DBLP:conf/haskell/KokkeD21}, which is a core concurrent $\lambda$-calculus with priorities, whereas the work of Van den Heuvel and P{\'{e}}rez~\cite{DBLP:journals/corr/ICE21} extends PCP with asynchronous communication.} 
\rev{Balzer et al.~\cite{DBLP:conf/esop/BalzerTP19} develop a type system that enforces deadlock freedom for a language with \emph{manifest sharing}, which supports possibly recursive processes in which cycles arise  under an acquire-release discipline for shared channels.}
We note that none of the above works propose a formal comparison between different type systems for deadlock freedom, as we achieve in this paper.

Building upon a translation first suggested by Kobayashi~\cite{K07}, the work of Dardha et al.~\cite{DGS12,DBLP:journals/corr/Dardha14,DGS17} offered a formal relationship between usage types and session types. Such a relationship made it possible to use type-based analysis techniques for usage types in the analysis of session-typed process specifications; this insight was formalized by Carbone et at.~\cite{CDM14,Dardha16}. The effectiveness of the approach in~\cite{CDM14} is supported by informal comparisons with respect to different type systems for deadlock freedom (including those in~\cite{DLY07} and \cite{CD10}) using processes typable in one framework but not in another.

A comparison with the workshop version of our work~\cite{DBLP:journals/corr/DardhaP15} is relevant. In~\cite{DBLP:journals/corr/DardhaP15}, we introduced the notion of \emph{degree of sharing} to classify processes in \fullKoba. Intuitively, the degree of sharing quantifies the greatest number of channels shared by parallel components: processes with a degree of sharing equal to 0 do not share any channels (concurrency without interaction), processes with a degree of sharing equal to 1 share at most one session, and so on. The degree of sharing determines a strict hierarchy of processes (denoted $\Koba{0}, \Koba{1}, \ldots$ in \cite{DBLP:journals/corr/DardhaP15}). This hierarchy, however, is only static---it is not closed under reduction. 

A recent work by Van den Heuvel and P{\'{e}}rez~\cite{DBLP:journals/corr/abs-2004-01320} also compares session type systems based on Curry-Howard foundations. In~\cite{DBLP:journals/corr/abs-2004-01320}, the focus is on the classes of typable processes induced by classical and intuitionistic presentations of linear logic; the main result is that the class induced by intuitionistic linear logic is strictly contained in the class induced by  classical linear logic. That is, classical linear logic leads to more permissive session type systems. \rev{The developments in this paper involved several process languages, each with its own type system---note that the type system that defines \lcp (\Cref{fig:type-system-cll}) falls under the classical perspective}. In contrast, the comparisons in~\cite{DBLP:journals/corr/abs-2004-01320} concern different logically-motivated type systems for the same process language.

Loosely related to our work 
\ifarxiv
(in particular to the translations in \Cref{s:enco} and \Cref{ss:oprw}) 
\else 
(in particular to the translation in \Cref{s:enco}) 
\fi 
is 
previous work on deadlock resolution in the $\pi$-calculus by Giunti and Ravara~\cite{GR13} and on unlocking blocked processes by Francalanza et al.~\cite{DBLP:journals/corr/FrancalanzaGR15}. 
The approach in~\cite{GR13} 
relies on a typing algorithm that detects a particular class of deadlocks (so-called 
self-holding deadlocks), but instead of rejecting the code, fixes it by looking into the session types and producing new safe code that obeys the protocols and is deadlock-free.
Building upon~\cite{GR13}, the work in~\cite{DBLP:journals/corr/FrancalanzaGR15}
investigates 
methods for resolving circular-wait deadlocks across parallel compositions, with a focus on finite CCS processes.


\section{Concluding Remarks}\label{s:concl}

%

We have presented a formal comparison of fundamentally distinct type systems for deadlock-free, 
session typed $\pi$-calculus processes. To the best of our knowledge, ours is the first work to establish precise relationships of this kind. 
Indeed, prior comparisons 
between type systems for deadlock freedom
are informal, given in terms of specific processes typable in one type system but not in some other. 

An immediate difficulty in giving a unified account of different typed frameworks for deadlock freedom is the
variety of 
process languages, type structures, and typing rules that define each framework. 
Our comparisons involve: 
the framework of session processes put forward by Vasconcelos~\cite{V12}; 
the interpretation of linear logic propositions as session types by Caires~\cite{CairesCFest14};
the $\pi$-calculus with usage types by Kobayashi in~\cite{K02}.
Finding some common ground for comparing these three frameworks is not trivial---several translations/transformations
were required in our developments to account for numerous syntactic differences.
We made an effort to follow the exact definitions in each framework.
Overall, we believe that we managed to 
concentrate on essential semantic features of two 
salient classes of deadlock-free session processes, here defined as \lcp and \fullKoba (cf. \Cref{d:lang}).


One main technical contribution is the identification of the precise conditions under which \lcp and \fullKoba coincide. 
We introduced \muKoba, a strict sub-class of \fullKoba that coincides with \lcp. 
The class \muKoba arises as the specialization of Kobayashi's type system~\cite{K06} that results from internalizing the key aspects of the Curry-Howard interpretation of session types, most notably, the principle of ``composition plus hiding'', which allows to compose only processes that share exactly one session. 
This way, \muKoba represents a new characterization of \lcp, given in terms of a very different type system for deadlock freedom adopting usages, capabilities and obligations, put forward by Kobayashi. 
The identification of \muKoba and its coincidence with \lcp  thus provide an immediate way of separating the classes of processes in  \lcp and \fullKoba.
As another technical contribution, but on the opposite direction, we determined how to unify these two classes by developing an intuitive translation of processes in $\fullKoba$ into
processes in $\lcp$, which addresses the syntactic differences between different classes of deadlock-free processes in a type-respecting manner.
Although based on simple ideas,   our technical developments  substantially clarify our  understanding of  type systems for  
liveness properties (such as deadlock freedom) 
in the context of $\pi$-calculus processes.


\paragraph*{Acknowledgments}
We would like to thank the anonymous reviewers of previous versions of this paper for their useful suggestions, which led to substantial improvements.
We are also grateful to Lu\'{i}s Caires and Simon J. Gay
for their valuable comments 
and suggestions on prior versions of this manuscript.
We thank Bas van den Heuvel for his help in designing  \Cref{f:overview}.

This work was partially supported by the EU COST Action IC1201 ({Behavioural Types for Reliable Large-Scale Software Systems}).
 Dardha is supported by the UK EPSRC project
EP/K034413/1
 ({From Data Types to Session Types: A Basis  for Concurrency and Distribution}).
P\'{e}rez has been partially supported by the Dutch Research Council (NWO) under project No. 016.Vidi.189.046 (Unifying Correctness for Communicating Software).
 P\'{e}rez is  also affiliated to the NOVA  Laboratory for Computer Science and Informatics, Universidade Nova de Lisboa, Portugal.

\bibliographystyle{alpha}
\bibliography{my_biblio}

\newpage
\appendix 
\section{Omitted Proofs for \Cref{s:hier}}
\label{sec:appendix-4}

\subsection{Proof of \Cref{t:cppkoba} (Page~\pageref{t:cppkoba})}\label{appt:cppkoba}

We divide the proof into the following two lemmas: \Cref{lem: aux1} (see below) and \Cref{lem: aux2} (Page~\pageref{lem: aux2}).

\begin{lemma}[Substitution Lemma for CP]\label{lem:subst_cp}
If $P \cp \Gamma, x:T$ and $v \notin \fv P$, then
$P[v/x] \cp \Gamma, v:T$.
\end{lemma}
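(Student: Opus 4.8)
The plan is to prove this as a \emph{renaming} result. Since the type system of \Cref{ss:lf} is linear, every name declared in a context is used at most once as the subject of a prefix (or in a forwarder), and the hypothesis $v \notin \fv{P}$ guarantees that $v$ is fresh for $P$. Consequently $P[v/x]$ merely renames the free occurrences of $x$ in $P$ to $v$. I would proceed by induction on the derivation of $P \cp \Gamma, x:T$, analysing the last rule applied; in each case the induction hypothesis is applied to the immediate subderivation(s) carrying the declaration $x:T$, and the substitution is pushed through the rule. Throughout I rely on Barendregt's convention (\Cref{ss:pm}) to assume that every bound name occurring in $P$—in particular the name bound by Rule~$\Did{T-$\cut$}$—is distinct from both $x$ and $v$, so that no capture arises and $[v/x]$ commutes with prefixes and binders.

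For the base cases the argument is immediate. In the axiom case $P = \linkr{x}{y}$ typed by Rule~$\Did{T-$\tid$}$ with $x:T, y:\dual{T}$, substitution yields $\linkr{v}{y}$, typed by the same rule with $v:T, y:\dual{T}$; the symmetric subcase (subject $y$) and the cases for $\one$ and $\bot$ are handled analogously, noting that if $x \notin \fv{P}$ (which can only happen when $T=\bullet$) then $P[v/x]=P$ and the declaration $v:\bullet$ is reintroduced by the same rule. For the prefix rules (output $\otimes$, input $\parl$, selection, branching) there are two situations: if the prefix is along $x$, I rename its subject to $v$ and invoke the induction hypothesis on the continuation(s), whose contexts carry the residual declarations (the object/bound name being fresh, $[v/x]$ touches only the subject); if the prefix is along some other name, then $x:T$ sits in the context of a continuation and the induction hypothesis applies there, leaving the prefix untouched.

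The interesting cases are Rules~$\Did{T-$\cut$}$ and $\Did{T-$\mix$}$, where $P = P_1 \para P_2$ (with a restriction on the bound name in the cut case). Linearity of the context splitting forces $x:T$ to appear in the context of exactly one premise, say that of $P_1$; from $v \notin \fv{P}$ we also get $v \notin \fv{P_i}$. I apply the induction hypothesis to the premise for $P_1$, obtaining $P_1[v/x] \cp \Gamma_1'$ where $\Gamma_1'$ is $\Gamma_1$ with $x:T$ replaced by $v:T$, and keep the premise for $P_2$ unchanged (here $P_2[v/x]=P_2$ since $x \notin \fv{P_2}$). Reassembling with the same rule yields $P[v/x] \cp \Gamma, v:T$; in the cut case, Barendregt's convention ensures the cut-bound name differs from $v$, so the restriction is preserved verbatim.

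The only genuine point of care is this composition case: one must justify that $x$ lands in a single premise—which is precisely linearity of CP context splitting—and that the name bound by the cut does not coincide with $v$, which is handled by freshness together with the naming convention. Everything else is a routine traversal of the typing derivation, so I do not expect any substantive difficulty beyond bookkeeping.
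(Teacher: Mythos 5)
Your proof is correct and is the canonical argument: induction on the typing derivation, reading the substitution as a capture-avoiding renaming (licensed by $v \notin \fv{P}$ and Barendregt's convention), with the only delicate points---linearity placing $x:T$ in exactly one premise of Rule~\Did{T-$\cut$}/Rule~\Did{T-$\mix$}, and the cut-bound name being distinct from $v$---handled exactly as you describe. There is no paper proof to diverge from: \Cref{lem:subst_cp} is stated without proof, treated as a standard fact of the linear logic interpretation (much as \Cref{lem:subst_sess} is imported from~\cite{V12}), so your write-up simply supplies the omitted routine details, including the corner case of a name typed $\bullet$ that does not occur free in $P$.
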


\begin{lemma}
\label{lem: aux1}
If $P \in \lcp$ then $P \in \muKoba$.
\end{lemma}

\begin{proof}
By structural induction on $P$.
By 
\Cref{d:lang} and \Cref{d:langmu}, we have that 
\begin{align*}
\lcp  & =    \big\{P \in \procs \suchthat \exists \Gamma.\ (\Gamma \s P \,\land\, \chr{P} \cp \encob{\Gamma}{\ct}) \big\} \\
\muKoba & =  \big\{  P \in \procs \suchthat \exists \Gamma,f.\ (\Gamma \s P \,\land\, \encf{\core{\Gamma}}  \mlf \encf P) \big\} 
\end{align*}
where $\chr{\cdot}$ is as in \Cref{d:trans}.
Following the syntax in \Cref{fig:sessionpi}, 
there are seven cases to consider:

\begin{enumerate}

\item $P = \nil$: 
By assumption we have both
$\Gamma \s \nil$, for some context $\Gamma$ (with $\eend{\Gamma}$), 
and 
$\nil \cp x_1{:}\bullet, \cdots, x_n{:}\bullet$, for some $x_1, \ldots, x_n$ such that $x_i:\nilT  \in \Gamma$, for all $i \in \{1, \ldots, n\}$. 
Because $\core{\Gamma} = \emp$, we must show that 
$\encf{\emp} \mlf \encf{\nil}$.
By \Cref{def:enc_env_su} and~\Cref{d:encdgs}, this is the same as showing that 
$\emp \mlf \nil$.
The thesis then follows immediately from Rule~{\Did{T$\pi$-Nil}} in \Cref{f:mupityping}.

\smallskip

\item $P =  x\inp y.P'$:
By assumption 
and \Cref{def:enc_env_su},~\Cref{d:trans}, and~\Cref{d:lang}, 
we have 
\begin{eqnarray}
\Gamma, x : \wn T.S & \s &  x\inp{y}.P' \label{d:mueq0} \\
x\inp{y}.\chr{P'} & \cp & \encob{\Gamma}{\ct}, x : \encob{T}{\ct}  \parl  \encob{S}{\ct} \nonumber
\end{eqnarray}
for some context $\Gamma$  and session types $S, T$.
By inversion on typing judgement on \eqref{d:mueq0}:
\begin{equation}
\label{d:mueqj0}
    \begin{prooftree}
      \Gamma, x:S, y:T \s P'
      \justifies
      \Gamma, x : \wn T.S \s x\inp{y}.P'
    \end{prooftree}
\end{equation}

By \Cref{def:enc_env_su} and~\Cref{d:langmu} and by \Cref{d:encdgs}, we then must show: 
$$ f_x:\utype{\inpuse{0}{0}}{\encob{T}{\su}, \encob{S}{\su}}  \semi \encf{\core{\Gamma}}  \mlf f_x\inp{y,c}.\encfp{P'}{f,\{x\mapsto c\}}$$
By induction hypothesis on the premise of \eqref{d:mueqj0} we have:
\begin{equation}
\label{eq:ip}
\encfp{\core{\Gamma}}{f'}, f'_x:\encob{S}{\su}, f'_y:\encob{T}{\su} \mlf \encfp{P'}{f'}
\end{equation}
for some renaming function $f'$. We let $f$ be such that $f' = f, \{x\mapsto c\}$.
We can then use $f$ to rewrite the above judgement \eqref{eq:ip}, as follows:
\begin{equation}
\encf{\core{\Gamma}}, c:\encob{S}{\su}, y:\encob{T}{\su} \mlf \encfp{P'}{f,\{x \mapsto c\}} \label{eq:mu1}
\end{equation}
Then, the thesis follows from 
\eqref{eq:mu1}
after applying Rule~{\Did{T$\pi$-In}} in \Cref{f:mupityping}.

\smallskip

\item $P = \overline x\out y.P'$: 
By assumption 
and \Cref{def:enc_env_su},~\Cref{d:trans}, and~\Cref{d:lang}, 
we have: 
\begin{eqnarray}
\Gamma,  x : \oc T.S, y:T & \s & \overline x\out{y}.P' \label{eq:mu2}\\
\bout{x}{z}.(\linkr{z}{y} \pp \chr{P'})  & \cp & \encob{\Gamma}{\ct}, x : \encob{\dual{T}}{\ct} \otimes \encob{S}{\ct}, y : \encob{T}{\ct}\nonumber
\end{eqnarray}
for some context $\Gamma$  and session types $S, T$.
By inversion on typing judgement on \eqref{eq:mu2} we have:
\begin{equation}
\label{eqj:mu1}
    \begin{prooftree}
      \Gamma, x:S \s P'
      \justifies
      \Gamma,  x : \oc T.S, y:T \s \overline x\out{y}.P'
    \end{prooftree}
\end{equation}
By \Cref{d:langmu} and \Cref{d:encdgs}, we must show: 
$$
{
{\f x}: \utype{\outuse{0}{0}}{\encob{T}{\su},  \, \encob{\ov S}{\su}}
 \semi 
\big(y:\encob{T}{\su}\ \pp \ \encf{\core{\Gamma}}\big)\ \mlf
\res c\ov{\f x}\out {y,c}.\encfp{P'}{f,\{x \mapsto c\}}
}
$$
By induction hypothesis on the premise of \eqref{eqj:mu1} we have:
\begin{equation*}
\encfp{\core{\Gamma}}{f'}, f'_x:\encob{S}{\su} \  \mlf \encfp{P'}{f'}
\end{equation*}
for some renaming function $f'$. We let $f$ be such that $f' = f, \{x\mapsto c\}$.
We can then rewrite the above judgement in terms of $f$ as follows:
\begin{equation}
\encf{\core{\Gamma}}, c:\encob{S}{\su} \  \mlf \encfp{P'}{f,\{x\mapsto c\}} \label{eq:mu3}
\end{equation}
By applying Rule~{\Did{T\p Var}} 
we can derive both
$$y:\encob{T}{\su}\ \mlf y:\encob{T}{\su}\ (*) \quad\text{and}\quad
c:\encob{\ov S}{\su}\ \mlf c:\encob{\ov S}{\su}\ (**)$$ 
Let $U_1 = \usa{\encob{\ov S}{\su}}$ and $U_2 = \usa{\encob{S}{\su}}$.
\Cref{p:relia} ensures $\rel {U_1 \pp  U_2}$. 
We can then apply Rule~{\Did{T\p BOut}} in \Cref{f:mupityping} on $(*)$, $(**)$, and the induction hypothesis in \eqref{eq:mu3}:
\begin{equation}
\inferrule{
y:\encob{T}{\su}, c:\encob{\ov S}{\su} 
\mlf 
y:\encob{T}{\su}, c:\encob{\ov S}{\su}
\and
\encf{\core{\Gamma}}, c:\encob{S}{\su}   \mlf \encfp{P'}{f,\{x \mapsto c\}}
\and
\rel {U_1 \pp  U_2}
}{
\f x: \utype{\outuse{0}{0}}{\encob{T}{\su},  \, \encob{\ov S}{\su}}
 \semi 
\big(y:\encob{T}{\su}\ \pp  \encf{\core{\Gamma}}  \big) 
\mlf
\res{c} \ov{\f x}\out {y,c}.\encfp{P'}{f,\{x \mapsto c\}}
}
\label{eq:mu4}
\end{equation}
which concludes this case.

\smallskip

\item
$P=\branching xlP$.
Then,
by assumption 
and \Cref{def:enc_env_su},~\Cref{d:trans}, and~\Cref{d:lang}
we have: 
\begin{eqnarray}
\Gamma, x:\branch lT &\s &\branching {x}lP \label{eq:mbra}\\
\parbranching x{l_i}{P_i} &\cp &\encob{\Gamma}{\ct}, x{:}\parbranch {l_i}{\encob{T_i}{\ct}}{i\in I}\nonumber
\end{eqnarray}
for some context $\Gamma$  and session types $T_i$ for $i\in I$.
By inversion on the typing judgement given in \eqref{eq:mbra} we have:
\begin{equation}
\label{eqj:m3}
\inferrule
	{
		\Gamma, x: T_i \s P_i \quad  \forall i\in I
	}
	{
		\Gamma, x:\branch lT \s \branching {x}lP
	}
\end{equation}
By \Cref{def:enc_env_su} and~\Cref{d:langmu}, and by \Cref{d:encdgs}, we must show:
\[
\f x: \utype{\inpuse{0}{0}}{\variant {l_i}{\encob{T_i}{\su}}} \semi  \encf{\core{\Gamma}}
\mlf   
\f x \inp y.\ \picase {y}{c}{\encfp{P_i}{f,\{x \mapsto c\}}}
\]
By induction hypothesis on the premise of \eqref{eqj:m3} we have:
\[
\encfp{\core{\Gamma}}{f'}, f'_x:\encob{T_i}{\su} 
\mlf 
\encfp{P_i}{f'}\quad \forall i\in I
\]
for some renaming function $f'$. We let $f$ be such that $f' = f, \{x\mapsto c\}$.
We can then rewrite the above judgement as follows:
\begin{equation*}
\encf{\core{\Gamma}}, c:\encob{T_i}{\su} 
\mlf 
\encfp{P_i}{f,\{x \mapsto c\}} \quad \forall i\in I\label{eq:menc_bra}
\end{equation*}
By applying Rule~{\Did{T\p Var}}, we obtain 
$y:\variant {l_i}{\encob{T_i}{\su}} \mlf y:\variant {l_i}{\encob{T_i}{\su}}$.
Then we apply in order Rules~{\Did{T\p Case}}
and {\Did{T\p Inp}} (cf.~\Cref{f:mupityping}) on the judgement above to conclude:
\[
\f x: \utype{\inpuse{0}{0}}{\variant {l_i}{\encob{T_i}{\su}}} \semi  \encf{\core{\Gamma}}
\mlf   
\f x \inp y.\ \picase {y}{c}{\encfp{P_i}{f,\{x \mapsto c\}}}
\]

\smallskip
\item
$P= \selection x{l_j}.{P_j}$.
Then,
by assumption 
and \Cref{def:enc_env_su},~\Cref{d:trans}, and~\Cref{d:lang}, 
we have: 
\begin{eqnarray}
\Gamma, x:  \select lT &\s& \selection {x} {l_j}.{P_j}  \label{eq:msel} \\
\selection {x} {l_j}.{\chr {P_j}}  &\cp& \encob{\Gamma}{\ct}, x{:}\parselect {l_i}{\encob{T_i}{\ct}}{i\in I} \nonumber
\end{eqnarray}
for some context $\Gamma$  and session types $T_i$, for $i\in I$.
By inversion on typing judgement on~\eqref{eq:msel} we have:
\begin{equation}
\label{eqj:m4}
\inferrule
	{
		\Gamma,  x: T_j \s {P_j} \quad   \exists j \in I
	}
	{ 	
		\Gamma, x:  \select lT \s \selection {x} {l_j}.{P_j}
	}
\end{equation}
By \Cref{d:langmu} and \Cref{d:encdgs}, we must show:
\[
{
\f x: \utype{\outuse{0}{0}}{\variant {l_i}{\encob{\dual {T_i}}{\su}}}
\semi
\encf{\core{\Gamma}}
\mlf \res c \ov{\f x}\out {\vv{j}c}.\encfp{P_j}{f, \{x \mapsto c\}}
}
\]

By induction hypothesis on the premise of \eqref{eqj:m4} we have:
\[
\encfp{\core{\Gamma}}{f'}, f'_x:\encob{T_j}{\su} 
\mlf \encfp{P_j}{f'}
\]
for some renaming function $f'$. We let $f$ be such that $f' = f, \{x\mapsto c\}$.
We can then rewrite the above judgement as follows:
\begin{equation}
\encf{\core{\Gamma}}, c: \encob{T_j}\su 
\mlf \encfp{P_j}{f, \{x \mapsto c\}}
\label{eq:msel_enc}
\end{equation}
By applying Rules {\Did{T\p Var}} and {\Did{T\p LVal}} we obtain:
\begin{equation}
c:\encob{\dual{T_j}}{\su} 
\mlf    {l_j}\_c: \variant {l_i}{\encob{\dual{T_i}}{\su}}
\label{eq:mlval}
\end{equation}

Let $U_1 = \usa{\encob{\ov{T_j}}{\su}}$ and $U_2 = \usa{\encob{T_j}{\su}}$
\Cref{p:relia} ensures $\rel {U_1 \pp  U_2}$. 
We can then apply Rule~{\Did{T\p BOut}} (cf. \Cref{f:mupityping}), without a free channel (only a bound one),
on \eqref{eq:msel_enc} and~\eqref{eq:mlval}. We have:
\[
\inferrule{
c:\encob{\dual{T_j}}{\su} \mlf  {l_j}\_c: \variant {l_i}{\encob{\dual{T_i}}\su}
\and
\encf{\core{\Gamma}}, c: \encob{T_j}\su 
\mlf  \encfp{P_j}{f, \{x \mapsto c\}}
\and
\rel {U_1 \pp  U_2}
}{
\f x: \utype{\outuse{0}{0}}{\variant {l_i}{\encob{\dual {T_i}}\su}}
\semi
\encf{\core{\Gamma}}
\mlf \res c \ov{\f x}\out {\vv{j}c}.\encfp{P_j}{f, \{x \mapsto c\}}
}
\]
which concludes this case.

\smallskip

\item
$P = \res {xy}P'$: 
Then, by assumption 
and \Cref{def:enc_env_su},~\ref{d:trans}, and~\ref{d:lang}, 
we have that there exist $P_1, P_2$ such that 
$P' = P_1 \pp P_2$ with
\begin{eqnarray}
\Gamma  & \s & \res {xy}(P_1 \pp P_2) \label{eq:m4a} \\ 
\res{w}(\chr{P_1}\substj{w}{x} \pp \chr{P_2}\substj{w}{y})  & \cp & \encob{\Gamma}{\ct} \nonumber
\end{eqnarray}
for some context $\Gamma$.
By  inversion on the typing judgements given in \eqref{eq:m4a} we have the following derivation,
for some session type $T$:
\begin{equation}
\label{eqj:m5}
    \begin{prooftree}
    \begin{prooftree}
      \Gamma_1,  x:T \s  P_1 
      \quad 
       \Gamma_2,  y: \dual{T} \s  P_2 
      \justifies
      \Gamma,  x:T ,   y: \dual{T} \s   P_1 \pp P_2
    \end{prooftree}
      \justifies
      \Gamma  \s  \res {xy}(P_1 \pp P_2)
    \end{prooftree}
\end{equation}
where $\Gamma = \Gamma_1, \Gamma_2$.
Because $P\in \lcp$, we have that $\Gamma_1$ and $\Gamma_2$ are disjoint and 
that channel endpoints $x$ and $y$ (or the single name $w$ in $\lcp$) form the only shared channel between processes $P_1$ and $P_2$.
By \Cref{d:langmu} and \Cref{d:encdgs}, we must show: 
\[
{
\encf{\core{\Gamma}}\ 
\mlf \res w ( \encfp{P_1}{f, \{x \mapsto w\}} \pp  \encfp{P_2}{f, \{y \mapsto w\}} )
}
\]
By induction hypothesis on the premises of \eqref{eqj:m5}, we have:
\begin{equation*}
\encfp{\core{\Gamma_1}}{f'}, f'_x:\encob{T}{\su} \  \mlfpar{\prec_1} \encfp{P_1}{f'}
\quad\text{and}\quad
\encfp{\core{\Gamma_2}}{f'}, f'_y:\encob{\dual{T}}{\su} \  \mlfpar{\prec_2} \encfp{P_2}{f'}
\end{equation*}
for some renaming function $f'$. We let $f$ be such that $f' = f,  \{x,y\mapsto w\}$, hence we have $f'_x=f'_y=w$.
We can now rewrite the above judgements in terms of $f$ as follows:
\begin{equation*}
\encf{\core{\Gamma_1}}, w:\encob{T}{\su} \  \mlfpar{\prec_1} \encfp{P_1}{f,\{x \mapsto w\}}
\quad\text{and}\quad
\encf{\core{\Gamma_2}}, w:\encob{\dual{T}}{\su} \  \mlfpar{\prec_2} \encfp{P_2}{f,\{y \mapsto w\}} 
\end{equation*}
Let $U_1 = \usa{\encob{{T}}{\su}}$ and $U_2 = \usa{\encob{\ov{T}}{\su}}$.
By \Cref{p:relia}, $\rel {U_1 \pp  U_2}$. 
Combining these facts, we can apply Rule~{\Did{T\p Par+Res}} in \Cref{f:mupityping}:
\[
\inferrule{
\encf{\core{\Gamma_1}}, w:\encob{T}{\su} \  \mlfpar{\prec_1} \encfp{P_1}{f,\{x \mapsto w\}}
\and
\encf{\core{\Gamma_2}}, w:\encob{\dual{T}}{\su} \  \mlfpar{\prec_2} \encfp{P_2}{f,\{y \mapsto w\}} 
\and 
\rel {U_1 \pp  U_2}
}
{
\encf{\core{\Gamma_1}}, \encf{\core{\Gamma_2}}\mlf \res w (\encfp{P_1}{f,\{x \mapsto w\}}\pp  \encfp{P_2}{f,\{y \mapsto w\}} )
}
\]
with
$\prec_i = \prec \cup \{(w,y) \,|\, y \in \fv{P_i}\setminus \{w\} \}$ ($i \in \{1,2\}$. This completes the proof for this case.

\smallskip

\item $P = P_1 \pp P_2$: 
This case is similar to the previous one. By assumption 
we have
\begin{eqnarray}
\Gamma  & \s &   P_1 \pp P_2 \label{meq:6} \\
\chr{P_1}\pp \chr{P_2}  & \cp &  \encob{\Gamma}{\ct}\nonumber
\end{eqnarray}
and inversion on typing on \eqref{meq:6} we infer:
\begin{equation}
\label{meq:7}
    \begin{prooftree}
      \Gamma_1 \s  P_1 
      \quad 
       \Gamma_2  \s  P_2 
      \justifies
      \Gamma_1 ,  \Gamma_2 \s   P_1 \pp P_2
    \end{prooftree}
\end{equation}
where $\Gamma = \Gamma_1 ,  \Gamma_2$.
As before, because $P\in \lcp$, we have that $\Gamma_1$ and $\Gamma_2$ are disjoint: they share no endpoints. 
By \Cref{d:langmu} and \Cref{d:encdgs}, we must show: 
$$  \encf{\core{\Gamma}}   \mlf   \encf {P_1} \pp  \encf {P_2}$$
which follows by induction hypothesis on the premises of \eqref{meq:7}
and by applying Rule~{\Did{\textsc{T$\pi$-IndPar}}} (cf.~\Cref{f:mupityping}).
\end{enumerate}
\end{proof}

\begin{lemma}
\label{lem: aux2}
If $P \in \muKoba$ then $P \in \lcp$.
\end{lemma}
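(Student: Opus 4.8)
The plan is to prove \Cref{lem: aux2} by structural induction on $P$, following the same seven-case analysis used for \Cref{lem: aux1}, but reversing the direction of the construction: from the usage-type judgment we rebuild the corresponding \CLL\ judgment. Unfolding \Cref{d:lang}, membership $P \in \lpkoba$ supplies a session context $\Gamma$ and a renaming $f$ with both $\Gamma \s P$ and $\encf{\Gamma} \plf \encf{P}$; since the witnessing session judgment $\Gamma \s P$ is already at hand, it suffices to produce $\chr{P} \cp \enco{\Gamma}{\ct}$ for this very $\Gamma$. In each case I would first invert the session typing $\Gamma \s P$ to expose the subprocess(es) and their session types, and then invert the Kobayashi derivation $\encf{\Gamma} \plf \encf{P}$, using that $\encf{\cdot}$ is homomorphic on all constructs except restriction together with \Cref{l:kobastruct} for the structural steps, to recover usage-type judgments for those subprocesses. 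This exhibits each subprocess again as a member of $\lpkoba$, so the induction hypothesis applies and yields its \CLL\ typing. Finally I would reassemble $\chr{P} \cp \enco{\Gamma}{\ct}$ with the matching rule of \Cref{fig:type-system-cll}: Rule~\Did{T-$\parl$} for input, the rules for the additive connectives $\oplus$ and $\&$ for selection and branching, and, for output, the combination of the forwarder axiom (Rule~\Did{T-$\tid$}) for $\linkr{z}{y}$, Rule~\Did{T-$\mix$}, and Rule~\Did{T-$\otimes$}, so realising the free-output encoding $\bout{x}{z}.(\linkr{z}{y}\para\chr{P'})$ of \Cref{d:trans}; here \Cref{lem:dualenc} certifies that $z$ and $y$ carry complementary types. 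The base case $P = \nil$ is immediate from the terminated judgments.

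The two parallel cases are the crux, and this is where the assumption that the degree of sharing equals $1$ does the real work. For $P = \res{xy}(P_1 \para P_2)$, inverting the usage typing passes through Rule~\Did{T\p Res} and Rule~\Did{T$\pi$-Par$_1$}; the latter carries the premise $|\dom(\Gamma_1)\cap\dom(\Gamma_2)| \le 1$. Because the encoding of restriction collapses the dual endpoints $x,y$ into the single restricted name $c$, which is necessarily shared, this bound forces the shared name to be exactly $c$ and hence $\Gamma_1,\Gamma_2$ to be disjoint apart from the session $xy$. That disjointness is precisely the side condition licensing Rule~\Did{T-$\cut$} (``composition plus hiding''): with the induction hypothesis giving $\chr{P_1} \cp \enco{\Gamma_1}{\ct}, x{:}\enco{T}{\ct}$ and $\chr{P_2} \cp \enco{\Gamma_2}{\ct}, y{:}\enco{\dual{T}}{\ct}$, with \Cref{lem:dualenc} ensuring $\enco{\dual{T}}{\ct} = \dual{\enco{T}{\ct}}$, and with \Cref{lem:subst_cp} justifying the endpoint-merging renaming $\substj{w}{x}\substj{w}{y}$ prescribed by \Cref{d:trans}, a single cut along $w$ produces $\chr{P} \cp \enco{\Gamma_1}{\ct}, \enco{\Gamma_2}{\ct} = \enco{\Gamma}{\ct}$. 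The companion case $P = P_1 \para P_2$ without an enclosing restriction is handled by Rule~\Did{T-$\mix$}: here linearity of session types already splits $\Gamma$ into disjoint $\Gamma_1,\Gamma_2$ with no shared session, so no cut is needed.

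I expect the main obstacle to be exactly this translation of Kobayashi's numerical bound $n=1$ into the structural shape of the cut rule, i.e.\ arguing rigorously that a degree of sharing of one always means a \emph{single} shared session and never a spurious non-session name. Concretely, one must verify that the at-most-one shared name permitted by Rule~\Did{T$\pi$-Par$_1$} is the continuation channel $c$ generated by the encoding of restriction, and conversely that no further session leaks between $P_1$ and $P_2$; this requires careful bookkeeping of $\dom(\encf{\Gamma_i})$ against $\dom(\Gamma_i)$ under the type encoding of \Cref{f:enctypes}. Reliability (\Cref{p:relia}) is available directly from the hypothesis and, together with \Cref{lem:dualenc}, confirms that the dual session types meeting at $c$ are genuinely complementary, so that the reconstructed cut is well-formed. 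Once this alignment is in place, the remaining prefix cases reduce to routine inversions dual to those carried out in \Cref{lem: aux1}.
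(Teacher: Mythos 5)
Your proposal is correct and takes essentially the same route as the paper: the same structural induction with the same rule-by-rule reconstruction (\Did{T-$\one$} for $\nil$, \Did{T-$\parl$} for input, \Did{T-$\tid$} plus \Did{T-$\otimes$} for the free-output translation, the additives for selection and branching, \Did{T-$\cut$} via \Cref{lem:subst_cp} and \Cref{lem:dualenc}, and \Did{T-$\mix$} for unrestricted parallel), including the key reading of the $n=1$ bound in Rule~\Did{T$\pi$-Par$_1$} as forcing the unique shared name to be the collapsed channel $w$, which is exactly what licenses the cut. One detail the paper makes explicit and you elide: its restriction case starts from $P=\res{xy}P'$ with $P'$ arbitrary and uses the reliability premise $\rel{U}$ obtained by inverting Rule~\Did{T\p Res} to conclude $U = U_1 \pp U_2$, hence that $\enc{P'}{w}$ (and so $P'$) splits in parallel---this, not \Cref{p:relia} (which serves the converse direction, \Cref{lem: aux1}), is what rules out restriction bodies using both endpoints in a single sequential thread.
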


\begin{proof}[Proof]
By structural induction on $P$. 
Recall that by 
\Cref{d:langmu} and \Cref{d:lang}, we have that 
\begin{align*}
\muKoba & \defeq  \Big\{  P \in \procs \suchthat \exists \Gamma,f.\ (\Gamma \s P \,\land\, \encf{\core{\Gamma}} \mlf \encf P) \Big\} 
\\
\lcp  & \defeq     \big\{P \in \procs \suchthat \exists \Gamma.\ (\Gamma \s P \,\land\, \chr{P} \cp \encob{\Gamma}{\ct}) \big\}
\end{align*}

\noindent
Following the syntax of processes in \Cref{fig:sessionpi}, 
there are seven cases to consider:
\begin{enumerate}
\item $P = \nil$: 
By assumption we have both
$\Gamma \s \nil$, for some   $\Gamma$ (with $\eend{\Gamma}$), 
and 
$\encf{\core{\Gamma}}  \mlf \encf \nil$.
Notice that $\core{\Gamma} = \emp$. 
We have to show
$\nil \cp x_1{:}\bullet, \cdots, x_n{:}\bullet$, for some $x_1, \ldots, x_n$ such that $x_i:\nilT  \in \Gamma$, for all $i \in \{1, \ldots, n\}$. 
The thesis follows by using Axiom~\Did{T-$\one$}, followed by $n-1$ applications of Rule~\Did{T-$\bot$} in \Cref{fig:type-system-cll}.

\smallskip

\item $P =  x\inp y.P'$:  
Then, by assumption, \Cref{d:encdgs} and \Cref{d:langmu}, 
we have both
\begin{eqnarray}
\Gamma, x : \wn T.S & \s & x\inp{y}.P' \label{meq:21} \\
 \encf{\core{\Gamma}}, f_x:\utype{\inpuse{0}{0}}{\encob{T}{\su}, \encob{S}{\su}} & \mlf & \f x\inp{y,c}.\encfp{P'}{f,\{x\mapsto c\}} \nonumber
\end{eqnarray}
for some context $\Gamma$, session types $S, T$, and renaming function $f$. 
By inversion on typing judgement on \eqref{meq:21} we have:
\begin{equation}
\label{meqj:7}
    \begin{prooftree}
      \Gamma, x:S, y:T \s P'
      \justifies
      \Gamma, x : \wn T.S \s x\inp{y}.P'
    \end{prooftree}
\end{equation}
By \Cref{d:trans} and \Cref{f:enctypesct}, we must show:
$$
 x\inp{y}.\chr{P'} \cp \encob{\Gamma}{\ct}, x : \encob{T}{\ct} \parl \encob{S}{\ct}
$$
By induction hypothesis 
on the premise of \eqref{meqj:7}
we have:
\begin{equation}
\chr{P'} \cp \encob{\Gamma}{\ct}, x : \encob{S}{\ct}, y : \encob{T}{\ct} \label{meq:22}
\end{equation}
and the thesis follows easily from \eqref{meq:22} using Rule~\jname{T-$\parl$} (cf. \Cref{fig:type-system-cll}).

\smallskip

\item $P = \overline x\out y.P'$: 
Then, by assumption, \Cref{d:encdgs} and \Cref{d:langmu}, 
we have both
\begin{eqnarray}
\Gamma,  x : \oc T.S, y:T & \s & \overline x\out{y}.P' \label{meq:23} \\
\encf{\core{\Gamma}}, f_x:\utype{\outuse{0}{\ca}}{\encob{T}{\su}, \encob{S}{\su}} & \mlf&  \res c {\f x}\out {y,{c}}.\encfp{P'}{f,\{x\mapsto c\}} \nonumber
\end{eqnarray}
for some typing context $\Gamma$, session types $S, T$ and renaming function $f$. 
By inversion on typing judgement on \eqref{meq:23} we have:
\begin{equation}
\label{meqj:8}
    \begin{prooftree}
      \Gamma, x:S \s P'
      \justifies
      \Gamma,  x : \oc T.S, y:T \s \overline x\out{y}.P'
    \end{prooftree}
\end{equation}
By \Cref{d:trans} and \Cref{f:enctypesct}, we must show:
$$
    \bout{x}{z}.(\linkr{z}{y} \pp \chr{P'})  \cp \encob{\Gamma}{\ct}, x : \encob{\dual{T}}{\ct} \otimes \encob{S}{\ct}, y : \encob{T}{\ct}
$$
By induction hypothesis
on the premise of \eqref{meqj:8}
we have:
$$
\chr{P'}  \cp \encob{\Gamma}{\ct}, x:\encob{S}{\ct} 
$$
and then we can conclude using Rule~\Did{T-$\otimes$} in \Cref{fig:type-system-cll}:
$$
    \begin{prooftree}
     \chr{P'}  \cp \encob{\Gamma}{\ct}, x:\encob{S}{\ct} \qquad \linkr{y}{z} \cp y:\encob{T}{\ct}, z:\encob{\dual{T}}{\ct} 
      \justifies
      \bout{x}{z}.(\linkr{z}{y} \pp \chr{P'})  \cp \encob{\Gamma}{\ct}, x : \encob{\dual{T}}{\ct} \otimes \encob{S}{\ct}, y : \encob{T}{\ct}
    \end{prooftree}
$$
where 
$\linkr{y}{z} \cp y:\encob{T}{\ct}, z:\encob{\dual{T}}{\ct}$
follows from Rule \Did{T-$\tid$}.
\smallskip

\item $P = \selection x {l_j}.P'$:
Then, by assumption, \Cref{d:encdgs}, and \Cref{d:langmu}, 
we have both 
\begin{eqnarray}
\Gamma, x:  \select lS & \s & {\selection x {l_j}.P'} \label{meq:sel} \\
\encf{\core{\Gamma}}, f_x:\utype{\outuse 0\ca}{\variant {l_i}{\encob{\dual{S_i}}\su}} 
& \mlf&  \res c \ov{\f x}\out {\vv{j}c}.\encfp{P'}{f,\{x\mapsto c\}}
\nonumber
\end{eqnarray}
for some typing context $\Gamma$, session types $S_i$ for $i\in I$, and renaming function $f$. 
By inversion on typing judgement on \eqref{meq:sel} we have:
\begin{equation}
\label{meq:sel-der}
    \inferrule
	{
		\Gamma,  x: S_j \s P' \quad   \exists j \in I
	}
	{ 	
		\Gamma, x:  \select lS \s \selection {x} {l_j}.P'
	}
\end{equation}
By \Cref{d:trans} and \Cref{f:enctypesct}, we must show:
$$
\selection {x} {l_j}.\chr{P'}  \cp \encob{\Gamma}{\ct}, x{:}\parselect {l_i}{\encob{S_i}{\ct}}{i\in I}
$$
By induction hypothesis
on the premise of \eqref{meq:sel-der}
we have:
$$
\chr{P'}  \cp \encob{\Gamma}{\ct}, x:\encob{S_j}{\ct} 
$$
and then we can conclude by using Rule~\Did{T-$\oplus$} in \Cref{fig:type-system-cll}:
$$
\inferrule
{\chr{P'}  \cp \encob{\Gamma}{\ct}, x:\encob{S_j}{\ct} \and j\in I}
{ \selection {x} {l_j}.\chr{P'}  \cp \encob{\Gamma}{\ct}, x{:}\parselect {l_i}{\encob{S_i}{\ct}}{i\in I}}
$$

\smallskip

\item $P = \branching xl{P'}$: Similar to the previous case.

\smallskip

\item $P = \res {xy}P'$: 
Then, by assumption, \Cref{d:encdgs}, and \Cref{d:langmu}
we have: 
\begin{eqnarray}
\Gamma  &\s&   \res {xy}P' \label{meq:23}\\
\encf{\core{\Gamma}} & \mlf & \res{w}\enc{P'}w  \label{meq:enc-res}
\end{eqnarray}
for some context $\Gamma$.
By \Cref{d:trans} and \Cref{f:enctypesct}, we must show:
$$
\res{w}(\chr{P_1}\substj{w}{x} \pp \chr{P_2}\substj{w}{y}) \cp \encob{\Gamma_1}{\ct},\encob{\Gamma_2}{\ct}
$$
for some $\Gamma = \Gamma_1, \Gamma_2$.

By inversion on \eqref{meq:enc-res}, using Rule~\Did{T$\pi$-Par+Res} in \Cref{f:mupityping}, we can infer the following:
\begin{itemize}
\item
Let $w: \utype{U}{\widetilde{T}}$, then $\rel{U}$.

\item
Since $\rel{U}$, then it must be the case that $U = U_1 \pp U_2$, which in turn implies that $\enc{P'}w = \encx{P'_1}w \pp \ency{P'_2}w$, for some $P'_1, P'_2$.

\item
Processes $\encx{P'_1}w$ and $\ency{P'_2}w$ share  \emph{exactly} one channel, namely $w$.
\end{itemize}

\noindent
By inversion on the encoding and renaming function $f$, and on the typing judgement on~\eqref{meq:23} we have:
\begin{equation}
\label{meqj:9}
    \begin{prooftree}
    \begin{prooftree}
      \Gamma_1,  x:S \s  P_1 
      \quad 
       \Gamma_2,  y: \dual{S} \s  P_2 
      \justifies
      \Gamma_1,  x:S ,  \Gamma_2, y: \dual{S} \s   P_1 \pp P_2
    \end{prooftree}
      \justifies
      \Gamma_1 ,  \Gamma_2  \s  \res {xy}(P_1 \pp P_2)
    \end{prooftree}
\end{equation}
for some session type  $S$ and $\Gamma = \Gamma_1, \Gamma_2$.
By applying the induction hypothesis on the premises of \eqref{meqj:9}, we have both
$$
\chr{P_1} \cp  \encob{\Gamma_1}{\ct},  x:\encob{S}{\ct}
\quad
\text{and}
\quad
\chr{P_2} \cp  \encob{\Gamma_2}{\ct},  y:\encob{\dual{S}}{\ct}
$$
By \Cref{lem:subst_cp} we obtain:
$$
\chr{P_1}\substj{w}{x} \cp  \encob{\Gamma_1}{\ct},  w:\encob{S}{\ct}
\quad
\text{and}
\quad
\chr{P_2}\substj{w}{y} \cp  \encob{\Gamma_2}{\ct},  w:\encob{\dual{S}}{\ct}
$$
We then conclude by using \Cref{lem:dualenc}
and Rule~\jname{T$\cut$} (cf. \Cref{fig:type-system-cll}):
$$
\begin{prooftree}
\chr{P_1}\substj{w}{x} \cp  \encob{\Gamma_1}{\ct},  w:\encob{S}{\ct} 
\quad
\chr{P_2}\substj{w}{y} \cp  \encob{\Gamma_2}{\ct},  w:\encob{\dual{S}}{\ct}
\justifies
\res{w}(\chr{P_1}\substj{w}{x} \pp \chr{P_2}\substj{w}{y}) \cp \encob{\Gamma_1}{\ct},\encob{\Gamma_2}{\ct}
\end{prooftree}
$$

\smallskip

\item $P = P_1 \pp P_2$: similar to the previous case, noticing that since there is no restriction to bind two endpoints together, membership of $P$ in \muKoba relies on Rule~\Did{T$\pi$-InDPar} in \Cref{f:mupityping} rather than on Rule~\Did{T$\pi$-Par+Res} (as in the previous case).
Then, we use Rule~\Did{T-$\mix$} (rather than Rule~\Did{T$\cut$}) to type the composition of $P_1$ and $P_2$ (cf. \Cref{fig:type-system-cll}). 
\end{enumerate}
\end{proof}

\section{Omitted Proofs for \Cref{s:enco}}
\label{sec:appendix-5}

\subsection{Proof of \Cref{lem: wt_characteristic}} \label{applem: wt_characteristic}
We repeat the statement in Page~\pageref{lem: wt_characteristic}:

\medskip
\noindent
\textbf{\Cref{lem: wt_characteristic}.}
Let $T$ be a session type and $\Gamma$ be a session context.
\begin{enumerate}
\item\label{5.4-1} For all $P \in \chrp{T}{x}$, we have  $P \cp x:\encob{T}{\ct}$.
\item\label{5.4-2} For all $P\in \chrpk{\Gamma}{}$, we have $P\cp \encob{\Gamma}{\ct}$.
\end{enumerate}
\begin{proof}
We consider both parts separately.

\bigskip

\noindent{\underline{\textbf{\Cref{5.4-1}.}}} 
The proof proceeds by induction on the structure of  $T$.
Thus, there are five cases to consider:
\begin{enumerate}
\item
Case 
$T = \nilT$. Then, by \Cref{def:charprocess}, we have 
$\chrp{\nilT}{x} =  \big\{ \nil \big\}$. We conclude by the fact that $\encob{\nilT}{\ct}= \bullet$ (cf. \Cref{f:enctypesct}) and by Rule~\Did{T-\one} (cf. \Cref{fig:type-system-cll}).
 
\smallskip

\item
Case 
$T = \wn T.S$. Then, by \Cref{def:charprocess},  $P$ is of the form  
$x(y).(P_1\pp P_2)$, with $P_1 \in \chrp{T}{y}$ and $P_2 \in \chrp{S}{x}$.
By applying the induction hypothesis twice, on $T$ and $S$, 
we obtain:
$$
P_1 \cp y:\encob{T}{\ct} \qquad 
P_2 \cp x:\encob{S}{\ct}
$$
Now, 
by applying Rules~\Did{T-$\mix$} and~\Did{{T-$\parl$}}  (cf. \Cref{fig:type-system-cll}) we have
$$
\begin{prooftree}
{P_1\cp y: \encob{T}{\ct} \qquad P_2\cp x:\encob{S}{\ct}}
\justifies
\begin{prooftree}
{P_1\pp P_2 \cp  y: \encob{T}{\ct}, x:\encob{S}{\ct}}
\justifies
{x(y).(P_1\pp P_2) \cp x{:}\encob{T}{\ct}\parl  \encob{S}{\ct}}
\end{prooftree}
\end{prooftree}
$$
By encoding of types in \Cref{f:enctypesct}, we have
$\encob{\wn T.S}{\ct}  =   \encob{T}{\ct} \parl \encob{S}{\ct}$, which concludes this case.
 
\smallskip

\item
Case 
$T = \oc T.S$.
 Then, by \Cref{def:charprocess},   $P$ is of the form
$\bout{x}{y}.(P_1 \para P_2)$, with $P_1 \in \chrp{{\dual T}}{y}$ and $P_2 \in \chrp{S}{x}$.
By applying the induction hypothesis twice, on $T$ and $S$, 
we obtain
$$
P_1 \cp y:\encob{\dual{T}}{\ct} \qquad 
P_2 \cp x:\encob{S}{\ct}
$$
Now, 
by applying Rule~\Did{{T-$\otimes$}} (cf. \Cref{fig:type-system-cll}) we have
$$
\inferrule
{P\cp y: \encob{\dual{T}}{\ct} \\ Q\cp x:\encob{S}{\ct}}
{\bout{x}{y}.(P \para Q)\cp x: \encob{\dual T}{\ct}\otimes \encob{S}{\ct}}
$$
By encoding of types in \Cref{f:enctypesct}, we have
$\encob{\oc T.S}{\ct} =  \encob{\dual T}{\ct} \otimes \encob{S}{\ct}$ which concludes this case.
 
\smallskip

\item
Case 
$T = \branch lS$.
 Then, by \Cref{def:charprocess},   $P$ is of the form
$\branching xlP$, with $P_i\in \chrp{S_i}x$, for all $i\in I$.
By induction hypothesis on those $S_i$, we obtain 
$P_i\cp x:\encob{S_i}{\ct}$, for all $i\in I$. Then, by 
Rule~\Did{T$\with$} (cf. \Cref{fig:type-system-cll}) we have:
$$
\inferrule
{P_i\cp x:\encob{S_i}{\ct}\and  \forall i\in I}
{\branching xlP \cp x: \parbranch {l_i}{\encob{S_i}{\ct}}{i\in I}}
$$
By encoding of types in \Cref{f:enctypesct}, 
$\encob{\branch lS}{\ct} = \parbranch {l_i}{\encob{S_i}{\ct}}{i\in I}$,
which concludes this case.
 
\smallskip

\item
Case $T= \select lS$. 
 Then, by \Cref{def:charprocess},  $P$ is of the form
$\selection x{l _j}.{P_j}$, with $P_j\in \chrp{S_j}x$ and $j \in I$.
By induction hypothesis on $S_j$, we obtain 
$P_j\cp x:\encob{S_j}{\ct}$.
Then, 
by Rule~\Did{T$\oplus$} (cf. \Cref{fig:type-system-cll}) we have
$$
\inferrule
{P_j\cp x: \encob{S_j}{\ct}}
{\selection {x} {l_j}.{P_j}  \cp x{:}\parselect {l_i}{\encob{S_i}\ct}{i\in I}}
$$
By encoding of types in \Cref{f:enctypesct}, we have
$\encob{\select lS}{\ct} = \parselect {l_i}{\encob{S_i}\ct}{i\in I}$, which concludes this case.
\end{enumerate}

\bigskip

\noindent{\underline{\textbf{\Cref{5.4-2}.}}} 
Given  $\Gamma = w_1:T_1, \ldots, w_n:T_n$, the proof is by induction on $n$, the size of $\Gamma$.
The base case is when $n=1$: then, by \Cref{5.4-1},
$
\chrpk{\Gamma}{}
=
 \chrp{T_1}{w_1}
$,  
and the thesis follows immediately.
The inductive step ($n > 1$) proceeds by using the inductive hypothesis and Rule~\Did{T-$\mix$}.
Let $\Gamma = \Gamma', w_n:T_n$. Then, by inductive hypothesis, 
$P_i\in \chrpk{\Gamma'}{}$ implies $P_i\cp \encob{\Gamma'}{\ct}$.
Recall that  
$\encob{\Gamma'}{\ct} = w_1{:}\encob{T_1}{\ct}, \ldots, w_{n-1}{:}\encob{T_{n-1}}{\ct}$
by \Cref{def:enc_env_sc}.
Also, by \Cref{5.4-1}, 
$P' \in \chrp{T_n}{w_n}$ implies  $P' \cp w_n:\encob{T_n}{\ct}$.
Since by \Cref{def:charenv}, $P_i \para P' \in \chrpk{\Gamma}{}$, 
  the thesis follows by composing $P_i$ and $P'$ using Rule~\Did{T-$\mix$}:
$$
\inferrule
{P_i \cp w_1:\encob{T_1}{\ct}, \cdots, w_{n-1}:\encob{T_{n-1}}{\ct} \qquad P' \cp w_n:\encob{T_n}{\ct}}
{P_i \para  P' \cp w_1:\encob{T_1}{\ct}, \cdots, w_n:\encob{T_n}{\ct}}
$$
This concludes the proof. 
\end{proof}

\subsection{Proof of \Cref{thm:L0-L2}}\label{appthm:L0-L2}
%
We repeat the statement in Page~\pageref{thm:L0-L2}:

\medskip
\noindent
\textbf{\Cref{thm:L0-L2}.} ($\encCP{\cdot}$ is Type Preserving)
\label{appthm:L0-L2}
Let $\Gamma\s P$.
Then, for all $Q\in \encCP{{\Gamma}\s P}$, we have that $Q\cp \encob{\Gamma}{\ct}$.
\begin{proof}
The proof proceeds by cases on the judgement $\Gamma\s P$ used in the translation given in \Cref{def:typed_enc}, and by inversion on the last typing rule applied (given in \Cref{fig:sess_typing}).
There are seven cases to consider:

\begin{enumerate}
\item
${\Gamma \s \nil}$. Then, by inversion the last rule applied is \Did{T-Nil}, with $\eend{\Gamma}$.
%
By \Cref{def:typed_enc} we have
$\encCP{\Gamma \s \nil} = \{ \nil \}$.
By applying Rule~{\Did{T-$\one$}} (cf. \Cref{fig:type-system-cll}) we have:
$$
\inferrule*[Right = {\Did{T-$\one$}}]
{ }
{\nil\cp  x: \bullet}
$$
The thesis follows immediately by the encoding of types $\encob{\cdot}{\ct}$ in \Cref{f:enctypesct}
and \Cref{def:enc_env_sc}, which ensure that
$\encob{x:\nilT}{\ct}=x:\encob{\nilT}{\ct}=x: \bullet$.

\smallskip

\item
${{\Gamma'}, x:\oc T.S, v:T  \s\dual{x}\out v.P'}$, where 
$\Gamma = \Gamma', x:\oc T.S, v:T$ and
$P = \dual{x}\out v.P'$.
By inversion the last rule applied is \Did{T-Out}:
$$
\inferrule*[Right = \Did{T-Out}]
{\Gamma', x:S \s P'}
{\Gamma', x:\oc T.S, v:T \s {\overline{x}\out v.P'}}
$$
By \Cref{def:typed_enc},
$$\encCP{{ \Gamma', x:\oc T.S, v:T} \s\dual{x}\out v.P'} = \{\dual{x}(z). \big(\linkr{v}{z}\para Q \big) \suchthat Q\in \encCP{{{\Gamma'}, x:{S}}\s P'}\}$$
By Rule~\Did{T-$\tid$} and by \Cref{lem:dualenc} we have:
\begin{equation}
\inferrule*[Right =  \Did{T-$\tid$}]
{ }
{\linkr{v}{z}\cp v:\encob{T}{\ct}, z:\encob{\overline T}{\ct}}
\label{eq:tid}
\end{equation}
By induction hypothesis, for all
$Q\in \encCP{{{\Gamma'}, x:{S}}\s P'}$ we have that $Q\cp\encob{\Gamma'}{\ct}, x:\encob{S}{\ct}$.
Let $Q'$ be a process in this set.
By applying Rule~\Did{T-$\otimes$} on $Q'$ and on \eqref{eq:tid} we have
$$
\inferrule*[Right=\Did{T-$\otimes$}]
{\linkr{v}{z}\cp v:\encob{T}{\ct}, z:\encob{\overline T}{\ct}  \\ 
Q' \cp\encob{\Gamma'}{\ct}, x:\encob{S}{\ct}}
{\dual{x}(z). \big(\linkr{v}{z}\para Q' \big)
\cp \encob{\Gamma'}{\ct},x: \encob{\dual T}{\ct}\otimes \encob{S}{\ct}, v:\encob{T}{\ct} }
$$
By encoding of types $\encob{\cdot}{\ct}$ in \Cref{f:enctypesct} we have
$\encob{\oc T.S}{\ct} =  \encob{\dual T}{\ct}\otimes \encob{S}{\ct}$,
and
by \Cref{def:enc_env_sc}, we have
$\encob{ \Gamma', x:\oc T.S, v:T}{\ct} = \encob{\Gamma'}{\ct},x: \encob{\dual T}{\ct}\otimes \encob{S}{\ct}, v:\encob{T}{\ct}$, which concludes this case.

%

\smallskip

\item
${{\Gamma_1,\Gamma_2}, x:\oc T.S \s\res{zy}\dual{x}\out y.(P_1\para P_2)}$, where $\Gamma = \Gamma_1,\Gamma_2, x:\oc T.S$.
By inversion, this judgement is derived by a sequence of applications of rules,
the last rule applied is \Did{T-Res}, and before that \Did{T-Out} and \Did{T-Par} as follows:
$$
\inferrule*[Right = \Did{T-Res}]
{
\inferrule*[Right = \Did{T-Out}]
{\inferrule*[Right = \Did{T-Par}]
{\Gamma_1, z: \overline T \s P_1 \and \Gamma_2, x:S \s P_2}
{\Gamma_1, z: \overline T , \Gamma_2, x:S\s P_1\pp P_2}}
{\Gamma_1, \Gamma_2, x:\oc T.S, y:T, z: \overline T \s {\overline{x}\out y.(P_1\para P_2)}}
}
{{\Gamma_1,\Gamma_2}, x:\oc T.S \s \res{zy}\dual{x}\out y.(P_1\para P_2)}
$$
By \Cref{def:typed_enc}, we have
\begin{align*}
\ \
\encCP{{\Gamma_1,\Gamma_2}, x:\oc T.S&\s\res{zy}\dual{x}\out y.(P_1\para P_2)} =\\
&\big\{ \dual{x}(z).(Q_1\para Q_2)  \suchthat   Q_1 \in \encCP{{{\Gamma_1},  z:\dual{T}\s P_1}}
\ \wedge\  Q_2 \in \encCP{{{\Gamma_2},x:{S}\s P_2}}\big\}
\end{align*}
By induction hypothesis, for all processes
$$Q_1\in \encCP{{{\Gamma_1},  z:\dual{T}\s P_1}} \mbox{ we have }
Q_1\cp\encob{\Gamma_1}{\ct}, z:\encob{\dual T}{\ct}$$
and 
$$Q_2\in \encCP{{{\Gamma_2},  x:S\s P_2}} \mbox{ we have }
Q_2\cp\encob{\Gamma_2}{\ct}, x:\encob{S}{\ct}$$
Let $Q'_1$ and $Q'_2$ be processes in the first and second set, respectively.
By applying Rule~\Did{T-$\otimes$} on $Q'_1$ and $Q'_2$ we have:
$$
\inferrule*[Right=\Did{T-$\otimes$}]
{Q'_1\cp\encob{\Gamma_1}{\ct}, z:\encob{\dual T}{\ct}  \\ 
Q'_2\cp\encob{\Gamma_2}{\ct}, x:\encob{S}{\ct}}
{\dual{x}(z). \big(Q'_1\para Q'_2 \big)
\cp \encob{\Gamma_1}{\ct},\encob{\Gamma_2}{\ct},x: \encob{\dual T}{\ct}\otimes \encob{S}{\ct} }
$$
By the encoding of types $\encob{\cdot}{\ct}$ in \Cref{f:enctypesct} we have
$\encob{\oc T.S}{\ct} =  \encob{\dual T}{\ct}\otimes \encob{S}{\ct}$, and
by \Cref{def:enc_env_sc}, we have
$\encob{ \Gamma_1, \Gamma_2, x:\oc T.S}{\ct} = \encob{\Gamma_1}{\ct}, \encob{\Gamma_2}{\ct},x: \encob{\dual T}{\ct}\otimes \encob{S}{\ct}$, which concludes this case.

\smallskip

\item
$\Gamma', x:\wn T.S \s {x\inp {y:T}.P'}$, where $\Gamma = \Gamma', x:\wn T.S$.
By inversion, the last typing rule applied is \Did{T-In}:
$$
\inferrule*[Right = \Did{T-In}]
{\Gamma', x:S, y:T \s P'}
{\Gamma',x:\wn T.S \s {x\inp {y:T}.P'}}
$$
By \Cref{def:typed_enc} we have
$$\encCP{{\Gamma', x:\wn T.S}\s x\inp {y:T}.P'}=\{x(y).Q \suchthat Q\in \encCP{{\Gamma',x:S,y:T}\s P'}\}$$
By induction hypothesis,  
$Q \in \encCP{{\Gamma',x:S,y:T}\s P'}$ implies
$Q\cp \encob{\Gamma'}{\ct}, x:\encob{S}{\ct}, y:\encob{T}{\ct}$.
Let $Q'$ be a process in this set.
By applying Rule~\Did{T-$\parl$} on $Q'$:
$$
\inferrule*[Right=\Did{T-$\parl$}]
{Q'\cp \encob{\Gamma'}{\ct}, x:\encob{S}{\ct}, y:\encob{T}{\ct}}
{x(y).Q' \cp \encob{\Gamma'}{\ct}, x:\encob{T}{\ct}\parl \encob{S}{\ct}}
$$
where by the encoding of types $\encob{\cdot}{\ct}$ in \Cref{f:enctypesct} we have
$\encob{\wn T.S}{\ct} =  \encob{T}{\ct}\parl \encob{S}{\ct}$, and
by \Cref{def:enc_env_sc}, we have
$\encob{ \Gamma', x:\wn T.S}{\ct} = \encob{\Gamma'}{\ct}, x:\encob{T}{\ct}\parl \encob{S}{\ct}$, which concludes this case.

\smallskip

\item
$\Gamma', x:\select lS \s {\selection x{l_j}.P'}$, where $\Gamma = \Gamma', x:\select lS$.
By inversion, the last typing rule applied is \Did{T-Sel}:
$$
\inferrule*[Right = \Did{T-Sel}]
{\Gamma', x:S_j \s P' \and j\in I}
{\Gamma', x:\select lS \s {\selection x{l_j}.P'}}
$$
By \Cref{def:typed_enc} we have
$$\encCP{{\Gamma', x:\select lS} \s\selection x{l_j}.P'}= \{\selection x{l_j}.Q \suchthat Q\in {\encCP{{\Gamma',x:S_j}\s P'}}\}$$
By induction hypothesis, for all
$Q \in \encCP{{\Gamma',x:S_j}\s P'}$ we have $Q\cp \encob{{\Gamma'}}{\ct}, x:\encob{S_j}{\ct}$.
Let $Q'$ be a process in this set.
By applying Rule \Did{T-$\oplus$} on $Q'$ we have:
$$
\inferrule*[Right = \Did{T-$\oplus$}]
{Q'\cp \encob{{\Gamma'}}{\ct}, x:\encob{S_j}{\ct} \and j\in I}
{\selection x{l_j}.Q' \cp  \encob{{\Gamma'}}{\ct}, x: \oplus \{l_i: \encob{S_i}{\ct}\}_{i\in I}}
$$
By the encoding of types $\encob{\cdot}{\ct}$ in \Cref{f:enctypesct},
and \Cref{def:enc_env_sc}, we have
$$\encob{\Gamma', x:\select lS}{\ct} =  \encob{{\Gamma'}}{\ct}, x:\oplus \{l_i: \encob{S_i}{\ct}\}_{i\in I}$$
which concludes this case.

\smallskip

\item
$\Gamma' ,x:\branch lS\s {\branching xlP}$, where $\Gamma = \Gamma' ,x:\branch lS$.
By inversion, the last typing rule applied is \Did{T-Bra}:
$$
\inferrule*[Right = \Did{T-Bra}]
{\Gamma', x: S_i\s P_i \and \forall i\in I}
{\Gamma' ,x:\branch lS \s {\branching xlP}}
$$
By \Cref{def:typed_enc} we have that
$$\encCP{{\Gamma' ,x:\branch lS}\s  \branching xlP}= \{\parbranching x{l_i}Q \suchthat Q\in {\encCP{{\Gamma',x:S_i}\s P_i}}\}$$
By induction hypothesis,
$Q_i \in\encCP{{\Gamma',x:S_i}\s P_i}$ implies $Q_i \cp \encob{\Gamma'}{\ct}, x:\encob{S_i}{\ct}$,  for all $i\in I$.
Let $Q'_i$ be a process in the corresponding set,  for all $i\in I$.
By applying Rule~\Did{T-$\&$} on each of $Q'_i$ we have:
$$
\inferrule*[Right=\Did{T-$\&$}]
{Q'_i\cp \encob{\Gamma'}{\ct}, x:\encob{S_i}{\ct}}
{\parbranching x{l_i}{Q'_i}\cp \encob{\Gamma'}{\ct}, x: \&\{l_i:\encob{S_i}{\ct}\}_{i\in I}}
$$
By the encoding of types $\encob{\cdot}{\ct}$ in \Cref{f:enctypesct} we have
$\encob{\branch lS}{\ct} = \&\{l_i:\encob{S_i}{\ct}\}_{i\in I}$, and
by \Cref{def:enc_env_sc}, we have
$\encob{\Gamma' ,x:\branch lS}{\ct} = \encob{\Gamma'}{\ct}, x: \&\{l_i:\encob{S_i}{\ct}\}_{i\in I}$, which concludes this case.

\smallskip

\item
${\Gamma_1, \restricted{\wt{x:S}},  \Gamma_2, \restricted{\wt{y:T}}\s \res {\wt{x}\wt{y}:\wt{S}}(P_1\para P_2)}$, where by inversion we have
$\Gamma_1,  \wt{x:S} \s P_1$ and $\Gamma_2,  \wt{y:T}\s P_2$,
and the last typing rule applied is \Did{T-Res}, and before that Rule~\Did{T-Par} is used, as follows:
$$
\inferrule*[Right = \Did{T-Res}]
{
\inferrule*[Right = \Did{T-Par}]
{\Gamma_1,  \wt{x:S} \s P_1\ (1) \\ \Gamma_2,  \wt{y:T}\s P_2 \ (2)}
{\Gamma_1 ,  \Gamma_2,  \wt{x:S},\wt{y:T} \s P_1\para P_2}
}
{\Gamma_1, \restricted{\wt{x:S}},  \Gamma_2, \restricted{\wt{y:T}}\s \res {\wt{x}\wt{y}:\wt{S}}(P_1\para P_2)}
$$
Notice that since restriction is the only means of creating dual session channel endpoints (co-variables) and  the only restricted names in $P$ are $\wt{xy}$,
it then follows that $\Gamma_1\cap\Gamma_2 =\emp$.
Hence, by the definition of the `$, $' operator 
we have that
$\Gamma_1 ,  \Gamma_2 = \Gamma_1,\Gamma_2$.

By \Cref{def:typed_enc}, we have that $\encCP{{\Gamma_1, \restricted{\wt{x:S}} ,  \Gamma_2, \restricted{\wt{y:T}}}\s\res {\wt {xy}:\wt {S}}(P_1\para P_2)}$ is the following set of processes:
\begin{align}
\label{eq:old_union_par}
\big\{
C_1[Q_1] \para G_2  \suchthat  &C_1 \in \mathcal{C}_{\wt{x:T}},\, Q_1 \in \encCP{{\Gamma_1, {\wt{x:S}}}\s P_1}, \, G_2 \in  \chrpk{\Gamma_2}{}  \big\}
\\[1mm]									
\cup ~~\big\{G_1 \para C_2[Q_2]   \suchthat  & C_2 \in \mathcal{C}_{\wt{y:S}},\,  Q_2 \in \encCP{\Gamma_2,\wt{y:T}\s P_2}, \,
G_1 \in \chrpk{\Gamma_1}{}\label{eq:old_union_par2}
\big\}
\end{align}

We start by inspecting the set of processes in \eqref{eq:old_union_par}.
By induction hypothesis on the left-hand side premise of Rule~\Did{T-Par}, marked (1), we have:
$$\text{for all processes } Q\in \encCP{{\Gamma_1, {\wt{x:S}}}\s P_1} \text{ we have that }
Q\cp \encob{\Gamma_1}{\ct},  \wt{x: {\encob{S}{\ct}}}$$
Let $Q'$ be an arbitrary process in this set.
By \Cref{lem:typability_catal} we have that $C_1[Q'] \cp \encob{\Gamma_1}{\ct}$.
By \Cref{cor:context_char}(b)
since $G_2\in \chrpk{\Gamma_2}{}$, we have that $G_2 \cp \encob{\Gamma_2}{\ct}$.
Since $\Gamma_1$ and $\Gamma_2$ are disjoint, by Rule~\Did{T-$\mix$} we have the following derivation, which concludes the inspection of (\ref{eq:old_union_par}):
$$
\inferrule*[Right=\Did{T-$\mix$}]
{C_1[Q'] \cp \encob{\Gamma_1}{\ct} \\ G_2\cp  \encob{\Gamma_2}{\ct}}
{C_1[Q']\para G_2\cp \encob{\Gamma_1}{\ct}, \encob{\Gamma_2}{\ct} }
$$

We inspect now the the set of processes in \eqref{eq:old_union_par2}.
By induction hypothesis on the right-hand side premise of Rule~\Did{T-Par}, marked (2),
we have:
$$\text{for all processes }	R \in \encCP{{\Gamma_2,{\wt{y:T}}}\s R} \text{ we have that }
R\cp \encob{\Gamma_2}{\ct}, \wt {y:\encob{T}{\ct}}$$

Let $R'$ be an arbitrary process in this set.
By \Cref{lem:typability_catal}, $C_2[R'] \cp \encob{\Gamma_2}{\ct}$.
By \Cref{cor:context_char}(b)
since $G_1\in \chrpk{\Gamma_1}{}$, we have $G_1 \cp \encob{\Gamma_1}{\ct}$.
Since $\Gamma_1$ and $\Gamma_2$ are disjoint, by Rule~\Did{T-$\mix$} we have the following derivation:
$$
\inferrule*[Right=\Did{T-$\mix$}]
{C_2[R'] \cp \encob{\Gamma_2}{\ct} \\ G_1\cp  \encob{\Gamma_1}{\ct}}
{C_2[R']\para G_1\cp \encob{\Gamma_1}{\ct}, \encob{\Gamma_2}{\ct} }
$$
We thus conclude that every process belonging to the set in \eqref{eq:old_union_par} or \eqref{eq:old_union_par2} is typed under the typing context
$\encob{\Gamma_1}{\ct}, \encob{\Gamma_2}{\ct}$, concluding this case (and the proof).
\end{enumerate}
\end{proof}

\subsection{Proof of \Cref{thm:oc} (Page~\pageref{thm:oc})}\label{appthm:oc}
We repeat \Cref{def:uptok} (cf. Page~\pageref{def:uptok}):
\begin{definition}
Let $P, Q$ be processes such that $P,Q\cp \Gamma$.
We write $P\uptok Q$ if and only if there exist
$P_1, P_2, Q_1, Q_2$ and  $\Gamma_1, \Gamma_2$ such that the following hold:
\begin{align*}
P = P_1\pp P_2
\qquad
Q = Q_1\pp Q_2
\qquad
P_1, Q_1\cp \Gamma_1
\qquad
P_2, Q_2\cp \Gamma_2
\qquad
\Gamma = \Gamma_1,\Gamma_2
\end{align*}
\end{definition}
\begin{lemma}[Substitution Lemma for Sessions~\cite{V12}]\label{lem:subst_sess}
If $\Gamma_1\s v:T$ and $\Gamma_2,x:T\s P$ and
$\Gamma=\Gamma_1, \Gamma_2$, then
$\Gamma\s P[v/x]$.
\end{lemma}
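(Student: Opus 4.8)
The plan is to proceed by induction on the derivation of the typing judgment $\Gamma_2, x:T \s P$, performing a case analysis on the last typing rule applied (cf.\ \Cref{fig:sess_typing}). Since in this calculus values and channels coincide, $v$ is itself a channel name and the hypothesis $\Gamma_1 \s v:T$ tells us that $\Gamma_1$ provides exactly the linear resource $v:T$ (together with possibly some terminated assignments $w:\nilT$). The linearity of the system is the key structural feature to exploit: the resource $x:T$ is threaded along precisely the sub-derivation where $x$ is actually used, so the companion context $\Gamma_1$ must be merged along that same branch.

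First I would dispatch the base case \Did{T-Nil}: here $P = \nil$ has no free occurrences of $x$, so $P[v/x] = \nil$, and the requirement reduces to $\eend{\Gamma}$, which follows since both the relevant part of $\Gamma_2$ and $\Gamma_1$ are terminated. For the prefix rules \Did{T-Out}, \Did{T-In}, \Did{T-Sel}, and \Did{T-Brch} I would distinguish whether $x$ occurs as the \emph{subject} of the prefix or only inside the continuation (or in an object position). When $x$ is the subject, its type $T$ has the shape $\oc T'.S$ (resp.\ $\wn T'.S$, or a choice type), and the continuation is typed with the residual $x:S$; applying the induction hypothesis to the premise and recomposing with $v$ as the new subject yields the result, using that $\Gamma_1$ supplies $v$ with the full protocol $T$, which is then ``consumed'' through the prefix leaving $v:S$ on the same channel. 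When $x$ occurs only in an object position (e.g.\ the value sent in \Did{T-Out}) or only in the continuation, the induction hypothesis applies directly and the prefix rule is reinstated unchanged.

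For \Did{T-Par}, where $P = P_1 \pp P_2$, linearity guarantees that $x:T$ appears in the context of exactly one premise, say the one typing $P_i$. I would merge $\Gamma_1$ into that premise, apply the induction hypothesis to obtain a typing for $P_i[v/x]$, leave the other premise untouched, and recompose via \Did{T-Par}; since $v$ is fresh with respect to the other component (Barendregt's convention), the context composition remains well defined. For \Did{T-Res}, substitution commutes with restriction---the bound endpoints are, by Barendregt's convention, distinct from both $x$ and $v$, so no capture occurs---and the result follows by applying the induction hypothesis to the premise.

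The main obstacle I anticipate is the bookkeeping of linear contexts in the subject-prefix and \Did{T-Par} cases: one must verify that the resource $v:T$ provided by $\Gamma_1$ is routed along exactly the branch that previously consumed $x:T$, and that the residual type on the same channel is handled consistently once the prefix is traversed. Name capture is avoided throughout by Barendregt's convention, which we already assume globally; this is precisely what legitimizes pushing the substitution under binders in the input and restriction cases. As this is a standard property of the system in~\cite{V12}, the remaining steps are routine once the context splitting is set up correctly.
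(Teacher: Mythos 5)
The paper does not prove this lemma at all: it is stated as imported from Vasconcelos~\cite{V12}, where it is established by precisely the induction on the typing derivation that you describe, so your proposal matches the intended (cited) argument and is correct. The only step worth making explicit is that in the subject-prefix cases the induction hypothesis must be re-instantiated at the residual type---i.e., using the value judgment $v:S \s v:S$ (available since values are channels) in place of $\Gamma_1 \s v:T$, modulo weakening of terminated assignments---which is exactly what your ``consumed through the prefix, leaving $v:S$'' step amounts to.
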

\begin{lemma}[Substitution Lemma for $\encCP\cdot$]
\label{lem:enc_subst}
If
$P\in \encCP{\Gamma, x:T\s Q}$ and $v\notin \fv{P,Q}$,
then
$P\substj vx \in \encCP{\Gamma, v:T \s Q[v/x]}$.
\end{lemma}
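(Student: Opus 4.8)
The plan is to prove the statement by structural induction on $Q$, proceeding by cases that mirror exactly the clauses of the rewriting procedure $\encCP{\cdot}$ in \Cref{def:typed_enc} (equivalently, by induction on the derivation of $\Gamma, x:T \s Q$). The guiding principle is that the substitution $\substj{v}{x}$ \emph{commutes} with the rewriting: in every clause, $\encCP{\cdot}$ produces its output by wrapping the rewritings of the continuations inside some outer scaffolding (a prefix, a forwarder $\linkr{\cdot}{\cdot}$, a catalyzer, or a characteristic process), and applying $\substj{v}{x}$ distributes over this scaffolding so that the result coincides with an element of $\encCP{\Gamma, v:T \s Q\substj{v}{x}}$. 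Before the induction I would note that the target set is well-defined: by the session substitution lemma (\Cref{lem:subst_sess}), $\Gamma, x:T \s Q$ together with $v:T \s v:T$ yields $\Gamma, v:T \s Q\substj{v}{x}$. Freshness of $v$ (from $v\notin\fv{P,Q}$) together with Barendregt's convention guarantees that no capture occurs and that $v$ does not clash with the fresh or bound names (the $z$ introduced in output clauses, the hidden sessions $\wt{x}\wt{y}$, etc.) created by the rewriting.

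For the base case $Q=\nil$ the rewriting yields $\{\nil\}$ and the claim is immediate, since $\nil\substj{v}{x}=\nil$ and $\nilT$ is the only admissible type. For the prefix cases (output, input, selection, branching) the key is a \emph{role analysis} of the name $x$ in $Q$. If $x$ is the subject of the outermost prefix, then $\substj{v}{x}$ renames that prefix precisely as $Q\substj{v}{x}$ does, and the inductive hypothesis applied to the continuation (with the residual type assignment now carried by $v$) supplies the rewritten body, while the forwarder built from the object---whose name differs from $x$ and from $v$---is left untouched. If $x$ is instead an object being sent, then linearity forces $x\notin\fv{Q'}$ for the continuation $Q'$ (the premise of Rule~\Did{T-Out} does not carry the object), so $\substj{v}{x}$ acts only on the forwarder $\linkr{x}{z}$, turning it into $\linkr{v}{z}$ exactly as required, and the rewritten continuation is unchanged. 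If $x$ occurs only inside the continuation, the outer prefix and forwarder are untouched and the inductive hypothesis closes the case. Each subcase is then a routine comparison of the two sets defined by $\encCP{\cdot}$.

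The hard part will be the parallel/restriction clause, where $\encCP{\res{\wt{x}\wt{y}:\wt{S}}(P_1\para P_2)}$ produces two families of processes, of shapes $C_1[Q_1]\para G_2$ and $G_1\para C_2[Q_2]$. Since $x$ is declared in the \emph{visible} context $\Gamma=\Gamma_1,\Gamma_2$ and is distinct from the restricted hidden sessions $\wt{x},\wt{y}$, it lies in exactly one of $\Gamma_1,\Gamma_2$, say $\Gamma_2=\Gamma_2',x:T$. Then $\substj{v}{x}$ leaves the catalyzers $C_1,C_2$ (built only from the hidden names) and the rewriting $Q_1$ of $P_1$ (where $x$ is not free) unchanged, while it acts on $G_2\in\chrpk{\Gamma_2}{}$ in the first family and on $Q_2$ in the second. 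To treat $G_2$ I would first establish a simple \emph{renaming property} of characteristic processes, namely $\chrp{T}{x}\substj{v}{x}=\chrp{T}{v}$, together with its extension $\chrpk{\Gamma_2',x:T}{}\substj{v}{x}=\chrpk{\Gamma_2',v:T}{}$; both are immediate by induction on \Cref{def:charprocess}, since the name $x$ only ever parameterises the subject of the outermost prefix (objects use fresh bound names), and the analogous invariance of catalyzers under $\substj{v}{x}$ follows from \Cref{def:catalyser}, as $x$ is not among the hidden names. Equipped with these facts, $G_2\substj{v}{x}\in\chrpk{\Gamma_2',v:T}{}$, the inductive hypothesis rewrites $P_2\substj{v}{x}$ in the second family, and reassembling the two families yields precisely $\encCP{\Gamma_1,\Gamma_2',v:T\s (\res{\wt{x}\wt{y}:\wt{S}}(P_1\para P_2))\substj{v}{x}}$. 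The chief care throughout is bookkeeping: tracking which component declares $x$ and verifying that the side conditions of each clause (context splittings, duality of the hidden sessions, and freshness) are preserved under renaming, a step underwritten at the level of typing by \Cref{lem:subst_sess}.
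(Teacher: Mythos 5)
Your proof is correct and takes essentially the same route as the paper, which dispatches this lemma in one line as immediate from \Cref{def:typed_enc} and \Cref{lem:subst_sess}; your case analysis is precisely what that ``immediate'' unpacks to, since substitution commutes clause-by-clause with the rewriting and \Cref{lem:subst_sess} certifies the target judgment. The auxiliary facts you isolate for the parallel case---the renaming property $\chrp{T}{x}\substj{v}{x}=\chrp{T}{v}$, its lifting to $\chrpk{\cdot}{}$, and the invariance of catalyzers under the substitution (all sound, given Barendregt's convention and the fact that characteristic processes bind all their objects)---are exactly the routine details the paper's proof silently relies on.
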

\begin{proof}
Immediate from \Cref{def:typed_enc} and \Cref{lem:subst_sess}.
\end{proof}

\begin{proposition}[Composing Characteristic Processes]
\label{prop:charp}
Let $\Gamma$ and  $T$ be a typing context and a type, respectively.
\begin{itemize}
\item 
If $P_1 \in \chrpk{\Gamma}{}$ and $P_2 \in \chrpk{T}{x}$ then $P_1 \para P_2 \in \chrpk{\Gamma, x:T}{}$.
\item If $Q \in \chrpk{\Gamma, x:T}{}$ then there are $Q_1, Q_2$ such that $Q = Q_1 \para Q_2$ with $Q_1 \in \chrpk{\Gamma}{}$ and $Q_2 \in \chrpk{T}{x}$.
\end{itemize}
\end{proposition}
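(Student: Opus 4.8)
The plan is to prove both items by directly unfolding Definition~\ref{def:charenv}, which exposes $\chrpk{\cdot}{}$ as a pointwise parallel composition indexed by the entries of the context. Writing $\Gamma = w_1{:}T_1, \ldots, w_n{:}T_n$, the extended context $\Gamma, x{:}T$ has $n+1$ entries (with $\chrpk{T}{x} = \chrp{T}{x}$, reading the single type $T$ on $x$ as the singleton context $x{:}T$), so Definition~\ref{def:charenv} gives
\[
\chrpk{\Gamma, x{:}T}{} = \big\{\, R_1 \para \cdots \para R_n \para R \ \mid\ R_i \in \chrp{T_i}{w_i}\ \text{for all }i,\ R \in \chrp{T}{x} \,\big\},
\]
whereas $\chrpk{\Gamma}{}$ collects exactly the products $R_1 \para \cdots \para R_n$ with $R_i \in \chrp{T_i}{w_i}$. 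The entire argument reduces to reading off this factorization, regrouping parallel components up to associativity and commutativity of $\para$ (via structural congruence, cf.~\Cref{ss:pm}).

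For the first item, I would take $P_1 \in \chrpk{\Gamma}{}$ and $P_2 \in \chrp{T}{x}$. By the displayed characterization of $\chrpk{\Gamma}{}$, we have $P_1 = R_1 \para \cdots \para R_n$ with $R_i \in \chrp{T_i}{w_i}$. Then $P_1 \para P_2 = R_1 \para \cdots \para R_n \para P_2$ has precisely the shape required for membership in $\chrpk{\Gamma, x{:}T}{}$, taking $R := P_2 \in \chrp{T}{x}$; hence $P_1 \para P_2 \in \chrpk{\Gamma, x{:}T}{}$.

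For the second item, I would take $Q \in \chrpk{\Gamma, x{:}T}{}$. The displayed form yields $Q = R_1 \para \cdots \para R_n \para R$ with $R_i \in \chrp{T_i}{w_i}$ and $R \in \chrp{T}{x}$. Setting $Q_1 := R_1 \para \cdots \para R_n$ and $Q_2 := R$, we obtain $Q = Q_1 \para Q_2$ with $Q_1 \in \chrpk{\Gamma}{}$ (again by Definition~\ref{def:charenv}) and $Q_2 \in \chrp{T}{x}$, as required. Since both directions are immediate consequences of unfolding one definition, there is no genuine obstacle here; the only point requiring mild care is the bookkeeping of the parallel structure, namely isolating the final component from the first $n$, which is sound because the grouping $Q_1 \para Q_2$ is well defined up to the associativity and commutativity of $\para$.
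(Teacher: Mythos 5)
Your proof is correct and takes essentially the same route as the paper, which simply declares the proposition ``Immediate from \Cref{def:charenv}''; your unfolding of the pointwise parallel-composition structure of $\chrpk{\cdot}{}$, including the remark that the regrouping into $Q_1 \para Q_2$ is only well defined up to associativity and commutativity of $\para$, is exactly the intended argument spelled out.
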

\begin{proof}
Immediate from \Cref{def:charenv}.
\end{proof}

We repeat the statement of the operational correspondence given in Page~\pageref{thm:oc}:

\medskip
\noindent
\textbf{\Cref{thm:oc}.} (Operational Correspondence for $\encCP{\cdot}$)
\label{appthm:oc}
Let $P$ be such that $\Gamma\s P$ for some typing context $\Gamma$. Then, we have:
\begin{enumerate}
\item\label{5.2-1}
\nurev{If $P\to P'$, then for all $Q \in \encCP{{\Gamma}\s P}$ there exist $Q', R$ such that
$Q\to\fred Q'$, $Q' \uptok R$, and $R \in \encCP{{\Gamma}\s P'}$.}
\item\label{5.2-2}
\nurev{If $Q \in \encCP{{\Gamma}\s P}$, such that $P\in \lkoba$, and $Q \to\fred Q'$, then there exist
$P', R$ such that
$P\to P'$,
$Q' \uptok R$, and $R \in \encCP{{\Gamma}\s P'}$.}
\end{enumerate}
%

\begin{proof}
We consider both parts separately. 

\bigskip

\noindent{\underline{\textbf{\Cref{5.2-1}.}}} The proof  is by induction on the height of the derivation $P\to P'$ (cf. \Cref{fig:sessionpi}).
There are two main cases to consider, which are reductions inferred using Rules \Did{R-Com} and \Did{R-Case}; 
there are also cases corresponding to Rules \Did{R-Par}, \Did{R-Res}, and \Did{R-Str}, which are straightforward via the induction hypothesis.
For convenience, below we annotate bound names with their types: this way, e.g.,
$\res {xy:S}$ means that $x:S$ and  $y:\ov S$.

\begin{enumerate}
\item
Case \Did{R-Com}: Then we have:
$$P\defeq \res {xy:S'}(\overline x\out v.P_1\pp y\inp {t:T}.P_2) 
		\quad\to \quad
		\res {xy:S''}(P_1\pp P_2[v/t]) \defeq P'$$
Since $\Gamma\s P$, then by inversion we get
$S'=\oc T.S$ for some session types $S, T$. Then, $S''=S$.
Again by inversion we have the following derivation:
$$
\inferrule*[Right = \Did{T-Res}]
{
\inferrule*[Right=\Did{T-Par}]
{
\inferrule*[left = \Did{T-Out}]
{\Gamma_1,x:S\s P_1}
{\Gamma_1,v:T,x:\oc T.S\s \dual x\out v.P_1}
\and
\inferrule*[right=\Did{T-Inp}]
{\Gamma_2,y:\dual S, t:T\s P_2}
{\Gamma_2,y:\wn T.\dual S\s y\inp {t:T}.P_2}
}
{\Gamma_1,v:T,x:\oc T.S ,  \Gamma_2,y:\wn T.\dual S \s
\dual x\out v.P_1 \para y\inp {t:T}.P_2
}
}
{(\Gamma_1,v:T),  \Gamma_2 \s \res {xy:S'}(\overline x\out v.P_1\pp y\inp {t:T}.P_2)}
$$

By \Cref{def:typed_enc}, the translation of $P$ is as follows:
\begin{align*}
\encCP {\Gamma\s P}= &
\encCP{(\Gamma_1,v:T),  \Gamma_2\s\res {xy:\oc T.S}(\overline x\out v.P_1\pp y\inp {t:T}.P_2)}\\[2mm]
=\ & \underbrace{\big\{
C_1[Q_1] \para G_2  \suchthat  C_1 \in \mathcal{C}_{x:\wn T.\dual S},\,
Q_1 \in \encCP{{\Gamma_1, {x:\oc T.S}},v:T\s \overline x\out v.P_1}, \, G_2 \in  \chrpk{\Gamma_2}{}  \big\}}_{A_1} 
\\							
& ~\cup
\\						
& \underbrace{\big\{G_1 \para C_2[Q_2]   \suchthat  C_2 \in \mathcal{C}_{y:\oc T. S},\,
Q_2 \in \encCP{{\Gamma_2,{y:{\wn T.\dual S}}}\s y\inp {t:T}.P_2}, \,
G_1 \in \chrp{\Gamma_1, v:T}{}
\big\}}_{A_2}
\end{align*}
where:
\begin{align}
A_1 =\ &\big\{
\res x\big({\dual{x}(w). \big(\linkr{v}{w}\para P^*_1\big)} \pp P_x\big) 
\para G_2 
\suchthat \nonumber \\
& \qquad \qquad   
P^*_1 \in \encCP{{{\Gamma_1}, x:{S}}\s P_1},\,
P_x \in \chrp{\wn T.\dual S}{x},\,
G_2 \in  \chrpk{\Gamma_2}{}  \big\}\quad 
\label{eq:52a1}
\\[1mm]							
A_2 =\ & \big\{
G_1 \para
\res y \big(y\inp {t}.P^*_2 \pp P_y\big) \suchthat \nonumber \\
&  \qquad \qquad    
P^*_2 \in \encCP{\Gamma_2,y:\dual S, t:T\s P_2},\,
P_y \in\chrp{\oc T.S}y,\,
G_1 \in \chrp{\Gamma_1, v:T}{}
\big\} 
\label{eq:52a2}
\end{align}

\noindent 
Before spelling out the translation of $P'$, we record some considerations.
By \Cref{thm:L0-L2}, the translation preserves types:
  $Q\in \encCP{{\Gamma}\s P}$  implies $Q\cp \encob{\Gamma}{\ct}$.
Since $P\to P'$,  \Cref{thm:subj_red} ensures $\Gamma\s P'$.
Again, by \Cref{thm:L0-L2},
$O\in \encCP{{\Gamma}\s P'}$ implies $O\cp \encob{\Gamma}{\ct}$.
Also, since ${\Gamma_2,y:\dual S, t:T\s P_2}$, then by \Cref{lem:subst_sess} we have
$\Gamma_2,y:\dual S, v:T\s P_2 \substj vt$ and by well-typedness $v\notin \fn {P_2} $.
By \Cref{def:typed_enc}, the translation of $P'$ is as follows:
\begin{align*}
\encCP {\Gamma\s P'} 
=\ & \encCP{\Gamma_1,  (\Gamma_2,v:T)\s\res {xy:S}(P_1\pp P_2\substj vt)}\\[2mm]
=\ & \underbrace{\big\{
C'_1[Q_1] \para G'_2  \suchthat  C'_1 \in \mathcal{C}_{x:\dual S},\,
Q_1 \in \encCP{{\Gamma_1, {x:S}}\s P_1}, \, G'_2 \in  \chrpk{\Gamma_2, v:T}{}  \big\}}_{B_1}
\\
& ~ \cup 
\\								
&\underbrace{\big\{G'_1 \para C'_2[Q_2]   \suchthat  C'_2 \in \mathcal{C}_{y:S},\,
Q_2 \in \encCP{\Gamma_2,y:{\dual S}, v:T\s P_2\substj vt}, \,
G'_1 \in \chrp{\Gamma_1}{}
\big\}}_{B_2}
\end{align*}
where:
\begin{align}
B_1 =\ & \big\{
G'_2 \para \res x\big(Q_1 \pp P'_x \big) \suchthat \nonumber \\
& \qquad \qquad 
   P'_x \in \chrp{\dual S}{x},\,
Q_1 \in \encCP{{{\Gamma_1}, x:{S}}\s P_1}, \, G'_2 \in  \chrpk{\Gamma_2, v:T}{}  \big\}
\label{eq:52b1}
\\[1mm]									
B_2 = &\big\{G'_1 \para \res y\big(Q_2 \pp P'_y \big) \suchthat  \nonumber \\
& \qquad \qquad 
   P'_y \in\chrp{S}y,\,
Q_2 \in \encCP{\Gamma_2,y:{\dual S}, v:T\s P_2\substj vt}, \,
G'_1 \in \chrp{\Gamma_1}{}
\big\} 
\label{eq:52b2}
\end{align}
%


{We now show} that every process in $\encCP {\Gamma\s P}$ reduces into a process in $\encCP {\Gamma\s P'}$. 
We address two distinct sub-cases:
\begin{enumerate}
\item[(i)] We show that every  $Q \in A_1$ (cf. \Cref{eq:52a1}) reduces to a $Q' \in B_1$ (cf. \Cref{eq:52b1});
\item[(ii)] We show that every  $Q \in A_2$ (cf. \Cref{eq:52a2}) reduces to a 
$Q'$ such that $Q'' \uptok R$, with $R \in B_2$ (cf. \Cref{eq:52b2}).
\end{enumerate}


\medskip
\noindent \textbf{\emph{Sub-case (i)}}. 
Let 
$Q = G_2 \para \res x\big({\dual{x}(w). \big(\linkr{v}{w}\para P^*_1\big)} \pp P_x\big)$
be an arbitrary process in $A_1$,
with $P_x \in \chrp{\wn T.\dual S}{x}$.
By \Cref{def:charprocess}: 
$$\chrp{\wn T.\dual S}{x} = {\big\{x(t).(Q_t \para Q_x) \suchthat Q_t \in \chrp{T}{t}\ \land\ Q_x \in \chrp{\dual S}{x}\big\}}$$
We may then let
$P_x = x(t).(Q_t \para Q_x)$ where $Q_t \in \chrp{T}{t}$ and 
$Q_x \in \chrp{\dual S}{x}$.
Considering this, and 
by applying  Rules~\Did{R-ChCom} and \Did{R-Fwd}
(cf. \Cref{fig:redlcp}) we have: 
$$
\begin{array}{rllll}
Q =  &
G_2 \para
\res x\big({\dual{x}(w). \big(\linkr{v}{w}\para P^*_1\big)} \pp P_x\big)\\[2mm]
= &G_2 \pp \res x\big({\dual{x}(w). \big(\linkr{v}{w}\para P^*_1 \big)} \pp x(t).(Q_t \para Q_x)\big)\\[2mm]
\to& G_2 \pp \res x \big( \res w \big( \linkr{v}{w}\para P^*_1 \pp Q_w \pp Q_x  \big) \big) \\[2mm]
\fred& G_2   \pp Q_v \pp \res x \big( P^*_1 \pp Q_x \big) \ \defeq\ Q'
\end{array}
$$
(Recall that $\fred$ is structural congruence extended with a reduction by Rule~\Did{R-Fwd}.)
We shall show that $Q' \in B_1$. Let us consider/recall the provenance of its different components:
\begin{enumerate}
\item[(a)] $G_2 \in  \chrpk{\Gamma_2}{}$
\item[(b)] Since $Q_v$ stands for $Q_t[w/t][v/w]$, we have $Q_v \in \chrp{T}{v}$.
\item[(c)] $P^*_1\in \encCP{{{\Gamma_1}, x:{S}}\s P_1}$.
\item[(d)] $Q_x \in \chrp{\dual S}{x}$
\end{enumerate}
By \Cref{prop:charp}, Items (a) and (b) above entail:
\begin{enumerate}
\item[(e)] $G_2 \para Q_v  \in  \chrpk{\Gamma_2, v:T}{}$
\end{enumerate}
In turn, by considering Items (c), (d), and (e), together with \Cref{eq:52b1}, it is immediate to see that $Q' \in B_1$. Therefore, $Q' \in \encCP {\Gamma\s P'}$, as desired.


\medskip
\noindent \textbf{\emph{Sub-case (ii).}}
Let $Q = G_1 \para
\res y \big(y\inp {t}.P^*_2 \pp P_y\big)$ 
be an arbitrary process in $A_2$. 
Since $P_y \in \chrp{\oc T.S}{y}$, 
by \Cref{def:charprocess} we have 
 $P_y = \bout{y}{k}.(Q_k \para Q_y)$, where $Q_k \in \chrp{\dual{T}}{k}$ and $Q_y \in \chrp{S}{y}$.
 Considering this, and by 
applying Rule~\Did{R-ChCom} (cf. \Cref{fig:redlcp}),
we have: 
$$
\begin{array}{rllll}
Q\ =
&G_1 \para
\res y \big(y\inp {t}.P^*_2 \pp P_y\big)\\[2mm]
=& G_1 \para
\res y \big(y\inp {t}.P^*_2 \pp \bout{y}{k}.(Q_k \para Q_y)\big)\\[2mm]
\to &G_1 \pp \res y  \big(\res k (P^*_2[k/t] \pp Q_k) \para Q_y \big) \ \defeq\ Q''
\end{array}
$$
We shall show that $Q'' \uptok R$, for some $R \in B_2$. Let us consider/recall the provenance of its different components:
\begin{enumerate}
\item[(a)] $G_1 \in  \chrpk{\Gamma_1, v:T}{}$
\item[(b)] By \Cref{lem:enc_subst}, $P^*_2[k/t]\in \encCP{\Gamma_2,y:\dual S, k:T\s P_2\substj kt}$.
\item[(c)] $Q_k \in \chrp{\dual{T}}{k}$ 
\item[(d)] $Q_y \in \chrp{S}{y}$
\end{enumerate}
Furthermore, we can infer:
\begin{enumerate}
\item[(e)] From (a) and \Cref{prop:charp}, there must exist $G^*_1$ and $G_v$ such that $G_1 = G^*_1 \pp G_v$,  $G^*_1 \in \chrp{\Gamma_1}{}$, and $G_v\in \chrp{T}{v}$.  
\item[(f)] By combining (b), (c) and (d), together with  \Cref{prop:charp}, we have that $ \res y  \big(\res k (P^*_2[k/t] \pp Q_k) \pp Q_y \big) \in \chrp{\Gamma_2}{}$.
\end{enumerate}

%
\noindent
Given this, we can rewrite $Q''$ as follows:
\begin{equation}
Q'' = G^*_1  \pp G_v \pp \res y  \big(\res k (P^*_2[k/t] \pp Q_k) \para Q_y \big) 
\label{eq:522}
\end{equation}
We now consider an arbitrary  $R \in B_2$. 
By \Cref{eq:52b2}, we have that 
\begin{equation}
	R\defeq G'_1 \para \res y\big(Q_2 \pp P'_y \big)
	\label{eq:523}	
\end{equation}
with
\begin{enumerate}
	\item[(a$'$)] $G'_1 \in \chrp{\Gamma_1}{}$
	\item[(b$'$)] $Q_2 \in \encCP{\Gamma_2,y:{\dual S}, v:T\s P_2\substj vt}$
	\item[(c$'$)] $P'_y \in\chrp{S}y$
\end{enumerate}
By \Cref{lem: wt_characteristic} and (c$'$), we can infer:
$P'_y \cp y: \encob{S}{\ct}$;
and by \Cref{thm:L0-L2} and (b$'$), we can infer:
$Q_2 \cp \encob{\Gamma_2,y:{\dual S}, v:T}{\ct}$.
Finally, we have:
\begin{enumerate}
	\item[(d$'$)] $\res y\big(Q_2 \pp P'_y \big) \cp \encob{\Gamma_2}{\ct}, \encob{v:T}{\ct}$
\end{enumerate}
We now compare $Q''$ and $R$ (as in \Cref{eq:522} and \Cref{eq:523}, respectively).
By \Cref{lem: wt_characteristic} and (e) and (f) above, 
it is easy to see that
$Q'', R\cp \encob{\Gamma_1}{\ct},\encob{\Gamma_2}{\ct}, \encob{v:T}{\ct}$.
Then, by \Cref{def:uptok}, we have  that $Q'' \uptok R$.
Therefore, there is an $R$ 
such that $Q'' \uptok R$, with 
$R \in \encCP {\Gamma\s P'}$, as desired. This concludes the analysis for Case \Did{R-Com}.

 \bigskip
 
  
\item
Case \Did{R-Case}:
$$P\defeq \res {xy:S'}( \selection x{l_j}.Q \pp \branching ylR) 
		\quad \to\quad 
		\res {xy:S''}(Q\pp R_j) \defeq P'$$
Since $\Gamma\s P$, then by inversion 
$S'= \select lS$ for some  $S_i$, with $i\in I$.
For simplicity,  let us write $T_i$ to denote the dual of any ${S_i}$.
As a result of the reduction, we have $S''=S_j$ for some $j\in I$.
Again by inversion we have the following derivation:
$$
\inferrule*[Right = \Did{T-Res}]
{
\inferrule*[Right=\Did{T-Par}]
{
\inferrule[\Did{T-Sel}]
	{
		\Gamma_1,  x: S_j \s Q \quad   \exists j \in I
	}
	{ 	
		\Gamma_1, x :  \select lS \s \selection x {l_j}.Q   
	}
\and
\inferrule[\Did{T-Brch}]
	{
		\Gamma_2, y: T_i \s R_i \quad  \forall i\in I
	}
	{
		\Gamma_2, y :\branch l{T} \s \branching ylR
	}
}
{
(\Gamma_1, x :  \select lS) ,  (\Gamma_2, y :\branch lT) \s \selection x {l_j}.Q \pp \branching ylR
}
}
{\Gamma_1,  \Gamma_2 \s \res {xy:S'}( \selection x{l_j}.Q \pp \branching ylR) }
$$
By \Cref{def:typed_enc} the translation of $P$ is as follows:
\begin{align*}
\encCP{\Gamma \s P} = \ &
 \encCP{\Gamma\s \res {xy:S'}(\selection x{l_j}.Q \pp \branching ylR)}\\[2mm]
= \ & \underbrace{\big\{
C_1[Q_1] \para G_2  \suchthat  
C_1 \in \mathcal{C}_{x:\branch l{T}},\,
Q_1 \in \encCP{\Gamma_1, x :  \select lS \s \selection x {l_j}.Q}, \, G_2 \in  \chrpk{\Gamma_2}{}  \big\}}_{A_1}
\\
& ~\cup
\\	
&\underbrace{\big\{G_1 \para C_2[Q_2]   \suchthat 
C_2 \in \mathcal{C}_{y:\select lS},\,
Q_2 \in \encCP{{\Gamma_2,{y:\branch lT}}\s\branching ylR}, \,
G_1 \in \chrp{\Gamma_1}{}
\big\}}_{A_2}
\end{align*}
where:
\begin{align}
A_1 =\ &\big\{
G_2\pp   \res x \big(\selection x{l_j}.Q^* \pp P_x\big) 
  \suchthat  \nonumber
\\
& \qquad \qquad  
Q^* \in {\encCP{{\Gamma_1,x:S_j}\s Q}},\,
P_x \in \chrp{\branch lT}{x}, \,
G_2 \in  \chrpk{\Gamma_2}{}  \big\} 
\label{eq:52a3}
\\[1mm]									
A_2 = & \big\{G_1 \para   \res y \big( \parbranching y{l_i}{R^*_i} \pp P_y \big)  
 \suchthat \nonumber \\ 
& \qquad \qquad
R^*_i \in {\encCP{{\Gamma_2,y:T_i}\s R_i}},\,
P_y \in\chrp{\select lS}y,\,
G_1 \in \chrp{\Gamma_1}{}
\big\} 
\label{eq:52a4}
\end{align}

\noindent
Before spelling out the translation of $P'$, we record some considerations.
By \Cref{thm:L0-L2},
 $M\in \encCP{{\Gamma}\s P}$  implies $M\cp \encob{\Gamma}{\ct}$.
Since $P\to P'$, then by \Cref{thm:subj_red} we have $\Gamma\s P'$.
Again, by \Cref{thm:L0-L2}, 
$O\in \encCP{{\Gamma}\s P'}$ implies $O\cp \encob{\Gamma}{\ct}$.
By \Cref{def:typed_enc} the encoding of $P'$ is as follows:
\begin{align*}
\encCP{\Gamma \s P'} = \ &
\encCP{\Gamma\s \res {xy:S_j}(Q\pp R_j)}\\[2mm]
 = \  &  
\underbrace{\big\{G'_2 \pp C'_1 \big[Q^*\big]  \suchthat 
C'_1 \in \mathcal{C}_{x: {T_j}},\,
Q^* \in {\encCP{{\Gamma_1, {x:S_j}}\s Q}},\,
G'_2\in  \chrp{\Gamma_2}{}
\big\}}_{B_1}
\\
& ~\cup
\\
& \underbrace{\{ G'_1 \pp C'_2\big[R^*_j\big]  \suchthat 
C'_2 \in \mathcal{C}_{y: S_j},\,
R^*_j \in {\encCP{{\Gamma_2,{y:{T_j}}}\s R_j}},\,
G'_1 \in \chrp{\Gamma_1}{} \}}_{B_2}
\end{align*}
where:
\begin{align}
B_1 =\ &
\big\{G'_2 \pp \res x(Q^* \pp P'_x)  \suchthat 
Q^* \in {\encCP{{\Gamma_1, {x:S_j}}\s Q}},\,
P'_x \in \chrp{T_j}{x},\,
G'_2\in  \chrp{\Gamma_2}{}
\big\} 
\label{eq:52b3}
\\[1mm]
B_2 =\ &\{ G'_1 \pp \res y(R^*_j \pp P'_y)  \suchthat 
R^*_j \in {\encCP{{\Gamma_2,{y:{T_j}}}\s R_j}},\,
P'_y \in\chrp{S_j}y,\,
G'_1 \in \chrp{\Gamma_1}{}
\} 
\label{eq:52b4}
\end{align}
%
%

{We now show} that every process in $\encCP {\Gamma\s P}$ reduces into a process in $\encCP {\Gamma\s P'}$. 
 We address two distinct sub-cases:
\begin{enumerate}
\item[(i)] We show that every $Q \in A_1$ (cf. \Cref{eq:52a3}) reduces to a $Q' \in B_1$ (cf. \Cref{eq:52b3});
\item[(ii)] Similarly, we show that every   $Q \in A_2$ (cf. \Cref{eq:52a4}) reduces to a  $Q'' \in B_2$ (cf. \Cref{eq:52b4}).
\end{enumerate}

\medskip
\noindent \textbf{\emph{Sub-case (i)}}. 
Let $Q = G_2\pp   \res x \big(\selection x{l_j}.Q^* \pp P_x\big)$ be an arbitrary process in $A_1$,
with
 $Q^* \in {\encCP{{\Gamma_1, {x:S_j}}\s Q}}$
and
$P_x \in \chrp{\branch lT}{x}$.
By \Cref{def:charprocess}, we may then
let $P_x = \branching xlP$, such that  
$P_i\in \chrp{T_i}x$, for all $i\in I$.
By applying Rule \Did{R-ChCase} (cf. \Cref{fig:redlcp}),
we have: 
$$
\begin{array}{rllll}
Q\ \defeq\
&G_2\pp   \res x \big(\selection x{l_j}.Q^* \pp P_x\big) \\[2mm]
\to
&G_2\pp   \res x \big( Q^* \pp P_j \big) \defeq Q'
\end{array}
$$
We shall show that $Q' \in B_1$. Let us consider/recall the provenance of its different components:
\begin{enumerate}
\item[(a)] $G_2 \in  \chrpk{\Gamma_2}{}$
\item[(b)] $Q^* \in {\encCP{{\Gamma_1, {x:S_j}}\s Q}}$.
\item[(c)] $P_j\in \chrp{T_j}x$.
\end{enumerate}
By considering Items (a) -- (c), together with \Cref{eq:52b3}, it is immediate to see that $Q' \in B_1$. Therefore, $Q' \in \encCP {\Gamma\s P'}$, as desired.


\medskip
\noindent \textbf{\emph{Sub-case (ii)}}.
Let $Q = G_1 \para   \res y \big( \parbranching y{l_i}{R^*_i} \pp   P_y \big)$ be an arbitrary process in $A_2$, with 
$R^*_i \in {\encCP{{\Gamma_2,y:T_i}\s R_i}}$
and
$P_y \in \chrp{\select lS}{y}$.
By \Cref{def:charprocess}, 
  $P_y$ is one of the processes in the union $\bigcup_{i\in I}\big\{ \selection y{l _i}.{P_i} \suchthat P_i\in \chrp{S_i}y\big\}$. 
We then choose
$P_y = \selection y{l_j}.{P_j}$ 
such that 
$j\in I$ and $P_j\in \chrp{S_j}y$.
By applying Rule \Did{R-ChCase} (cf. \Cref{fig:redlcp}), 
we have:
$$
\begin{array}{rllll}
Q\ \defeq\ &
G_1 \para   \res y \big( \parbranching y{l_i}{R^*_i} \pp  \selection y{l_j}.{P_j} \big)
\\[2mm]
\to &
G_1 \para   \res y \big({R^*_j} \pp {P_j} \big) \defeq\ Q''
\end{array}
$$
We shall show that $Q'' \in B_2$. Let us consider/recall the provenance of its different components:
\begin{enumerate}
\item[(a)] $G_1 \in  \chrpk{\Gamma_1}{}$
\item[(b)] $R^*_j \in {\encCP{{\Gamma_2,{y:{T_j}}}\s R_j}}$.
\item[(c)] $P_j\in \chrp{S_j}y$.
\end{enumerate}
By considering Items (a) -- (c), together with \Cref{eq:52b4}, it is immediate to see that $Q'' \in B_2$. Therefore, $Q'' \in \encCP {\Gamma\s P'}$, as desired.
This concludes the analysis for Case~\Did{R-Case} (and for \Cref{5.2-1}).

\end{enumerate}

\bigskip

\noindent{\underline{\textbf{\Cref{5.2-2}.}}}
Let $P\in \lkoba$
and 
$Q \in \encCP{{\Gamma}\s P}$. Suppose  that $Q \to\fred Q'$; we now show that 
  (i)~there is a
$P'$ such that 
 $P\to P'$ and
 (ii) 
$Q' \uptok R$, for some $R \in \encCP{{\Gamma}\s P'}$.

We first argue for (i), i.e., the existence of a reduction $P \to P'$.
Notice that for the reduction(s) $Q \to\fred Q'$ to occur, 
there must exist two complementary prefixes occurring at top level in $Q$.
By \Cref{def:typed_enc} and, crucially, by the assumption $P\in \lkoba$,
the same two prefixes occur at top-level also in $P$ and can reduce; 
indeed, without the assumption $P\in \lkoba$, it could be that $P$ cannot reduce due to a deadlock.
Hence, 
 $P$ can mimic the reduction from $Q$, up to structural congruence: $P \to P'$.
 It then suffices to prove the theorem for $M \equiv P$, in which the two prefixes involved occur in contiguous positions and can reduce. 
  
To address (ii), we now relate $P'$ and $Q'$ by considering two main cases for the reduction originating from $Q$ (cf. \Cref{fig:redlcp}): (1) it 
corresponds to an input-output communication via Rule~\Did{R-ChCom};
and (2) it corresponds to a selection-branching interaction via Rule~\Did{R-ChCase}.
(The third case, corresponding to Rule~\Did{R-ChRes}, is straightforward using the induction hypothesis.) 
We detail these two cases; the analysis largely mirrors the one given in \Cref{5.2-1}:

%

\begin{enumerate}
\item
Case
$M= \ctx{E}{}{\res {xy:S'}(\overline x\out v.P_1\pp y\inp {t:T}.P_2)}\to 
\ctx E{}{\res {xy:S}(P_1\pp P_2[v/t])}\defeq P'$.
\\

\noindent
Since $\Gamma\s P$, by \Cref{thm:subj_cong} $\Gamma\s M$.
By inversion, 
$S'=\oc T.S$ for some $S$.
Then, again by inversion
$\Gamma = (\Gamma_1,v:T),  \Gamma_2$.
By \Cref{def:typed_enc}, 
$ \encCP {\Gamma\s M}= A_1 \cup A_2$, 
where:
\begin{align*}
A_1 =\ & \big\{
\ctx{F}{}{
G_2 \para
\res x\big({\dual{x}(w). \big(\linkr{v}{w}\para P^*_1\big)} \pp P_x\big)}  \suchthat  
P^*_1 \in \encCP{{{\Gamma_1}, x:{S}}\s P_1},\,
P_x \in \chrp{\wn T.\dual S}{x},\,
G_2 \in  \chrpk{\Gamma_2}{}  \big\} 
\\[1mm]									
A_2 =\ & \big\{
\ctx{H}{}{G_1 \para
\res y \big(y\inp {t}.P^*_2 \pp P_y\big)}   \suchthat 
P^*_2 \in \encCP{\Gamma_2,y:\dual S, t:T\s P_2},\,
P_y \in\chrp{\oc T.S}y,\,
G_1 \in \chrp{\Gamma_1, v:T}{}
\big\} 
\end{align*}
Also by \Cref{def:typed_enc}, we have that $\encCP {\Gamma\s P'} = B_1 \cup B_2$, where:
\begin{align*}
B_1 =\ & \big\{
\ctx {F}{}{G'_2 \para \res x\big(Q_1 \pp P'_x \big)}  \suchthat  P'_x \in \chrp{\dual S}{x},\,
Q_1 \in \encCP{{{\Gamma_1}, x:{S}}\s P_1}, \, G'_2 \in  \chrpk{\Gamma_2, v:T}{}\big\} 
\\[1mm]									
B_2 =\ & \big\{
\ctx H{}{G'_1 \para \res y\big(Q_2 \pp P'_y \big)}   \suchthat  P'_y \in\chrp{S}y,\,
Q_2 \in \encCP{\Gamma_2,y:{\dual S}, v:T\s P_2\substj vt}, \,
G'_1 \in \chrp{\Gamma_1}{}
\big\}  
\end{align*}
%

\noindent We now address two distinct sub-cases:
\begin{enumerate}
\item[(i)] We show that every   $Q \in A_1$  reduces to a   $Q'  \in B_1$;
\item[(ii)] Similarly, we show that every  $Q \in A_2$   reduces to a   $Q'$ such that $Q' \uptok R$ and $R \in B_2$.
\end{enumerate}

 \medskip
\noindent \textbf{\emph{Sub-case (i)}}.
Let $Q =
\ctx F{}{G_2 \para
\res x\big({\dual{x}(w). \big(\linkr{v}{w}\para P^*_1\big)} \pp P_x\big)} $ be an arbitrary process in $A_1$, with $P_x \in \chrp{\wn T.\dual S}{x}$.
By \Cref{def:charprocess} we have that 
$$\chrp{\wn T.\dual S}{x} = {\big\{x(t).(Q_t \para Q_x) \suchthat Q_t \in \chrp{T}{t} \land Q_x \in \chrp{\dual S}{x}\big\}}$$
We may then let $P_x = x(t).(Q_t \para Q_x)$ where $Q_t \in \chrp{T}{t}$ and 
$Q_x \in \chrp{\dual S}{x}$.
By applying Rules \Did{R-ChCom} and \Did{R-Fwd} (cf.~\Cref{fig:redlcp}) we have: 
$$
\begin{array}{rllll}
Q\ \defeq &
\ctx F{}{G_2 \para
\res x\big({\dual{x}(w). \big(\linkr{v}{w}\para P^*_1\big)} \pp P_x\big)}\\[2mm]
= &
\ctx F{}{G_2 \pp \res x\big({\dual{x}(w). \big(\linkr{v}{w}\para P^*_1 \big)} \pp x(t).(Q_t \para Q_x)\big)}\\[2mm]
\to& \ctx F{}{G_2 \pp \res x \big( \res w \big( \linkr{v}{w}\para P^*_1 \pp Q_w \pp Q_x  \big) \big)}\\[2mm]
\fred&
\ctx F{}{G_2   \pp Q_v \pp \res x \big( P^*_1 \pp Q_x \big)} \defeq Q'
\end{array}
$$
It is then easy to see that $Q' \in \encCP{\Gamma \s P'}$.
 
 \medskip
\noindent \textbf{\emph{Sub-case (ii)}}.
Let $Q = \ctx H{}{
G_1 \para
\res y \big(y\inp {t}.P^*_2 \pp P_y\big) }$ be an arbitrary process in $A_2$.
By \Cref{def:charprocess},
since $P_y \in \chrp{\oc T.S}{y}$, we may then  let $P_y = \bout{y}{k}.(Q_k \para Q_y)$ where $Q_k \in \chrp{\dual{T}}{k}$ and $Q_y \in \chrp{S}{y}$.
By applying Rule \Did{R-ChCom} (cf.~\Cref{fig:redlcp}),
we have: 
$$
\begin{array}{rllll}
Q\ \defeq
& \ctx H{}{
G_1 \para
\res y \big(y\inp {t}.P^*_2 \pp P_y\big) }\\[2mm]
=& \ctx H{}{
G_1 \para
\res y \big(y\inp {t}.P^*_2 \pp \bout{y}{k}.(Q_k \para Q_y)\big)}\\[2mm]
\to &
\ctx H{}{G_1 \pp \res y  \big(\res k (P^*_2[k/t] \pp Q_k) \para Q_y \big)} \defeq Q'
\end{array}
$$
By \Cref{lem:enc_subst},  $P^*_2[k/t]\in \encCP{\Gamma_2,y:\dual S, k:T\s P_2\substj kt}$.
Also, by \Cref{prop:charp} 
we can rewrite $G_1$ as
$G_1 = G^*_1 \pp G_v$, such that $G^*_1 \in \encob{\Gamma_1}{\ct}$ and $G_v\in \encob{v:T}{\ct}$.
Then, we can rewrite $Q'$ as follows:
$$
Q' = \ctx H{}{G^*_1 \pp G_v \pp \res y  \big(\res k (P^*_2[k/t] \pp Q_k) \para Q_y \big)}
$$
It is then easy to see that 
 $Q' \uptok R$, with 
 $R \in \encCP{{\Gamma}\s P'}$, as wanted.

\medskip

\item
Case $M= \ctx{E}{}{\res {xy:S'}( \selection x{l_j}.Q \pp \branching ylR)} \to 
		\ctx E{}{\res {xy:S''}(Q\pp R_j)} \defeq P'$. \\

\noindent Since $\Gamma\s P$, by \Cref{thm:subj_cong} also $\Gamma\s M$.
By inversion let
$S'= \select lS$ for some session types $S_i$ ($i\in I$).
By \Cref{def:typed_enc}, $\encCP{\Gamma \s M} = A_1 \cup A_2$, where:
\begin{align*}
A_1 =\ & \big\{
\ctx{F}{}{
G_2\pp   \res x \big(\selection x{l_j}.Q^* \pp P_x\big)}   \suchthat  
Q^* \in {\encCP{{\Gamma_1,x:S_j}\s Q}},\,
P_x \in \chrp{\branch lT}{x}, \,
G_2 \in  \chrpk{\Gamma_2}{}  \big\} 
\\[1mm]									
A_2 =\ & \big\{
\ctx{H}{}{
G_1 \para   \res y \big( \parbranching y{l_i}{R^*_i} \pp P_y \big)}   \suchthat 
R^*_i \in {\encCP{{\Gamma_2,y:T_i}\s R_i}},\,
P_y \in\chrp{\select lS}y,\,
G_1 \in \chrp{\Gamma_1}{} 
\big\} \ 
\end{align*}
Also, by \Cref{def:typed_enc}, we have that $\encCP{\Gamma \s P'} = B_1 \cup B_2$, where:
\begin{align*}
B_1 =\ &  \big\{
\ctx F{}{
G'_2 \pp \res x(Q^* \pp P'_x)}  \suchthat 
Q^* \in {\encCP{{\Gamma_1, {x:S_j}}\s Q}},\,
P'_x \in \chrp{T_j}{x},\,
G'_2\in  \chrp{\Gamma_2}{} 
\big\} 
\\[1mm]
B_2 = \ &\big\{
\ctx H{}{
G'_1 \pp \res y(R^*_j \pp P'_y)}  \suchthat 
R^*_j \in {\encCP{{\Gamma_2,{y:{T_j}}}\s R_j}},\,
P'_y \in\chrp{S_j}y,\,
G'_1 \in \chrp{\Gamma_1}{}
\big\} 
\end{align*}

\noindent As before, we now address two distinct sub-cases:
\begin{enumerate}
\item[(i)] We show that every  $Q \in A_1$  reduces to a  $Q'  \in B_1$;
\item[(ii)] Similarly, we show that every  $Q \in A_2$   reduces to a  $Q'  \in B_2$.
\end{enumerate}

 \medskip
\noindent \textbf{\emph{Sub-case (i)}}.
Let $Q =\ctx F{}{G_2\pp   \res x \big(\selection x{l_j}.Q^* \pp P_x\big)}$ be an arbitrary process in $A_1$.
By \Cref{def:charprocess}, since
$P_x \in \chrp{\branch lT}{x}$, then  $P_x = \branching xlP$ with 
$P_i\in \chrp{T_i}x$, for all $i\in I$.
By applying Rule~\Did{R-ChCase} (cf.~\Cref{fig:redlcp}), and letting $j\in I$
we have: 
$$
\begin{array}{rllll}
Q\ \defeq\
&
\ctx F{}{G_2\pp   \res x \big(\selection x{l_j}.Q^* \pp P_x\big)} \\[2mm]
\to
&
\ctx F{}{G_2\pp   \res x \big( Q^* \pp P_j \big)} \defeq Q'
\end{array}
$$
It is then easy to see that $Q' \in B_1$, and therefore $Q' \in \encCP{\Gamma \s P'}$, as desired.

 \medskip
\noindent \textbf{\emph{Sub-case (ii)}}.
Let $Q= \ctx H{}{G_1 \para   \res y \big( \parbranching y{l_i}{R^*_i} \pp  \selection y{l_j}.{P_j} \big)}$ be an arbitrary process in $A_2$, with
$R^*_i \in {\encCP{{\Gamma_2,y:T_i}\s R_i}}$.
By \Cref{def:charprocess},
since $P_y \in \chrp{\select lS}{y} =  \bigcup_{i\in I}\big\{ \selection y{l _i}.{P_i} \suchthat P_i\in \chrp{S_i}y\big\}$,
it means that $P_y$ is one of the processes in the union.
We may then choose
$P_y = \selection y{l_j}.{P_j}$ 
such that 
$j\in I$ and $P_j\in \chrp{S_j}y$.
By applying Rule~\Did{R-ChCase} (cf.~\Cref{fig:redlcp})
we have:
$$
\begin{array}{rllll}
Q\ \defeq &
\ctx H{}{G_1 \para   \res y \big( \parbranching y{l_i}{R^*_i} \pp  \selection y{l_j}.{P_j} \big)}
\\[2mm]
\to &
\ctx H{}{G_1 \para   \res y \big({R^*_j} \pp {P_j} \big)} \defeq Q'
\end{array}
$$
Since $P_j\in \chrp{S_j}y$ and $P'_y \in\chrp{S_j}y$,
it is then easy to see that $Q' \in B_2$, and therefore $Q' \in \encCP{\Gamma \s P'}$, as desired.
This concludes the analysis for  this case (and for \Cref{5.2-2}).
\end{enumerate}
\end{proof}

\ifarxiv

\section{Details on the Second Translation (\Cref{ss:oprw})}
\label{sec:appendix-6}

\subsection{A Session Type System for Value Dependencies}
\label{ss:vd}
We define an extension of the session type system in \Cref{s:sessions}, in which  typing judgments are of the form $$\Gamma \compsi\Psi \s P$$
where the typing 
context $\Gamma$ is as before and the new 
\emph{dependency context}  $\Psi$  contains triples $(a, x, n)$ and $\langle a, x, n\rangle$, with $n \geq 0$: while the first triple
denotes an input of object $x$ along subject $a$, the latter denotes output prefixes.
As a result,  
the dependency context $\Psi$
   describes all communication prefixes in $P$ and their distance to top-level.
Indeed, the $n$ in $(a, x, n)$ and $\langle a, x, n\rangle$  helps us to track the position of the process prefix in relation to its session type, and to describe several value dependencies along a certain name.
To this end, we consider \emph{annotated} output and input session types, written $\oc^n S.T$ and $\wn^n S.T$ (with $n \geq 0$), respectively; other types remain unchanged, for they do not involve passing of values.

In the following, we use $\Psi, \Phi, \ldots$ to denote dependency contexts;
we sometimes write $\Psi(\wt{x})$ to make explicit that $\Psi$ concerns names in the sequence $\wt{x}$ (and similarly for $\Gamma(\wt{x})$).

\Cref{fig:sess_typingmod} gives the typing rules required for supporting $\Psi$ and annotated session types; they make use of the following auxiliary notions.
\begin{definition}
\label{def:myplus}
Given an annotated session type $S$ and contexts $\Psi$ and $\Gamma$, 
operations
\myplus{S} and \myplus{\Psi} and \myplus{\Gamma}   increment annotations in types and in triples:
\begin{align*}
\myplus{(\oc^n S.T)} &\defeq  \oc^{n+1} S.\myplus{T} & \myplus{(\Psi, (a, x, n))} &\defeq  \myplus{\Psi}, (a, x, n+1)
\\
\myplus{(\wn^n S.T)} &\defeq  \wn^{n+1} S.\myplus{T} & \myplus{(\Psi, \langle a, x, n\rangle)} &\defeq  \myplus{\Psi}, \langle a, x, n+1\rangle
\\
\myplus{{\branch lS}} &\defeq  \branch l{\myplus{S}} & \myplus{(\Gamma, x:T)} &\defeq  \myplus{\Gamma}, x:\myplus{T}
\\
 \myplus{{\select lS}} &\defeq   \select l{\myplus{S}} &\myplus{(\,\cdot\,)} &\defeq  \cdot
 \\ \myplus{\nilT} &\defeq  \nilT
\end{align*}
\end{definition}


\begin{figure}[t]
\begin{mdframed}
$$\begin{array}{c}
%
\inferrule*[right=\Did{T-NilD}]
{\eend{\Gamma}}
{\Gamma \compsi \emptyset \s \nil}
\qquad
\inferrule*[right=\Did{T-ParD}]
{\Gamma_1 \compsi\Psi \s P\quad
\Gamma_2 \compsi\Phi \s Q}
{\Gamma_1,\Gamma_2 \compsi\Psi, \Phi \s P\pp Q}
\\\\
%
\inferrule[{\Did{T-InD}}]
	{
 	  \Gamma, x:S, y:T \compsi\Psi \s P
	}
	{
	  \myplus{\Gamma}, x : \wn^0 T.\myplus{S} \compsi \myplus{\Psi}, (x,y,0)\s x\inp{y}.P
	}
\qquad
\inferrule[{\Did{T-OutD}}]
	{
	   \Gamma, x:S \compsi\Psi \s P
	}
	{
	   \myplus{\Gamma},  x : \oc^0 T.\myplus{S}, y:T \compsi \myplus{\Psi},  \langle x,y,0\rangle  \s \ov x\out{y}.P
	}
\\\\
\inferrule[{\Did{T-BrchD}}]
	{
		\Gamma, x: S_i \compsi\Psi \s P_i \quad  \forall i\in I
	}
	{
		\myplus{\Gamma}, x:\branch lS \compsi\Psi \s \branching {x}lP
	}
\qquad
\inferrule[{\Did{T-SelD}}]
	{
		\Gamma,  x: S_j \compsi\Psi \s P \quad   \exists j \in I
	}
	{ 	
		\myplus{\Gamma}, x:  \select lS \compsi\Psi \s \selection {x} {l_j}.P
	}
	\\\\
%
\inferrule[{\Did{T-ResD}}]
	{
	    \Gamma, x:T, y:\ov{T} \compsi\Psi\s P
	    \and
	    \Psi\decomp{x,y} \Psi\proj x,\ \Psi\proj y,\ \Psi'
	}
	{ 
	     \Gamma \compsi\Psi' \s \res{xy} P
	}
  \end{array}$$
  \end{mdframed}
\caption{Typing rules for the \p calculus with session types with value dependency tracking.}
\label{fig:sess_typingmod}
\end{figure}
We discuss the new typing rules for sessions with value dependencies, given in \Cref{fig:sess_typingmod}.
The most interesting cases
are \Did{T-InD} and \Did{T-OutD}, which record communication subjects and objects as they occur in the input and output processes, respectively.
The number annotation  tracks the ``distance'' from top-level, which in the case of top-level input or output is simply $0$. The annotations related to the continuation process $P$ are incremented, as captured by $\myplus{S}$ and $\myplus{\Psi}$ (cf. \Cref{def:myplus}).
Rule $\Did{T-ParD}$ simply combines the prefix information in the two parallel sub-processes.

To formalize Rule \Did{T-ResD}, we need a couple of auxiliary definitions.
We   write $\Psi\proj x$ to denote the result of ``projecting'' $\Psi$ onto name $x$,  giving the set containing \emph{all} tuples in which 
$x$ appears free, namely as input subject, output subject or output object.
Formally:
\begin{definition}[Projection of $\Psi$]
\label{def: proj_psi}
Let $\Psi$ be a context, and $x$ a name.
$$
\Psi\proj x \ \defeq\
\{
(x,w,n) \in \Psi
\ \vee\
\langle x, w, n\rangle \in \Psi
\ \vee\
\langle w, x, n\rangle \in \Psi,
\text{ for some name } w \text{ and } n\geq 0
\}
$$
\end{definition}
With this notion, we now define the decomposition of context $\Psi$ under names $x$ and $y$:
\begin{definition}[Decomposition of $\Psi$]
\label{def: decomp_psi}
Let $\Psi$ be a context, and $x$ and $y$ names.
The \emph{decomposition of $\Psi$ under $x$ and $y$}  is defined as follows:
$$
\Psi\decomp{x,y} \Psi\proj x,\ \Psi\proj y,\ \Psi'
$$
such that
$
\Psi' = (\Psi\setminus \Psi\proj x) \cup (\Psi\setminus\Psi\proj y)
$.
\end{definition}

Rule \Did{T-ResD} decomposes $\Psi$ using \Cref{def: decomp_psi}: context $\Psi$, used to type process $P$, is decomposed under names $x$ and $y$; the remaining context $\Psi'$ is used to type process $\res{xy}P$. Thus, intuitively, since $\Psi$ is used to track \emph{free} names in $P$, then triples where $x$ and $y$ occur free must be removed from $\Psi$ in order to type $\res{xy}P$.
Remaining rules are self-explanatory.

It is immediate to see
that main results for the type system in \Cref{s:sessions}
(Theorems~\ref{thm:subj_red} and~\ref{thm:safety}) extend  to the type system with $\Psi$ and annotations in input and output session types.

\begin{example}[The Extended Type System at Work]
\label{ex:track}
Using the rules in \Cref{fig:sess_typingmod},
we have the following typing derivation for 
process 
$\ov{a}\out{\unit}.b\inp u.\ov{c}\out{u}.\nil$ 
(similar to sub-process $\ov{a_0}\out{\unit}.a_1\inp u.\ov{a_2}\out{u}.\nil$
from \Cref{ex:vd1}):
$$
\infer[\Did{T-OutD}]
{\underbrace{a: \oc^{0}\nilT.\nilT, b: \wn^1 U.\nilT, c: \oc^2 U.\nilT}_{\Gamma} \compsi \underbrace{\langle a, \unit, 0\rangle, (b,u, 1), \langle c, u, 2\rangle}_{\Psi}  \s  \ov{a}\out{\unit}.b\inp u.\ov{c}\out{u}.\nil }
{
\infer[\Did{T-InD}]
	{
	a: \nilT, b: \wn^0 U.\nilT, c: \oc^1 U.\nilT \compsi (b,u, 0), \langle c, u, 1\rangle \s b\inp u.\ov{c}\out{u}.\nil
	}
	{
	\inferrule*[Right= \Did{T-OutD}]{
	\inferrule*[Right= \Did{T-NilD}]{
	}
	{
	a: \nilT, b: \nilT, c: \nilT \compsi  \emptyset \s \nil
	}
	}{
 	a: \nilT,  b: \nilT, c: \oc^0 U.\nilT, u:U \compsi \langle c, u, 0 \rangle \s \ov{c}\out{u}.\nil}
	}
}
$$
where $U \defeq \wn \nilT.\nilT$.
Using $\Psi$, the dependency between sessions $c$ and $b$ can be read as: 
``the output on $c$, which is two prefixes away from top-level, depends on the value received on $b$, which is one prefix away''.
\end{example}

\subsection{Value Dependencies}
Given a typed process $\Gamma \compsi\Psi \s P$, we now define the \emph{value dependencies} 
induced by  prefixes in $P$, but not captured by $\Gamma$:


\begin{definition}[Set of Value Dependencies]
\label{def:veep}
Let $P$ be a process such that $\Gamma \compsi\Psi \s P$. 
The \emph{value dependencies} of $P$ are given by the set $\vdepset{\Psi}$, which is defined as 
$$
\vdepset{\Psi} \defeq \big\{\vdep{x^n}{z^m} ~\suchthat  \exists y. (x,y,n) \in \Psi \land \langle z, y, m\rangle \in \Psi \big\}
$$
We sometimes write 
$\vdep{x^n}{z^m}:S$ 
when the dependency involves a value of type $S$.
We write 
$\dom(\Psi)$ 
to stand for the set 
$\{x  \suchthat  \vdep{x^n}{z^m} \in \vdepset{\Psi} \lor \vdep{z^m}{x^n} \in \vdepset{\Psi}\}$.
\end{definition}

Thus, the pair $\vdep{a^n}{b^m}$ denotes a 
  value dependency of an (output) prefix along session $b$ on an (input) prefix along session $a$. 

\begin{example}  
Consider process 
${\Gamma} \compsi {\Psi}  \s  \ov{a}\out{\unit}.b\inp u.\ov{c}\out{u}.\nil$
from 
\Cref{ex:track}. 
We can now track the value dependency between $c$ and $b$, because $\vdepset{\Psi} = \{\vdep{b^1}{c^2}\}$. 
Similarly, sub-process 
$\ov{a_0}\out{\unit}.a_1\inp u.\ov{a_2}\out{u}.\nil$
from 
\Cref{ex:vd1}
has a value dependency $\vdep{a_{1}^1}{a_{2}^2}$.
\end{example}

%

We now develop some auxiliary definitions, which are required to 
define characteristic processes and catalyzers that exploit value dependencies.

We may interpret $\vdepset{\Psi}$ as a binary relation 
on typing assignments in $\Gamma$,
in which 
each  
$\vdep{w_j^n}{w_k^m} \in \vdepset{\Psi}$, 
with $\Gamma(w_j) = T_j$ and $\Gamma(w_k) = T_k$, 
defines a pair $(w_j:T_j, w_k:T_k)$.
In turn, this relation can be seen as a directed graph, 
with edges $w_j:T_j \to w_k:T_k$.
In the following, we shall consider only typed processes whose associated $\vdepset{\Psi}$:
\begin{itemize} 
\item is simple, i.e., for any $x, z, S$ there is at most one dependency 
$\vdep{x^n}{z^m}:S \in \vdepset{\Psi}$.
\item 
determines a tree-like structure, without cycles, on $\Gamma$.
\end{itemize}
While considering simple $\vdepset{\Psi}$ simplifies some definitions, 
considering tree-like structures (rather than arbitrary relations) is necessary to ensure that the characteristic processes obtained from $\Gamma$ and $\Psi$ are typable in \lcp; see \Cref{lem: wt_characteristic_new} and 
\Cref{lem:typability_catal_rev}
below.

A tree-like interpretation of $\Gamma$ via $\vdepset{\Psi}$ does not necessarily define a fully connected structure. For instance, $\Gamma$ can have assignments not involved in any value dependency.
Actually, $\vdepset{\Psi}$ defines  
 a \emph{forest} (a set of trees).  
 This is formalized in the definition below, where  the index $i$ relates 
 $w_i:T_i$ (a type  assignment)
 and
 $\mathbf{T}_i$ (its  corresponding tree):

 \begin{definition}[Forest of a  Context]
 \label{d:forest}
Let
$\Gamma = w_1{:}T_1, \ldots, w_n{:}T_n$ be a typing context
and $\Psi$ its associated dependency context. 
The \emph{forest of $\Gamma$ (under $\Psi$)} is defined 
as $\mathcal{F}_\Psi(\Gamma) = \{\mathbf{T}_j\}_{j \in J}$,
where each $\mathbf{T}_j$ is interpreted as a tree with root $w_j:T_j \in \Gamma$.
Each dependency   
$\vdep{w_j^n}{w_k^m} \in \vdepset{\Psi}$ (if any) defines a  relation between
parent $\mathbf{T_j}$ and 
child $\mathbf{T}_k$, which is a tree inductively defined with root 
$w_k:T_k$.
We write $child(\mathbf{T}_j)$ to denote the  set of offspring of $\mathbf{T}_j$.
\end{definition}

We illustrate the above notions by means of examples.

\begin{example}[No Dependencies]
If  $\Gamma = w_1{:}T_1, \ldots, w_n{:}T_n$
and $\vdepset{\Psi} = \emptyset$ 
then the forest of $\Gamma$ will have as many trees as there are assignments in it, i.e.,
$\mathcal{F}_\Psi(\Gamma) = \{\mathbf{T}_1, \ldots, \mathbf{T}_n\}$, with $child(\mathbf{T}_i) = \emptyset$, for all $\mathbf{T}_i$.
\end{example}

\begin{example}[A Simple Tree]
\label{ex:vd}
Let 
$\Gamma$ and $\Psi$ be 
such that
\begin{align*}
\Gamma & = x_1: \wn S.\nilT, ~
x_2: \oc S.\wn T_1.\wn T_2.\nilT,~
x_3: \oc T_1.\nilT,~
x_4: \oc T_2.\nilT
\\
\vdepset{\Psi} & = \{ \vdep{x_1}{x_2}:S,~ \vdep{x_2}{x_3}:T_1,~ \vdep{x_2}{x_4}:T_2\}
\end{align*}
where we have omitted annotations.
Note that $\vdepset{\Psi}$ is simple \emph{and} induces a tree-like structure. 
Indeed we have: $child(\mathbf{T}_1) = \{\mathbf{T}_2\}$, $child(\mathbf{T}_2) = \{\mathbf{T}_3, \mathbf{T}_4\}$, 
$child(\mathbf{T}_3) = child(\mathbf{T}_4) = \emptyset$, and so 
 the forest $\mathcal{F}_\Psi(\Gamma)$ contains a single tree with two levels, with root $x_1:T_1$.

As a non-example of a simple, tree-like structure, consider: 
\begin{align*}
\Gamma' & = x_1: \wn S.\oc T.\nilT, ~
x_2: \oc S.\wn T.\nilT
\\
\vdepset{\Psi'} & = \{ \vdep{x_1}{x_2}:S,~ \vdep{x_2}{x_1}:T\}
\end{align*}
Clearly, $\vdepset{\Psi'}$ induces a graph-like structure, with a simple cycle, on $\Gamma'$.
\end{example}

We now move on to exploit both the value dependencies $\vdepset{\Psi}$ and the (annotated) context $\Gamma$ to refine the definitions of  characteristic processes (\mydefref{def:charprocess}) and  catalyzer contexts (\Cref{def:catalyser}).

\subsection{Characteristic Processes Refined with Value Dependencies}

\label{ss:cat}
To refine the definition of characteristic processes, 
the key idea is simple: given a dependency $\vdep{a^n}{b^m}$ between two sessions $a:\wn^n T.S_1$ and $b:\oc^m T.S_2$, 
rather than implementing two isolated characteristic processes for the two sessions, we  
use
a so-called \emph{bridging session} (denoted $c_{ab}$) 
to implement the dependency between  the characteristic processes. 
To this end, we first revisit \mydefref{def:charprocess}, as follows:

\begin{definition}[Characteristic Processes of a Session Type, with Value Dependencies]\label{def:charprocessv}
Let $\Psi$ be a dependency context. 
Given a name $x$, 
the set of \emph{characteristic processes} of 
the (annotated) session type
$T$ under $\Psi$, denoted $\chrpv{T}{x}{\Psi}$, is inductively defined as follows:
\begin{align*}
\chrpv{\wn^n T'.S}{x}{\Psi} 		&\defeq    
\begin{cases}
\big\{x(y).\bout{c_{xz}}{w}.(\linkr{y}{w} \para Q) \suchthat Q \in \chrpv{S}{x}{\Psi}\big\} 
& \text{if $\vdep{x^n}{z^m} \in \vdepset{\Psi}$}
\\
\big\{x(y).(P \para Q) \suchthat P \in \chrpv{T'}{y}{\Psi} \land Q \in \chrpv{S}{x}{\Psi}\big\} 
& \text{otherwise} 
\end{cases} 
\\
\chrpv{\oc^n T'.S}{x}{\Psi} 		&\defeq  
\begin{cases}
\big\{c_{zx}(y).\bout{x}{w}.(\linkr{y}{w} \para Q) \suchthat Q \in \chrpv{S}{x}{\Psi}\big\} 
& \text{if $\vdep{z^m}{x^n} \in \vdepset{\Psi}$}
\\
\big\{\bout{x}{y}.(P \para Q) \suchthat P \in \chrpv{\dual{T'}}{y}{\Psi} \land Q \in \chrpv{S}{x}{\Psi}\big\} 
& \text{otherwise} 
\end{cases}
\\
\chrpv{\branch lS}{x}{\Psi}				&\defeq    
\big\{\branching xlP \suchthat \forall i\in I.\ P_i\in \chrpv{S_i}{x}{\Psi}\big\}
\\
\chrpv{\select lS}{x}{\Psi} 				&\defeq    
\bigcup_{i\in I}\big\{ \selection x{l _i}.{P_i} \suchthat P_i\in \chrpv{S_i}{x}{\Psi}\big\}
\\
\chrpv{\nilT}{x}{\Psi} &\defeq   \{\nil\} 
\end{align*}
\end{definition}
Changes with respect to \mydefref{def:charprocess} are in the cases of input and output types, which now
implement an additional session if, according to $\vdepset{\Psi}$, the respective sessions are involved in a value dependency.
Indeed, the definition of $\chrpv{\wn^n T.S}{x}{\Psi}$ forwards the value received on $x$ along a channel $c_{xz}$ whenever 
sessions along $x$ and $z$ are related via a value dependency.
Accordingly, the first action in the definition of 
$\chrpv{\oc^n T.S}{x}{\Psi}$ is an input along $c_{xz}$: the received value will be emitted along $x$, thus completing the intended forwarding.


Using this refined definition of characteristic processes, we may now revisit the definition of the characteristic processes of a   typing context (\Cref{def:charenv}).
In order to ensure type preservation (cf. \Cref{lem: wt_characteristic_new} below), the main difference is the need for outermost restrictions that hide the behavior along the bridging sessions created in \mydefref{def:charprocessv}.
We use the tree-like interpretation of $\Psi$ (\Cref{d:forest}) to appropriately close these bridging sessions:



\begin{definition}[Characteristic Processes of a  Typing Context, with Value Dependencies]
\label{def:charenvv}
Assume $\Gamma = w_1{:}T_1, \ldots, w_n{:}T_n$ and $\Psi$.
Recall that  $\mathbf{T}_i$ denotes the tree rooted by $w_i:{T}_i \in \Gamma$.
\begin{enumerate}
\item 
The set of processes $N_i$ is defined 
for each 
${T}_i$ (with $1 \leq i \leq n$)   as:
\begin{equation}
\begin{cases}
\chrpv{{T_i}}{w_i}{\Psi} & \text{if $child(\mathbf{T}_i) = \emptyset$}
\\
\{ \res{\wt{c_{w_iw_j}}}(Q \para 
\prod_{j \in \{1,\ldots, m\}} Q_j)  \suchthat  Q \in \chrpv{{T_i}}{w_i}{\Psi} \land Q_j \in N_j\}
&
\text{if $child(\mathbf{T}_i) = \{\mathbf{T}_1, \ldots, \mathbf{T}_m\}$}.
\end{cases}
\label{eq:charp}
\end{equation}
\item 
Suppose that
$\mathcal{F}_\Psi(\Gamma) = \{\mathbf{T}_1, \ldots, \mathbf{T}_k\}$, with $1 \leq k \leq n$.
The \emph{characteristic processes} of $\Gamma$ is defined as the set
$
\chrpv{\Gamma}{}{\Psi}
\defeq
\{
P_1 \para \cdots \para P_k  \suchthat  P_i \in N_i
\}
$
with $N_i$ as in  (1).
\end{enumerate}
\end{definition}

Thus, the set of characteristic processes of $\Gamma$ is built by exploiting the hierarchy of type assignments 
defined by $\mathcal{F}_\Psi(\Gamma)$. This is required, because we need to connect each individual characteristic process using the bridging sessions $c_{w_iw_j}$.
This definition is a strict generalization of \Cref{def:charenv}: 
with $\vdepset{\Psi}= \emptyset $, every set $N_i$ coincides with the set $\chrpv{{T_i}}{w_i}{\Psi}$ (cf. \Cref{def:charprocessv}), which in turn coincides 
with $\chrpv{{T_i}}{w_i}{}$ (cf. \Cref{def:charprocess}) if $\vdepset{\Psi} = \emptyset$.

The following examples serve to illustrate this definition: 

\begin{example}
\label{ex:charprocgamma}
Consider $\Gamma_1$ and $\Psi_1$ as in \Cref{ex:track} for sub-process $\ov{a_0}\out{\unit}.a_1\inp u.\ov{a_2}\out{u}.\nil$:
\begin{align*}
\Gamma_1 & \defeq a_0: \oc^{0}\nilT.\nilT, a_1: \wn^1 U.\nilT, a_2: \oc^2 U.\nilT \\
\Psi_1 & \defeq \langle a_0, \unit, 0\rangle, (a_1,u, 1), \langle a_2, u, 2\rangle 
\end{align*}
By \Cref{def:veep}, $\vdepset{\Psi_1} = \{\vdep{a_1^1}{a_2^2}\}$.
Also, by \Cref{d:forest}, 
  $\mathcal{F}_{\Psi_{1}}(\Gamma_1) = \{\mathbf{T}_0, \mathbf{T}_1\}$,
with 
$child(\mathbf{T}_0) = child(\mathbf{T}_2) =\emptyset$
and 
$child(\mathbf{T}_1) = \{\mathbf{T}_2\}$.
Recalling that $U \defeq \wn \nilT.\nilT$, 
we have:
\begin{align*}
N_0 &= \chrpv{\oc^{0}\nilT.\nilT}{a_0}{\Psi_1}
\\
N_2 &= \chrpv{\oc^2 U.\nilT}{a_0}{\Psi_1}
\\
N_1 &= \{\res{c_{a_1a_2}}(Q_1 \para Q_2)  \suchthat  Q_1 \in \chrpv{\wn^1 U.\nilT}{a_1}{\Psi_1}, Q_2 \in N_2\}
\end{align*}
Notice that by 
\Cref{def:charprocessv}
processes in $N_2$ are of the form 
$c_{a_1a_2}(w).\bout{a_2}{v}.(\linkr{w}{v} \para Q')$
with $Q' \in  \chrpv{\nilT}{a_0}{\Psi_!}$.
Given these sets, 
we have $
\chrpv{\Gamma_1}{}{\Psi_1}
\defeq
\{
P_0 \para P_1   \suchthat  P_0 \in N_0, P_1 \in N_1\}
$.
%
\end{example}

\begin{example}
Let $\Gamma$ and $\Psi$ be as in \Cref{ex:vd}. 
Expanding \Cref{def:charenvv}, we have:
\begin{align*}
N_4 &= \chrpv{\oc T_2.\nilT}{x_4}{\Psi}
\\
N_3 &= \chrpv{\oc T_1.\nilT}{x_3}{\Psi}
\\
N_2 &= \{\res{c_{x_2x_4}}(\res{c_{x_2x_3}}(Q_2 \para Q_3) \para Q_4)  \suchthat  Q_2 \in \chrpv{\oc S.\wn T_1.\wn T_2.\nilT}{x_2}{\Psi}, Q_3 \in N_3, Q_4 \in N_4\}
\\
N_1 &= \{\res{c_{x_1x_2}}(Q_1 \para Q)  \suchthat  Q_1 \in \chrpv{\wn S.\nilT}{x_1}{\Psi}, Q \in N_2\}
\end{align*}
Recall that by \Cref{def:charprocessv} all processes in $N_4$ will implement an input on $c_{x_2x_4}$.
Similarly, all processes in $N_3$ will implement an input on $c_{x_2x_3}$.
Then, processes in $\chrpv{{T_2}}{x_2}{\Psi}$ implement the dependency by providing corresponding outputs.
With these sets in place, 
and given that 
$\mathcal{F}_\Psi(\Gamma) = \{\mathbf{T}_1\}$, 
we have that $
\chrpv{\Gamma}{}{\Psi}
\defeq
\{
P_1   \suchthat  P_1 \in N_1\}
$.
\end{example}

\noindent
Characteristic processes with value dependencies 
are well-typed in the system of \Cref{ss:lf}:

\label{app: new-char-process}
We repeat the statement of this lemma:

\begin{lemma}%
\label{lem: wt_characteristic_new}
Let $T$, $\Gamma$, and $\Psi$ be an annotated session type, a session typing context, and a dependency context, respectively.
\begin{enumerate}
\item \label{6.11-1}
For all $P \in \chrpv{T}{x}{\Psi}$, we have  $P \cp x:\encob{T}{\ct}, \Delta$, for some  $\Delta$ (possibly empty).
\item  \label{6.11-2}
For all $P \in \chrpv{\Gamma}{}{\Psi}$, we have $P \cp \encob{\Gamma}{\ct}$.
\end{enumerate}
\end{lemma}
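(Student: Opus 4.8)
The plan is to prove both parts by induction, following the structure of the original \Cref{lem: wt_characteristic} but carefully accounting for the bridging sessions introduced by value dependencies. For \Cref{6.11-1}, I would proceed by induction on the structure of the annotated type $T$, reusing the existing argument from \Cref{lem: wt_characteristic}(\ref{item1: wt_characteristic}) for the cases where no value dependency is present. The genuinely new cases are the input type $\wn^n T'.S$ and output type $\oc^n T'.S$ when the relevant dependency belongs to $\vdepset{\Psi}$. Here the characteristic process carries an extra free name $c_{xz}$ (or $c_{zx}$), which is exactly why the statement quantifies over a residual context $\Delta$ rather than typing $P$ purely against $x:\enco{T}{\ct}$.

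Concretely, for the input case $\chrpv{\wn^n T'.S}{x}{\Psi}$ with $\vdep{x^n}{z^m} \in \vdepset{\Psi}$, the process has the form $x(y).\bout{c_{xz}}{w}.(\linkr{y}{w} \para Q)$ with $Q \in \chrpv{S}{x}{\Psi}$. I would apply the induction hypothesis to $Q$ to obtain $Q \cp x:\enco{S}{\ct}, \Delta'$ for some $\Delta'$, type the forwarder $\linkr{y}{w}$ via Rule~\Did{T-$\tid$} as $\linkr{y}{w}\cp y:\enco{T'}{\ct}, w:\enco{\dual{T'}}{\ct}$, and then compose with Rules~\Did{T-$\otimes$} (for the bound output on $c_{xz}$, producing a type assignment $c_{xz}:\enco{\dual{T'}}{\ct}\otimes\cdots$) and \Did{T-$\parl$} (for the input on $x$). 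The residual context is then $\Delta = \Delta', c_{xz}:A$ for the appropriate $A$. The output case is dual, with the input on $c_{zx}$ handled symmetrically. The branching, selection, and $\nilT$ cases are immediate via the induction hypothesis, since they do not introduce bridging names and therefore preserve whatever $\Delta$ arises from their subcomponents.

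For \Cref{6.11-2}, I would induct on the forest structure $\mathcal{F}_\Psi(\Gamma)$ (\Cref{d:forest}), or equivalently on the definition of the sets $N_i$ in \Cref{def:charenvv}. The key observation is that the outermost restrictions $\res{\wt{c_{w_iw_j}}}$ introduced in \eqref{eq:charp} are designed precisely to close off every bridging session created by a parent--child dependency. So the induction should establish the stronger statement that each process in $N_i$ is typable against $\enco{\mathbf{T}_i}{\ct}$ (the types of the subtree rooted at $w_i:T_i$) with \emph{no} leftover free bridging names, because each $c_{w_iw_j}$ is bound by the restriction wrapping $Q$ and the $Q_j$. I would combine the typings of $Q \in \chrpv{T_i}{w_i}{\Psi}$ (from \Cref{6.11-1}, which supplies the output endpoint of each bridge in its $\Delta$) with those of the children $Q_j \in N_j$ (which supply the matching input endpoint) using Rule~\Did{T-$\cut$} once per bridging session, thereby hiding each $c_{w_iw_j}$. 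At the top level, the independent trees of the forest are composed via Rule~\Did{T-$\mix$}, exactly as in the original proof.

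The main obstacle I anticipate is bookkeeping the residual contexts $\Delta$ and matching them correctly across the cut applications in \Cref{6.11-2}: I must verify that the bridging-name assignment $c_{xz}:A$ produced on the input side of \Cref{6.11-1} is the exact dual of the assignment produced on the output side, so that \Cref{lem:dualenc} and Rule~\Did{T-$\cut$} apply cleanly to eliminate it. This is where the hypotheses that $\vdepset{\Psi}$ is \emph{simple} and induces a \emph{tree-like} (acyclic, forest) structure become essential: simplicity guarantees each bridge name $c_{w_iw_j}$ occurs in exactly one input position and one output position, and acyclicity guarantees the cuts can be applied in a well-founded order (leaves first, roots last) without circular dependencies that Rule~\Did{T-$\cut$} could not discharge. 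I expect the type-level duality check for the bridge to follow the same inductive pattern as \Cref{p:relia} and \Cref{lem:dualenc}, but confirming that the annotated-type erasure $\enco{\cdot}{\ct}$ ignores the position annotations $n$ (so that annotated and unannotated versions yield identical linear logic types) is a small lemma I would want to state explicitly before the main argument.
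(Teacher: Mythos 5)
Your overall strategy coincides with the paper's: induction on the structure of $T$ for Part~1 (inheriting the dependency-free cases from \Cref{lem: wt_characteristic}), and an induction over the forest $\mathcal{F}_\Psi(\Gamma)$ for Part~2, closing each bridge with Rule~\Did{T-$\cut$} and composing independent trees with Rule~\Did{T-$\mix$}. However, the strengthened invariant you propose for Part~2 is false as stated, and the induction would fail with it. You claim that every process in $N_i$ is typable ``with no leftover free bridging names, because each $c_{w_iw_j}$ is bound by the restriction wrapping $Q$ and the $Q_j$''. The restriction in \eqref{eq:charp} at node $i$ binds only the bridges to the \emph{children} of $\mathbf{T}_i$; when $\mathbf{T}_i$ is not a root, its characteristic process carries the endpoint of the bridge $c_{w_pw_i}$ to its parent $w_p$, which remains free in every member of $N_i$ and is bound only at the parent's level. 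Your own assembly step relies on exactly this: you cut $Q$ against the children $Q_j \in N_j$, ``which supply the matching input endpoint'' --- impossible if members of $N_j$ had no free bridging names. The invariant the paper actually proves is $P \cp \enco{\Gamma_i}{\ct}, \Delta_i$, where $\Delta_i$ is empty at a root and otherwise consists of exactly the single assignment $c_{w_pw_i}:S_i$, with $S_i$ dual by construction to the type the parent's characteristic process assigns to that bridge. With that correction (and with simplicity and acyclicity of $\vdepset{\Psi}$ playing the role you correctly identify), your leaves-first cut-then-mix assembly goes through.

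In Part~1 there is a smaller omission. When typing $\bout{c_{xz}}{w}$ (and dually the input on $c_{zx}$) you write ``$\Delta = \Delta', c_{xz}:A$ for the appropriate $A$'', implicitly assuming $c_{xz}$ is fresh for the continuation. But the continuation $Q \in \chrpv{S}{x}{\Psi}$ may itself already use the same bridging name (a deeper dependency between the same pair of sessions), so the paper splits into two sub-cases: if $c_{xz}\notin\dom(\Theta)$, one first weakens $Q$ with $c_{xz}:\bullet$ via Rule~\Did{$\bot$} (available thanks to mix, since $\bot=\one=\bullet$) and derives $c_{xz}:\enco{\dual{T'}}{\ct}\otimes\bullet$; if $c_{xz}:\enco{U}{\ct}\in\Theta$, one derives $c_{xz}:\enco{\dual{T'}}{\ct}\otimes\enco{U}{\ct}$. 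Without this split, Rule~\Did{T-$\otimes$} cannot be applied in the fresh case. Your closing remark that the erasure $\enco{\cdot}{\ct}$ ignores position annotations is sound and is indeed needed; the paper states it as \Cref{lem:myplus}.
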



\begin{proof}
We consider both parts separately, assuming $\vdepset{\Psi} \neq \emptyset$: the case $\vdepset{\Psi} = \emptyset$ corresponds to \Cref{lem: wt_characteristic}.

\bigskip

\noindent{\underline{\textbf{\Cref{6.11-1}.}}} 
The proof is by induction on the structure of  $T$. 
Thus, there are five cases to consider. The most interesting cases are those for input and output session types with value dependencies; by \Cref{def:charprocessv}, the other cases (input/output without value dependencies, selection, branching) are as in the proof of \Cref{lem: wt_characteristic} (\Cref{applem: wt_characteristic}).

\begin{enumerate}
\item
$\chrpv{\nilT}{x}{\Psi} =  \big\{ \nil \big\}$.
This case follows easily by the fact that $\encob{\nilT}{\ct}= \bullet$ and by Rule~\Did{T-\one}.

\smallskip

\item
$\chrpv{\wn^n T.S}{x}{\Psi} = \big\{x(y).\bout{c_{xz}}{w}.(\linkr{y}{w} \para Q) \suchthat Q \in \chrpv{S}{x}{\Psi}\big\}$,
if $\vdep{x^n}{z^m} \in \vdepset{\Psi}$.
\\
\noindent
By induction hypothesis, 
$Q\cp \Theta, x:\encob{S}{\ct}$, for some context $\Theta$.
Here we distinguish the following two sub-cases, depending on whether name $c_{xz}$ already occurs in $Q$.

\noindent
\begin{enumerate}
\item ${c_{xz}} \notin \dom(\Theta)$.
By Rules~\Did{{T-$\otimes$}} and \Did{{T-$\parl$}}  applied in sequence we have:
\[
\inferrule*[right=\Did{$\parl$}]
{
	\inferrule*[right=\Did{$\tensor$}]
	{\inferrule*[left=\Did{T-$\tid$}]{ }{\linkr{y}{w}\cp y: \encob{T}{\ct}, w:\encob{\dual T}{\ct}}
	\\
	\inferrule*[right=\Did{$\bot$}]{Q\cp x:\encob{S}{\ct},\Theta}{Q\cp x:\encob{S}{\ct},\Theta, {c_{xz}}: \bullet}}
	{\bout{c_{xz}}{w}.(\linkr{y}{w} \para Q) \cp y: \encob{T}{\ct}, x:\encob{S}{\ct}, \Theta, {c_{xz}} : \encob{\dual T}{\ct} \tensor \bullet}
}
{x(y).\bout{c_{xz}}{w}.(\linkr{y}{w} \para Q)\cp \Theta,x:\encob{T}{\ct}\parl \encob{S}{\ct}, {c_{xz}} : \encob{\dual T}{\ct} \tensor \bullet}
\]
By encoding of types in \Cref{f:enctypesct}, we have
$\encob{\wn T.S}{\ct}  = \encob{T}{\ct} \parl \encob{S}{\ct}$. 
Hence, in this case $\Delta = \Theta, {c_{xz}} : \encob{\dual T}{\ct} \tensor \bullet$.

\item $({c_{xz}}: \encob{U}{\ct})\in \Theta$. Let $\Theta = \Theta', {c_{xz}}: \encob{U}{\ct}$. Then, we have the following derivation:
\[
\inferrule*[right=\Did{$\parl$}]
{
	\inferrule*[right=\Did{$\tensor$}]
	{\inferrule*[left=\Did{T-$\tid$}]{ }{\linkr{y}{w}\cp y: \encob{T}{\ct}, w:\encob{\dual T}{\ct}}
	\\ Q\cp x:\encob{S}{\ct},\Theta', {c_{xz}}: \encob{U}{\ct}}
	{\bout{c_{xz}}{w}.(\linkr{y}{w} \para Q) \cp
	y: \encob{T}{\ct}, x:\encob{S}{\ct}, \Theta', {c_{xz}} : \encob{\dual T}{\ct} \tensor \encob{U}{\ct}}
}
{x(y).\bout{c_{xz}}{w}.(\linkr{y}{w} \para Q)\cp
\Theta',x:\encob{T}{\ct}\parl \encob{S}{\ct}, {c_{xz}} : \encob{\dual T}{\ct} \tensor \encob{U}{\ct}}
\]
By encoding of types in \Cref{f:enctypesct}, we have
$\encob{\wn T.S}{\ct}  = \encob{T}{\ct} \parl \encob{S}{\ct}$.
Hence, in this case $\Delta = \Theta', {c_{xz}} : \encob{\dual T}{\ct} \tensor \encob{U}{\ct}$.
\end{enumerate}

\smallskip

\item
$\chrpv{\oc^n T.S}{x}{\Psi} = \big\{c_{zx}(y).\bout{x}{w}.(\linkr{y}{w} \para Q) \suchthat Q \in \chrpv{S}{x}{\Psi}\big\}$
if $\vdep{z^m}{x^n} \in \vdepset{\Psi}$.
\\
\noindent By induction hypothesis,  $Q\cp x:\encob{S}{\ct},\Theta$, for some typing $\Theta$.
Here we distinguish the following two sub-cases, depending on whether name $c_{xz}$ already occurs in $Q$.

\begin{enumerate}
\item ${c_{xz}} \notin \dom(\Theta)$.
By Rules \Did{{T-$\otimes$}} and \Did{{T-$\parl$}}  applied in sequence we have:
\[
\inferrule*[right=\Did{$\parl$}]
{
 {
	\inferrule*[right=\Did{$\tensor$}]
	{\inferrule*[left=\Did{T-$\tid$}]{ }{\linkr{y}{w}\cp y: \encob{T}{\ct}, w:\encob{\dual T}{\ct}}
	\\
	\inferrule*[right=\Did{$\bot$}]{Q\cp x:\encob{S}{\ct},\Theta}{Q\cp x:\encob{S}{\ct},\Theta, {c_{xz}}:\bullet}}
	{\bout{x}{w}.(\linkr{y}{w} \para Q) \cp y: \encob{T}{\ct}, \Theta, x:\encob{\dual T}{\ct}\tensor \encob{S}{\ct}, {c_{xz}}:\bullet}
}
}
{c_{zx}(y).\bout{x}{w}.(\linkr{y}{w} \para Q) \cp \Theta, x:\encob{\dual T}{\ct}\tensor \encob{S}{\ct},  {c_{xz}}:\encob{T}{\ct}\parl\bullet}
\]
By encoding of types in \Cref{f:enctypesct}, we have
$\encob{\oc T.S}{\ct} =  \encob{\dual T}{\ct} \otimes \encob{S}{\ct}$.
Hence, in this case $\Delta = \Theta, {c_{xz}} : \encob{T}{\ct}\parl\bullet$. 

\item $({c_{xz}}: \encob{U}{\ct})\in \Theta$. Let $\Theta = \Theta', {c_{xz}}: \encob{U}{\ct}$. Then, we have the following derivation:
\[
\inferrule*[right=\Did{$\parl$}]
{
 {
	\inferrule*[right=\Did{$\tensor$}]
	{\inferrule*[left=\Did{T-$\tid$}]{ }{\linkr{y}{w}\cp y: \encob{T}{\ct}, w:\encob{\dual T}{\ct}}
	\\
	{Q\cp x:\encob{S}{\ct},\Theta', {c_{xz}}:\encob{U}{\ct}} }
	{\bout{x}{w}.(\linkr{y}{w} \para Q) \cp y: \encob{T}{\ct}, \Theta', x:\encob{\dual T}{\ct}\tensor \encob{S}{\ct}, {c_{xz}}:\encob{U}{\ct}}
}
}
{c_{zx}(y).\bout{x}{w}.(\linkr{y}{w} \para Q) \cp \Theta', x:\encob{\dual T}{\ct}\tensor \encob{S}{\ct},  {c_{xz}}:\encob{T}{\ct}\parl\encob{U}{\ct}}
\]
By encoding of types in \Cref{f:enctypesct}, we have
$\encob{\oc T.S}{\ct} =  \encob{\dual T}{\ct} \otimes \encob{S}{\ct}$.
Hence, in this case $\Delta = \Theta', {c_{xz}}:\encob{T}{\ct}\parl\encob{U}{\ct}$.
\end{enumerate}

\smallskip

\item
$\chrpv{\branch lS}{x}{\Psi} = \big\{\branching xlP \suchthat \forall i\in I.\ P_i\in \chrpv{S_i}{x}{\Psi}\big\}$. This case follows the same lines as the corresponding case in \Cref{lem: wt_characteristic}.

\smallskip

\item
$\chrpv{\select lS}{x}{\Psi} = \bigcup_{i\in I}\big\{ \selection x{l _i}.{P_i} \suchthat P_i\in \chrpv{S_i}{x}{\Psi}\big\}$. This case also follows the same lines as the corresponding one in \Cref{lem: wt_characteristic}.
\end{enumerate}

\bigskip

\noindent{\underline{\textbf{\Cref{6.11-2}.}}} 
By \Cref{def:charenvv}, every $P \in \chrpv{\Gamma}{}{\Psi}$ is of the form
$
\prod_{i \in \{1,\ldots, k\}} P_i  
$,
with $P_i \in N_i$ (cf.~\eqref{eq:charp}).
We first prove that for every 
$P \in N_i$, we have  $P \cp \encob{\Gamma_i}{\ct}, 
\Delta$, where 
$\Gamma_i \subseteq \Gamma$
and
$\Delta$ is possibly empty.
$\Gamma_i$ contains $w_i:T_i$ and  assignments connected to $w_i:T_i$ via dependencies.
Recall that $N_i$ is defined for each $w_i:T_i$.
We then proceed by induction on $m$, the height of $\mathbf{T}_i$ (the tree associated to $w_i:T_i$):

\begin{itemize}
\item Base case, $m = 0$: Then, 
since $\mathbf{T}_i$ does not have children, 
by definition of $N_i$ (cf.~\eqref{eq:charp}), the sets $N_i$  and $\chrpv{{T_i}}{w_i}{\Psi}$ coincide.
Hence, \Cref{lem: wt_characteristic_new}(1) ensures 
that  $P \cp w_i:\encob{T_i}{\ct}$, for all $P \in N_i$.

\item Inductive case, $m > 0$. 
Then, $\mathbf{T}_i$ has $m$ levels; by \Cref{def:charenvv}, every $P_m \in N_i$ 
is of the form
$$
P_m = \res{\wt{c_{w_iw_j}}}(Q_0 \para \prod_{j \in \{1,\ldots, l\}} Q_j)
$$
where 
$Q_0 \in \chrpv{{T_i}}{w_i}{\Psi}$ 
and 
$Q_j \in N_j$, for some $\mathbf{T}_1, \ldots, \mathbf{T}_l \in child(\mathbf{T}_i)$.
Thus,  
$w_i$ has a dependency with each of $w_1, \ldots, w_l$, implemented along $c_{w_iw_1}, \ldots, c_{w_iw_l}$. 
In fact, by \Cref{lem: wt_characteristic_new}(1):
$$
Q_0 
\cp
w_i:\encob{T_i}{\ct}, \Delta_0
$$
where 
$\Delta_0 = c_{w_iw_1}:T_1, \ldots, c_{w_iw_l}:T_l$.

We must prove that $P_m \cp \encob{\Gamma_i}{\ct}$. 
Recall that 
processes $Q_1, \ldots, Q_l$ are associated to trees $\mathbf{T}_1, \ldots, \mathbf{T}_l$;
now, because $child(\mathbf{T}_i) = \{\mathbf{T}_1, \ldots, \mathbf{T}_l\}$, the
height of these trees is \emph{strictly} less than that of $\mathbf{T}_i$.
Therefore, by IH we have
$$Q_j \cp \encob{\Gamma_j}{\ct}, \Delta_j$$
for each $j \in \{1,\ldots, l\}$, 
where $\Delta_j = c_{w_iw_j}: S_j$, for some $S_j$. By construction, $S_j$ is dual to $T_j$.
(There are no other assignments in $\Delta_j$, because $\mathbf{T}_j$ is a proper tree whose only visible connection is with 
its parent $\mathbf{T}_i$.)
Also, $ \encob{\Gamma_i}{\ct} = w_i:\encob{T_i}{\ct} \cup \bigcup_{j \in \{1,\ldots, l\}} 
 \encob{\Gamma_j}{\ct}$.
Therefore,  $P_m$ can be typed by  composing $Q_0$ with each of $Q_1, \ldots, Q_l$, using $l$ successive applications of Rule~$\Did{T-$\cut$}$:
$$
\res{c_{w_iw_l}}( \cdots \res{c_{w_iw_2}}(\res{c_{w_iw_1}}(Q_0 \para Q_1) \para Q_2) \para \cdots \para Q_l)
\cp 
\encob{\Gamma_i}{\ct}
$$
\end{itemize}
This proves that $P_i \in N_i$ implies $P_i \cp \encob{\Gamma_i}{\ct}$.
The thesis, 
$
\prod_{i \in \{1,\ldots, k\}} P_i  \cp \encob{\Gamma}{\ct}
$, 
then follows easily by composing $P_1, \ldots, P_k$: since they are all typable under disjoint 
contexts ($\encob{\Gamma_1}{\ct}, \ldots, \encob{\Gamma_k}{\ct}$, respectively), it suffices to  use Rule~\Did{T-$\mix$}.
\end{proof}

The main difference between \Cref{lem: wt_characteristic_new} (\Cref{6.11-1})
and 
the corresponding result in \Cref{s:enco} (i.e., \Cref{lem: wt_characteristic})
is the typing context $\Delta$, which describes the bridging sessions: if there are no dependencies then no bridging sessions occur and $\Delta$ will be empty.
\Cref{lem: wt_characteristic_new} (\Cref{6.11-2}) is exactly as 
in \Cref{lem: wt_characteristic}: even though characteristic processes for a typing context (\Cref{def:charenvv}) are obtained by composing characteristic processes with bridging sessions, these 
sessions are closed under restriction and so no additional typing context is needed. 

\subsection{Catalyzers Refined with Value Dependencies}
\label{app:typability_catal_rev}

To revise \Cref{def:catalyser}, we use some 
 convenient notations:

\begin{notation}
\label{not:catal}
We shall use the following notations:
\begin{itemize}
\item Let $y^n$ denote the $n$-th occurrence of name $y$ with respect to top-level.

\item Let $\alpha, \alpha', \ldots$ denote session prefixes of the form $\wn^n S$ or $\oc^n S$.

\item Given an annotated session type $T$, we 
use a context on types $K$ to highlight a specific prefix $\alpha$ in $T$, i.e., we 
write $T = K[\alpha]$.

\item Given an annotated session type $T$, we write 
$T \setminus (\alpha_1, \cdots, \alpha_k)$
to denote the type obtained from $T$ by removing  
 prefixes $\alpha_1, \cdots, \alpha_k$ ($k \geq 1$).
\end{itemize}
\end{notation}

The following auxiliary definition explains how to ``shorten'' session types based on value dependencies.
Its precise purpose will become apparent shortly.

%

\begin{definition}[Refined Characteristic Processes]
\label{d:rcp}
 Let  $\Gamma = \{x_1:T_1, \ldots, x_n:T_n\}$
and $\Psi$ be a typing context and a dependency context, respectively.
\begin{enumerate}
\item Given  $\Psi$ and a name $x_i$, we define the sets of types
\begin{align*} 
\mathtt{out}_\Psi(x_i) & = \{S^m \,|\, \vdep{x_j^n}{x_i^m}:S \in \vdepset{\Psi}\}
\\
\mathtt{in}_\Psi(x_i)  & = \{T^n \,|\, \vdep{x_i^n}{x_j^m}:T \in \vdepset{\Psi}\}
\end{align*} 

\item 
We write $\prod_{i \in \{1, \ldots, n\}} \mathsf{Q}_i$ to denote 
the \emph{refined characteristic processes under $\Psi$} where, 
each $\mathsf{Q}_i$ satisfies:
$$
\begin{cases}
\mathsf{Q}_i \in \chrpv{T_i}{x_i}{\Psi}
& 
\text{If $\mathtt{out}_\Psi(x_i) = \mathtt{in}_\Psi(x_i) = \emptyset$}
\\
\mathsf{Q}_i \in \chrpv{T_i\setminus(\wn^{l_1} T_1, \cdots, \wn^{l_m} T_m) }{x_i}{\Psi}
& 
\text{If $\mathtt{out}_\Psi(x_i) = \emptyset \land 
\mathtt{in}_\Psi(x_i) = \{T_1^{l_1}, \ldots, T_m^{l_m}\}$}
\\
\mathsf{Q}_i \in \chrpv{T_i\setminus(\oc^{h_1} S_1) }{x_i}{\Psi}
&
\text{If $\mathtt{out}_\Psi(x_i) = \{S_1^{h_1}\} \land \mathtt{in}_\Psi(x_i) = \emptyset$}
\\
\mathsf{Q}_i \in \chrpv{T_i\setminus(\oc^{h_1} S_1, \wn^{l_1} T_1, \cdots, \wn^{l_m} T_m) }{x_i}{\Psi}
&
\text{If $\mathtt{out}_\Psi(x_i) = \{S_1^{h_1}\} \land \mathtt{in}_\Psi(x_i) = \{T_1^{l_1}, \ldots, T_m^{l_m}\}$}
\end{cases}
$$
\end{enumerate}
\end{definition}

Notice that because we assume $\vdepset{\Psi}$ is simple and tree-like, $\mathtt{out}_\Psi(x_i)$ is either empty or a singleton---a node of the tree is either the root or has exactly one parent.
We illustrate \Cref{d:rcp} with an example:

\begin{example}
Considering $\Gamma$ and $\vdepset{\Psi}$ as in \Cref{ex:vd}, we have:
\begin{align*}
\mathtt{out}_\Psi(x_1)  & =  \emptyset & \mathtt{in}_\Psi(x_1) &  = \{S\}
\\
\mathtt{out}_\Psi(x_2) & =  \{S\} & \mathtt{in}_\Psi(x_2) &  = \{T_1,T_2\}
\\
\mathtt{out}_\Psi(x_3) & =  \{T_1\} & \mathtt{in}_\Psi(x_2) &  = \emptyset
\\
\mathtt{out}_\Psi(x_4) &  =  \{T_2\} & \mathtt{in}_\Psi(x_2) &  = \emptyset
\end{align*}
Because all types in $\Gamma$ are involved in value dependencies, 
$\mathsf{Q}_i \in \chrpv{\nilT}{x_i}{\Psi}$, for all $i \in \{1, \ldots, 4\}$.
\end{example}


We now finally ready to revise \Cref{def:catalyser}:

\begin{definition}[Catalyzers, with Value Dependencies] 
\label{def:catalyserv}
Consider typing contexts $\Gamma,\Gamma'$ and a simple dependency context $\Psi$.
Suppose 
 $\Gamma = \{x_1:T_1, \ldots, x_n:T_n\}$
 and
$|\vdepset{\Psi}| = k$, with $\dom{(\Psi)} \subseteq \dom{(\Gamma)}$.
Assume that 
$x_i:T_i \in \Gamma$ implies  $y_i:\dual{T_i} \in \Gamma'$, for all 
$x_i \in \dom{(\Psi)}$.

The  set of  \emph{catalyzers of $\Gamma, \Psi$ with respect to $\Gamma'$}, noted 
$\mathcal{C}_{\Gamma | \Gamma'}^\Psi$, 
is defined as  follows:
$$
\{
C[\cdot]  \suchthat  C[\cdot] = 
(\nub \wt{x})((\nub \wt{u})([\cdot]\sigma_1 \cdots \sigma_k \para F_1 \para \cdots \para F_k)  
\para \mathsf{Q}_1 \para \cdots \para \mathsf{Q}_n)
\}
$$
where, for the $h$-th element in $\vdepset{\Psi}$, denoted 
  $\vdep{x_i^{l_1}}{x_j^{l_2}}:S$ with $x_i:T_i$ and $x_j:T_j$, we have:

\begin{itemize}
\item  
$\sigma_h = \substj{u_h}{{y_i}^{n_1}}\substj{u_{h}}{{y_j}^{n_2}}$, 
where $n_1$ and $n_2$ are such that
$\dual{T_i} = K_i[\oc^{n_1} S]$
and 
$\dual{T_j} = K_j[\wn^{n_2} S]$, respectively, for some type contexts $K_i, K_j$, 
and $y_i:\dual{T_i}, y_j:\dual{T_j} \in \Gamma'$.
\item  $F_h = u_h(y).\bout{u_{h}}{w}.(\linkr{w}{y} \para \nil)$.
\end{itemize}
In $\sigma_h$ and $F_h$, name $u_h$ is fresh.
Also, $\mathsf{Q}_1, \ldots, \mathsf{Q}_n$
are refined under $\Psi$ as in
\Cref{d:rcp}.
\end{definition}

Intuitively, the above definition isolates the value dependencies from the process in the hole by first creating a separate session $u_k$ using the substitution $\sigma_k$; the dependency is then preserved using the forwarder process $F_k$. Isolating dependencies in this way modifies the session types implemented by the process in the hole; to account for these modifications, we consider characteristic processes based on \emph{shortened} session types---session types in which the prefixes involved in a dependency have been removed (cf. \Cref{d:rcp}). 


If there are no dependencies (i.e., if $\vdepset{\Psi}$ is empty) then we recover the  definition from \Cref{s:enco} (\Cref{def:catalyser}): indeed, without dependencies there are no $\sigma_h$ and $F_h$, and 
 the condition $\mathtt{out}_\Psi(x_i) = \mathtt{in}_\Psi(x_i) = \emptyset$
follows trivially, for all $x_i$ such that $x_i : T_i \in \Gamma$.





\begin{example}[\Cref{def:catalyserv} at Work]
Consider process
$\res{x_1y_1}\res{x_2y_2}(P_1 \pp P_2)$, 
where $P_1$ and $P_2$ are defined as:
\begin{align*}
\underbrace{y_1:\oc^0 \mathsf{Int}.\wn^1 \mathsf{Bool}.\nilT, y_2:\wn^2 \mathsf{Int}.\wn^3 \mathsf{Bool}.\nilT}_{\Gamma_2}\compsi \Psi_2 
& \s \underbrace{\bout{y_1}{2}.y_1(z).y_2(x).y_2(w).\nil}_{P_2}
\\
\underbrace{x_1:T_1, x_2:T_2}_{\Gamma_1}
\compsi \Psi_1 
& \s \underbrace{x_1(u).\bout{x_1}{true}. \bout{x_2}{u}.\bout{x_2}{false}. \nil}_{P_1}
\end{align*}
where 
$T_1 = \wn^0 \mathsf{Int}.\oc^1 \mathsf{Bool}.\nilT$ and $T_2:\oc^2 \mathsf{Int}.\oc^3 \mathsf{Bool}.\nilT$.
$P_1$ implements the dependency $\vdepset{\Psi_1} = \{\vdep{x_1^0}{x_2^2}:\mathsf{Int}\}$:
 the received value by $y_2(x)$ in $P_2$ should be $2$, not some arbitrary $\mathsf{Int}$.

We define context $C[\cdot] \in \mathcal{C}_{\Gamma_1(y_1,y_2) | \Gamma_2}^{\Psi_1(y_1,y_2)}$ 
to host 
the translation of $P_2$ in \lcp, denoted $Q_2$.
We write
$\Psi_1(y_1,y_2)$
and
$\Gamma_1(y_1,y_2)$ to denote a name renaming on $\Psi_1$ and $\Gamma_1$, needed 
for consistency with the single restriction in \lcp. 
Types in $\Gamma_1, \Gamma_2$ are annotated with the positions of their associated prefixes wrt the top-level.
Here, relevant positions in $P_2$ are 0 (for $y_1$) and 2 (for $y_2$): these are the prefixes that interact with the dependency in $P_1$.
Note that these positions in $P_2$ are not stored in $\Gamma_1$, but in $\Gamma_2$.
(In this simple example, the positions for $x_1, x_2$ in $P_1$ happen to coincide with those for $y_1, y_2$ in $P_2$; in the general case, they do not coincide.)
Using \Cref{not:catal}, we have 
$\dual{T_1} = K_1[!^0 \mathsf{Int}]$ and $\dual{T_2} = K_2[?^2 \mathsf{Int}]$, 
with type contexts $K_1  = [\cdot].\wn^1 \mathsf{Bool}.\nilT, K_2 = [\cdot].\wn^3 \mathsf{Bool}.\nilT$.
Thus, $n_1 = 0, n_2 = 2$ and the required substitution is
$\sigma_1 = \substj{u_1}{y^{0}_1}\substj{u_1}{y^{2}_2}$.

We can now define $C[\cdot]$ by considering the value dependence from $P_1$ as captured by $\Psi_1$. 
Expanding \Cref{def:catalyserv} we have:
$$C[\cdot] = 
(\nub x_1,x_2)((\nub u_1)([\cdot]\sigma_1 \para F_1 ) \para \mathsf{Q}_1 \para  \mathsf{Q}_2)
$$
with $\mathsf{Q}_1, \mathsf{Q}_2$ as in \Cref{d:rcp}.
As we will see, \Cref{def:typed_encr} translates $P_2$ into $Q_2$:
$$
y_1: \mathsf{Int} \otimes \mathsf{Bool} \parl \bullet,
y_2: \mathsf{Int} \parl \mathsf{Bool} \parl \bullet
\cp
 \dual{y_1}(z).(\linkr{2}{z}\para y_1(z).y_2(w).y_2(x).\nil)
$$
We now illustrate how $Q_2$ ``fits into'' $C[\cdot]$. We first apply $\sigma_1$ to $Q_2$:
$$
y_1: \mathsf{Bool} \parl \bullet,
y_2: \mathsf{Bool} \parl \bullet,
u_1: \mathsf{Int} \otimes \mathsf{Int} \parl \bullet
\cp
 \dual{u_1}(z).(\linkr{2}{z}\para y_1(z).u_1(w).y_2(x).\nil)
$$
By renaming the two prefixes involved in the dependency, we have isolated it using a new session $u_1$.
The types for $y_1, y_2$ have been shortened.
The only forwarder needed is $F_1$:
$$u_1(y).\bout{u_{1}}{w}.(\linkr{w}{y} \para \nil) \cp u_1:  \mathsf{Int} \parl \mathsf{Int} \otimes \bullet$$
Clearly, $Q_2\sigma_1$ and $F_1$ can be composed with a cut:
$$\res{u_1}(Q_2\sigma_1 \para F_1) \cp y_1:\wn \mathsf{Bool}.\nilT, y_2:\wn \mathsf{Bool}.\nilT
$$
We close the sessions on $y_1$ and $y_2$ using 
$R_1 \in \chrpv{\wn \mathsf{Bool}.\nilT}{y_1}{\Psi}$
and 
$R_2 \in \chrpv{\wn \mathsf{Bool}.\nilT}{y_2}{\Psi}$:
$$
C[Q_2] = \res{y_2}(\res{y_1}(\res{u_1}([Q_2]\sigma_1  \para F_1) \para R_1) \para R_2)
$$
Hence, $C[\cdot]$ closes $Q_2$ by preserving the dependency implemented by $P_1$.
\end{example}

Using this new definition, we have 
the following lemma, which is the analog of \Cref{lem:typability_catal} (Page~\pageref{lem:typability_catal}) extended with value dependencies:

\begin{lemma}[Catalyzers with Value Dependencies Preserve Typing]
\label{lem:typability_catal_rev}
Let $P\cp \encob{\Gamma}{\ct}, \encob{\Gamma'}{\ct}$, such that $\Gamma_1, \Gamma_2$ are typing contexts
with $\Gamma_1 = \dual\Gamma$.
Let $\Psi$ be a simple dependency context 
such that $\dom{(\Psi)} \subseteq \dom{(\Gamma_1)}$.
Assume that $x_i:T_i \in \Gamma_1$ implies  $y_i:\dual{T_i} \in \Gamma_2$, for all 
$x_i \in \dom{(\Psi)}$.
Then, by letting $C[\cdot] \in \catdep C{\Gamma_1 | \Gamma_2}\Psi$ (as in \Cref{def:catalyserv}), we have 
$C[P]\cp  \encob{\Gamma'}{\ct}$.
\end{lemma}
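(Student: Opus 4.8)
\textbf{Proof plan for \Cref{lem:typability_catal_rev}.}
The plan is to trace the structure of the catalyzer context $C[\cdot] \in \catdep C{\Gamma_1 | \Gamma_2}\Psi$ given by \Cref{def:catalyserv}, and to show that each of its constituent pieces is well-typed in \lcp, so that the successive restrictions and parallel compositions in $C[\cdot]$ close off exactly the types in $\enco{\Gamma_1}{\ct}$ (and the bridging sessions introduced by the $\sigma_h$), leaving only $\enco{\Gamma'}{\ct}$. Recall that $C[P]$ has the shape
\begin{align*}
(\nub \wt{x})\big((\nub \wt{u})(P\sigma_1 \cdots \sigma_k \para F_1 \para \cdots \para F_k)
\para \mathsf{Q}_1 \para \cdots \para \mathsf{Q}_n\big).
\end{align*}
First I would handle the degenerate case $\vdepset{\Psi} = \emptyset$: here there are no substitutions $\sigma_h$ and no forwarders $F_h$, the $\mathsf{Q}_i$ reduce to ordinary characteristic processes (by the observation following \Cref{d:rcp}), and the statement collapses to \Cref{lem:typability_catal}, which we may assume. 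This isolates the genuinely new content in the dependency-carrying case $k = |\vdepset{\Psi}| \geq 1$.

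For the main case, I would proceed in three layers, from the inside out. The innermost layer is to check that applying the substitutions $\sigma_1 \cdots \sigma_k$ to $P$ yields a process typable under a context in which, for each dependency $\vdep{x_i^{l_1}}{x_j^{l_2}}:S$, the two assignments $y_i:\dual{T_i}$ and $y_j:\dual{T_j}$ in $\enco{\Gamma_1}{\ct}$ have been merged onto the fresh name $u_h$. Since $\sigma_h = \substj{u_h}{{y_i}^{n_1}}\substj{u_h}{{y_j}^{n_2}}$ renames the two prefixes (at positions $n_1, n_2$ chosen so that $\dual{T_i} = K_i[\oc^{n_1}S]$ and $\dual{T_j} = K_j[\wn^{n_2}S]$), I would invoke \Cref{lem:subst_cp} (the substitution lemma for CP) to justify that $P\sigma_1\cdots\sigma_k$ is typable, with the type of $u_h$ being the "spliced" type obtained from the output prefix of $\dual{T_i}$ and the input prefix of $\dual{T_j}$ — precisely the type that the forwarder $F_h = u_h(y).\bout{u_h}{w}.(\linkr{w}{y} \para \nil)$ provides a dual for. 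I would then type each $F_h$ directly using Rules~\Did{T-$\tid$}, \Did{T-$\parl$}, \Did{T-$\otimes$}, and \Did{T-$\bot$} (mirroring the calculation in \Cref{lem: wt_characteristic_new}), and compose $P\sigma_1\cdots\sigma_k$ with the $F_h$ via $k$ applications of Rule~\Did{T-$\cut$} over the names $\wt{u}$, discharging each bridging session $u_h$. The resulting process is typable under the context obtained from $\enco{\Gamma_1}{\ct},\enco{\Gamma'}{\ct}$ by \emph{shortening} each $\dual{T_i}, \dual{T_j}$ — removing exactly the prefixes that were diverted onto the $u_h$.

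The outer two layers are then routine. The middle layer composes the shortened process with the refined characteristic processes $\mathsf{Q}_1, \ldots, \mathsf{Q}_n$: by \Cref{lem: wt_characteristic_new}(1) each $\mathsf{Q}_i$ inhabits the \emph{shortened} session type $T_i \setminus(\cdots)$ dictated by $\mathtt{out}_\Psi(x_i)$ and $\mathtt{in}_\Psi(x_i)$ (cf.\ \Cref{d:rcp}), so that the shortened types left over from the inner layer match exactly the types the $\mathsf{Q}_i$ provide duals for. The outermost layer applies the restrictions $(\nub \wt{x})$ via Rule~\Did{T-$\cut$}, closing the sessions $\wt{x}$ shared between the hole-process and the $\mathsf{Q}_i$, and leaving precisely $\enco{\Gamma'}{\ct}$ as the final context. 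The main obstacle I anticipate is the bookkeeping in the innermost layer: one must verify that the position annotations $n_1, n_2$ computed from $\dual{T_i}, \dual{T_j}$ genuinely single out dual prefixes (so that the type of $u_h$ is well-formed and $F_h$ types against it), and that after shortening, the leftover prefixes on the $y_i$ align exactly with the refined characteristic processes — this alignment is where the assumptions that $\vdepset{\Psi}$ is \emph{simple} and \emph{tree-like} (ensuring $\mathtt{out}_\Psi(x_i)$ is at most a singleton and no prefix is diverted twice) are essential. I would therefore structure the argument as an induction on $k$, peeling off one dependency (and its $\sigma_h, F_h$ pair) at a time, so that the shortening and the re-composition with the $\mathsf{Q}_i$ can be tracked incrementally rather than all at once.
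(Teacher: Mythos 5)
Your proposal is correct and its core type computations coincide with the paper's: the inner calculation you describe---typing $P\sigma_1\cdots\sigma_k$ under a context where each dependency's two prefixes are spliced onto a fresh $u_h$, typing each forwarder $F_h$ against the dual of that spliced type, discharging the $\wt{u}$ with Rule~\Did{T-$\cut$}, and then closing the shortened sessions with the refined characteristic processes $\mathsf{Q}_i$ of \Cref{d:rcp}---is exactly the paper's explicit $k=2$ sub-case with one dependency. The difference is organizational. The paper argues by induction on the \emph{size of the typing context} $\Gamma$, with base cases $k=1,2$ worked in full and an inductive step that extends a catalyzer $C_k[\cdot]$ to $C_{k+1}[\cdot]$ when a new assignment $x_{k+1}:T_{k+1}$ brings $0 \leq n \leq k$ new dependencies, handled by an inner induction on $n$ (adding $n$ new $\sigma_i, F_i$ pairs, one new $\mathsf{Q}_{k+1}$, and further shortening $n$ existing characteristic processes). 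You instead induct on the \emph{number of dependencies} $|\vdepset{\Psi}|$ and otherwise argue compositionally, layer by layer, following the syntactic shape of $C[\cdot]$ in \Cref{def:catalyserv}. Your route mirrors the definition more directly and avoids the double induction; the paper's route makes explicit how catalyzers are modified incrementally, which is what lets it track the re-shortening of already-placed $\mathsf{Q}_i$ when a late-arriving assignment introduces dependencies on earlier sessions---a situation your one-dependency-at-a-time peeling must also handle (removing a leaf dependency of the tree keeps the residue a catalyzer for a smaller $\Psi$, so tree-likeness covers you here, as you note). Both proofs rely identically on \Cref{lem: wt_characteristic_new} and on simplicity/tree-likeness of $\vdepset{\Psi}$.

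One caveat on your innermost layer: \Cref{lem:subst_cp} does not literally justify typing $P\sigma_h$, since $\sigma_h$ is a \emph{position-indexed} renaming of two specific prefix occurrences $y_i^{n_1}, y_j^{n_2}$ (splitting each session's type and diverting one prefix onto $u_h$), whereas \Cref{lem:subst_cp} covers ordinary whole-name substitution. You anticipated the bookkeeping obstacle correctly, and the paper's own proof (a sketch) asserts the typing of $P\sigma_1$ without proof at the same point, so this is not a gap relative to the paper's level of rigor---but a full write-up would need a dedicated lemma for this prefix-splitting substitution rather than an appeal to \Cref{lem:subst_cp}.
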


%

\begin{proof}[Proof (Sketch)]
We proceed by induction on $k$, the size of $\Gamma$.
\begin{itemize}
\item Case $k = 1$: Trivial, because no dependencies are possible and a single characteristic process suffices to close the single session in $\encob{\Gamma}{\ct}$ and establish the thesis.
\item Case $k = 2$: There are two sub-cases, depending on whether the two assignments are related by a value dependency.
\begin{enumerate}
\item
If there is no value dependency then two independent characteristic processes suffice 
close the two sessions in $\encob{\Gamma}{\ct}$ and establish the thesis. 

\item
Suppose there is a dependency  
$\vdep{x_i^{l_1}}{x_j^{l_2}}:S$, with $x_i:\dual{T_i}$ and $x_j:\dual{T_j}$ in $\Gamma_1$.
This dependency should be implemented by any $C_2 \in \catdep C{\Gamma_1 | \Gamma_2}\Psi$.
By \Cref{def:catalyserv}, any $C_2[\cdot] $ is of the form
$$
(\nub \wt{x})((\nub u_1)([\cdot]\sigma_1  \para F_1) \para \mathsf{Q}_i  \para \mathsf{Q}_j)
$$
where 
$\sigma_1$ is a substitution, $F_1$ is a forwarder process, and 
$\mathsf{Q}_i$ and $\mathsf{Q}_j$ are characteristic processes defined shortened versions of $\dual{T_i}$ and  $\dual{T_j}$.
Observe that 
$\mathtt{out}_\Psi(x_i) = \emptyset$,
$\mathtt{in}_\Psi(x_i) = \{S^{l_1}\}$,
$\mathtt{out}_\Psi(x_j) = \{S^{l_2}\}$,
and
$\mathtt{in}_\Psi(x_j) = \emptyset$.

We must prove that $C_2[P] \cp \encob{\Gamma'}{\ct}$.
We show this by successively incorporating all ingredients in $C_2$.
First, we spell out the typing for $P$:
\begin{equation*}
P \cp x_i:\encob{T_i}{\ct}, x_j:\encob{T_j}{\ct}, \encob{\Gamma'}{\ct} 
\end{equation*}
Then, applying the substitution, we have:
\begin{equation*}
P\sigma_1 \cp 
\encob{\Gamma_0}{\ct}, x_i:\encob{T'_i}{\ct}, x_j:\encob{T'_j}{\ct}, \encob{\Gamma'}{\ct}, 
u_1: {S} \otimes \dual{S} \parl  \bullet
\end{equation*}
where 
$T'_i = T_i \setminus (\oc^{l_2} S)$
and 
$T'_j = T_j \setminus (\wn^{l_2} S)$.
We now compose with the forwarder 
$F_1 = u_1(y).\bout{u_{1}}{w}.(\linkr{w}{y} \para \nil)$, which can be typed 
as
$ F_1 \cp u_1:  \dual{S} \parl  {S} \otimes \bullet$:
\begin{equation*}
\res{u_1}(P\sigma_1 \para F_1) 
\cp 
  x_i:\encob{T'_i}{\ct}, x_j:\encob{T'_j}{\ct}, \encob{\Gamma'}{\ct}
\end{equation*}
At this point, we  compose $\res{u_1}(P\sigma_1 \para F_1)$ with  the characteristic processes $\mathsf{Q}_i \in \chrpv{\dual{T_i}\setminus(\wn^{l_1} S)}{x_i}{\Psi}$
and 
$\mathsf{Q}_j \in \chrpv{\dual{T_j}\setminus(\oc^{l_1} S)}{x_i}{\Psi}$:
\begin{equation*}
\res{x_2}(\res{x_1}(\res{u_1}(P\sigma_1 \para F_1) \para \mathsf{Q}_i) \para \mathsf{Q}_j) 
\cp 
\encob{\Gamma'}{\ct}
\end{equation*}
and so the thesis  $C_2[P] \cp \encob{\Gamma'}{\ct}$ holds.
\end{enumerate}
\item Case $k > 2$ (inductive step): We assume that 
 $C_k[P] \cp \encob{\Gamma'}{\ct}$ holds, with $C_k[\cdot]$ closing sessions $
 \Gamma' = x_{1}:T_{1}, \ldots, x_{k}:T_{k}$. We  must show that 
  $C_{k+1}[P] \cp \encob{\Gamma'\setminus x_{k+1}:T_{k+1}}{\ct}$ holds, with $C_{k+1}[\cdot]$ closing sessions 
  $\Gamma'' = \Gamma', x_{k+1}:T_{k+1}$. 
  
  Considering $x_{k+1}:T_{k+1}$ entails adding  $0 \leq n \leq k$ additional value dependencies.
  We proceed by induction on $n$. The base case $n=0$ is straightforward: $C_{k+1}[\cdot]$ can be obtained by composing $C_{k}[\cdot]$ with the additional characteristic process $\mathsf{Q}_{k+1} \in \chrpv{\dual{T_{k+1}}}{x_{k+1}}{\Psi}$, using Rule~$\Did{T-$\cut$}$. 
  The inductive step ($n > 0$) proceeds by 
  observing that any context $C_{k+1}[\cdot]$ can be built out of $C_{k}[\cdot]$. 
  Exact modifications depend on $n$:
  adding $n$ new dependencies concern $n+1$ characteristic processes  in $C_{k}[\cdot]$. 
  Therefore, $C_{k+1}[\cdot]$ modifies $C_{k}[\cdot]$ by 
  (i) adding $n$ new $\sigma_i$ and $F_i$;
  (ii) adding a new $\mathsf{Q}_{k+1}$;
(ii)  modifying $n$  processes among $\mathsf{Q}_1, \ldots, \mathsf{Q}_k$ (already in $C_{k}[\cdot]$) to account for the fact that they need to be (further) shortened. 
With these modifications, the process $C_{k+1}[P]$ is typable following the lines of case $k = 2$.
\end{itemize}
\end{proof}

\subsection{Translating Session Processes into \lcp Exploiting Value Dependencies}
\label{app: type-preservation-values}

We may now give the revised translation with value dependencies, denoted $\encCPalt{\cdot}{}$.
We rely on \Cref{n:bn} as well as on
{the following extension of \Cref{not:para}, which annotates type judgments with ``discarded'' portions of contexts:}

\begin{notation}[Hidden/Bracketed Sessions, Extended]
\label{not:paraV}
Consider the following notations:
\begin{enumerate}[label=$\bullet$]
\item We shall write:
$$
{\Gamma_1,\Gamma_2}, x:\oc^0 T.S\compsi\Psi_1, \Phi_1  \restricted{\Psi_2, \Phi_2} \s\res{zy}\dual{x}\out y.(P_1\para P_2)
$$
rather than
$${\Gamma_1,\Gamma_2}, x{:}\oc^0 T.S\compsi\Psi_1, \Phi_1 \s\res{zy}\dual{x}\out y.(P_1\para P_2)$$
assuming both 
${{\Gamma_1},  z:\ov T\compsi\Psi_1, \Psi_2\s P_1}$
and
${{\Gamma_2},x:{S}\compsi \Phi_1,\Phi_2 \s P_2}$.

\item 
%
Similarly, we shall write
$${\Gamma_1\restricted{\wt{x:S}} \compsi \Psi_1 \restricted{\Psi_2} \circc 
\Gamma_2\restricted{\wt{y:T}} \compsi \Phi_1 \restricted{\Phi_2}}\s \res {\wt{x}\wt{y}:\wt{S}}(P_1\para P_2)$$
rather than
$${\Gamma_1,\Gamma_2} \compsi \Psi_1, \Phi_1 \s \res {\wt{x}\wt{y}:\wt{S}}(P_1\para P_2)$$
assuming 
${\Gamma_1, {\wt{x:S}}} \compsi\Psi_1,\Psi_2 \s P_1$, 
and
$\Gamma_2, \wt{y:T} \compsi \Phi_1, \Phi_2 \s P_2$,
and 
$S_i = \dual{T_i}$.

\end{enumerate}
\end{notation}



\begin{definition}[Translating into $\lcp$ with Value Dependencies]\label{def:typed_encr}
Let $P$ be such that $\Gamma \compsi\Psi \s P$ and $P \in \lkoba$.
The set of \lcp processes $\encCPalt{\Gamma \compsi\Psi \s P}{}$ is  defined inductively in \Cref{fig:typed_encr}.
\begin{figure}[!t]
\begin{mdframed}
\small
\begin{align*}
\encCPalt{\Gamma^\eende \compsi \emptyset \s \nil}					& \defeq  \big\{\nil \big\}
\\[2mm]
\encCPalt{\Gamma, x:\oc^0 T.S, v:T \compsi\Psi, \langle x, v, 0 \rangle \s\dual{x}\out v.P'}		& \defeq  \big\{ \dual{x}(z).(\linkr{v}{z}\para Q) \suchthat
Q \in \encCPalt{\Gamma, x:{S} \compsi\Psi\s P'}\big\}
\\[2mm]
\encCPalt{{\Gamma_1,\Gamma_2}, x:\oc^0 T.S\compsi\Psi_1, \Phi_1  [\Psi_2, \Phi_2] \s\res{zy}\dual{x}\out y.(P_1\para P_2)}&	\defeq\\
\big\{ \dual{x}(y).(Q_1\para Q_2) \ \suchthat\
Q_1 \in \encCPalt{{{\Gamma_1},  z:\ov T\compsi\Psi_1, \Psi_2\s P_1}}& \wedge\
Q_2 \in \encCPalt{{{\Gamma_2},x:{S}\compsi \Phi_1,\Phi_2\s P_2}}
\big\}
\\[2mm]
\encCPalt{\Gamma, x:\wn^0 T.S \compsi\Psi, (x, y, 0) \s x\inp {y:T}.P'}		& \defeq  \big\{x(y).Q \suchthat Q \in 
\encCPalt{{\Gamma,x:S,y:T \compsi\Psi }\s P'}	\big\}		
\\[2mm]
\encCPalt{\Gamma, x:\select lS \s\selection x{l_j}.P'}		& \defeq  \big\{\selection x{l_j}.Q \suchthat Q \in 
\encCPalt{{\Gamma,x:S_j \compsi\Psi }\s P'}\big\}								
\\[2mm]  			
\encCPalt{\Gamma, x:\branch lS \compsi\Psi\s \branching xlP}&  \defeq
\big\{\parbranching x{l_i}{Q_i} \suchthat Q_i \in \encCPalt{\Gamma,x:S_i \compsi\Psi \s P_i}\big\}
\\[2mm]  
\encCPalt{\Gamma_1\restricted{\wt{x:S}} \compsi \Psi_1 \restricted{\Psi_2(\wt{x})} \circc 
\Gamma_2\restricted{\wt{y:T}} \compsi \Phi_1 \restricted{\Phi_2(\wt{y})} \s 
\res {\wt{x}\wt{y}:\wt{S}}(P_1&\para P_2)}
\defeq\\
\big\{C_1[Q_1] \para G_2  \suchthat 
G_2 \in  \chrpv{\Gamma_2}{}{\Phi_1}\, , 
C_1 \in \mathcal{C}_{\,\wt{x:T} \,|\, \wt{y:S}}^{\Phi_2(\wt{x})}&\,,
Q_1 \in  \encCPalt{{\Gamma_1, {\wt{x:S}}} \compsi\Psi_1,\Psi_2 \s P_1} \big\} \ \ \cup
\\[1mm]									
\big\{G_1 \para C_2[Q_2]  \suchthat  
G_1 \in \chrpv{\Gamma_1}{}{\Psi_1}\, ,
C_2 \in \mathcal{C}_{\,\wt{y:S} \,|\, \wt{x:T}}^{\Psi_2(\wt{y})}&\,,
Q_2 \in \encCPalt{\Gamma_2,\wt{y:T}\compsi\Phi_1,\Phi_2\s P_2}
\big\} 
\end{align*}
\end{mdframed}

\caption{Translation $\encCPalt{\cdot}{}$. \label{fig:typed_encr}} 
\end{figure}
\end{definition}

The first six cases of \Cref{fig:typed_encr} extend those in $\encCP{\cdot}{}$ (\mydefref{def:typed_enc}) with the dependency context $\Psi$ and with annotated types, using \Cref{not:paraV} in the case of output.

More prominent differences arise in the translation of $\res {\wt{x}\wt{y}:\wt{S}}(P_1 \para P_2)$.
The first line of the definition specifies how  the translation   leads to process of the form
$C_1[Q_1] \para G_2$, in which the right-hand side of the parallel  is generated:
\begin{enumerate}[label=$\bullet$]
\item Process $Q_1$ belongs to the set that results from translating sub-process $P_1$ with an appropriate typing judgment, which includes $\wt{x:S}$.

\item   $C_1$ belongs to the set of catalyzers that implement the 
context $\wt{x:T}$, i.e., 
the dual behaviors of the sessions implemented by $P_1$.
We use \Cref{def:catalyserv} to ensure that $C_1$
  preserves the relevant value dependencies in $P_2$, denoted $\Phi_2$. 
  We write $\Phi_2(\wt{x})$ to specify that the dependencies refer to names in $\wt{x}$ (as required by $Q_1$)
  rather than to names in $\wt{y}$ (as present in $P_2$). 
  As before, this step removes the double restriction operator.

\item  $G_2$ belongs to the set of characteristic processes for $\Gamma_2$, which describes the sessions implemented by $P_2$. We use \Cref{def:charenvv} to preserve the relevant value dependencies in $P_2$, denoted $\Phi_1$.

  \end{enumerate}
Similarly, the second line of the definition specifies how  the translation leads to processes of the form
$G_1 \para C_2[Q_2]$. 
Notice that $G_1$, $C_2$, and $Q_2$ are obtained as before.
Overall, we preserve value dependencies both when generating
the new portion of the parallel process (i.e., $G_1$ and $G_2$) but also when 
constructing the catalyzer contexts (i.e.,  $C_1$ and $C_2$) that 
close the portions of the given process that are produced by translation (i.e.,  $Q_1$ and $Q_2$).

As a sanity check, we now state the analogues of 
\Cref{thm:L0-L2} (type preservation)
and
\Cref{thm:oc} (operational correspondence)
for the second translation.
We start by presenting an auxiliary lemma.
\begin{lemma}\label{lem:myplus}
Let $S$ and $\Gamma$ be respectively, an annotated session type and an annotated typing context, and let
$\myplus S$ and $\myplus\Gamma$ be the corresponding type and context where the annotation is incremented according to \Cref{def:myplus}.
Then, $\encob{S}{\ct}  = \encob{\myplus S}{\ct}$
and
$\encob{\Gamma}{\ct}  = \encob{\myplus\Gamma}{\ct}$.
\end{lemma}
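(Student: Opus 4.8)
The plan is to prove \Cref{lem:myplus} by a straightforward simultaneous induction: first establish $\enco{S}{\ct} = \enco{\myplus{S}}{\ct}$ on the structure of the annotated session type $S$, and then lift this to typing contexts by induction on the size of $\Gamma$. The key observation driving the whole argument is that the linear logic encoding $\enco{\cdot}{\ct}$ (\Cref{f:enctypes}, bottom) is defined \emph{without reference to the numerical annotations} $n$ on the output/input prefixes $\oc^n S.T$ and $\wn^n S.T$: the clauses for $\enco{\wn T.S}{\ct}$ and $\enco{\oc T.S}{\ct}$ consult only the carried type and the continuation, never the position marker. Since $\myplus{(\cdot)}$ (\Cref{def:myplus}) \emph{only} increments these annotations and otherwise acts as the identity on the type constructors, the encoding must be invariant under $\myplus{(\cdot)}$.

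First I would handle the type-level claim by case analysis on the outermost constructor of $S$, following \Cref{def:myplus}. For $S = \nilT$ we have $\myplus{\nilT} = \nilT$, so the two encodings are trivially equal. For $S = \oc^n T.S'$, by definition $\myplus{(\oc^n T.S')} = \oc^{n+1} T.\myplus{S'}$; applying the encoding clause $\enco{\oc T.S'}{\ct} = \enco{\dual{T}}{\ct} \otimes \enco{S'}{\ct}$ to both sides and invoking the induction hypothesis $\enco{S'}{\ct} = \enco{\myplus{S'}}{\ct}$ (note the carried type $T$ is unannotated and hence untouched by $\myplus{(\cdot)}$) yields the desired equality. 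The case $S = \wn^n T.S'$ is symmetric, using $\enco{\wn T.S'}{\ct} = \enco{T}{\ct} \parl \enco{S'}{\ct}$. For the choice constructors $S = \branch lS$ and $S = \select lS$, \Cref{def:myplus} pushes $\myplus{(\cdot)}$ into each branch $S_i$, the encoding distributes as $\with\{l_i : \enco{S_i}{\ct}\}_{i \in I}$ (resp. $\oplus$), and the induction hypothesis applied branchwise closes the case.

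The context-level claim then follows by induction on the number of assignments in $\Gamma$, using \Cref{def:enc_env_sc}. The base case $\Gamma = \cdot$ is immediate since $\myplus{(\,\cdot\,)} = \cdot$. For the inductive step $\Gamma = \Gamma', x:T$, we have $\myplus{(\Gamma', x:T)} = \myplus{\Gamma'}, x:\myplus{T}$, and then
\begin{align*}
\enco{\Gamma', x:T}{\ct} &= \enco{\Gamma'}{\ct}, x:\enco{T}{\ct} \\
&= \enco{\myplus{\Gamma'}}{\ct}, x:\enco{\myplus{T}}{\ct} \\
&= \enco{\myplus{(\Gamma', x:T)}}{\ct},
\end{align*}
where the second equality combines the context induction hypothesis with the type-level result already established. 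I do not anticipate a genuine obstacle here, as the lemma is essentially a bookkeeping observation; if anything, the only point requiring mild care is to confirm that the annotations introduced by $\myplus{(\cdot)}$ never migrate to the \emph{carried} object types (which remain unannotated in \Cref{def:myplus}), since those are encoded separately and must coincide on both sides for the $\otimes$ and $\parl$ clauses to match.
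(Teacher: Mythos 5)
Your proof is correct and matches the paper's reasoning: the paper simply declares the lemma immediate from \Cref{f:enctypes} and \Cref{def:myplus}, and your explicit structural induction on $S$ (plus induction on the size of $\Gamma$) just spells out why --- the encoding $\enco{\cdot}{\ct}$ never consults the annotations, and $\myplus{(\cdot)}$ changes nothing but the annotations while leaving carried types untouched. No gaps; your closing remark about carried types not being rewritten by $\myplus{(\cdot)}$ is exactly the right point of care.
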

\begin{proof}
It follows immediately by the encoding given in \Cref{f:enctypesct} and \Cref{def:myplus}.
\end{proof}


\begin{theorem}[$\encCPalt\cdot$ is Type Preserving]
\label{thm:rwtp}
Let $\Gamma \compsi \Psi\s P$.
Then, for all $Q\in \encCPalt{\Gamma \compsi \Psi\s P}$, we have that $Q\cp \encob{\Gamma}{\ct}$.
\end{theorem}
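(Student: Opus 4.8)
The plan is to prove \Cref{thm:rwtp} by induction on the structure of the derivation $\Gamma \compsi \Psi \s P$ in the extended type system of \Cref{fig:sess_typingmod}, proceeding by inversion on the last typing rule applied. This follows closely the proof of \Cref{thm:L0-L2} (see \Cref{appthm:L0-L2}), but must additionally account for the dependency context $\Psi$ and the annotated session types. The key observation that keeps the additional machinery from disrupting the argument is \Cref{lem:myplus}: since the encoding $\enco{\cdot}{\ct}$ into linear logic types simply discards all position annotations, we have $\enco{S}{\ct} = \enco{\myplus{S}}{\ct}$ and $\enco{\Gamma}{\ct} = \enco{\myplus{\Gamma}}{\ct}$. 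Consequently, whenever a typing rule such as \Did{T-InD} or \Did{T-OutD} increments annotations via $\myplus{\cdot}$, the target linear logic type is unaffected, and the same \CLL\ typing rules (\Did{T-$\parl$}, \Did{T-$\otimes$}, \Did{T-$\oplus$}, \Did{T-$\with$}) that worked in \Cref{thm:L0-L2} apply verbatim.

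First I would dispatch the straightforward cases---inaction, input, free output, the output-with-restriction case, selection, and branching---each mirroring the corresponding case in \Cref{appthm:L0-L2}. For instance, in the input case, inversion on \Did{T-InD} yields a premise $\Gamma, x{:}S, y{:}T \compsi \Psi \s P'$; by the induction hypothesis, every $Q \in \encCPalt{\Gamma, x{:}S, y{:}T \compsi \Psi \s P'}$ satisfies $Q \cp \enco{\Gamma}{\ct}, x{:}\enco{S}{\ct}, y{:}\enco{T}{\ct}$, and applying Rule \Did{T-$\parl$} to $x(y).Q$ gives $x(y).Q \cp \enco{\Gamma}{\ct}, x{:}\enco{T}{\ct} \parl \enco{S}{\ct}$, which is $\enco{\Gamma, x{:}\wn T.S}{\ct}$ by \Cref{f:enctypes}. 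The use of \Cref{lem:myplus} justifies ignoring the annotation shift in the conclusion of \Did{T-InD}. The selection and branching cases carry no value dependencies in their annotations and so reduce essentially to the earlier proof.

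The main obstacle will be the parallel-composition case, where $P = \res{\wt{x}\wt{y}:\wt{S}}(P_1 \para P_2)$ is rewritten into the two sets of processes of the forms $C_1[Q_1] \para G_2$ and $G_1 \para C_2[Q_2]$ from \Cref{fig:typed_encr}. Here I would invoke the two refined structural lemmas established earlier in the appendix: \Cref{lem: wt_characteristic_new}(\ref{6.11-2}), which guarantees that every $G_2 \in \chrpv{\Gamma_2}{}{\Phi_1}$ satisfies $G_2 \cp \enco{\Gamma_2}{\ct}$ (bridging sessions being hidden under restriction), and \Cref{lem:typability_catal_rev}, which guarantees that for $Q_1 \cp \enco{\Gamma_1}{\ct}, \wt{x{:}\enco{S}{\ct}}$ and a value-dependency-aware catalyzer $C_1 \in \mathcal{C}_{\wt{x:T} \,|\, \wt{y:S}}^{\Phi_2(\wt{x})}$, we obtain $C_1[Q_1] \cp \enco{\Gamma_1}{\ct}$. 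The delicate points are matching the dependency contexts exactly as required by the hypotheses of \Cref{lem:typability_catal_rev}---in particular checking $\dom(\Phi_2) \subseteq \dom(\wt{x{:}T})$ and the duality condition on the annotated types---and confirming via the inversion on \Did{T-ParD} and \Did{T-ResD} (together with \Cref{def: decomp_psi} for the splitting $\Psi \decomp{x,y} \Psi\proj x, \Psi\proj y, \Psi'$) that $\Psi_1, \Psi_2, \Phi_1, \Phi_2$ distribute consistently with \Cref{not:paraV}. Since restriction is the only source of dual endpoints, $\Gamma_1$ and $\Gamma_2$ are disjoint, so a final application of Rule \Did{T-$\mix$} composes $C_1[Q_1]$ and $G_2$ to yield $C_1[Q_1] \para G_2 \cp \enco{\Gamma_1}{\ct}, \enco{\Gamma_2}{\ct} = \enco{\Gamma}{\ct}$; the dual reasoning handles $G_1 \para C_2[Q_2]$. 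I expect the bookkeeping of which fragment of $\Psi$ feeds each catalyzer versus each characteristic process---rather than any deep typing argument---to be the real source of difficulty.
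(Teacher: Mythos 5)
Your proposal is correct and follows essentially the same route as the paper's proof: a case analysis by inversion on the last rule of $\Gamma \compsi \Psi \s P$, with \Cref{lem:myplus} discharging the $\myplus{\cdot}$ annotation shifts in the prefix cases, and the parallel case handled exactly as in the paper by combining \Cref{lem: wt_characteristic_new}(\ref{6.11-2}) for the characteristic processes $G_1, G_2$ with \Cref{lem:typability_catal_rev} for the catalyzed components $C_1[Q_1], C_2[Q_2]$, concluded by Rule~\Did{T-$\mix$} on the disjoint contexts $\enco{\Gamma_1}{\ct}, \enco{\Gamma_2}{\ct}$. Your anticipated bookkeeping on how $\Psi_1, \Psi_2, \Phi_1, \Phi_2$ distribute (via \Cref{not:paraV} and the decomposition of \Cref{def: decomp_psi}) is indeed where the only extra care beyond \Cref{thm:L0-L2} lies, and the paper resolves it just as you describe.
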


\begin{proof}
The proof proceeds by cases on the judgement $\Gamma \compsi \Psi\s P$ used in the translation  (\Cref{def:typed_encr}), and by inversion on the last typing rule applied (cf.~\Cref{fig:sess_typingmod}).
There are seven	 cases to consider:
\begin{enumerate}
\item
$\Gamma \compsi \emptyset \s \nil$. Then, by inversion the last typing rule applied is \Did{T-NilD}, with $\eend{\Gamma}$.
By \Cref{def:typed_encr} we have
$\encCPalt{\Gamma^\eende \compsi \emptyset \s \nil} = \{ \nil \}$.
Then, by applying Rule~{\Did{T-$\one$}} we obtain 
$$
\inferrule*[Right = {\Did{T-$\one$}}]
{ }
{\nil\cp  x: \bullet}
$$
The thesis follows immediately by encoding of types $\encob{\cdot}{\ct}$ given in \Cref{f:enctypesct},
and \Cref{def:enc_env_sc}, which states that
$\encob{x:\nilT}{\ct}=x: \bullet$.

\smallskip

\item
${\Gamma, x:\oc^0 T.S, v:T \compsi\Psi,\langle x, v, 0 \rangle \s\dual{x}\out v.P'}$,
where by inversion and  Rule \Did{T-OutD} we have $\Gamma = \myplus{\Gamma'}$, $\Psi = \myplus{\Psi'},\langle x, v, 0 \rangle$, and $S= \myplus{S'}$ as in the following derivation.
$$
\inferrule*[right={\Did{T-OutD}}]
	{
	   {\Gamma'}, x:S' \compsi \Psi' \s P'
	}
	{
	   \myplus{\Gamma'}, x:\oc^0 T.\myplus{S'}, v:T\compsi \myplus{\Psi'},\langle x, v, 0 \rangle \s \ov x\out{v}.P'
	}
$$
By \Cref{def:typed_encr},
$$\encCPalt{
	   \myplus{\Gamma'}, x:\oc^0 T.\myplus{S'}, v:T\compsi \myplus{\Psi'},\langle x, v, 0 \rangle \s \ov x\out{v}.P'
	}=
\big\{ \dual{x}(z).(\linkr{v}{z}\para Q) \suchthat Q \in \encCPalt{{\Gamma'}, x:S' \compsi \Psi' \s P'}\big\}$$
By Rule~\Did{T-$\tid$} and by \Cref{lem:dualenc} we have
\begin{equation}
\inferrule*[Right =  \Did{T-$\tid$}]
{ }
{\linkr{v}{z}\cp v:\encob{T}{\ct}, z:\encob{\overline T}{\ct}}
\label{eq:link}
\end{equation}
By induction hypothesis, 
$Q \in \encCPalt{{\Gamma'}, x:S' \compsi \Psi' \s P'}\big\}$ implies $Q\cp\encob{{\Gamma'}}{\ct}, x:\encob{S'}{\ct}$.
Let $Q'$ be an arbitrary process in this set.
By applying Rule \Did{T-$\otimes$} on \eqref{eq:link} and $Q'$ we have:
$$
\inferrule*[right= \Did{T-$\otimes$}]
{\linkr{v}{z}\cp v:\encob{T}{\ct}, z:\encob{\overline T}{\ct}  \\ 
Q'\cp\encob{{\Gamma'}}{\ct}, x:\encob{S'}{\ct}
}
{\dual{x}(z). \big(\linkr{v}{z}\para Q' \big)
\cp \encob{{\Gamma'}}{\ct},x: \encob{\dual T}{\ct}\otimes \encob{S'}{\ct}, v:\encob{T}{\ct} }
$$
By the encoding of types $\encob{\cdot}{\ct}$ in \Cref{f:enctypesct} we have that
$\encob{\oc T.S'}{\ct} =  \encob{\dual T}{\ct}\otimes \encob{S'}{\ct}$, and
by \Cref{def:enc_env_sc}, we have
$\encob{{\Gamma'}, x:\oc T.S', v:T}{\ct} = \encob{\Gamma'}{\ct},x: \encob{\dual T}{\ct}\otimes \encob{S'}{\ct}, v:\encob{T}{\ct}$, and we conclude this case by applying \Cref{lem:myplus}.

\smallskip

\item
${{\Gamma_1,\Gamma_2}, x:\oc^0 T.S\compsi\Psi_1, \Phi_1  [\Psi_2, \Phi_2] \s\res{zy}\dual{x}\out y.(P_1\para P_2)}$.
By inversion, this judgement is derived by a sequence of rules:
the last rule applied is \Did{T-ResD}, before that \Did{T-OutD} and \Did{T-ParD}.
We have:
$\Gamma_i = \myplus{\Gamma_i'}$, for $i\in \{1,2\}$;
$\Phi_j = \myplus{\Phi_j'}$, for $j\in \{1,2\}$; 
$\Psi_k = \myplus{\Psi_k'}$, for $k\in \{1,2\}$;
$S = \myplus{S'}$.
$$
\inferrule*[Right = \Did{T-ResD}]
{
\inferrule*[Right = \Did{T-OutD}]
{\inferrule*[Right = \Did{T-ParD}]
{\Gamma'_1, z:\ov T \compsi \Psi_1', \Psi_2'\s P_1 \and \Gamma'_2, x:S' \compsi \Phi_1',\Phi_2'\s P_2}
{\Gamma'_1, z:\ov T, \Gamma'_2, x:S' \compsi \Psi_1',\Psi_2', \Phi_1', \Phi_2'\s P_1\pp P_2}}
{\myplus{\Gamma_1'},\myplus{\Gamma_2'}, x:\oc^0 T.\myplus{S'}, y:T, z: \myplus{\ov T} \compsi \myplus{\Psi_1'}, \myplus{\Psi_2'}, \myplus{\Phi_1'}, \myplus{\Phi_2'}, \langle x, y, 0 \rangle\s {\overline{x}\out y.(P_1\para P_2)}}
}
{{\Gamma_1},{\Gamma_2}, x:\oc^0 T.S\compsi\Psi_1, \Phi_1  [\Psi_2, \Phi_2] \s\res{zy}\dual{x}\out y.(P_1\para P_2)}
$$
By \Cref{def:typed_encr},
$$
\begin{array}{lll}
\ {\encCPalt{{\Gamma_1,\Gamma_2}, x:\oc^0 T.S\compsi\Psi_1, \Phi_1  [\Psi_2, \Phi_2] \s\res{zy}\dual{x}\out y.(P_1\para P_2)}}	=\\
\hspace{2em}
\big\{ \dual{x}(y).(Q_1\para Q_2) \ \suchthat\ 
Q_1 \in \encCPalt{{{\Gamma_1},  z:\ov T\compsi\Psi_1, \Psi_2 \s P_1}}
\ \wedge \
Q_2 \in \encCPalt{{{\Gamma_2},x:{S}\compsi \Phi_1,\Phi_2 \s P_2}}
\big\}
\end{array}
$$
By induction hypothesis,
for all processes $Q_1 \in \encCPalt{{{\Gamma_1},  z:\ov T\compsi\Psi_1, \Psi_2\s P_1}}$, we have $Q_1\cp \encob{\Gamma_1}{\ct}, z:\encob{\ov T}{\ct}$. 
Let $Q_1'$ be an arbitrary process in this set.
Also,
for all processes $Q_2 \in \encCPalt{{{\Gamma_2},x:{S}\compsi \Phi_1,\Phi_2\s P_2}}$, we have $Q_2\cp \encob{\Gamma_2}{\ct}, x:\encob{S}{\ct}$. 
Let $Q_2'$ be an arbitrary process in this set.
By applying  Rule~\Did{T-$\otimes$} on $Q_1'$ and $Q_2'$ we have the following derivation:
$$
\inferrule*[right= \Did{T-$\otimes$}]
{Q_1'\cp \encob{\Gamma_1}{\ct}, z:\encob{\ov T}{\ct}  \\ 
Q_2' \cp \encob{\Gamma_2}{\ct}, x:\encob{S}{\ct}
}
{\dual{x}(z). \big(Q_1'\para Q_2' \big)
\cp \encob{\Gamma_1}{\ct}, \encob{\Gamma_2}{\ct}, x:\encob{\ov T}{\ct}\otimes \encob{S}{\ct}
}
$$
By the encoding of types $\encob{\cdot}{\ct}$ in \Cref{f:enctypesct},  
$\encob{\oc T.S}{\ct} =  \encob{\dual T}{\ct}\otimes \encob{S}{\ct}$, and
by \Cref{def:enc_env_sc}, we have
$\encob{{\Gamma_1}, \Gamma_2, x:\oc T.S}{\ct} = \encob{\Gamma'}{\ct},x: \encob{\Gamma_1}{\ct}, \encob{\Gamma_2}{\ct}, x:\encob{\ov T}{\ct}\otimes \encob{S}{\ct}$, and we conclude this case by applying \Cref{lem:myplus}.

\smallskip

\item
${\Gamma, x:\wn^0 T.S \compsi\Psi, (x, y, 0) \s x\inp {y:T}.P'}$,
where by inversion and Rule~\Did{T-InD} we have
$\Gamma = \myplus{\Gamma'}$, $\Psi = \myplus{\Psi'},(x, y, 0)$, and $S= \myplus{S'}$ as in the following derivation:
$$
\inferrule*[right={\Did{T-InD}}]
	{
 	  {\Gamma'}, x:S', y:T \compsi\Psi' \s P'
	}
	{
	  \myplus{\Gamma'}, x:\wn^0 T.\myplus{S'} \compsi\myplus{\Psi'},(x, y, 0) \s x\inp{y:T}.P'
	}
$$
By \Cref{def:typed_encr}, we have
$$
\encCPalt{\myplus{\Gamma'}, x:\wn^0 T.\myplus{S'} \compsi\myplus{\Psi'},(x, y, 0) \s x\inp{y:T}.P'}
=
\big\{x(y:T).Q \suchthat Q \in 
\encCPalt{{\Gamma'}, x:S', y:T \compsi\Psi' \s P'}\big\}	
$$
By induction hypothesis,  
$Q \in \encCP{{\Gamma'}, x:S', y:T \compsi\Psi' \s P'}$ implies
$Q\cp \encob{{\Gamma'}}{\ct}, x:\encob{S'}{\ct}, y:\encob{T}{\ct}$.
Let $Q'$ be an arbitrary process in this set.
By applying Rule~\Did{T-$\parl$} on $Q'$ we have:
$$
\inferrule*[right = \Did{T-$\parl$}]
{Q'\cp \encob{{\Gamma'}}{\ct}, x:\encob{S'}{\ct}, y:\encob{T}{\ct} }
{x(y:T).Q' \cp \encob{{\Gamma'}}{\ct}, x:\encob{T}{\ct}\parl \encob{S'}{\ct}}
$$
where by the encoding of types $\encob{\cdot}{\ct}$ in \Cref{f:enctypesct} we have that
$\encob{\wn T.S'}{\ct} =  \encob{T}{\ct}\parl \encob{S'}{\ct}$ and
by \Cref{def:enc_env_sc}, we have
$\encob{{\Gamma'}, x:\wn T.S'}{\ct} = \encob{{\Gamma'}}{\ct}, x:\encob{T}{\ct}\parl \encob{S'}{\ct}$, and we conclude this case by applying \Cref{lem:myplus}.

\smallskip 

\item
The cases for branching and selection follow the same lines as the corresponding ones in \Cref{thm:L0-L2}, whose proof is given in \Cref{appthm:L0-L2}, hence we omit them.

\smallskip 
\item
$\Gamma_1\restricted{\wt{x:S}} \compsi \Psi_1 \restricted{\Psi_2(\wt{x})} \circc 
\Gamma_2\restricted{\wt{y:T}} \compsi \Phi_1 \restricted{\Phi_2(\wt{y})} \s \res {\wt{x}\wt{y}:\wt{S}}(P_1\para P_2)$.
Then, by inversion we have following derivation:
$$
\inferrule*[Right = \Did{T-ResD}]
{
\inferrule*[Right = \Did{T-ParD}]
{{\Gamma_1, {\wt{x:S}} \compsi\Psi_1,\Psi_2\ \s P_1}
\\ {\Gamma_2,\wt{y:T}\compsi\Phi_1,\Phi_2\s P_2}}
{\Gamma_1 ,  \Gamma_2,  \wt{x:S},\wt{y:T} \compsi\Psi_1,\Psi_2,  \Phi_1, \Phi_2 \s P_1\para P_2}
}
{\Gamma_1\restricted{\wt{x:S}} \compsi \Psi_1 \restricted{\Psi_2(\wt{x})} \circc 
\Gamma_2\restricted{\wt{y:T}} \compsi \Phi_1 \restricted{\Phi_2(\wt{y})}  \s \res {\wt{x}\wt{y}:\wt{S}}(P_1\para P_2)}
$$

such that
$\wt T\defeq \wt{\dual S}$.

By \Cref{def:typed_encr}, we have
$$\encCPalt{\Gamma_1\restricted{\wt{x:S}} \compsi \Psi_1 \restricted{\Psi_2(\wt{x})} \circc 
\Gamma_2\restricted{\wt{y:T}} \compsi \Phi_1 \restricted{\Phi_2(\wt{y})}  \s \res {\wt{x}\wt{y}:\wt{S}}(P_1\para P_2)}$$
is the set of processes
\begin{align}
\label{eq:union_par}
&\big\{C_1[Q_1] \para G_2  \suchthat 
G_2 \in  \chrpv{\Gamma_2}{}{\Phi_1}\, , 
C_1 \in \mathcal{C}_{\,\wt{x:T} \,|\, \wt{y:S}}^{\Phi_2(\wt{x})}\,,
Q_1 \in  \encCPalt{{\Gamma_1, {\wt{x:S}}} \compsi\Psi_1,\Psi_2 \s P_1}
\big\}
\\[1mm]									
\cup ~~
&\big\{G_1 \para C_2[Q_2]  \suchthat 
G_1 \in \chrpv{\Gamma_1}{}{\Psi_1}\, ,
C_2 \in \mathcal{C}_{\,\wt{y:S} \,|\, \wt{x:T}}^{\Psi_2(\wt{y})}\,,
Q_2 \in \encCPalt{\Gamma_2,\wt{y:T}\compsi\Phi_1,\Phi_2\s P_2}
\big\}
\label{eq:union_par2}
\end{align}

We start by inspecting the set given by \eqref{eq:union_par}.
By induction hypothesis on the left-hand side premise of Rule~\Did{T-ParD} we have:
$$ Q_1\in \encCP{\Gamma_1, {\wt{x:S}} \compsi\Psi_1,\Psi_2\ \s P_1} \text{ implies }
Q_1\cp \encob{\Gamma_1}{\ct},  \wt{x: {\encob{S}{\ct}}}$$
Let $Q_1'$ be an arbitrary process in this set.
By 
\Cref{lem:typability_catal_rev}
we have that $C_1[Q_1'] \cp \encob{\Gamma_1}{\ct}$.

By \Cref{lem: wt_characteristic_new}(b),
since $G_2 \in  \chrpv{\Gamma_2}{}{\Phi_1,\Phi_2}$, we have that $G_2 \cp \encob{\Gamma_2}{\ct}$.
Since $\Gamma_1$ and $\Gamma_2$ are disjoint, by Rule~\Did{T-$\mix$} we have the following derivation, which concludes this case:
$$
\inferrule*[Right=\Did{T-$\mix$}]
{C_1[Q'] \cp \encob{\Gamma_1}{\ct} \\ G_2\cp  \encob{\Gamma_2}{\ct}}
{C_1[Q']\para G_2\cp \encob{\Gamma_1}{\ct}, \encob{\Gamma_2}{\ct}}
$$
The proof for the set given by (\ref{eq:union_par2}) follows the same line as the proof for \eqref{eq:union_par}, so we omit the details here.
We thus conclude that every process belonging to the set in (\ref{eq:union_par}) or (\ref{eq:union_par2}) is typed under the typing context
$\encob{\Gamma_1}{\ct}, \encob{\Gamma_2}{\ct}$, concluding this case.
\end{enumerate}
\end{proof}

\begin{theorem}[Operational Correspondence for $\encCPalt{~\cdot~}$]\label{thm:rwoc}
Let $P$ be  
such that $\Gamma \compsi \Psi\s P$ for some $\Gamma, \Psi$. Then we have:

\begin{enumerate}
\item
\nurev{If $P\to P'$, then for all 
$Q \in \encCPalt{\Gamma \compsi \Psi\s P}$
 there exist $Q', R$ such that
$Q\to\fred Q'$, $Q' \uptok R$, and 
$R \in \encCPalt{\Gamma \compsi \Psi\s P'}$
}

\item
\nurev{If 
$Q \in \encCPalt{\Gamma \compsi \Psi\s P}$, such that $P\in \lkoba$, and $Q \to\fred Q'$, then there exist
$P', R$ such that
$P\to P'$,
$Q' \uptok R$, and 
$R \in \encCPalt{\Gamma \compsi \Psi\s P'}$.
}
\end{enumerate}

\end{theorem}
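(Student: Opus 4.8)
The plan is to prove \Cref{thm:rwoc} by leveraging the fact that $\encCPalt{\cdot}$ is a refinement of $\encCP{\cdot}$ that differs only in the way characteristic processes and catalyzers are built (via \Cref{def:charprocessv}, \Cref{def:charenvv}, and \Cref{def:catalyserv}). Since \Cref{thm:oc} already establishes operational correspondence for $\encCP{\cdot}$, the cleanest route is to mirror its proof structure: induction on the length of the derivations $P \to P'$ (for Part~1) and $Q \to\fred Q'$ (for Part~2), with the two principal cases being a communication step (Rule~\Did{R-Com}) and a choice step (Rule~\Did{R-Case}); the cases for Rules \Did{R-Par}, \Did{R-Res}, and \Did{R-Str} follow straightforwardly by induction.

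For Part~1, I would proceed exactly as in the proof of \Cref{thm:oc} (Part~1), laid out in \Cref{appthm:oc}. Fixing the \Did{R-Com} case, I would write $P \defeq \res{xy:S'}(\ov{x}\out v.P_1 \pp y\inp{t:T}.P_2)$, invert the typing derivation (now using the rules of \Cref{fig:sess_typingmod}, tracking the dependency context $\Psi$), and spell out the two sets $A_1, A_2$ comprising $\encCPalt{\Gamma\compsi\Psi \s P}$ and the two sets $B_1, B_2$ comprising $\encCPalt{\Gamma\compsi\Psi \s P'}$. The key reductions remain applications of Rules~\Did{R-ChCom} and~\Did{R-Fwd}. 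For the sub-case generating $B_1$, the rewritten process reduces directly into the target set; for the sub-case generating $B_2$, value passing forces an appeal to the parallelization relation $\uptok$ (\Cref{def:uptok}), precisely as in the proof of \Cref{thm:oc}. Type preservation along the way is guaranteed by \Cref{thm:rwtp}. Part~2 is the symmetric argument, relying crucially on the hypothesis $P \in \lkoba$ to ensure the two complementary prefixes enabling the reduction in $Q$ also occur at top level in $P$ and can fire, so that $P$ can match every reduction of its rewriting.

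The main obstacle, and the genuinely new ingredient relative to \Cref{thm:oc}, is controlling the \emph{bridging sessions} introduced by $\chrpv{\cdot}{}{\cdot}$ and the isolated dependencies handled by the refined catalyzers $\catdep C{\cdot}{\cdot}$. When a characteristic process $G_2 \in \chrpv{\Gamma_2}{}{\Phi_1}$ is composed with the reduct, I must verify that the auxiliary forwarders and restricted bridging channels $c_{w_iw_j}$ do not introduce observable reductions of their own, and that the refined process still lands in the right rewriting set. The reassuring point is that these bridging sessions are always closed under restriction (cf.\ \Cref{def:charenvv} and \Cref{def:catalyserv}), so they contribute no free top-level prefixes and hence cannot trigger spurious communications; their typing is already settled by \Cref{lem: wt_characteristic_new} and \Cref{lem:typability_catal_rev}. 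Consequently, the bookkeeping needed to reconstitute membership in $\encCPalt{\Gamma\compsi\Psi \s P'}$ (or to witness $\uptok$) proceeds just as in the unrefined case, with \Cref{prop:charp} replaced by its dependency-aware analogue when decomposing and recomposing characteristic processes. I would therefore organize the proof so that the dependency-specific manipulations are isolated in a short preliminary observation, after which the case analysis tracks that of \Cref{appthm:oc} essentially verbatim.
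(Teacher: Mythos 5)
Your proposal matches the paper's proof, which is given in a single line: induction on the length of the derivations $P\to P'$ and $Q\to\fred Q'$, explicitly deferring to the structure of the proof of \Cref{thm:oc} in \Cref{appthm:oc}. Your additional observations---that the bridging sessions and refined catalyzers are guarded and closed under restriction, with typing settled by \Cref{lem: wt_characteristic_new}, \Cref{lem:typability_catal_rev}, and \Cref{thm:rwtp}---supply bookkeeping the paper leaves implicit, but do not change the approach.
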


\begin{proof}
By induction on the length of the derivations ($P\to P'$ and $Q \to Q'$), following the structure of the corresponding proof for $\encCP{\cdot}{}$ (cf. 
\Cref{thm:oc} and \Cref{appthm:oc}).
\end{proof}

\else 
\fi 
\end{document}